
\documentclass[a4paper, 11pt, oneside]{Thesis}  
\graphicspath{Figures/}  

\usepackage[square, numbers, comma, sort&compress]{natbib}  
\usepackage{verbatim}  
\usepackage{vector}  
\hypersetup{urlcolor=blue, colorlinks=true}  

\usepackage{amsthm}  

\usepackage{graphics} 
\usepackage{epsfig} 
\usepackage{amsmath,amsfonts,amssymb,amscd}

\DeclareMathAlphabet{\mathpzc}{OT1}{pzc}{m}{it}

\usepackage{algorithm,algorithmic}
\usepackage{float}
\usepackage{epsfig}
\usepackage{graphicx}
\usepackage{graphics}
\usepackage{url}
\usepackage{verbatim}
\usepackage{xspace}
\usepackage[usenames,dvipsnames]{color}
\usepackage{setspace}
\usepackage{multicol}

\floatstyle{ruled}
\newfloat{model}{H}{mod}
\floatname{model}{\footnotesize Model}
\newfloat{notatio}{H}{not}
\floatname{notatio}{\footnotesize Notation}

\newenvironment{varalgorithm}[1]
  {\algorithm}
  {\endalgorithm}

\usepackage{url,changebar,bm,xspace,dsfont}
\let\mathbb=\mathds 





\newtheorem{defn}{\bfseries Definition}
\newtheorem{prop}{\bfseries Proposition}

\hyphenation{op-tical net-works semi-conduc-tor}

\begin{document}
\frontmatter      

\title  {DISTRIBUTED WEIGHT BALANCING \\ IN DIRECTED TOPOLOGIES}
\authors  {\texorpdfstring
			{\href{https://scholar.google.com/citations?user=DGh25F0AAAAJ&hl=en&authuser=1}
			{\bf APOSTOLOS I. RIKOS}}
            {APOSTOLOS I. RIKOS}
            }
\addresses  {\groupname\\\deptname\\\univname}  
\date       {\textbf{MAY 2018}}
\subject    {}
\keywords   {}

\maketitle

\setstretch{1.3}  

\fancyhead{}  
\rhead{\thepage}  
\lhead{}  

\pagestyle{fancy}  

\committee{
\addtocontents{toc}{\vspace{1em}}  

\vspace{1cm}

\textbf{Doctoral Candidate:} \textbf{Apostolos I. Rikos} 

\vspace{0.5cm}

\textbf{Doctoral Thesis Title:} \textbf{Distributed Weight Balancing in Directed Topologies} 

\vspace{0.7cm}

The present Doctoral Dissertation was submitted in partial fulfillment of the requirements for the Degree of Doctor of Philosophy at the \textbf{Department of Electrical and Computer Engineering} and was approved on the $23^{rd}$ of April, $2018$ by the members of the \textbf{Examination Committee}. 

\vspace{0.5cm}

\textbf{Examination Committee:}

\vspace{0.4cm}

\textbf{Research Supervisor:}
\rule[0em]{27em}{0.5pt}  

\vspace{-0.6cm}
\begin{flushright}
(Dr. Christoforos N. Hadjicostis, Professor)
\end{flushright}

\vspace{0.2cm}

\textbf{Committee Member:}
\rule[0em]{27em}{0.5pt}  

\vspace{-0.6cm}
\begin{flushright}
(Dr. Georgios Ellinas, Professor)
\end{flushright}

\vspace{0.2cm}

\textbf{Committee Member:}
\rule[0em]{27em}{0.5pt}  

\vspace{-0.6cm}
\begin{flushright}
(Dr. Ioannis Krikidis, Associate Professor)
\end{flushright}

\vspace{0.2cm}

\textbf{Committee Member:}
\rule[0em]{27em}{0.5pt}  

\vspace{-0.6cm}
\begin{flushright}
(Dr. Themistoklis Charalambous, Assistant Professor)
\end{flushright}

\vspace{0.2cm}

\textbf{Committee Member:}
\rule[0em]{27em}{0.5pt}  

\vspace{-0.6cm}
\begin{flushright}
(Dr. Gabriele Oliva, Assistant Professor)
\end{flushright}


}
\clearpage  

\Declaration{

\addtocontents{toc}{\vspace{1em}}  

\vspace{0.5cm}

The present doctoral dissertation was submitted in partial fulfillment of the requirements for the degree of Doctor of Philosophy of the University of Cyprus. It is a product of original work of my own, unless otherwise mentioned through references, notes, or any other statements.

\vspace{2cm}

..................................................................[Full Name of Doctoral Candidate]

\vspace{0.6cm}

..................................................................[Signature]

%
%
}
\clearpage  

\addtotoc{Abstract in English}  
\abstract{
\addtocontents{toc}{\vspace{1em}}  

A distributed system or network can be viewed as a set of subsystems that can share information via interconnection links, which form a generally directed communication topology. Distributed systems prove to be of vital importance for the effectiveness of performing various tasks in the areas of cooperative control, distributed coordination, and control of multicomponent systems. This doctoral thesis concerns novel distributed algorithms for weight balancing over directed (communication) topologies. 
A directed topology (digraph) with nonnegative (or positive) weights assigned on each edge is weight-balanced if, for each node, the sum of the weights of in-coming edges equals the sum of the weights of out-going edges.
The novel algorithms introduced in this thesis can facilitate the development of strategies for generating weight balanced digraphs, in a distributed manner, and find numerous applications in coordination and control of multi-component systems.

In the first part of this thesis, we address the problem of integer weight balancing in a multi-component system. 
We introduce a novel distributed algorithm that operates over a static topology and solves the weight balancing problem when the weights are restricted to be nonnegative integers. 
The proposed algorithm is shown to converge to a weight balanced digraph after a finite number of iterations that we explicitly bound. 
This algorithm can also be viewed as a distributed method for obtaining a set of integer flows that balance a flow network. 

In the second part of the thesis, we investigate the problem of integer weight balancing in a multi-component system under a directed (static) interconnection topology in the presence of bounded or unbounded delays (packet drops) in the communication links. 
Specifically, we present a novel distributed algorithm which solves the integer weight balancing problem in the presence of arbitrary (time-varying and inhomogeneous) delays that might affect the transmission at a particular link at a particular time. 
Then, we present an event-based version of the proposed protocol in which each node autonomously decides when communication and control updates should occur. 
In the presence of packet drops over the communication links, the algorithm can be modified to converge to a set of weights that form a balanced graph after a finite number of iterations (with probability one). 
In all the above cases, the resulting weight balanced digraph is shown to be unique and independent on how delays or packet drops manifest themselves during the execution of the algorithm. 

In the third part of this thesis, we investigate the problem of integer weight balancing in a multi-component system under a static directed interconnection topology in the presence of lower and upper limit constraints on the edge weights. 
We present a novel distributed algorithm for obtaining admissible and balanced integer weights for the case when there are lower and upper weight constraints on the communication links. 
Compared with the distributed algorithms mentioned earlier, the additional constraint here is that each edge weight has to lie within a given interval, whereas communication exchanges (but not necessarily the assignment of weights) between neighboring nodes are assumed to be bidirectional. 

In the fourth part of this thesis we investigate the problem of integer weight balancing in a multi-component system over a directed (static) interconnection topology, under lower and upper limit constraints on the edge weights, in the presence of bounded or unbounded delays (packet drops) in the communication links.
Specifically, we present a novel distributed algorithm which solves the integer weight balancing problem under lower and upper weight constraints over the communication links for the case where  arbitrary (time-varying and inhomogeneous) time delays affect the transmission at a particular link at a particular time. 
Furthermore, we present an event-based version of the proposed protocol in which each node autonomously decides when communication and control updates should occur so that the resulting network executions still result in a weight balanced digraph and all nodes eventually stop performing transmissions. 
Then, we extend the applicability of the proposed algorithm to the case where possible packet drops affect the communication links. 
}

\clearpage  

\acknowledgements{
\addtocontents{toc}{\vspace{1em}}  

\vspace{1cm}

Firstly, I would like to express my sincere gratitude to my advisor Professor Christoforos N. Hadjicostis for the continuous support of my Ph.D. study and related research, for his patience, motivation, and immense knowledge. His guidance helped me in while researching and writing this thesis. I could not have imagined having a better advisor and mentor for my Ph.D. study.

Besides my advisor, I would like to thank my thesis committee, Professor Georgios Ellinas, Professor Ioannis Krikidis, Professor Themistoklis Charalambous, and Professor Gabriele Oliva, for their insightful comments and encouragement, but also for the suggestions and questions, which prompted me to widen my research from various perspectives.

My sincere thanks also goes to Professor Andrea Gasparri for the excellent collaboration we had. 

I would also like to thank my fellow Ph.D. students, Christoforos Keroglou and Nicolas Manitara, for the stimulating discussions, the sleepless nights we were working together, and all the fun we have had during the last years. 

Special thanks goes to my friends in Cyprus, Greece and others scattered around the world. 
Thank you for your thoughts, well-wishes, phone calls, e-mails, texts, visits, advice, and being there whenever I needed a friend.

Last but not the least, I would like to thank my family: my parents, Ioannis Rikos and Tamara Papakonstantinou, and my brother, Filippos Rikos. Words cannot express how grateful I am for all of the sacrifices that you’ve made on my behalf, for supporting me spiritually throughout my Ph.D. studies and, in general, in my life. 

}
\clearpage  

\setstretch{1.3}  


\pagestyle{fancy}  

\lhead{\emph{Contents}}  
\tableofcontents  

\lhead{\emph{List of Figures}}  
\listoffigures  

\mainmatter	  
\pagestyle{fancy}  


\lhead{\emph{Introduction}}  

\chapter{Introduction}
\label{introduction}

The successful operation of a distributed system or network depends on a number of basic protocols to circulate and process data between its components. 
In distributed systems whose functionality does not simply consist of transmitting data, but also involves control and decision tasks (e.g., workload balancing across available computing resources or leader election), traditional routing protocols may be inadequate or insufficient. 
For this reason, the design of algorithms and protocols for distributed computation has attracted significant attention by the communication, control and computer science communities over the past few decades (e.g., \cite{1996:Lynch, 2007:olfati-saber_consensus, 2003:Koetter, 2004:Rabbat, 2005:Giridhar, 2005:Hromkovic}, and references therein).

A distributed system or network consists of a set of components (nodes) that can share information with neighboring components via connection links (edges), forming a generally directed interconnection topology (digraph). 
The digraphs that describe the communication and/or physical topology typically prove to be of vital importance for the effectiveness of distributed strategies in performing various tasks \cite{2003:jadbabaie_coordination, 2007:olfati-saber_consensus, 2008:RenBeard}. 
In many applications, the assignment of weights to the edges of this graph in a way that forms a balanced digraph (i.e., for each node, the sum of the weights on its incoming edges equals the sum of the weights on its outgoing edges) is key to enabling the desired functionality. 
For example, applications where balance plays a key role include network adaptation strategies based on the use of continuous second order models \cite{2010:DeLellis}, and distributed adaptive strategies to tune the coupling weights of a network based on local information of node dynamics \cite{2012:YuDeLellis}. 
Weight balancing is also closely related to weights that form a doubly stochastic digraph, which find applications in multi-component systems (such as sensor networks) where one is interested in distributively averaging measurements at each component. 
Doubly stochastic digraphs play a key role in networked control problems, including distributed averaging \cite{BulloCortes, 2004:Murray, 2008:RenBeard, 2004:XiaoBoyd} and distributed convex optimization \cite{2009:Johansson, Nedic, ZhuMart}. 
In particular, weight-balance is important in the well-studied case where a set of components (nodes) want to distributively average their individual measurements (in this scenario, each node provides a local measurement of global quantity). 
One approach towards average consensus is to follow a linear iteration, where (instead of routing the value of each node to all other nodes) each node repeatedly updates its value to be a weighted linear combination of its own value and the values of its neighbouring nodes. 
The choice of the weights has a relevant effect on how the interconnection behaves and whether average consensus is reached. 
For example, it has been shown that nonnegative weights that form a primitive doubly stochastic matrix (and thus also balance the graph) is a sufficient condition for asymptotic average consensus \cite{1970:HooiTong, 2010:Cortes}.

Because of the numerous algorithms available in the literature that use of weight assignments that are balanced or even form a doubly stochastic matrix (possibly with self weights for each node), an important research question is to characterize when a digraph can be given such edge weight assignments. 
In this thesis we focus on nonzero weight assignments. 
In addition to its theoretical interest, the consideration of nonzero weight assignments is also relevant from a practical perspective, as the use of the maximum number of edges generally leads to higher algebraic connectivity \cite{Wu}, which in turn affects positively the rate of convergence \cite{2005:Boyd, LiuMorse, CaoWu, GhoshBoyd} of the algorithms that are to be executed over doubly stochastic digraphs.

Once a characterization of weight-balanceable (doubly stochasticable) digraphs is available, the next natural question is the design of distributed strategies that allow the components to find the appropriate weight assignments so that the overall interaction digraph is weight balanced or doubly stochastic.
Therefore, we present novel distributed control algorithms to address these challenges.

\section{Literature Review} 
\label{IntroLiterature}

Previous work on weight balancing consists of the following:

\begin{itemize}

\item[\tiny{$\blacksquare$}] In \cite{2010:Cortes, 2009:Cortes} the authors introduce a synchronized distributed strategy on a directed communication network in which each agent provably balances its incoming and outgoing edge weights in finite time. 
In this algorithm, each individual agent can send a message to one of its out-neighbours and receive a message from its in-neighbours. 
Furthermore, as explained in \cite{2010:Cortes, 2009:Cortes}, once weight-balance is achieved, the nodes can easily obtain weights that form a doubly stochastic matrix in a distributed manner (in order to use them, for instance, to asymptotically reach average consensus). 
[The proposed algorithm achieves this weight assignment under the assumption that individual agents can add self-weights to the structure of the digraph.]

\item[\tiny{$\blacksquare$}] In \cite{2013:Priolo} the authors introduce a distributed algorithm in which each agent is assumed to be able to distinguish the information coming from the other agents according to the identifier of the sender. Also, a global stopping time is set at which the iterations stop to perform  weight-balancing.

\item[\tiny{$\blacksquare$}] In \cite{2013:ThemisHadj} the authors introduce a synchronized distributed strategy on a directed communication network in which each agent provably balances its incoming and outgoing edge weights in an asymptotic fashion. In this algorithm each individual agent can send a message to all of its out-neighbours and receive messages from all its in-neighbours. The authors show that the proposed distributed algorithm guarantees geometric convergence rate. The simplicity of the algorithm has allowed its extension to asynchronous operation, which is a valuable contribution given that in reality there are inevitable delays in the exchange of information) as well as its continuous-time analog, that guarantees average consensus without the need to obtain a doubly stochastic matrix.

\item[\tiny{$\blacksquare$}] In \cite{Zenios} the authors introduce a synchronized distributed strategy on a directed communication network in which each agent provably balances its incoming and outgoing edge weights in an asymptotic fashion. In this algorithm, each individual agent is assumed to be able to send a messages to all of its out-neighbours and receive messages from all its in-neighbours. Specifically, each agent calculates a fraction $\lambda$ which has as numerator the sum of incoming weights and as denominator the sum of outgoing weights, and then changes the weights of its incoming and outgoing links accordingly.
\end{itemize}

\section{Motivation and Applications} 
\label{IntroMotivation}

The study of weight-balanced graphs/matrices has proven to play an important role in the analysis and convergence of distributed coordination algorithms since they find numerous applications in distributed adaptive control or synchronization in complex networks. 
The main applications of weight-balanced graphs/matrices are shown below:

\textbf{Balancing of Physical Quantities:}
In \cite{1970:HooiTong} a traffic-flow problem is studied consisting of $n$ junctions and $m$ one-way one way streets. 
Such an application shows that the goal of ensuring a smooth traffic flow is associated with balanced weights on an appropriately defined digraph. 
Weight balanced digraphs appear also in the design of stable flocking algorithms for agents with significant inertial effects, where weight-balance allows for the decoupling of the centroid dynamics from the internal group formation \cite{1984:LeeSpong}. 
Additionally, examples of applications where balance plays a key role include network adaptation strategies based on the use of continuous second order models \cite{2010:DeLellis}, and distributed adaptive strategies to tune the coupling weights of a network based on local information of node dynamics \cite{2012:YuDeLellis}. 
Weight balancing can also be associated with the matrix balancing problem in network optimization which is, in turn, associated with numerous applications, such as predicting the distribution matrix of telephone traffic  \cite{1998:Bertsekas}.
Furthermore, the weight balancing problem is related to matrix scaling problems which have been addressed in the context of nonnegative matrices \cite{Zenios}. 
One of the early motivations for the matrix scaling problem was the desire to start from the stochastic matrix of a Markov chain and obtain a scaled version of it that is doubly stochastic and adheres to the sparsity structure of the original one. 
Many applications of matrix scaling can also be found in economy or accounting models (where it is important to balance the flow-of-funds), urban planning, statistics, and demography.

\textbf{Doubly Stochastic Weights:}
In all of the above applications, weights are associated with the physical interactions in a distributed control system, and are assigned to edges of the physical digraph. 
There are also many applications where weight balancing plays a significant role in the cyber digraph of a given distributed control system. 
In particular, weight balancing is closely related to weights that form a doubly stochastic digraph, which find applications in multi-component systems (such as sensor networks) where one is interested in distributively averaging measurements at each component. 
Doubly stochastic digraphs play a key role in networked control problems, including distributed averaging \cite{2018:BOOK_Hadj, BulloCortes, 2004:Murray, 2008:RenBeard, 2004:XiaoBoyd} and distributed convex optimization \cite{2009:Johansson, Nedic, ZhuMart}. 
Convergence in gossip algorithms also relies on the structure of doubly stochastic digraphs, see \cite{2005:Boyd, LiuMorse}.
In particular, weight-balance is important in the well-studied case where a set of components (nodes) want to distributively average their individual measurements (in this scenario, each node provides a local measurement of global quantity). 
One approach towards average consensus is to follow a linear iteration, where (instead of routing the value of each node to all other nodes) each node repeatedly updates its value to be a weighted linear combination of its own value and the values of its neighbouring nodes. 
Asymptotic average consensus is then guaranteed (i.e., the nodes asymptotically reach consensus to the average of their initial values \cite{2007:olfati-saber_consensus, 2008:RenBeard, 1984:Tsitsiklis, 2008:Cortes}) if the weights used in the linear iteration form a doubly stochastic matrix (which correspond to a balanced digraph) \cite{1984:Tsitsiklis}.
The choice of the weights is important in how the interconnection behaves and whether average consensus is reached. 
For example, it has been shown that nonnegative weights that form a primitive doubly stochastic matrix (and thus also balance the graph) is a sufficient condition for asymptotic average consensus as long as the digraph is strongly connected \cite{1970:HooiTong, 2010:Cortes}.
Average consensus is a special case of the consensus problem which has received significant attention from the computer science community \cite{1996:Lynch} and the control community (see \cite{2007:olfati-saber_consensus, 2004:XiaoBoyd, 2003:jadbabaie_coordination}), due to its applicability to diverse areas, including multi-component systems, cooperative control \cite{2007:OlfatiReza}, modeling of flocking behavior in biological and physical systems (e.g., \cite{2007:olfati-saber_consensus, 2008:RenBeard, 2003:jadbabaie_coordination}) and estimation and tracking \cite{2008:CarliChiusoSchenatoZampieri}.

\textbf{Flow Balancing:}
A weighted digraph that has a real or integer value (called the edge weight) associated with each edge is also similar to a flow network where each edge receives a flow that typically cannot exceed a given capacity (or, more generally, has to lie within a given interval). 
Flows  must satisfy the restriction that the amount of flow into a node equals the amount of flow out of it, unless the node is a source, which has only outgoing flow, or a sink, which has only incoming flow.
Thus, the weight-balancing problem we deal with in this thesis can also be viewed as the problem of producing a feasible circulation in a directed graph with upper and lower flow constraints \cite{1986:Papadimitriou}. [In such settings, a circulation in a directed graph is an assignment of nonnegative weights to the cycles of the graph and is called feasible if the flow in each edge (i.e., the sum of the weights of the cycles containing this edge) lies between the corresponding upper and lower flow constraints.]
Additionally, the problem we deal with can be also viewed as a particular case of the standard network flow problem (see, e.g., \cite{2010:Fulkerson}), where there is a cost associated to the flow on each link, and the objective is to minimize the total cost subject to constraints on the flows.
Moreover, flow algorithms find further applications in a variety of other problems, like the maximum flow problem \cite{1988:GoldTarj}, auction algorithms \cite{1992:Bertsekas}, and energy minimization \cite{2004:Kolmogorov}.

\textbf{Balancing with Integer Weights:}
Digraphs that are balanced with integer weights find numerous applications in a variety of problems like swarm guidance \cite{2014:Bandyopadhyay}, fractional packing \cite{1995:Plotkin, 2007:Gang}, matching in bipartite graphs \cite{1973:Hopcroft}, and edge-disjoint paths \cite{2003:Guruswami}.

\section{Main Contributions} 
\label{IntroContributions}

As stated previously, this thesis focuses on the development of distributed novel algorithms that facilitate the development of strategies for generating weight balanced digraphs. 
The main contributions of this thesis are as follows:

\begin{itemize}


\item[\tiny{$\blacksquare$}] In Section~\ref{distralg} we introduce a novel distributed algorithm which achieves integer weight balancing in a multi-component system. We present its formal description along with an illustrative example. Then, we show that the proposed distributed algorithm converges to a weight balanced digraph after a finite number of iterations, for which we obtain explicit bounds. Finally, we present examples and simulations for the distributed algorithm.

\item[\tiny{$\blacksquare$}] In Section~\ref{algDel} we introduce a novel distributed algorithm which achieves integer weight balancing in a multi-component system, in the presence of time delays over the communication links. We present its formal description, along with an illustrative example, and we show that the proposed distributed algorithm converges to a weight balanced digraph after a finite number of iterations in the presence of bounded time delays over the communication links. 
Then, in Section~\ref{triggeredalgDel}, we discuss an event-triggered version of the proposed distributed algorithm and we show that it results in a weight balanced digraph after a finite number of iterations in the presence of arbitrary (time-varying, inhomogeneous) but bounded time delays over the communication links.

\item[\tiny{$\blacksquare$}] In Section~\ref{packetalgDel} we show that the proposed distributed algorithm presented in Section~\ref{algDel} is also able to converge (with probability one) to a weight balanced digraph in the presence of unbounded delays (packet drops).


\item[\tiny{$\blacksquare$}] In Section~\ref{upperloweralgorithm} we introduce a novel distributed algorithm which achieves integer weight balancing in a multi-component system, in the presence of specified lower and upper limit constraints on the edge weights. We present its formal description along with an illustrative example. Then, we show that as long as the conditions hold, then the proposed distributed algorithm converges to a weight balanced digraph after a finite number of iterations. Finally, we present examples and simulations for the proposed distributed algorithm.

\item[\tiny{$\blacksquare$}] In Section~\ref{upperloweralgorithm_delays} we introduce a novel distributed algorithm which achieves integer weight balancing in a multi-component system under specified lower and upper limit constraints on the edge weights, in the presence of time delays over the communication links. 
We show that as long as the conditions hold, then the proposed distributed algorithm converges to a weight balanced digraph after a finite number of iterations.
In Section~\ref{triggeredalgDel_delays}, we discuss an event-triggered extension regarding the operation of the aforementioned distributed algorithm and we show that it results in a weight balanced digraph after a finite number of iterations.

\item[\tiny{$\blacksquare$}] In Section~\ref{packetalgDel_delays} we introduce a novel distributed algorithm which achieves integer weight balancing in a multi-component system under specified lower and upper limit constraints on the edge weights, in the presence of unbounded delays (packet drops) over the communication links. 
We show that as long as the conditions hold, then the proposed distributed algorithm converges to a weight balanced digraph after a finite number of iterations.

\end{itemize}

\section{Thesis Organization}  
\label{IntroOrganization}

This thesis is organized as follows. 
In Chapter~\ref{notationproblem}, we present some basic notions and notation needed for our development. 
Then we present the problem formulation and we discuss a possible solution in a centralized fashion. 
In Chapter~\ref{centrvsdistr}, we introduce the distributed algorithm which achieves balance with integer weights after a finite number of iterations. 
In Chapter~\ref{distrdelpacket}, we present the distributed algorithm which achieves integer weight balancing in the presence of bounded delays after a finite number of iterations. We also analyze the case of unbounded delays (packet drops) in the communication links and discuss an event-triggered version of the algorithm (that can be used to avoid unnecessary transmissions).
In Chapter~\ref{constraintsbalancing}, we present the conditions for the existence of a set of integer weights (within the allowable intervals) that balance a weighted digraph. Then, we present the distributed algorithm which achieves integer weight balancing in a multi-component system, in the presence of specified lower and upper limit constraints on the edge weights. 
In Chapter~\ref{constraintsbalancing_delays}, we present the distributed algorithm, which achieves integer weight balancing under lower and upper limit constraints on the edge weights in the presence of bounded delays after a finite number of iterations. We also analyze the case of unbounded delays (packet drops) in the communication links and discuss an event-triggered version of the algorithm (that can be used to avoid unnecessary transmissions).
Finally, in Chapter~\ref{conclusions} we conclude this thesis with a brief summary and remarks about future work.

\clearpage

\lhead{\emph{Preliminaries, Problem Statement and Centralized Approach}}

\chapter{Preliminaries, Problem Statement and Centralized Approach}
\label{notationproblem}

In this chapter, we first introduce some key notions and notation in Section~\ref{GraphNotation}. 
Then, we state and discuss the problem under consideration in Section \ref{ProbStatement} and present a centralized approach of the problem in Section~\ref{centralg}.

\section{Graph-Theoretic Notions}
\label{GraphNotation}

The sets of real, integer and natural numbers are denoted by $\mathbb{R}$, $\mathbb{Z}$ and $\mathbb{N}$, respectively. The symbol $\mathbb{N}_0$ denotes the set of nonnegative integers while the positive part of $\mathbb{Z}$ is denoted by the subscript $+$ (e.g. $\mathbb{Z}_+$). 
Vectors are denoted by small letters whereas matrices are denoted by capital letters. 
A matrix with nonnegative elements is called nonnegative matrix and is denoted by $A\geq0$ while a matrix with positive elements is called positive matrix and is denoted by $A>0$. 

A distributed system whose components can exchange masses of certain quantities of interest (weights or flows) via (possibly directed) links, can conveniently be captured by a digraph (directed graph). A digraph of order $n$ ($n \geq 2$), is defined as $\mathcal{G}_d = (\mathcal{V}, \mathcal{E})$, where $\mathcal{V} =  \{v_1, v_2, \dots, v_n\}$ is the set of nodes and $\mathcal{E} \subseteq \mathcal{V} \times \mathcal{V} - \{ (v_j,v_j)$ $|$ $v_j \in \mathcal{V} \}$ is the set of edges. A directed edge from node $v_i$ to node $v_j$ is denoted by $(v_j, v_i) \in \mathcal{E}$, and indicates that $v_j$ can receive information or physical quantities from $v_i$. 
We will refer to the digraph $\mathcal{G}_d$ as the {\em physical topology}.


A digraph is called \textit{strongly connected} if for each pair of vertices $v_j, v_i \in \mathcal{V}$, $v_j \neq v_i$, there exists a directed \textit{path} from $v_i$ to $v_j$, i.e., we can find a sequence of vertices $v_i \equiv v_{l_0},v_{l_1}, \dots, v_{l_t} \equiv v_j$ such that $(v_{l_{\tau+1}},v_{l_{\tau}}) \in \mathcal{E}$ for $ \tau = 0, 1, \dots , t-1$. All nodes that have edges to node $v_j$ are said to be in-neighbors of node $v_j$ and belong to the set $\mathcal{N}_j^- = \{v_i \in V \; | \; (v_j, v_i) \in \mathcal{E} \}$. The cardinality of $\mathcal{N}_j^-$ is called the \textit{in-degree} of node $v_j$ and is denoted by $\mathcal{D}_j^-$. The nodes that have edges from node $v_j$ comprise its out-neighbors and are denoted by $\mathcal{N}_j^+ = \{v_l \in \mathcal{V} \; | \; (v_l, v_j) \in \mathcal{E} \}$. The cardinality of $\mathcal{N}_j^+$ is called the \textit{out-degree} of $v_j$ and is denoted by $\mathcal{D}_j^+$. We also let $\mathcal{N}_j = \mathcal{N}^+_j \cup  \mathcal{N}^-_j$ denote the {\em neighbors} of node $v_j$, and $\mathcal{D}_j = \mathcal{D}^+_j + \mathcal{D}^-_j$ denote the {\em total degree} of node $v_j$. 
A weighted digraph $\mathcal{G}_d = (\mathcal{V}, \mathcal{E}, \mathcal{F})$ is a digraph in which each edge $(v_j, v_i) \in \mathcal{E}$ is associated with a real or integer value $f_{ji}$ called the edge weight; matrix $\mathcal{F} = [f_{ji}]$ with value $f_{ji}$ at its $j$th row, $i$th column position (where $f_{ji}=0$ if $(v_j,v_i)$ does not belong in $\mathcal{E}$).



\begin{defn}\label{DEFnodeinoutweight}
Given a weighted digraph $\mathcal{G}_d=(\mathcal{V},\mathcal{E},\mathcal{F})$ of order $n$, the total \textit{in-weight} of node $v_j$ is denoted by $\mathcal{S}_j^-$ and is defined as $\mathcal{S}_j^- = \sum_{v_i \in \mathcal{N}_j^-} f_{ji}$, whereas the total \textit{out-weight} of node $v_j$ is denoted by $\mathcal{S}_j^+$ and is defined as $\mathcal{S}_j^+ = \sum_{v_l \in \mathcal{N}_j^+} f_{lj}$.
\end{defn}
\begin{defn}\label{DEFnodebalance}
Given a weighted digraph $\mathcal{G}_d=(\mathcal{V},\mathcal{E},\mathcal{F})$ of order $n$, the weight \textit{imbalance} of node $v_j$ is denoted by $x_j$ and is defined as $x_j = \mathcal{S}_j^- - \mathcal{S}_j^+$.
\end{defn}
\begin{defn}\label{defn:totalim}
Given a weighted digraph $\mathcal{G}_d=(\mathcal{V},\mathcal{E},\mathcal{F})$ of order $n$, the \textit{total imbalance} (or {\em absolute imbalance}) of digraph $\mathcal{G}_d$ is denoted by $\varepsilon$ and is defined as $\varepsilon = \sum_{j=1}^{n} \vert x_j \vert$.
\end{defn}
\begin{defn}
A weighted digraph $\mathcal{G}_d=(\mathcal{V},\mathcal{E},\mathcal{F})$ is called weight balanced if its \textit{total imbalance} (or {\em absolute imbalance}) is equal to $0$, i.e., $\varepsilon \buildrel\triangle\over = \sum_{j=1}^{n} \vert x_j \vert = 0$. 
\end{defn}

\section{Problem Statement for Weight Balancing}
\label{ProbStatement}

We are given a strongly connected digraph $\mathcal{G}_d = (\mathcal{V}, \mathcal{E})$, with a set of nodes $\mathcal{V} =  \{v_1, v_2, \dots, v_n\}$ and a set of edges $\mathcal{E} \subseteq \mathcal{V} \times \mathcal{V} - \{ (v_j,v_j)$ $|$ $v_j \in \mathcal{V} \}$. 
We want to develop a distributed algorithm that allows the nodes to iteratively adjust the weights on their edges so that they eventually obtain a set of integer weights $\{ f_{ji} \; | \; (v_j, v_i) \in \mathcal{E} \}$ that satisfy the following:
\begin{enumerate}
\item $f_{ji} \in \mathbb{N}$ for every edge $(v_j,v_i) \in \mathcal{E}$;
\item $f_{ji} = 0$ if $(v_j,v_i) \notin \mathcal{E}$;
\item $\mathcal{S}_j^+ = \mathcal{S}_j^-$ for every $v_j \in \mathcal{V}$.
\end{enumerate}

The algorithms we develop in this thesis are iterative, and we use $k$ to denote the iteration. 
For example, $\mathcal{S}_j^+[k]$ will denote the value of the total out-weight of node $v_j$ at time instant $k$, where $k \in \mathbb{N}_0$.

In Figure~\ref{probstatebalanced} we can see an example of a digraph which satisfies the conditions presented above. As we can see, for every node $v_1, v_2, v_3, v_4$ and $v_5$ the total \textit{in-weight} (equal to $4, 4, 6, 4$ and $4$ respectively) is equal to the total \textit{out-weight}. 
Thus, the digraph in Figure~\ref{probstatebalanced} is weight balanced.

\begin{figure} [ht]
\centering
\includegraphics[width=0.40\textwidth]{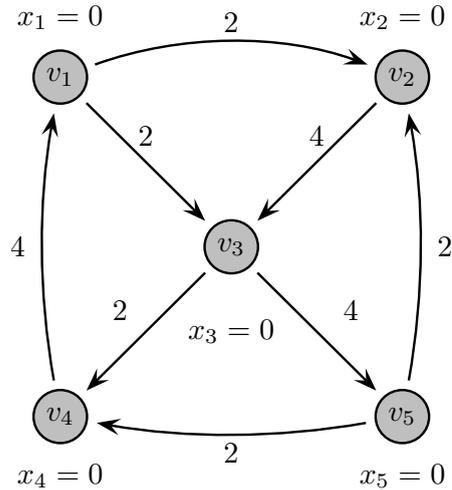}
\caption{Example of weight balanced digraph.}
\label{probstatebalanced}
\end{figure}

\section{Centralized Algorithm for Weight Balancing}
\label{centralg}

We now introduce an algorithm which solves the integer weight balancing problem over a multi-component system in a centralized fashion. 

The centralized algorithm takes as input a strongly connected digraph $\mathcal{G}_d = (\mathcal{V}, \mathcal{E})$. 
It initializes the weights of all edges to unity and then iteratively performs the following steps until the graph is balanced:

\begin{enumerate}
\item Computes the weight imbalance of each node.
\item Pick one node with positive imbalance and one with negative imbalance; if there is more than one pair of nodes satisfying such a condition, any pair can be chosen (something like that).
\item Find a path in the digraph from the node with positive imbalance to the node with negative imbalance (this is always possible as long as the graph is strongly connected).
\item Increase the weights of all the edges in the path by the value of the weight imbalance of the positively imbalanced node.
\end{enumerate}

We discuss why the algorithm results in a weight balanced graph (and how many steps it takes to do so), after we describe the algorithm more formally.

\subsection{Formal Description of Centralized Algorithm}
\label{formalcentralg}

A formal description of the proposed centralized algorithm is presented in Algorithm~\ref{algorithm:1}.

\begin{varalgorithm}{1}
\caption{Centralized balancing with integer weights}
\textbf{Input} \\ A strongly connected digraph $\mathcal{G}_d=(\mathcal{V},\mathcal{E})$ with $n=|\mathcal{V}|$ nodes and $m=|\mathcal{E}|$ edges.\\
\textbf{Initialization} \\ Set $k=0$; each node $v_j \in \mathcal{V}$ sets its out-going edge weights as
\vspace{-0.1cm}
\begin{align*}
f_{lj}[0] = \left\{ \begin{array}{ll}
         0, & \mbox{if $v_l \notin \mathcal{N}_j^+$,}\\
         1, & \mbox{if $v_l \in \mathcal{N}_j^+$.}\end{array} \right. 
\end{align*}

\textbf{Iteration} \\ For $k=0,1,2,\dots$, each node $v_j \in \mathcal{V}$ does the following:
\vspace{-0.1cm}
\begin{enumerate}
\item It computes its weight imbalance $x_j[k] = \mathcal{S}_j^-[k] - \mathcal{S}_j^+[k]$.
\item It selects one node $v^+$ with positive imbalance $br^+$ and one node $v^-$ with negative imbalance $br^-$ (e.g., it selects the node with the largest positive imbalance and the node with the smallest negative imbalance, respectively).
\item It finds a non-cyclic path $v^+=v_{j_0},v_{j_1},\cdots,v_{j_t}=v^-$ from $v^+$ to $v^-$.
\item It increases the weight on each edge on the path, which connects $v^+$ to $v^-$, by $br^+$, i.e., 
\[ f_{j_{e+1},j_e}[k+1]=f_{j_{e+1},j_e}[k]+br^+ \]
for $e=0,1,\cdots,t-1$. (All the other weights are left unchanged.)
\item It repeats (increases $k$ to $k+1$ and goes back to Step~1).
\end{enumerate}
\label{algorithm:1}
\end{varalgorithm}

\subsection{Illustrative Example of Centralized Algorithm}
\label{examplecentralg}

We first illustrate the centralized algorithm. We then explain why it results in a weight balanced digraph after a finite number of iterations (bounded by $n - 1 =|\mathcal{V}| - 1$ in the worst-case).

Consider the digraph $\mathcal{G}_d = (\mathcal{V}, \mathcal{E})$ in Figure~\ref{initial-centralized}, where $\mathcal{V} = \{ v_1,v_2,\cdots,v_7 \}$, $E = \{ e_1,e_2,\cdots,e_{13} \}$, $\mathcal{E} \subseteq \mathcal{V} \times \mathcal{V} - \{ (v_j,v_j)$ $|$ $v_j \in \mathcal{V} \}$. The weight on each edge is initialized to $f_{ji}[0]=1$ for $(v_j,v_i) \in E$ (otherwise $f_{ji}[0]=0$). As a first step, we compute the weight imbalance $x_j[0]=S_j^-[0] - S_j^+[0]$ for each node (this is shown in Figure \ref{initial-centralized}).

\begin{figure} [ht]
\centering
\includegraphics[width=0.60\textwidth]{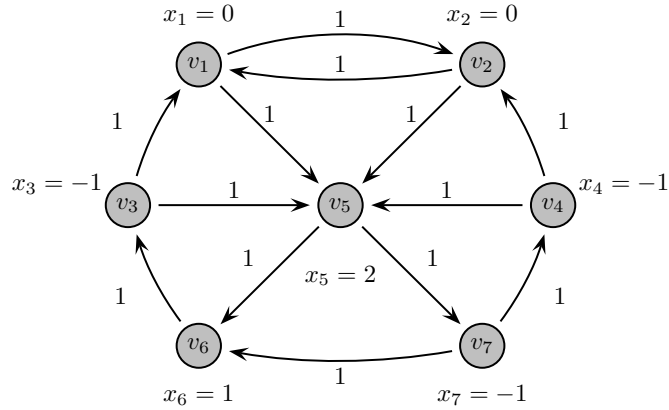}
\caption{Weighted digraph with initial weights and initial imbalance for each node.}
\label{initial-centralized}
\end{figure}

Once we compute the imbalance of each node, the centralized algorithm selects (randomly or otherwise) one node with positive imbalance, say $v_5$, and one node with negative imbalance, say $v_7$. A path from node $v_5$ to node $v_7$ is selected (e.g., the path $v_5,v_7$) and the weights of all the edges in the path are increased by the value of the weight imbalance of the positively imbalanced node $v_5$ (namely, by the value of the weight imbalance $x_5=2$), as shown in Figure \ref{iter-centralized}.

\begin{figure} [ht]
\centering
\includegraphics[width=0.60\textwidth]{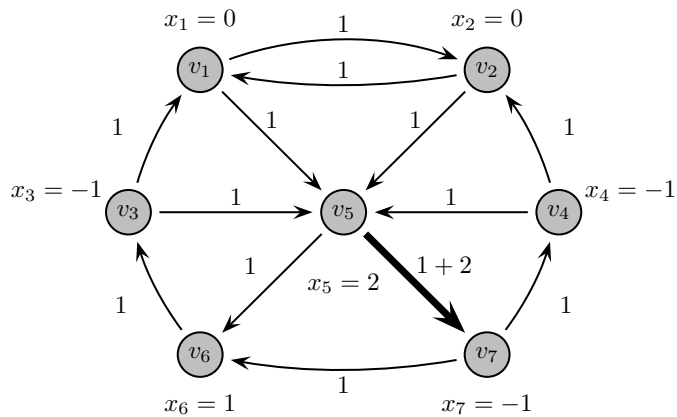}
\caption{Selection of a path between a node with positive imbalance and a node with negative imbalance.}
\label{iter-centralized}
\end{figure}

At the next iteration, after the increase of the weights of all the edges of the path $v_5,v_7$, we recalculate the imbalance for each node $v_j$ as $x_j[1]=S_j^-[1] - S_j^+[1]$. This can be seen in Figure \ref{calculate-centralized}.

\begin{figure} [ht]
\centering
\includegraphics[width=0.60\textwidth]{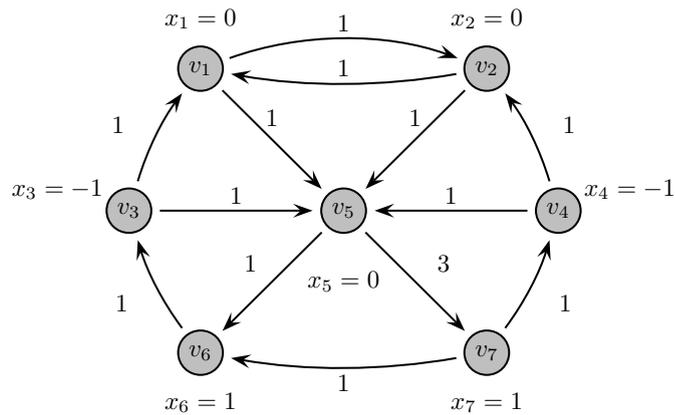}
\caption{Calculation of imbalance for each node.}
\label{calculate-centralized}
\end{figure}

At the next iteration, one node with positive imbalance, say $v_7$, and one node with negative imbalance, say $v_4$ are selected. A path from node $v_7$ to node $v_4$ is created (e.g., the path $v_7,v_4$) and the weights of all the edges in the path are increased by the value of the weight imbalance of the positively imbalanced node $v_7$ (namely, by the value of the weight imbalance $x_4=1$), as shown in Figure \ref{sec_iter_centr}.

\begin{figure} [ht]
\centering
\includegraphics[width=0.60\textwidth]{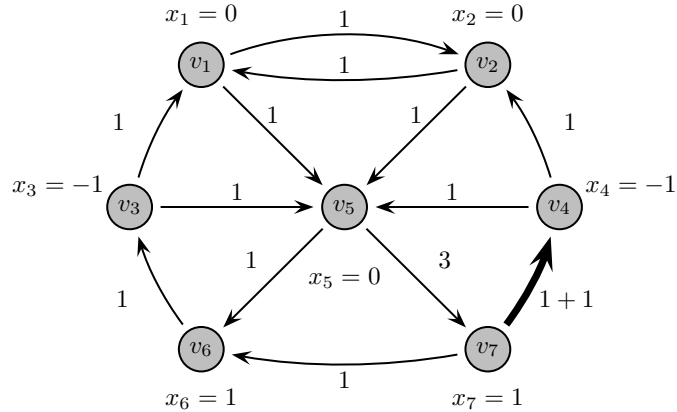}
\caption{Selection of a path between a node with positive imbalance and a node with negative imbalance.}
\label{sec_iter_centr}
\end{figure}

The process is repeated until the graph becomes weight balanced. In this particular example, this occurs after four iterations and the final weights are shown in Figure~\ref{balanc-centralized}.

\begin{figure} [ht]
\centering
\includegraphics[width=0.60\textwidth]{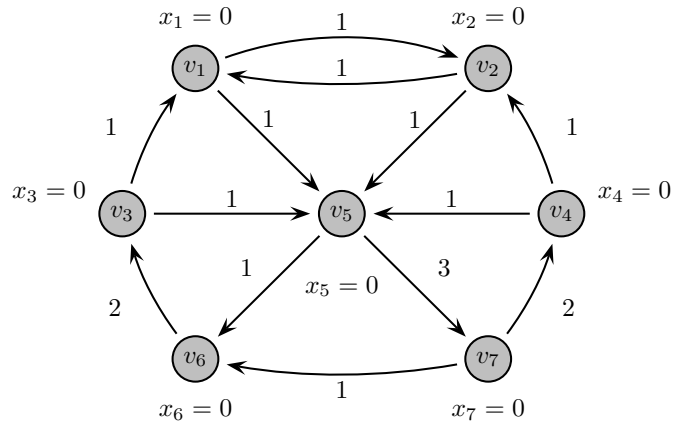}
\caption{Resulting weight balanced digraph.}
\label{balanc-centralized}
\end{figure}

We will next explain why the proposed centralized algorithm results in a weight balanced digraph after a finite number of iterations (bounded by $n - 1$ in the worst-case).

\subsection{Bound on Number of Iterations of Centralized Algorithm}
\label{analysiscentralg}

Notice that each edge appears as the incoming edge of exactly one node and as the outgoing edge of exactly one (other) node. Thus, we immediately have that
$$
\sum_{j=1}^{n}  x_j[k] = 0 \; , \text{ for all } k \; .
$$
This means that if there is a node with positive imbalance at Step~2 of the iteration, then there has to be at least one node with negative imbalance (and vice-versa). Also, notice that the adjustment of weights in Step~3 of the iteration, achieves the following:
\begin{enumerate}
\item It balances the node $v^+$ that had positive imbalance.
\item It does not change the imbalance of all other intermediate nodes $v_{j_1},\cdots,v_{j_{t-1}}$ in the path $v^+=v_{j_0},v_{j_1},\cdots,v_{j_t}=v^-$ from $v^+$ to $v^-$.
\item It increases the imbalance of the node $v^-$ that had negative imbalance.
\end{enumerate}
Note that if a node starts balanced or becomes balanced during any iteration, it remains balanced for the remainder of the algorithm (because Steps~1,~2, and~3 do not affect the imbalance of intermediate nodes in the path that is selected, and only these nodes can be balanced). Furthermore, at each iteration, at least one node becomes balanced (namely, the node with positive imbalance that is picked at Step~1). Note that it is possible for two nodes to become balanced at each iteration (if the node with negative imbalance that is picked happens to also become balanced; in fact, this is the case at the last iteration). Thus, it is easy to see that the algorithm takes at most $n-1$ iterations to reach a set of weights that forms a weight balanced graph. 

Another easily obtainable bound on the number of iterations is the following: if we think of the {\em absolute balance} of the graph at iteration $k$ as $\varepsilon[k] = \sum_{j=1}^n |x_j[k]|$, then each iteration decreases this imbalance by at least $2$ (i.e., $\sum_{j=1}^n |x_j[k+1]| \leq \sum_{j=1}^n |x_j[k]| - 2$) unless the graph is balanced. [Note that $\sum_{j=1}^n |x_j[0]|$ is necessarily an even number (because the sum of positive balances is equal to the negative of the sum of the negative imbalances, and both of them are integer numbers). Also, each iteration decreases the sum of the positive imbalances by at least 1; thus, the absolute sum of the negative imbalances also has to decrease by at least 1 as well.]

The above discussion implies the proposition below.

\begin{prop}
The number of iterations $T$ required by the proposed centralized algorithm to balance a digraph $\mathcal{G}_d = (\mathcal{V}, \mathcal{E})$ satisfies $T \leq \min(n-1, \frac{1}{2}\sum_{j=1}^n |x_j[0]|)$.
\end{prop}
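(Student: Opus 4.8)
The plan is to establish two invariants of the iteration and then read off each of the two bounds separately, finally taking their minimum. First I would record the conservation identity $\sum_{j=1}^n x_j[k] = 0$ for every $k$, already noted in the preceding discussion: since every edge is the incoming edge of exactly one node and the outgoing edge of exactly one other, each weight enters $\sum_j x_j[k] = \sum_j (\mathcal{S}_j^-[k] - \mathcal{S}_j^+[k])$ once with a $+$ sign and once with a $-$ sign. This guarantees that a positively imbalanced node coexists with a negatively imbalanced one, so Step~2 can be carried out until balance. Next I would analyze the single path update: increasing every weight along $v^+ = v_{j_0},\dots,v_{j_t} = v^-$ by $br^+$ raises $\mathcal{S}_{j_0}^+$ by $br^+$ (balancing $v^+$, whose imbalance was exactly $br^+$), raises one incoming and one outgoing weight of each intermediate node $v_{j_1},\dots,v_{j_{t-1}}$ by the same $br^+$ (leaving their imbalances unchanged), and raises $\mathcal{S}_{v^-}^-$ by $br^+$ (changing $x_{v^-}$ from $br^-$ to $br^- + br^+$). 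From this I extract the monotonicity invariant: once a node is balanced it stays balanced, because only the two endpoints of a selected path can change imbalance and a balanced node is never selected as $v^+$ or $v^-$.

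With monotonicity in hand, the first bound is immediate. At each iteration the node $v^+$ becomes balanced and remains so, hence the number of balanced nodes increases by at least one each iteration. After $n-1$ iterations at least $n-1$ nodes are balanced, and by the conservation identity the remaining node is then forced to be balanced as well; therefore $T \le n-1$.

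For the second bound I would use the potential $\varepsilon[k] = \sum_{j=1}^n |x_j[k]|$. Because $\sum_j x_j[k] = 0$, the total positive imbalance equals the magnitude of the total negative imbalance, so $\varepsilon[k] = 2P[k]$ with $P[k] = \sum_{x_j[k] > 0} x_j[k]$ an integer; in particular $\varepsilon[0]$ is even. A short case analysis on whether $br^- + br^+ \le 0$ or $br^- + br^+ > 0$ shows that $P$ drops by at least one per iteration (by $br^+ \ge 1$ in the first case, by $|br^-| \ge 1$ in the second), so $\varepsilon$ drops by at least $2$ until the graph is balanced. Hence $T \le \tfrac12 \varepsilon[0] = \tfrac12 \sum_{j=1}^n |x_j[0]|$, and combining with the first estimate yields $T \le \min\!\left(n-1, \tfrac12 \sum_{j=1}^n |x_j[0]|\right)$.

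The argument is essentially a potential/invariant argument with no deep content, so the only care needed is in the bookkeeping. I expect the main obstacle to be verifying the side conditions cleanly: that intermediate nodes are genuinely unaffected (which relies on the path being non-cyclic, as Step~3 guarantees, so no node receives two incoming or two outgoing boosts), that balanced nodes can never be re-selected so the balanced set is truly monotone, and that the $\varepsilon$-decrease estimate correctly handles the case where $v^-$ overshoots into positive imbalance. I would also double-check the parity claim that $\varepsilon[0]$ is even, since it is what makes the bound $\tfrac12\varepsilon[0]$ an exact integer bound rather than an off-by-one estimate.
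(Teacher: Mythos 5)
Your proposal is correct and follows essentially the same route as the paper: the conservation identity $\sum_j x_j[k]=0$, the observation that a path update balances $v^+$, leaves intermediate nodes untouched, and only shifts $v^-$, giving the $n-1$ bound via monotonicity of the balanced set, and the potential argument that $\varepsilon[k]$ is even and drops by at least $2$ per iteration, giving the $\tfrac12\varepsilon[0]$ bound. Your explicit case split on the sign of $br^-+br^+$ is a slightly more careful rendering of the paper's remark that the positive and negative imbalance sums each decrease by at least one, but the substance is identical.
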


\subsection{Simulation Study}
\label{resultscentralg}

In this section, we present simulation results for random graphs of size $n = 20$ and $50$ nodes.

Figure~\ref{centralized20} shows the case of a random digraph of $n = 20$ nodes. Here we can see that the proposed centralized algorithm converges to a weight balanced digraph after a finite number of iterations. 

\begin{figure}[ht] 
\centering    
\includegraphics[width=0.55\textwidth]{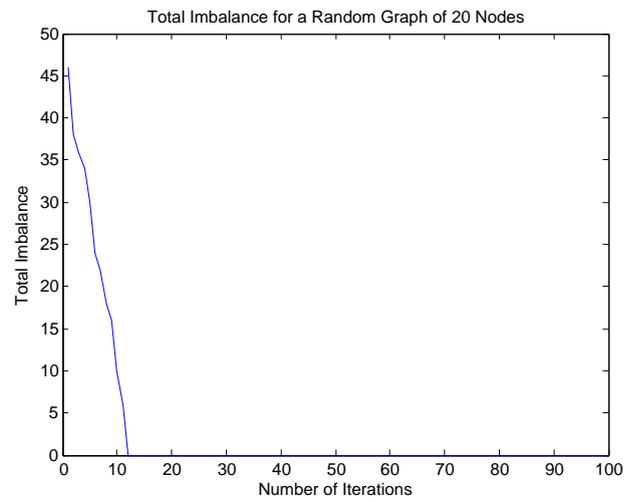}
\caption{Total imbalance plotted against the number of iterations for a random digraph of $20$ nodes when the proposed centralized algorithm is executed.}
\label{centralized20}
\end{figure}

Figure~\ref{centralized50} shows the same case as Figure~\ref{centralized20} with the difference that the network consists of $50$ nodes. 
The increase in network size does not cause any major changes in performance and the conclusions are the same as in Figure~\ref{centralized20}.

\begin{figure}[ht] 
\centering    
\includegraphics[width=0.55\textwidth]{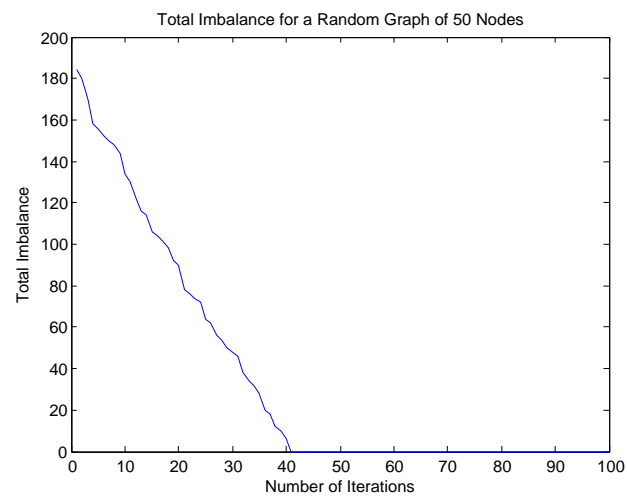}
\caption{Total imbalance plotted against the number of iterations for a random digraph of $50$ nodes when the proposed centralized algorithm is executed.}
\label{centralized50}
\end{figure}

\clearpage

\lhead{\emph{Distributed Weight Balancing}}

\chapter{Distributed Weight Balancing}
\label{centrvsdistr}

In this chapter, we present a novel distributed algorithm which deals with the problem of balancing a weighted digraph. 

This chapter is organized as follows. In Section~\ref{distralg} we introduce a novel distributed algorithm which achieves integer weight balancing in a multi-component system. In Sections~\ref{formaldistralg} - \ref{resultsdistralg} we present a formal description of the proposed distributed algorithm and demonstrate its performance via an illustrative example. 
Then, in Section~\ref{analysisdistralg}, we show that the proposed distributed algorithm converges to a weight balanced digraph after a finite number of iterations and we calculate an explicit bound on the number of iterations required. 
In Section~\ref{resultsdistralg} we present simulation results and comparisons for the proposed distributed algorithm.
The chapter is concluded in Section~\ref{sumarycentrdistralg}.

\section{Distributed Algorithm for Weight Balancing}
\label{distralg}

In this section we present a distributed algorithm (Algorithm~\ref{algorithm:2}) in which the nodes iteratively adjust the positive integer weights of their outgoing edges, such that the digraph becomes weight balanced after a finite number of iterations. 
We assume that each node observes but cannot set the weights of its incoming edges, and based on these weights it adjusts the weights on its outgoing edges. 
It is required that the weights on the outgoing edges of each node can be adjusted {\em differently} if necessary. 
(Note that this requirement is not present in \cite{2012:Rikos} where each node sets equal weights to all of its outgoing edges; when restricting ourselves to integer weights, however, this requirement becomes necessary for balancing to be possible, see, for example, \cite{2012:CortesJournal}).

Given a strongly connected digraph $\mathcal{G}_d = (\mathcal{V},\mathcal{E})$, Algorithm~\ref{algorithm:2} has each node $v_j \in \mathcal{V}$ initialize the weights of all of its outgoing edges to $n$ (or some constant greater than or equal to unity\footnote{It will become evident from our analysis that this constant does not have to be the same for all nodes (and its value does not affect the termination of the algorithm), but for simplicity we take it to be $n$ in this analysis. Note that if all edge weights were initialized to $1$, the execution of Algorithm~\ref{algorithm:2} is identical to the execution of the algorithm in \cite{2013:RikosHadj} because nodes with negative imbalance never take any action.}). 
Then, it enters an iterative stage where each node $v_j$  performs the following steps (a formal description of Algorithm \ref{algorithm:2} appears later):
\vspace{-0.1cm}
\begin{enumerate}
\item The node computes its weight imbalance.
\item If the node has positive imbalance, it increases the integer weights of its outgoing edges so that it becomes weight balanced (assuming no further changes by its in-neighbors on its incoming edges). Specifically, the outgoing edges are assigned, if possible, equal integer weights; otherwise, if this is not possible, they are assigned integer weights such that the maximum difference among them is equal to unity. 
This means that some of the outgoing edges of each node might get larger weights (by unity) than others, and we assume that each node selects {\em a priori} a fixed (possibly randomly selected) ordering of its out-neighbors that determines the precedence with which outgoing edges get higher weight.\footnote{The exact ordering is not critical and, in fact, other strategies are possible as long as they keep some balance among weights. For example, the algorithm proposed in \cite{2012:CortesJournal}, by having each node with positive imbalance increase the weight of the outgoing edge with minimum weight, also imposes some sort of balance among the weights on its outgoing edges. This algorithm has been shown to complete in finite time, but a explicit bound on the number of steps required has not been obtained.}
\item If the node has negative imbalance, it decreases (if possible) the integer weights of its outgoing edges so that i) they have value greater or equal to unity, and ii) its  weight imbalance becomes equal to $-1$ (assuming no further changes by its in-neighbors on its incoming edges). 
As in Step~$2$ above, the outgoing edges are assigned, if possible, equal integer weights; otherwise, if this is not possible, they are assigned integer weights (greater or equal to unity) such that the maximum difference among them is equal to unity (again, we assume each node determines which of its outgoing edges get higher weight based on some {\em a priori} fixed ordering of its out-neighbors).
\end{enumerate}

For simplicity, we assume that during the execution of the distributed algorithm, the nodes update the weights on their outgoing edges in a synchronous manner, but it should be evident from our proof that the algorithm can also be extended to asynchronous settings where, during each iteration, a node is selected (randomly or otherwise) to update the weights on its outgoing edges based on its imbalance at that point.\footnote{A key requirement in that case is that, as long as the graph is not balanced, no node is completely excluded from selection.}

\subsection{Formal Description of Distributed Algorithm}
\label{formaldistralg}

A formal description of the proposed distributed algorithm is presented in Algorithm~\ref{algorithm:2}.

\begin{varalgorithm}{2}
\caption{Distributed balancing with integer weights}
\textbf{Input} \\ A strongly connected digraph $\mathcal{G}_d=(\mathcal{V},\mathcal{E})$ with $n=|\mathcal{V}|$ nodes and $m=|\mathcal{E}|$ edges.\\
\textbf{Initialization} \\ Set $k=0$; each node $v_j \in \mathcal{V}$ sets its outgoing edge weights as
\vspace{-0.1cm}
\begin{align*}
f_{lj}[0] = \left\{ \begin{array}{ll}
         0, & \mbox{if $v_l \notin \mathcal{N}_j^+$,}\\
         n, & \mbox{if $v_l \in \mathcal{N}_j^+$.}\end{array} \right. 
\end{align*}
Node $v_j$ also orders its out-neighbors in the set $\mathcal{N}_j^+$ in some random (but fixed) order.

\textbf{Iteration} \\ For $k=0,1,2,\dots$, each node $v_j \in \mathcal{V}$ does the following:
\vspace{-0.1cm}
\begin{enumerate}
\item It computes its weight imbalance $x_j[k] = \mathcal{S}_j^-[k] - \mathcal{S}_j^+[k]$.
\item If  $x_j[k] > 0$, it sets 
the values of the weights on its outgoing edges as $f_{lj}[k+1] = \left \lfloor \frac{\mathcal{S}_j^-[k]}{\mathcal{D}_j^+} \right \rfloor$, $ \forall v_l \in \mathcal{N}_j^+$. Then, it chooses the first $\mathcal{S}_j^-[k] - \mathcal{D}_j^+ \left \lfloor \dfrac{\mathcal{S}_j^-[k]}{\mathcal{D}_j^+} \right \rfloor$ of its outgoing edges (according to the ordering of its out-neighbors chosen during initialization), and increases their value by $1$ so that $\vert f_{lj} - f_{hj} \vert \leq 1, \forall v_l,v_h \in \mathcal{N}_j^+ $. 
\item If  $x_j[k] < -1$, it does the following:
\\ (i) If $\left \lfloor \frac{\mathcal{S}_j^-[k]}{\mathcal{D}_j^+} \right \rfloor \geq 1$, then node $v_j$ sets 
the values of the weights on its outgoing edges as $f_{lj}[k+1] = \left \lfloor \frac{\mathcal{S}_j^-[k]}{\mathcal{D}_j^+} \right \rfloor$, $ \forall v_l \in \mathcal{N}_j^+$. Then, it chooses the first $1+ \mathcal{S}_j^-[k] - \mathcal{D}_j^+ \left \lfloor \dfrac{\mathcal{S}_j^-[k]}{\mathcal{D}_j^+} \right \rfloor$ of its outgoing edges (according to the ordering of its out-neighbors chosen during initialization), and increases their weight by $1$ so that $\vert f_{lj} - f_{hj} \vert \leq 1, \forall v_l,v_h \in \mathcal{N}_j^+ $. 
\\ (ii) If $\left \lfloor \frac{\mathcal{S}_j^-[k]}{\mathcal{D}_j^+} \right \rfloor = 0$, then node $v_j$ sets 
the values of the weights on its outgoing edges as $f_{lj}[k+1] = 1$. 
\item It repeats (increases $k$ to $k+1$ and goes back to Step~1).
\end{enumerate}
\label{algorithm:2}
\end{varalgorithm}

\subsection{Illustrative Example of Distributed Algorithm}
\label{exampledistralg}

Consider the digraph $\mathcal{G}_d=(\mathcal{V},\mathcal{E})$ in Figure~\ref{ex2-initial}, where $\mathcal{V}= \{ v_1,v_2,\dots,v_7 \}$, $\mathcal{E}= \{ e_1,e_2,\dots,e_{13} \}$, $\mathcal{E} \subseteq \mathcal{V} \times \mathcal{V} - \{ (v_j,v_j)$ $|$ $v_j \in \mathcal{V} \}$. The weight on each edge is initialized to $f_{lj}[0]=7$ for $(v_l,v_j) \in \mathcal{E}$ (otherwise $f_{lj}[0]=0$).

\begin{figure}[ht] 
\centering    
\includegraphics[width=0.60\textwidth]{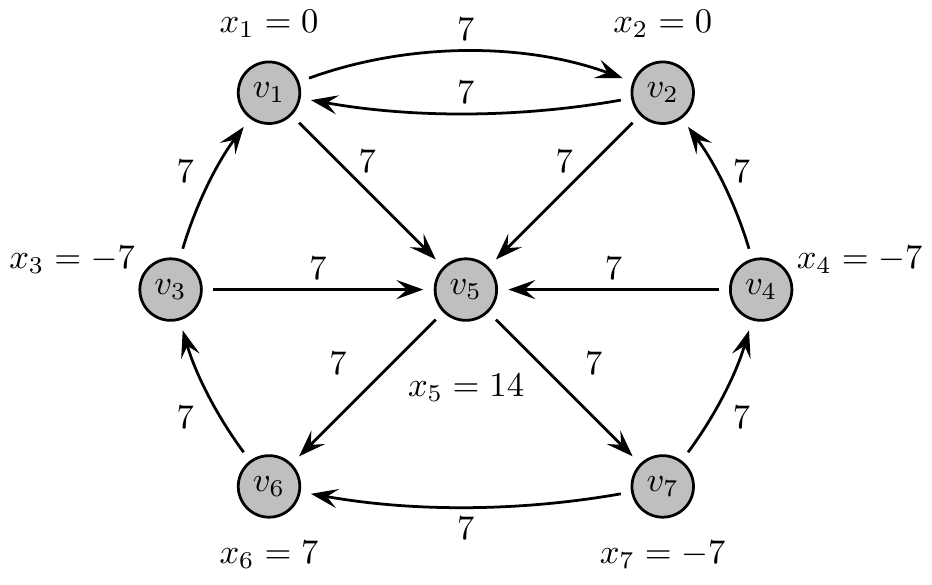}
\caption{Weighted digraph with initial weights and initial imbalance for each node.}
\label{ex2-initial}
\end{figure}

\noindent As a first step, each node computes its weight imbalance $x_j[0]=\mathcal{S}_j^-[0] - \mathcal{S}_j^+[0]$ (the corresponding imbalances are shown in Figure~\ref{ex2-initial}). Algorithm~\ref{algorithm:2} requires each node with positive imbalance to increase the value of the weights on its outgoing edges by equal integer amounts (or with maximum difference between them equal to unity), so that the total increase makes the node balanced. This ensures that weights remain strictly positive and $\mathcal{S}_j^+[k+1] = \mathcal{S}_j^-[k]$. In particular, the balance of node $v_j$ will become zero, unless the weights of its incoming edges are changed by its in-neighbors. In this case, the nodes that have positive imbalance (equal to $14$ and $7$, respectively) are nodes $5$ and $6$, which distribute their imbalance to their outgoing edges as shown in Figure~\ref{ex2-distr}. For example, node 5 has imbalance $x_5[0] = 14$ and sets $f_{65}[1]=7+7=14$ and $f_{75}[1]=7+7=14$.

\begin{figure}[ht] 
\centering    
\includegraphics[width=0.60\textwidth]{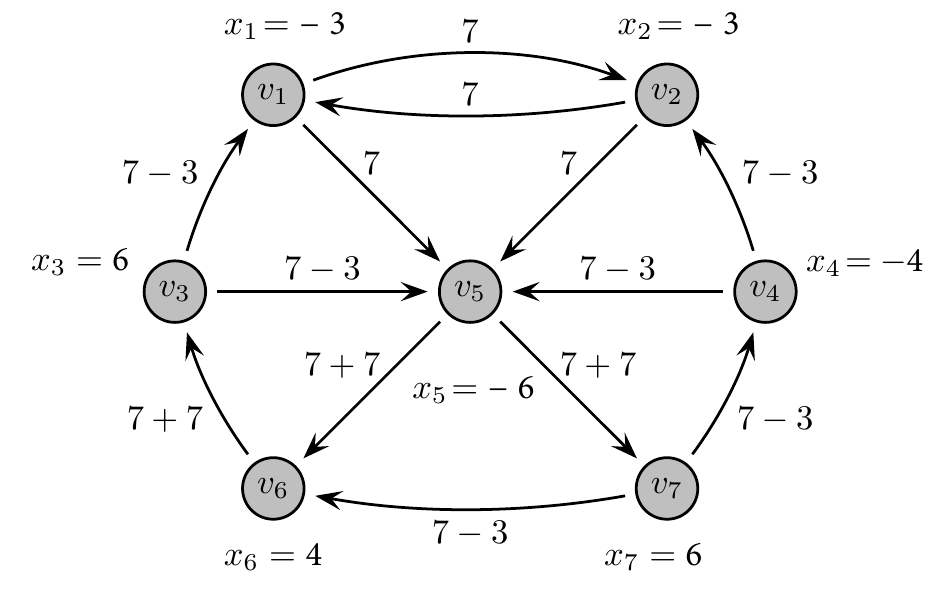}
\caption{Distribution of weights to outgoing edges by nodes with positive imbalance.}
\label{ex2-distr}
\end{figure}

The distributed algorithm also requires each node with negative imbalance to decrease the value of the weights on its outgoing edges by equal integer amounts (or with maximum difference between them equal to unity), so that i) the weights are at least $1$ and ii) the total decrease makes the nodes imbalance equal to $-1$. This ensures that weights remain strictly positive and $\mathcal{S}_j^+[k+1] = \mathcal{S}_j^-[k] +1$. The balance of node $v_j$ remains negative at $-1$, unless the weights of its incoming edges are changed by its in-neighbors. In this case, the nodes that have negative imbalance (all equal to $-7$) are nodes $3$, $4$ and $7$, which distribute their imbalance to their outgoing edges as shown in Figure~\ref{ex2-distr}. For example, node $7$ has imbalance $x_7[0] = -7$ and sets $f_{67}[1]=7-3=4$ and $f_{47}[1]=7-3=4$.

In the next iteration, after the integer weight update on the outgoing edges of each node with positive (or negative) imbalance at $k=0$, the nodes recalculate their imbalance as $x_j[1]=\mathcal{S}_j^-[1] - \mathcal{S}_j^+[1]$, and the process is repeated. After a finite number of iterations, which we explicitly bound in the next section, we reach the weighted digraph with the integer weights shown in Figure~\ref{ex2-final}. In this example, this occurs after $17$ iterations.

\begin{figure}[ht] 
\centering    
\includegraphics[width=0.60\textwidth]{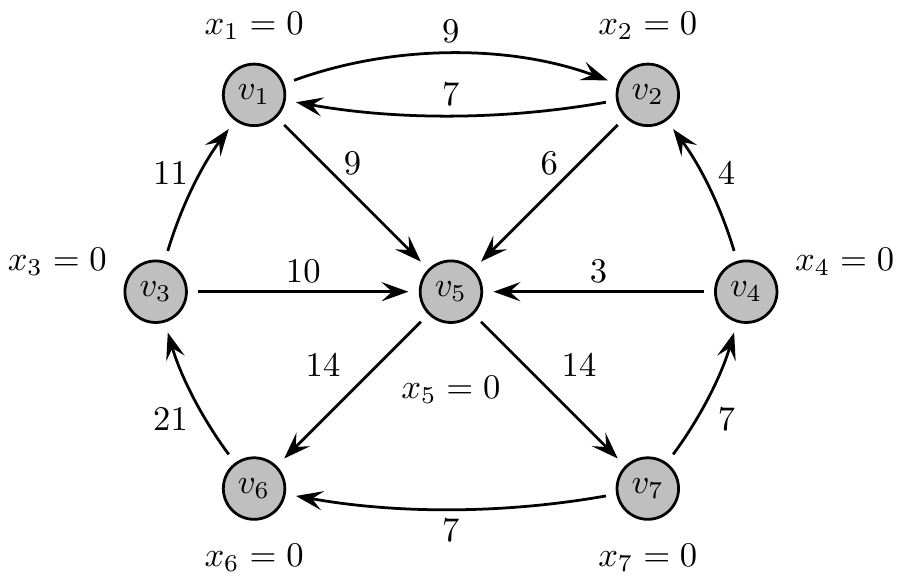}
\caption{Weight balanced digraph after $17$ iterations.}
\label{ex2-final}
\end{figure}

\begin{remark}
Note that Algorithm~\ref{algorithm:2} resembles \cite{2014:RikosHadj} in many ways: when a node has positive imbalance, it increases the weights on its outgoing edges; whereas when it has negative imbalance, it decreases the weights on its outgoing edges. The restriction to integers, however, creates some ``anomalies" that need to be addressed. In particular, one has to worry about the following cases:
\\ (i) When the imbalance is positive but not divisible by the out-degree of the node, it is not possible to increase the weights of the outgoing edges by equal integer amounts. In order to keep the weights on its outgoing edges integer-valued, the node makes the needed adjustments while allowing the weights to differ by at most~$1$.
\\ (ii) The same applies when the imbalance is negative (in which case the weights have to decrease). 
An obvious additional constraint in this case is the fact that the weights have to be kept strictly positive. 
\\ (iii) An interesting feature of Algorithm~\ref{algorithm:2} is that when the imbalance of a node is negative, the node adjusts the weights on its outgoing edges so that it achieves an imbalance of $-1$ (not zero) assuming no changes in their incoming weights. The reasons for this choice become clear in the proof of termination later on.
\end{remark}

\begin{remark}
It is worth pointing out that Algorithm~\ref{algorithm:2} is similar to the weight balancing algorithm presented in \cite{2013:RikosHadj}, but allows nodes to adjust the weights of their outgoing edges regardless of whether they have positive or negative imbalance; this means that Algorithm \ref{algorithm:2} will typically converge faster than the algorithm presented in \cite{2013:RikosHadj} (though in the worst case they will take a similar number of iterations to converge). 
\end{remark}

\section{Execution Time Analysis of Distributed Algorithm}
\label{analysisdistralg}

Herein, we calculate an explicit bound on the number of iterations required, \textit{in the worst-case}, for the given digraph to become balanced. Our bound is $O(n^3 m^2)$ iterations and, since $m < n^2$, we can also bound the number of iterations as $O(n^7)$. Note that the bound we obtain also applies for the imbalance-correcting algorithm in \cite{2012:CortesJournal} (note that \cite{2012:CortesJournal} established convergence in finite time for the imbalance-correcting algorithm, but did not provide a bound on the number of iterations). Thus, the importance of the result in this section is that the number of iterations (for both our proposed algorithm and for the imbalance-correcting algorithm in \cite{2012:CortesJournal}) is polynomial in the size of the digraph and not exponential.

Let the total imbalance of the digraph at iteration $k$ be
\vspace{-0.15cm}
$$
\varepsilon[k] = \sum_{j=1}^n | x_j[k] | \;, 
$$
where $x_j[k] = \mathcal{S}_j^-[k] - \mathcal{S}_j^+[k]$ with $\mathcal{S}_j^-[k] = \sum_{v_i \in \mathcal{N}_j^-} f_{ji}[k]$ and $\mathcal{S}_j^+[k]= \sum_{v_l \in \mathcal{N}_j^+} f_{lj}[k]$. Clearly, the total imbalance is a nonnegative quantity that is zero if and only if the digraph is balanced. Also, since we have $\sum_j x_j[k] = 0$ (because each weight $f_{ji}[k]$ appears twice, once with a positive sign and once with a negative sign), we see that: (i) the total imbalance of the digraph at any given $k$ is an even number, and (ii) if there is a node with positive imbalance, then there is also (at least one) node with negative imbalance. For convenience, in the remainder of this section, we will sometimes refer to nodes with positive (negative) imbalance as {\em positive} ({\em negative}) nodes, and to nodes with zero imbalance as {\em balanced} nodes.

\begin{prop}
\label{prop1}
During the execution of Algorithm~\ref{algorithm:2}, we have 

\vspace{-0.35cm}
$$
\varepsilon[k+1] \leq \varepsilon[k] \leq \varepsilon[0] \leq n^2 (n-2) \; .
$$
\end{prop}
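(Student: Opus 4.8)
The plan is to prove the three inequalities in turn, reducing the middle one to the monotonicity statement by induction. I would begin with the rightmost bound $\varepsilon[0] \le n^2(n-2)$, which follows directly from the initialization. Since every edge weight is set to $n$, the in- and out-weights of each node are $\mathcal{S}_j^-[0] = n\,\mathcal{D}_j^-$ and $\mathcal{S}_j^+[0] = n\,\mathcal{D}_j^+$, so $x_j[0] = n(\mathcal{D}_j^- - \mathcal{D}_j^+)$. Because $\mathcal{G}_d$ is strongly connected with $n \ge 2$ and has no self-loops, every node satisfies $1 \le \mathcal{D}_j^+ \le n-1$ and $1 \le \mathcal{D}_j^- \le n-1$, whence $|\mathcal{D}_j^- - \mathcal{D}_j^+| \le n-2$. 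Summing over the $n$ nodes gives $\varepsilon[0] = n\sum_{j=1}^n |\mathcal{D}_j^- - \mathcal{D}_j^+| \le n\cdot n(n-2) = n^2(n-2)$. The middle inequality $\varepsilon[k] \le \varepsilon[0]$ then follows immediately by induction, so the whole proposition reduces to the one-step claim $\varepsilon[k+1]\le\varepsilon[k]$.

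For the monotonicity I would adopt a transfer-of-imbalance viewpoint, writing the update as $x_j[k+1] = x_j[k] - a_j + \Delta\mathcal{S}_j^-$, where $a_j := \mathcal{S}_j^+[k+1] - \mathcal{S}_j^+[k]$ is the net change in $v_j$'s \emph{outgoing} weight (controlled by $v_j$) and $\Delta\mathcal{S}_j^- := \mathcal{S}_j^-[k+1] - \mathcal{S}_j^-[k]$ is the net change in its \emph{incoming} weight (controlled by its in-neighbors). The first key step is to read off from Steps~2--3 of Algorithm~\ref{algorithm:2} the projected self-imbalance $y_j := x_j[k] - a_j$ and to check that $y_j \le 0$ in every case: a positive node discharges so that $y_j = 0$; a negative node charges up to $y_j = -1$ in case (i), or to $y_j = \mathcal{S}_j^-[k] - \mathcal{D}_j^+ \le -1$ in case (ii) (since $\lfloor \mathcal{S}_j^-[k]/\mathcal{D}_j^+ \rfloor = 0$ forces $\mathcal{S}_j^-[k] \le \mathcal{D}_j^+ - 1$); and a node with $x_j[k] \in \{0,-1\}$ takes no action, so $y_j = x_j[k] \le 0$. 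Hence $x_j[k+1] = y_j + \Delta\mathcal{S}_j^-$ with $y_j \le 0$ throughout.

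The second key step is to aggregate these into a bound on the total positive imbalance $P[k] := \sum_{j:\,x_j[k]>0} x_j[k]$, using $\varepsilon[k] = 2P[k]$ (which holds because $\sum_j x_j[k]=0$). From $y_j \le 0$ we obtain $\max(x_j[k+1],0) \le \max(\Delta\mathcal{S}_j^-,0)$, so any positive imbalance surviving at $v_j$ must be paid for by a net increase of its incoming weight. The total incoming-weight increase over the network is conservatively tied to the total outgoing-weight increase, and the only source of outgoing-weight \emph{increase} is the excess $\sum_{j:\,x_j[k]>0} a_j = P[k]$ pushed out by the positive nodes, since negative and balanced nodes never raise their total outgoing weight. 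Carrying this accounting through yields $P[k+1] \le P[k]$, and therefore $\varepsilon[k+1] = 2P[k+1] \le 2P[k] = \varepsilon[k]$, completing the induction.

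I expect the aggregation in the last step to be the main obstacle, because the updates are simultaneous and the integer rounding means a node rebalancing its outgoing edges may raise some of them while lowering others even when its net outgoing change has a fixed sign; one therefore cannot argue edge-by-edge that the weights feeding any given node move monotonically, and a naive splitting of $\max(\Delta\mathcal{S}_j^-,0)$ into per-edge contributions is too lossy to close the bound. To control this I would first establish the invariant that at every iteration the outgoing edge weights of each node differ by at most one: this holds initially (all equal to $n$) and is preserved by Steps~2--3, which only ever reset a node's outgoing weights to two consecutive integer values. This invariant bounds the internal reshuffling at each node and lets me match the discharged excess $P[k]$ against the positive imbalance created at the receiving nodes at the level of net flows rather than individual edges, which is exactly what the aggregation requires.
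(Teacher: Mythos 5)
Your proof is correct, but it is organized differently from the paper's, and the reorganization is worth noting. The paper bounds $\varepsilon[k+1]-\varepsilon[k]$ directly: it first treats the asynchronous case where a single node updates, splitting into three cases on the sign of $x_j[k]$ and applying the triangle inequality to $|x_l[k+1]|-|x_l[k]|$, and then extends to the synchronous update by partitioning $\mathcal{V}$ into $\mathcal{A}=\{x_j[k]>0\}$, $\mathcal{B}=\{x_j[k]\leq -2\}$, $\mathcal{C}=\{x_j[k]\in\{-1,0\}\}$ and re-indexing the double sum $\sum_j\sum_{v_i\in\mathcal{N}_j^-}|\Delta f_{ji}|$ by source node. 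You instead pass to the positive part $P[k]=\varepsilon[k]/2$, isolate the one-sided inequality $x_j[k+1]\leq \Delta\mathcal{S}_j^-$ from the observation $y_j=x_j[k]-a_j\leq 0$ (which you verify correctly in all three cases, including the floor-equals-zero subcase of Step~3), and close with the single accounting identity that the total positive per-edge weight increase equals exactly $\sum_{x_i[k]>0}x_i[k]=P[k]$. This buys you a proof that handles the synchronous update in one pass, with no asynchronous warm-up and no three-way case split on the sign of $|x_j[k+1]|-|x_j[k]|$; the cost is that you must make explicit the sign-consistency fact (every outgoing edge of a positive node has $\Delta f\geq 0$ and of a negative node $\Delta f\leq 0$), which the paper relegates to a footnote but which both proofs ultimately rely on. Your closing invariant (outgoing weights differ by at most one, adjusted along a fixed prefix ordering) is indeed what delivers that sign-consistency. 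One small correction to your own framing: once sign-consistency holds, the ``naive'' per-edge splitting $\max(\Delta\mathcal{S}_j^-,0)\leq\sum_{v_i\in\mathcal{N}_j^-}\max(\Delta f_{ji},0)$ is not lossy at all --- summed over all nodes it evaluates to exactly $P[k]$, since only positive nodes contribute positive terms and each contributes precisely its imbalance --- so the final step is an edge-level argument rather than the net-flow matching you describe, and it closes the bound cleanly.
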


\vspace{-0.15cm}
\begin{proof}
Since the digraph is assumed to be strongly connected, each node has at least one incoming edge and at least one outgoing edge. Also, each node has at most $n-1$ incoming edges and at most $n-1$ outgoing edges. Since at initialization all edges have weight $n$, we have $|x_j[0]| \leq n(n-2)$, which means that $\varepsilon[0] \leq n^2(n-2)$.

To gain some insight, assume (for now) that at iteration $k$, only node $v_j$ changes the weights on its outgoing edges  (with the other nodes {\em not} making any changes regardless of their imbalance). We have the following three cases to consider:

\noindent 
(i) {\em Case 1: Node $v_j$ has positive imbalance $x_j[k]$.} In such case, node $v_j$ uniformly\footnote{It is possible that some weights are not increased, but, due to the fixed ordering of the out-neighbors of node $v_j$ (with which weights are increased by one over the value of other weights), we have $f_{lj}[k+1] \geq f_{lj}[k]$ for all $v_l \in \mathcal{N}_j^+$.} increases the weights on its outgoing edges in such a way so that $x_j[k+1]=0$ (because $\mathcal{S}_j^+[k+1] = \mathcal{S}_j^-[k]$ and $\mathcal{S}_j^-[k+1] = \mathcal{S}_j^-[k]$ ---since no other node updates the weights on its outgoing edges).
In order to see how the total imbalance changes, we look at 
\vspace{-0.05cm}
$$
\begin{array}{rcl}
\varepsilon[k+1]-\varepsilon[k] & = & \sum_{l = 1}^n | x_l[k+1] | - | x_l[k] | \\
& \stackrel{(a)}{=} & -x_j[k] + \sum_{v_l \in \mathcal{N}_j^+} | f_{lj}[k+1]-f_{lj}[k] + x_l[k] | - |x_l[k]| \\
& \stackrel{(b)}{\leq} & -x_j[k] + \sum_{v_l \in \mathcal{N}_j^+}  | f_{lj}[k+1]-f_{lj}[k] | \\
& \stackrel{(c)}{=} & -x_j[k] + \sum_{v_l \in \mathcal{N}_j^+}  (f_{lj}[k+1]-f_{lj}[k]) \\ & = & 0 \; .
\end{array}
$$
(a) follows because $x_j[k+1]=0$, $x_j[k]$ is positive, and only nodes in $\mathcal{N}_j^+$ see changes in their weights from incoming edges; in particular, node $v_l$ sees its incoming weight from node $v_j$ increase by $f_{lj}[k+1]-f_{lj}[k]$. (b) stems from the triangle inequality, and (c) from Step~2 of Algorithm~\ref{algorithm:2} and the fact that $f_{lj}[k+1] \geq f_{lj}[k]$ for $v_l \in \mathcal{N}_j^+$. Note that equality holds in Case~1 if {\em all} nodes in $\mathcal{N}_j^+$ have positive or zero balance.


(ii) {\em Case 2: Node $v_j$ with negative imbalance $x_j[k] \leq -2$.} In this case, node $v_j$ decreases the weights on its outgoing edges so that $x_j[k+1]=-a_j$ for some integer $a_j \geq 1$. The aim is for $a_j$ to be unity (in Step~3.1 of Algorithm~\ref{algorithm:2}) but this may not be possible as the weights on the outgoing edges of $v_j$ are constrained to remain above unity (in Step~3.2 of Algorithm~\ref{algorithm:2}).
Using similar arguments as in the previous case (but keeping in mind the difference in the signs of the various quantities), we have
$$
\begin{array}{rcl}
\varepsilon[k+1]-\varepsilon[k] & = & a_j-|x_j[k]| + \sum_{l \neq j} | x_l[k+1] | - | x_l[k] | \\
& = & a_j + x_j[k] + \sum_{v_l \in \mathcal{N}_j^+} | f_{lj}[k+1]-f_{lj}[k] + x_l[k] | - |x_l[k]| \\
& \leq & a_j + x_j[k] + \sum_{v_l \in \mathcal{N}_j^+}  | f_{lj}[k+1]-f_{lj}[k] | \\
& \stackrel{(d)}{=} &  a_j + x_j[k] - \sum_{v_l \in \mathcal{N}_j^+}  (f_{lj}[k+1]-f_{lj}[k]) \\ & = & 0 \; , 
\end{array}
$$
where (d) follows from the fact that the total decrease in the weights satisfies
$$
\begin{array}{rcl}
\sum_{v_l \in \mathcal{N}_j^+}  (f_{lj}[k+1]-f_{lj}[k]) & = & \mathcal{S}_j^+[k+1] - \mathcal{S}_j^+[k] \\
 & = & \mathcal{S}_j^-[k]+a_j - \mathcal{S}_j^+[k] \\
 & = & a_j + x_j[k]  \; .
\end{array}
$$
In Case~2, equality holds if all nodes in $\mathcal{N}_j^+$ have negative or zero balance.

(iii) {\em Case 3: Node $v_j$ with $x_j[k] = -1$ or $x_j[k]=0$.} In this case, node $v_j$ does not do anything so we easily conclude that $\varepsilon[k+1]-\varepsilon[k] = 0$.


Clearly, the arguments in the above three cases establish that $\varepsilon[k+1] \leq \varepsilon[k]$ in an asynchronous setting where, at each iteration $k$, a single node $v_j$ is selected (randomly or otherwise) to update the weights on its outgoing edges based on its imbalance $x_j[k]$ at that point. In fact, with a little bit of book-keeping, we can extend the above argument in the synchronous setting of Algorithm~\ref{algorithm:2}, where nodes adjust the weights on their outgoing edges simultaneously. The key difference is that now the weights on the incoming edges of each node $v_j$ may change: $\mathcal{S}_j^-[k+1]$ is no longer identical to $\mathcal{S}_j^-[k]$ and has to be explicitly accounted for. Letting $\Delta f_{ji} = f_{ji}[k+1] - f_{ji}[k]$, we have
$$
\begin{array}{rcl}
\varepsilon[k+1]-\varepsilon[k] & = & \sum_{j=1}^n |x_j[k+1]| - | x_j[k] | \\
& = & \sum_{v_j \in \mathcal{V}} |\mathcal{S}_j^-[k+1] - \mathcal{S}_j^+[k+1] | - | x_j[k] | \\
& = & \sum_{v_j \in \mathcal{V}} | \sum_{v_i \in \mathcal{N}_j^-} \Delta f_{ji} - \sum_{v_l \in \mathcal{N}_j^+} \Delta f_{lj} + x_j[k] | - | x_j[k] | .
\end{array}
$$

Consider now the following partition of $\mathcal{V}$: set $\mathcal{A} = \{ v_j \; | \; x_j[k] > 0 \} $, set $\mathcal{B} = \{ v_j \: | \: x_j[k] \leq -2 \}$, and set $\mathcal{C} = \{ v_j \; | \; x_j[k] = -1 \text{ or } x_j[k]=0 \}$. Using these partitions, 
\vspace{-0.15cm}
$$
\begin{array}{rcl}
\varepsilon[k+1]-\varepsilon[k] & \stackrel{(e)}{=} & \sum_{v_j \in \mathcal{A}} | \sum_{v_i \in \mathcal{N}_j^-} \Delta f_{ji} | - x_j[k] \\
& \stackrel{(f)}{+} & \sum_{v_j \in \mathcal{B}} | \sum_{v_i \in \mathcal{N}_j^-} \Delta f_{ji} -a_j | + x_j[k] \\
& \stackrel{(g)}{+} & \sum_{v_j \in \mathcal{C}} | \sum_{v_i \in \mathcal{N}_j^-} \Delta f_{ji} + x_j[k] | + x_j[k] \; , 
\end{array}
$$
where (e) follows from Case~1, (f) follows from Case~2, and (g) from the fact that $x_j[k] \leq 0$ for $v_j \in \mathcal{C}$. Using the triangle inequality on each line we have 
\vspace{-0.25cm}
$$
\begin{array}{rcl}
\varepsilon[k+1]-\varepsilon[k] & \leq & \sum_{v_j \in \mathcal{A}} \sum_{v_i \in \mathcal{N}_j^-} | \Delta f_{ji} | -  x_j[k] + \sum_{v_j \in \mathcal{B}} \sum_{v_i \in \mathcal{N}_j^-} | \Delta f_{ji} | \\  & + & a_j + x_j[k] + \sum_{v_j \in \mathcal{C}} \sum_{v_i \in \mathcal{N}_j^-} | \Delta f_{ji} | \\
& = & \sum_{v_j \in \mathcal{A}} \sum_{v_l \in \mathcal{N}_j^+} |\Delta f_{lj}| - x_j[k] + \sum_{v_j \in \mathcal{B}} \sum_{v_l \in \mathcal{N}_j^+} |\Delta f_{lj}| \\ & + & a_j + x_j[k] + \sum_{v_j \in \mathcal{C}} \sum_{v_l \in \mathcal{N}_j^+} |\Delta f_{lj}| \\
& = & 0 ,
\end{array}
$$
where the key was to re-arrange the summation of the $|\Delta f_{ji}|$ and to take advantage of the inequalities we proved earlier, in Cases~1, 2~and~3.
%
\end{proof}

Note that Proposition~\ref{prop1} essentially gives us a way to analyze the execution time of the proposed algorithm (and also the algorithm in \cite{2012:CortesJournal}). The basic idea is to bound the number of steps $K$ it takes for $\varepsilon[k+K]$ to become strictly smaller than $\varepsilon[k]$. In Proposition~\ref{propbound} below, we argue that $K \leq m^2$, where $m$ is the number of edges of the digraph; this implies that the execution time of the algorithm can be bounded by 
\vspace{-0.15cm}
$$
\text{Execution Time} \leq m^2 \frac{\varepsilon[0]}{2} \leq [n(n-1)]^2 \frac{n^2(n-2)}{2} \leq \alpha n^7 \; \vspace{-0.15cm}
, 
$$
where $\alpha$ is a natural number (i.e.,  $\alpha \in \mathbb{N}$). 
[Note that the number of edges $m$ in a digraph satisfies $m \leq n(n-1)$, the initial total imbalance satisfies\footnote{Note that the total initial imbalance depends on the initial weights that each node assigns to its outgoing edges. For the case where each node initializes the weights of its outgoing edges to be equal to unity then the execution time of the proposed distributed algorithm can be bounded by $O(n^6)$ time steps.} $\varepsilon[0] \leq n^2(n-2)$, and each decrease that takes place is by at least two (since $\varepsilon[k]$ is always an even number).]

\begin{prop}
\label{propbound}
During the execution of Algorithm~\ref{algorithm:2}, we have 
$$
\varepsilon[k+K] < \varepsilon[k], \; \; \; k=0, 1, 2, ...
$$
when $\varepsilon[k] > 0$ and $K > m^2$ ($m = | \mathcal{E} |$ is the number of edges in the given digraph $\mathcal{G}_d$).
\end{prop}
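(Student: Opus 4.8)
The plan is to argue by contradiction: assume that the total imbalance fails to strictly decrease over a long window, i.e.\ $\varepsilon[k] = \varepsilon[k+1] = \cdots = \varepsilon[k+K]$ with $\varepsilon[k] > 0$, and show this forces $K \le m^2$. The starting point is to sharpen Proposition~\ref{prop1} into an \emph{equality characterization}: tracing back through its three-case analysis, $\varepsilon[k+1] = \varepsilon[k]$ can hold only if, at that iteration, every weight increase performed by a positive node lands on an out-neighbor that is itself non-negative, and every weight decrease performed by a negative node lands on an out-neighbor that is non-positive (these are exactly the configurations where each triangle inequality in the proof of Proposition~\ref{prop1} is tight). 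In other words, during a no-decrease window the positive imbalance is never pushed \emph{directly} onto a negative node, and vice versa.

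Building on this, I would establish a sign-invariance lemma valid throughout a no-decrease window: once $x_j[k'] > 0$ the imbalance of $v_j$ stays $\ge 0$ for the remainder of the window, and once $x_j[k'] \le -2$ it stays $\le 0$. The reason is that a strictly positive node can only receive weight \emph{increases} on its incoming edges (a decrease would come from a negative in-neighbor and, by the equality characterization, would require $v_j$ to be non-positive at that step), so after rebalancing itself to $0$ it can only be pushed back up; the negative case is symmetric. Consequently the node set splits into a set $P$ of ``positive-side'' nodes (imbalance $\ge 0$ throughout) carrying the conserved excess $\tfrac{1}{2}\varepsilon[k]$, a disjoint set $N$ of ``negative-side'' nodes carrying the deficit, and balanced buffer nodes. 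Moreover, on every edge out of a positive-side node the weight $f_{lj}$ is non-decreasing, while on every edge out of a negative-side node it is non-increasing (and always $\ge 1$), throughout the window.

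With these structural facts in place, the core is a propagation estimate. Since $\varepsilon[k] > 0$ there is a positive node $v_p$ and a negative node $v_q$, and strong connectivity supplies a directed path from $v_p$ to $v_q$. At each iteration every positive node rebalances itself by transferring its excess forward onto its out-neighbors, turning previously balanced nodes positive; thus the positive-side region can only expand, and it cannot expand so as to push directly onto a node of $N$ without triggering the tight-case violation that, by the equality characterization, forces a strict decrease and contradicts the no-decrease assumption. Finiteness of $\mathcal{V}$ and $\mathcal{E}$ then bounds how long this expansion can be stalled: I would count the edges along which the excess has already propagated and argue that the propagation frontier must advance along the $v_p \to v_q$ path within a bounded number of steps, yielding $K \le m^2$ (heuristically, the path has at most $m$ edges and each edge can block the frontier for at most $m$ iterations).

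I expect the main obstacle to be precisely this last quantitative step. The difficulty is the \emph{buffer-node phenomenon}: a balanced node may simultaneously receive a weight increase from a positive in-neighbor and a weight decrease from a negative in-neighbor and, if these happen to cancel, remain balanced, thereby keeping the two regions separated with no strict decrease. Ruling out the indefinite persistence of such knife-edge cancellations, and converting ``the frontier must eventually advance'' into the explicit bound $m^2$, is the delicate part; here the within-region monotonicity of the edge weights (weights out of $P$ never decrease, weights into $N$ never increase yet stay $\ge 1$) together with the conservation $\sum_{j} x_j[k] = 0$ should be the levers that force the frontier forward within $m$ steps per edge. The uniform spreading of excess in Step~2 of Algorithm~\ref{algorithm:2}, which can scatter excess onto out-neighbors farther from $N$, is the feature that defeats a naive distance-potential argument and that the counting must handle with care.
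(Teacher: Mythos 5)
Your structural setup --- the sign-invariance of positive and negative nodes during a no-decrease window, the split of $\mathcal{V}$ into a positive-side region, a negative-side region, and balanced buffers, and the observation that a strict decrease of $\varepsilon$ is triggered the moment a positive node pushes weight directly onto a negative one --- matches the skeleton of the paper's argument, which likewise reduces the proposition to bounding how long the positive imbalance can wander before one of its carriers becomes an in-neighbor acting on a negative node. But your proof stops exactly where the real work begins, and you say so yourself: you do not supply the mechanism that converts ``the frontier must eventually advance'' into the quantitative bound $K \le m^2$. Monotonicity of the weights out of the positive region and the conservation law $\sum_j x_j[k] = 0$ are not sufficient levers on their own; they are perfectly compatible with the excess circulating forever around a cycle of positive and balanced nodes without ever reaching the negative region.

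The missing idea is the one the paper builds its entire count on: because Step~2 keeps the outgoing weights of each node within $1$ of one another, a node that repeatedly receives imbalance cannot keep forwarding it along the same outgoing edge --- after finitely many visits the approximate-balance constraint forces it to route the excess along an outgoing edge it has not used before. The paper formalizes this by tracking the walk of the imbalance: let $C_1$ be the sequence of edges traversed until some edge is hit a second time; while re-traversing $C_1$ the constraint forces a new edge into the walk, producing $C_2$, and so on. Since each $C_i$ contributes at least one previously unused edge there are at most $m$ stages, each stage re-traverses at most $m$ edges, and the negative node is reached within $\sum_{i=1}^{N}(N-i+1)|C_i| \le Nm \le m^2$ iterations. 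Notably, you treat the uniform spreading in Step~2 as an obstacle (``the feature that defeats a naive distance-potential argument''), whereas in the paper's proof it is precisely the enabling mechanism: it is what guarantees that new edges keep being forced into the walk. Without this argument (or an equivalent one) your proposal establishes only that the imbalance cannot stall indefinitely, not that it resolves within $m^2$ steps, so the stated bound is not proved.
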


Before providing the proof of Proposition~\ref{propbound}, we discuss an example of a ``difficult'' digraph in order to provide intuition about the problem.

\begin{example}
\label{EXAMPLE2}
\textup{
Consider the digraph $\mathcal{G}_d=(\mathcal{V},\mathcal{E})$ in Figure~\ref{example_converg}, where $\mathcal{V}= \{ v_1,v_2,\dots,v_8 \}$, $\mathcal{E}= \{ e_1,e_2,\dots,e_{15} \}$, $\mathcal{E} \subseteq \mathcal{V} \times \mathcal{V} - \{ (v_j,v_j)$ $|$ $v_j \in \mathcal{V} \}$. Edges $\{ e_1,e_2,\dots,e_{15} \}$ are not denoted in the figure to avoid cluttering the diagram. The weight on each edge is $f_{lj}[0]=1$ for $(v_l,v_j) \in \mathcal{E}$ (otherwise, $f_{lj}[0]=0$). $\mathcal{G}_d$ involves of 4 cycles $C_1, C_2, C_3, C_4$, which comprise of the following edges: \\
\begin{tabular}{ l l }
  $C_1:$ & $<(v_2, v_1),(v_3, v_2),(v_1, v_3)> $, \\
  $C_2:$ & $<(v_4, v_2),(v_5, v_4),(v_3, v_5),(v_2, v_3)> $,  \\
  $C_3:$ & $<(v_6, v_4),(v_7, v_6),(v_5, v_7),(v_4, v_5)> $,  \\
  $C_4:$ & $<(v_8, v_6),(v_7, v_8),(v_6, v_7)> $.  \\
\end{tabular} }

\begin{figure}[ht] 
\centering    
\includegraphics[width=0.60\textwidth]{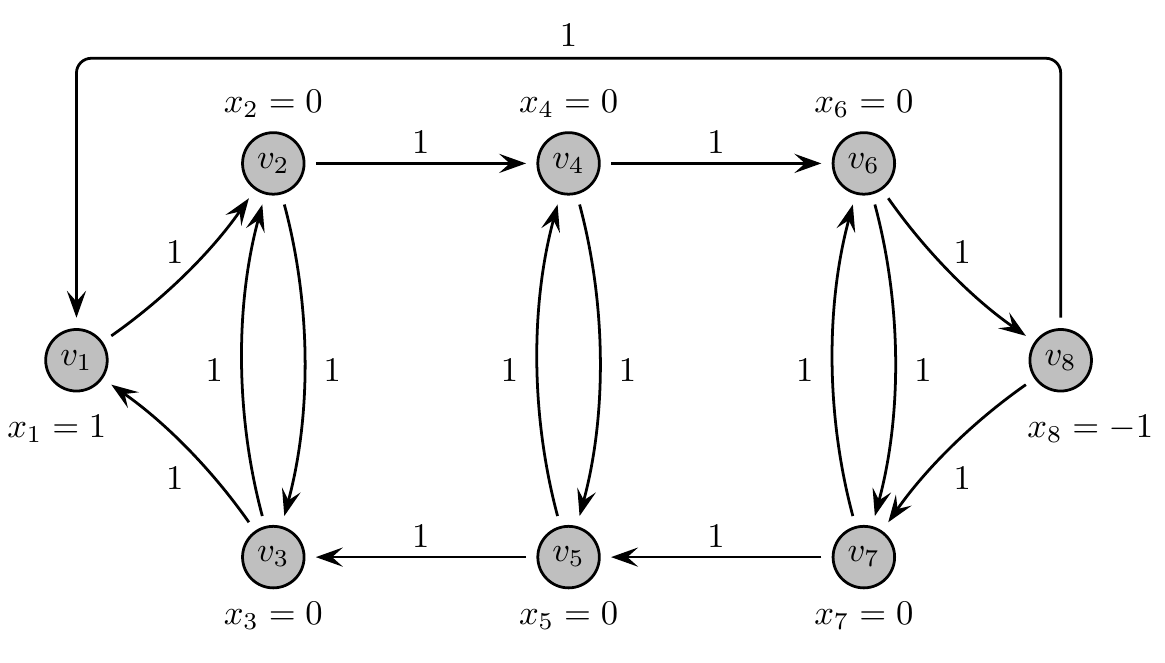}
\caption{Weighted digraph with initial weights and initial imbalances.}
\label{example_converg}
\end{figure}

\textup{
Initially, each node computes its weight imbalance $x_j[0]=\mathcal{S}_j^-[0] - \mathcal{S}_j^+[0]$ (this initial imbalance is indicated in Figure~\ref{example_converg}). 
The only node that will take action is node $v_1$, which has positive imbalance equal to $1$ (all other nodes have imbalance zero or $-1$). 
Note that node $v_1$ will increase the weight on edge $(v_2, v_1)$, causing an imbalance to node $v_2$ who will then be forced to take action. Depending on whether $v_2$ increases the weight on edge $(v_3,v_2)$ or $(v_4,v_2)$, node $v_3$ or node $v_4$ will be forced to take action. Thus, there are different possibilities (executions of the algorithm) that depend on the fixed ordering of out-neighbors, and below we consider a particular such execution. [Note that at each iteration, only one node takes action because all other nodes have imbalance zero or $-1$: initially node $v_1$, then node $v_2$, etc.; thus, we use the term ``transferring of the imbalance" to indicate the fact that the imbalanced node forces an out-neighbor to be imbalanced.]
\begin{itemize}
\item In the first 3 iterations, the imbalance gets transferred to nodes $v_2$, $v_3$ and then back to node $v_1$. 
Note that the choices we made here were for node $v_2$ to transfer the imbalance to node $v_3$, and for node $v_3$ to transfer the imbalance back to node $v_1$.
\item In the next 7 iterations, the imbalance gets transferred to nodes $v_2$, $v_4$, $v_5$, $v_3$, $v_2$, $v_3$ and back to $v_1$. Again, at each iteration, only one node has positive imbalance: first node $v_1$, then node $v_2$, then $v_4$, then $v_5$, then $v_3$, then $v_2$, then $v_3$, and finally $v_1$. Note that, given the previous choices, the only choice we had here was whether at node $v_4$ we increase the weight at edge $(v_6, v_4)$ or edge $(v_5, v_4)$; we assumed the latter. 
\item In the next 11 iterations, the imbalance gets transferred to nodes $v_2$, $v_4$, $v_6$, $v_7$, $v_5$, $v_4$, $v_5$, $v_3$, $v_2$, $v_3$, and back to $v_1$, respectively. 
\item In the last 4 iterations, the imbalance gets transferred to nodes $v_2$, $v_4$, $v_6$, and $v_8$ respectively.
\item The resulting balanced digraph is shown in Figure~\ref{example_converg_bal} and is reached after 25 iterations.
\end{itemize}
Summarizing, we have that cycle $C_1$ was crossed four times, cycle $C_2$ was crossed three times, cycle $C_3$ was crossed two times, and cycle $C_4$ was crossed one time. As a result, the number of iterations required, in order for digraph $\mathcal{G}_d$ to reach weight balance, is $4|C_1| + 3|C_2| + 2|C_3| +| C_4|$, where $|C_i|$ is the length of cycle $C_i$. }

\begin{figure}[ht] 
\centering    
\includegraphics[width=0.60\textwidth]{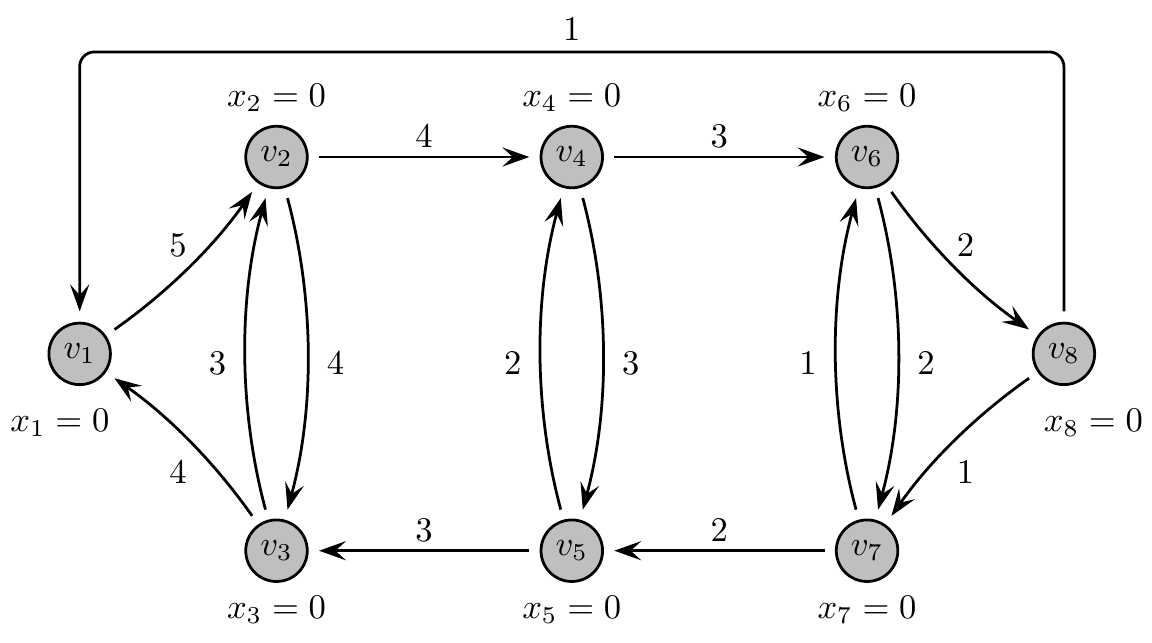}
\caption{Weight balanced digraph after 25 iterations.}
\label{example_converg_bal}
\end{figure}
\end{example}

\noindent We are now ready to proceed with the proof of Proposition~\ref{propbound}.

\begin{proof}
Since $\varepsilon[k] > 0$, we have at least one node with positive imbalance, say node $v_1$, and at least one node with negative imbalance, say node $v_n$. 

Note that, at each iteration of Algorithm~\ref{algorithm:2}, node $v_n$ (in fact, any node with negative imbalance) will retain its negative imbalance unless at least one of its in-neighbors $v_{n_i}$, $(v_n, v_{n_i}) \in \mathcal{E}$, has {\em positive} imbalance and increases the weight on the edge $(v_n, v_{n_i})$. The reason is that the changes that $v_n$ initiates on the weights on its outgoing edges can only make its imbalance $-1$; thus, for the imbalance to become zero or positive, it has to be that one or more of its incoming weights are increased. This can only happen if one or more of its in-neighbors have positive imbalance, at which point it follows from the proof of Proposition~\ref{prop1} (strict inequality for Case~1) that the total imbalance will decrease (by at least two).

In order to determine a bound on the number of steps $K$ required for the total imbalance to decrease, we can assume without loss of generality that negative nodes remain negative (because at the moment any negative node becomes balanced or positive, we also have a decrease in the total imbalance). Consider the (worst\footnote{It will become evident that this is the worst case scenario at the completion of the argument.}) case where $v_1$ has imbalance $b$ for some positive integer $b$, $v_n$ has imbalance $-b$, and the remaining nodes $v_2$, $v_3$, ..., $v_{n-1}$ are all balanced (refer to Figure~\ref{FIGgeneralcase}, where $v_1$ is the node on the far left and $v_n$ is the node on the far right). At the first iteration, node $v_1$ sends its imbalance to at least one of its out-neighbors (by increasing the weight on at least one of its outgoing edges). This out-neighbor (resp. these out-neighbors) of node $v_1$ does (resp. do) the same at the next iteration, and this process is repeated. If, at any point, node $v_n$ is reached, the overall imbalance will decrease by at least two (i.e., $\varepsilon[k+1] \leq \varepsilon[k] - 2$).

\begin{figure}[ht] 
\centering    
\includegraphics[width=0.75\textwidth]{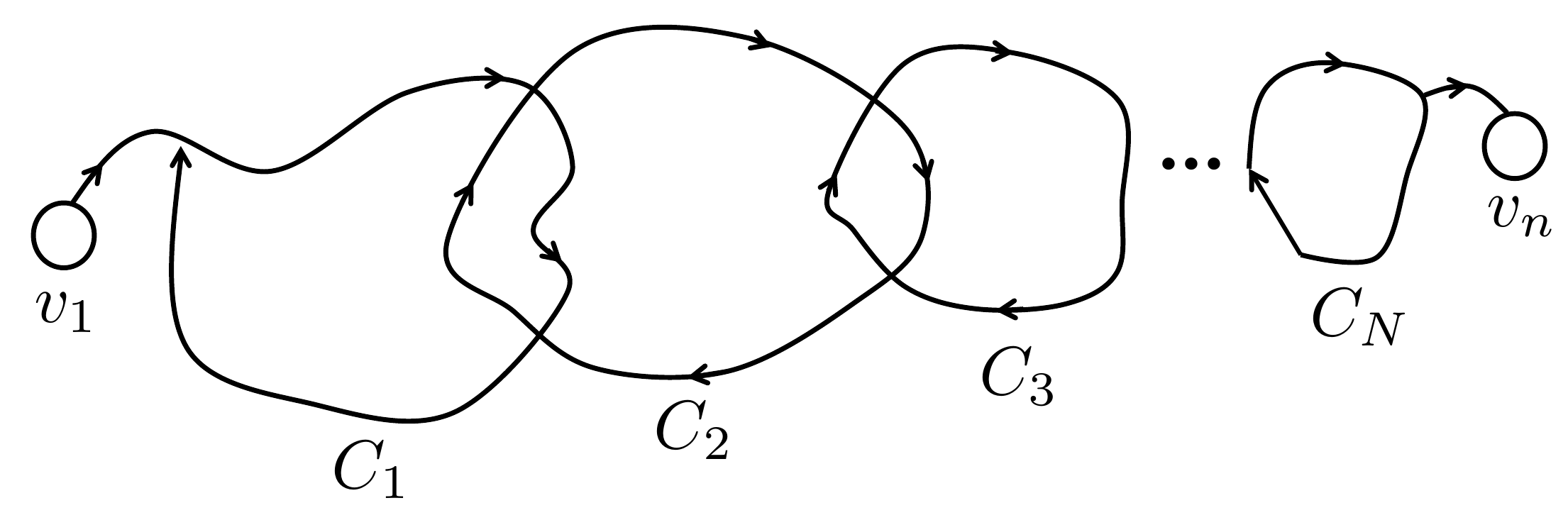}
\caption{Transfer of positive imbalance from node $v_1$ on the left to node $v_n$ on the right.}
\label{FIGgeneralcase}
\end{figure}

Let us now analyze the number of iterations it takes for node $v_n$ to be reached (i.e., for one of its in-neighbors to become positive). Observe the following:
\\ (i) At each iteration, the imbalance gets transferred from a node to one or more of that nodes's out-neighbors. If more than one out-neighbors are involved, then the decrease will occur faster; thus, we consider the case when the imbalance gets transferred to only one out-neighbor.
\\ (ii) As the iterations proceed, mark the first time an edge is traversed for the second time, and call the cyclic sequence of edges visited up to this point $C_1$ (refer to Figure~\ref{FIGgeneralcase}). Note that $C_1$ is a cyclic sequence of edges (not nodes, i.e., a certain node may be visited more than once while traversing cycle $C_1$). Also note that $C_1$ has at most $n(n-1)$ edges because that is the total number of edges of the digraph.
\\ (iii) While traversing $C_1$ for the second time, we will be forced at some point to traverse a new edge (that has not been traversed before (otherwise, the digraph is not strongly connected). The reason is the fact that the weights on the out-going edges from each node cannot differ by more than unity (this is where the notion of approximate balance among the weights on the out-going edges plays are role). Let $C_2$ denote the set of edges traversed until the time we stop traversing new edges (i.e., we are forced to traverse an edge that we have already visited). Let $C_2$ be the set of edges that are not in $C_1$ and have been visited so far.
\\ (iv) We can continue in this fashion (defining $C_i$ as shown in Figure~\ref{FIGgeneralcase}) until we reach node $v_n$. Note that the number of cycles $N$ satisfies $N \leq m$ (where $m=|\mathcal{E}|$ is the number of edges of the given digraph $\mathcal{G}_d$) because each cycle has at least one edge and the digraph has a total of $n(n-1)$ cycles. 

\noindent From the above discussion (see also Example~\ref{EXAMPLE2}), we have
\vspace{-0.2cm}
$$
\text{\#(iterations to reach } v_n) \leq \sum_{i=1}^N (N-i+1) | C_i | \leq N m \; \vspace{-0.2cm},
$$
since $\sum_{i=1}^N |C_i| \leq m$. Finally, each cycle $C_i$ has at least one edge, which means we can have at most $m$ cycles. This allows us to conclude that 
$
\text{\#(iterations to reach } v_n) \leq m^2 \; ,
$
which completes the proof of the proposition.
\end{proof}

\begin{remark}
The bound we obtain is actually $m^2 \frac{\varepsilon[0]}{2}$ and can easily be improved (if one looks at the last equation in Section IV and realizes that it is in our best interest to have cycle 1 have size $|C_1| = m-(N-1)$ and all other cycles have size equal to $1$) to $m^2 \frac{\varepsilon[0]}{2}$. In terms of the example we provide, this bound suggests $m^2/2$ iterations which is $113$. Nevertheless, the main motivation in obtaining our bound was to establish that the number of iterations is bounded by a number that is polynomial (and not exponential) in the size of the graph. 
\end{remark}

\section{Simulation Study}
\label{resultsdistralg}

We compare the proposed algorithms with the current state-of-the-art. 
Specifically, we run Algorithm~\ref{algorithm:2} in large digraphs (of size $n=20$ and $50$) and compare their performance against two other algorithms: (a) the weight balancing algorithm in \cite{2013:RikosHadj} in which each node $v_j$ with positive imbalance $x_j>0$, increases the weights of its outgoing edges by equal integer amounts (if possible) so that it becomes weight balanced, (b) the imbalance-correcting algorithm in \cite{2009:Cortes} in which every node $v_j$ with positive imbalance $x_j>0$ adds all of its weight imbalance $x_j$ to the outgoing node with the lowest weight $w$.

Figure~\ref{2_figures_20_nodes} shows the cases of: (i) 1000 averaged digraphs of $20$ nodes each, where every edge is initialized to $1$, and (ii) $1000$ averaged digraphs of $20$ nodes each, where every edge is initialized to $20$, respectively. 
Figure~\ref{2_figures_20_nodes} shows that for the first case Algorithm~\ref{algorithm:2} converges as fast as the one in \cite{2013:RikosHadj} (as expected due to the particular initialization used). 
For the second case we have that when the edge's initialization is greater than $1$ then Algorithm~\ref{algorithm:2} is the fastest among algorithms \cite{2013:RikosHadj, 2009:Cortes}.

\begin{figure}[ht] 
\centering    
\includegraphics[width=0.60\textwidth]{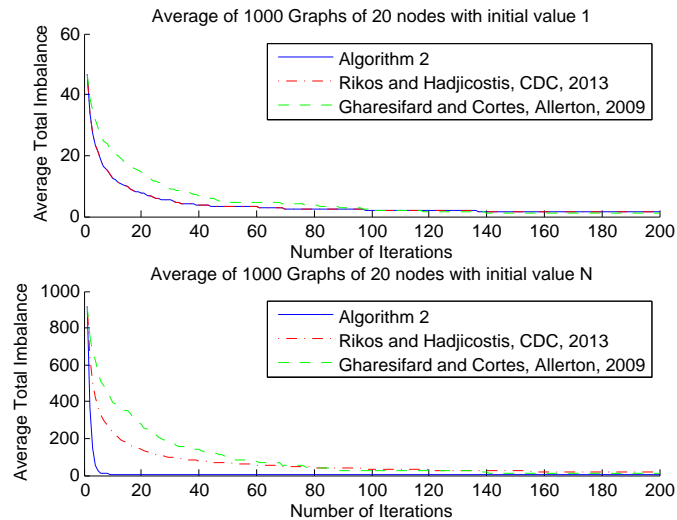}
\caption{Comparison between Algorithm~\ref{algorithm:2}, the weight balancing algorithm proposed in \cite{2013:RikosHadj} and the imbalance-correcting algorithm \cite{2009:Cortes}.
\emph{Top figure:} Average total imbalance plotted against the number of iterations for $1000$ random digraphs of $20$ nodes.
\emph{Bottom figure:} Average total imbalance plotted against the number of iterations for $1000$ random digraphs of $20$ nodes.}
\label{2_figures_20_nodes}
\end{figure}

Figure~\ref{2_figures_50_nodes} shows the same cases as Figure~\ref{2_figures_20_nodes}, with the difference that the network consists of $50$ nodes.
The performances do not change due to the network size and the conclusions are the same as in Figure~\ref{2_figures_20_nodes}.

\begin{figure}[ht] 
\centering    
\includegraphics[width=0.60\textwidth]{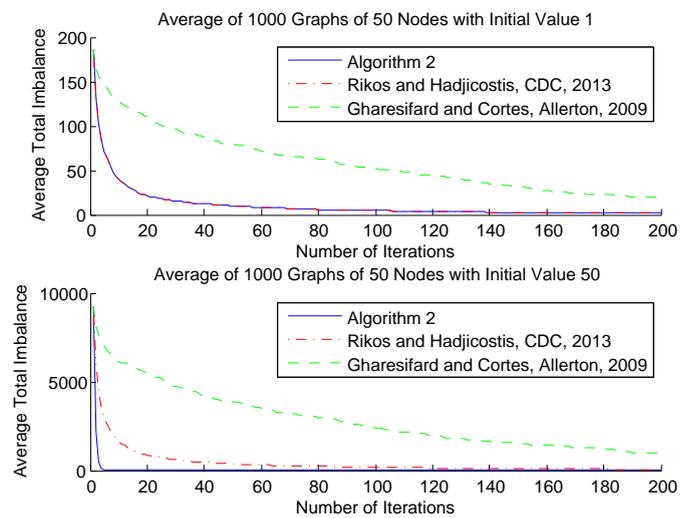}
\caption{Comparison between Algorithm~\ref{algorithm:2}, the weight balancing algorithm proposed in \cite{2013:RikosHadj} and the imbalance-correcting algorithm \cite{2009:Cortes}.
\emph{Top figure:} Average total imbalance plotted against the number of iterations for $1000$ random digraphs of $50$ nodes.
\emph{Bottom figure:} Average total imbalance plotted against the number of iterations for $1000$ random digraphs of $50$ nodes.}
\label{2_figures_50_nodes}
\end{figure}

\section{Chapter Summary}
\label{sumarycentrdistralg}

In this section, we described an iterative distributed algorithm and established that it converges to a weight balanced digraph after a finite number of steps. 
We have also bounded the execution time of the proposed algorithm as $O(n^7)$, where $n$ is the number of nodes in the digraph and we demonstrated the operation, performance, and advantages of the proposed algorithm via various simulations.

\clearpage

\lhead{\emph{Weight Balancing over Unreliable Communication}}

\chapter{Weight Balancing over \\ Unreliable Communication}
\label{distrdelpacket}

In this chapter, we present a novel distributed algorithm which deals with the problem of balancing a weighted digraph in the presence of time delays and packet drops over the communication links. 
The algorithm presented in this chapter has appeared in \cite{2015:RikosHadj, 2017:RikosHadj}. 

This chapter is organized as follows. In Section~\ref{ModelDelaysPacket} we present the additional notation needed in this chapter and the model used for the time delays and the way they manifest themselves, while in Section~\ref{ProbStatementDelays} we present the problem formulation. 
In Section~\ref{algDel} we introduce a novel distributed algorithm which achieves integer weight balancing in a multi-component system, in the presence of time delays over the communication links. 
We present a formal description of the proposed distributed algorithm, demonstrate its performance via an illustrative example and show that it converges to a weight balanced digraph after a finite number of iterations in the presence of bounded time delays over the communication links. 
In Section~\ref{triggeredalgDel} we discuss an event-triggered version of the proposed distributed algorithm and show that it results in a weight balanced digraph after a finite number of iterations in the presence of arbitrary (time-varying, inhomogeneous) but bounded time delays over the communication links.
In Section~\ref{packetalgDel} we show that the proposed distributed algorithm is also able to converge (with probability one) to a weight balanced digraph in the presence of unbounded delays (packet drops). 
In Section~\ref{resultsalgDel} we present simulation results and comparisons of the proposed distributed algorithm and the chapter is concluded in Section~\ref{summaryalgDel}.

\section{Modeling Time Delays and Packet Drops}
\label{ModelDelaysPacket}

In this chapter, we assume that node $v_j$ assigns a unique order in the set $\{0,1,..., \mathcal{D}^+_j-1\}$, to each of its outgoing edges. 
The order of link $(v_l,v_j)$ is denoted by $P_{lj}$ (such that $\{P_{lj} \; | \; v_l \in \mathcal{N}^+_j\} = \{0,1,..., \mathcal{D}^+_j-1\}$) and will be used later on as a way of allowing node $v_j$ to make changes to its outgoing edge weights in a predetermined \textit{round-robin}\footnote{Each node $v_j$ updates (increases) the weights of its out-neighbors by following a unique predetermined order. More specifically, following this predetermined order, node $v_j$ updates (increases by unity) its weights, one at a time, until it becomes balanced. The next time it needs to update the weights of its out-neighbors, it will continue from the outgoing edge it stopped the previous time and cycle through the edges in a round-robin fashion according to the unique predetermined ordering.} fashion.  

Furthremore, for the development of the results in this chapter, we assume that a transmission on the link from node $v_j$ to node $v_l$ initiated at time step $k$ undergoes an \textit{a priori unknown} delay $\tau_{lj}[k]$, where $\tau_{lj}[k]$ is an integer that satisfies $0\leq \tau_{lj}[k] \leq  \overline{\tau}_{lj} \leq \infty$ (i.e., delays are bounded\footnote{We later relax the assumption of bounded delays and consider packet dropping links.}). 
The maximum delay is denoted by $\overline{\tau} = \max_{(v_l,v_j) \in \mathcal{E}}\overline{\tau}_{lj}$. In the weight balancing setting we consider, node $v_j$ is in charge of assigning weights $f_{lj}[k]$ to each link $(v_l,v_j)$ and of sending to each out-neighbor $v_l$ the value $f_{lj}[k]$. 
Under the above delay model, the weight $f_{lj}[k]$ becomes available to node $v_l$ at time step $k + \tau_{lj}[k]$. 
From the perspective of node $v_j$, the delayed weight for link $(v_j,v_i)$, $\forall v_i \in \mathcal{N}_j^-$, at time step $k$ is given by 
\begin{align} \label{delay_equation}
\overline{f}_{ji}[k]= f_{ji}[k_{last}], \text{ where } k_{last}=\max\{s \ |\  s+\tau_{ji}[s]\leq k\},  
\end{align}
i.e., $\overline{f}_{ji}[k]$ is the most recently sent weight $f_{ji}[\cdot]$ seen at node $v_j$ by time step $k$.

The weights available on the incoming links of node $v_j$ at time step $k$ comprise the values received by its in-neighbors by that time, i.e., it is the set of values in the set $\{ f_{ji}[s]$ $|$ $0 \leq s \leq k,$ $s + \tau_{ji}[s] \leq k,$ $v_i \in \mathcal{N}_j^- \}$ where $f_{ji}[s]$ denotes the value sent by the in-neighbor $v_i$ to node $v_j$ at time step $s$, and is received at $v_j$ by time step $k$. 
The protocol we propose has each node $v_j$ update the information state at time $k$ by combining (in a linear fashion) its own outgoing-edge values $f_{lj}[k], \forall v_l \in \mathcal{N}_j^+$ (which are available without delay) and the possibly delayed information received \textit{by time step $k$} from its in-neighbors $\{f_{ji}[s]$ $|$ $0 \leq s \leq k,$ $s + \tau_{ji}[s] \leq k,$ $v_i \in \mathcal{N}_j^- \}$, i.e., the values that arrive at node $v_j$ by time $k$.

\begin{remark}
In our case, the above definition of $\overline{f}_{ji}[k]$ is equivalent to $\overline{f}_{ji}[k] = \max\{ {f}_{ji}[s] $ $ \; | \; s + \tau_{ji}[s] \leq k\}$ because, during the execution of the algorithm presented later, the weights $f_{ji}$ on each edge $(v_j,v_i)$ are assigned by node $v_i$ in a non-decreasing manner, i.e., for $v_j \in \mathcal{N}_i^+$, $f_{ji}[k] \leq f_{ji}[k+1]$, where $k \in \mathbb{N}_0$. 
\end{remark}

\begin{prop}\label{monotonic_prop}
When $f_{ji}[k]$ are non-decreasing and delays are bounded, the above delay model, implies that $\overline{f}_{ji}[k+\overline{\tau}] \geq f_{ji}[k]$. 
\end{prop}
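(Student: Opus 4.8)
The plan is to unwind the definition of $\overline{f}_{ji}$ in \eqref{delay_equation} and combine the bounded-delay hypothesis with the monotonicity of the weights. The argument is elementary and rests on a single observation: the weight $f_{ji}[k]$ sent at time step $k$ is guaranteed to have been received by node $v_j$ by time step $k + \overline{\tau}$, so the index $k_{last}$ attached to the query time $k + \overline{\tau}$ cannot lie before $k$.

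First I would recall that a transmission initiated at time $k$ experiences delay $\tau_{ji}[k]$ with $0 \leq \tau_{ji}[k] \leq \overline{\tau}_{ji} \leq \overline{\tau}$, so that $k + \tau_{ji}[k] \leq k + \overline{\tau}$. This shows that $s = k$ belongs to the index set $\{ s \mid s + \tau_{ji}[s] \leq k + \overline{\tau} \}$ that defines $\overline{f}_{ji}[k + \overline{\tau}]$ via \eqref{delay_equation}; in particular this set is nonempty, and its maximum, which we denote $k_{last}$, satisfies $k_{last} \geq k$.

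Next I would invoke the standing assumption (as recorded in the preceding remark) that the weights $f_{ji}[\cdot]$ are assigned by node $v_i$ in a non-decreasing fashion, i.e. $f_{ji}[s] \leq f_{ji}[s+1]$ for all $s \in \mathbb{N}_0$. Since $k_{last} \geq k$, monotonicity yields $f_{ji}[k_{last}] \geq f_{ji}[k]$. Finally, using $\overline{f}_{ji}[k + \overline{\tau}] = f_{ji}[k_{last}]$ from the delay model, I would conclude $\overline{f}_{ji}[k + \overline{\tau}] = f_{ji}[k_{last}] \geq f_{ji}[k]$, which is the claim.

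There is no genuine obstacle here: the statement is a direct consequence of the definitions. The only point requiring a little care is the correct reading of $k_{last}$ as the maximum over all send-times $s$ whose (delayed) arrival time does not exceed the query time, together with the verification that $s = k$ is admissible for the query time $k + \overline{\tau}$. Once this membership is established, boundedness of the delay and monotonicity of the weights do the rest.
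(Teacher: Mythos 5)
Your argument is correct and follows essentially the same route as the paper's own proof: identify $k_{last}=\max\{s \mid s+\tau_{ji}[s]\leq k+\overline{\tau}\}$, observe that $s=k$ is admissible since $\tau_{ji}[k]\leq\overline{\tau}$, hence $k_{last}\geq k$, and conclude by monotonicity of $f_{ji}[\cdot]$. Your writeup is in fact slightly more explicit than the paper's (which dismisses the key membership check with ``obviously''), but the content is identical.
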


\begin{proof}
The proof follows directly from the definition of $\overline{f}_{ji}[k]$. We have that $\overline{f}_{ji}[k+\overline{\tau}] = f_{ji}[k_{last}]$ where  $k_{last}=\max\{s \ |\  s+\tau_{ji}[s]\leq k+\overline{\tau} \}$. Obviously, $k_{last} \geq k$ because $\tau_{ji}[k] \leq \overline{\tau}$. Since the weights of each edge are non-decreasing we have that $\overline{f}_{ji}[k+\overline{\tau}] = f_{ji}[k_{last}] \geq f_{ji}[k]$. 
\end{proof}

\begin{defn}
Given a weighted digraph $\mathcal{G}_d=(\mathcal{V},\mathcal{E},\mathcal{F})$ of order $n$, the total \textit{in-weight} seen at time step $k$ by node $v_j$ is $\overline{\mathcal{S}}_j^-[k] = \sum_{v_i \in \mathcal{N}_j^-} \overline{f}_{ji}[k]$. Since, for every node, the weights of its outgoing edges are available without delay, the total \textit{out-weight} of node $v_j$ at time step $k$ is the same as in Definition \ref{DEFnodeinoutweight} (denoted by $\mathcal{S}_j^+[k]$).
\end{defn}

\begin{defn}
Given a weighted digraph $\mathcal{G}_d=(\mathcal{V},\mathcal{E},\mathcal{F})$ of order $n$, the delayed weight \textit{imbalance} of node $v_j$, calculated at time step $k$, is $\overline{x}_j[k] = \overline{\mathcal{S}}_j^-[k] - \mathcal{S}_j^+[k]$.
\end{defn}

\begin{defn}
Given a weighted digraph $\mathcal{G}_d=(\mathcal{V},\mathcal{E},\mathcal{F})$ of order $n$, the \textit{total delayed imbalance} of digraph $\mathcal{G}_d$, at a time step $k$, is $\overline{\varepsilon}[k] = \sum_{j=1}^{n} \vert \overline{x}_j[k] \vert$.
\end{defn}

Apart from bounded delays, \textit{unreliable} communication links in practical settings could also result in possible packet drops (i.e., unbounded delays) in the corresponding communication network. 
To model packet drops, we assume that, at each time step $k$, a packet that is sent from node $v_i$ to node $v_j$ on link $(v_j, v_i) \in \mathcal{E}$ is dropped with probability $q_{ji}$, where we have $q_{ji} <1$. 
For simplicity, we assume independence between packet drops at different time steps or different links. 
We establish that, in both cases, despite the presence of bounded delays or packet drops, the proposed distributed algorithm converges, after a finite number of iterations, to a weight balanced digraph that is identical to the one obtained under no packet drops (in the case of packet drops this convergence occurs with probability one).

\section{Problem Statement}
\label{ProbStatementDelays}

We are given a strongly connected digraph $\mathcal{G}_d = (\mathcal{V}, \mathcal{E})$, with a set of nodes $\mathcal{V} =  \{v_1, v_2, \dots, v_n\}$ and a set of edges $\mathcal{E} \subseteq \mathcal{V} \times \mathcal{V} - \{ (v_j,v_j)$ $|$ $v_j \in \mathcal{V} \}$. 
We want to develop a distributed algorithm that allows the nodes to iteratively adjust the weights on their edges so that they eventually obtain a set of integer weights $\{ f_{ji} \; | \; (v_j, v_i) \in \mathcal{E} \}$ that satisfy the following:
\begin{enumerate}
\item $f_{ji} \in \mathbb{N}$ for every edge $(v_j,v_i) \in \mathcal{E}$;
\item $f_{ji} = 0$ if $(v_j,v_i) \notin \mathcal{E}$;
\item $\mathcal{S}_j^+ = \mathcal{S}_j^- = \overline{\mathcal{S}}_j^-$ for every $v_j \in \mathcal{V}$.
\end{enumerate}

We introduce and analyze a distributed algorithm that allows each node to assign integer weights to its outgoing links, so that the resulting weight assignment is balanced. The proposed algorithm is able to handle arbitrary but bounded time-delays that may affect the information exchange between agents in the system. We also explicitly bound the number of steps required for convergence. Among other implications, this bound establishes that the proposed algorithm completes its operation in polynomial time, as long as the underlying digraph is strongly connected or is a collection of strongly connected digraphs.\footnote{As discussed in \cite{2009:Cortes}, this is a necessary and sufficient condition for weight-balancing to be possible.}

\section{Distributed Algorithm for Weight Balancing \\ in the Presence of Time Delays}
\label{algDel}

Given a strongly connected digraph $\mathcal{G}_d=(\mathcal{V},\mathcal{E})$, the algorithm has each node initialize the weights of all of its outgoing edges to unity. 
We consider for now that in digraph $\mathcal{G}_d$, each link transmission can undergo an arbitrary but bounded delay. In order to handle delays, we employ a strategy where the nodes run a weight balancing protocol and process weight information as soon as it arrives. According to this protocol, each node enters an iterative stage where  it performs the following steps:

\begin{enumerate}

\item It computes its delayed weight imbalance according to the latest received weight values from its in-neighbors.
\item If it has positive (delayed) imbalance, it increases by $1$ the integer weights of its outgoing edges one at a time, following the priority order until it becomes weight balanced. This means that the outgoing edges are assigned, if possible, equal integer weights; otherwise, if this is not possible, they are assigned integer weights such that the maximum difference among them is equal to one (it should be clear that for a given $\overline{\mathcal{S}}_{j}^-$ the order among the outgoing links of node $v_j$ make this assignment unique).

\end{enumerate}

We argue that the above distributed algorithm obtains integer weights that balance the digraph after a finite number of iterations (which we bound in terms of the number of nodes/edges of the given digraph). Using a path-based analysis of the algorithm, we prove that the resulting weight balanced digraph is unique and independent of the link-delays that may occur during the execution of the algorithm. For simplicity, we assume that during the execution of the distributed algorithm, the nodes update the weights on their outgoing edges in a synchronous\footnote{Even though we do not discuss this issue in the thesis, asynchronous operation is not a problem for the distributed algorithm.} manner based on the information available at each node at that particular instant. 
Note that if the delay between asynchronous changes in weights of different links can be bounded by some maximum delay then asynchronous updates can be captured by our synchronized delayed communication model if we allow an increase in ${\bar \tau}$.

\subsection{Formal Description of Distributed Algorithm}
\label{chapteralgDel}

A formal description of the proposed distributed algorithm is presented in Algorithm~\ref{algorithm:3}.

\begin{varalgorithm}{3}
\caption{Distributed balancing in the presence of time delays}
\textbf{Input} \\ A strongly connected digraph $\mathcal{G}_d=(\mathcal{V},\mathcal{E})$ with $n=|\mathcal{V}|$ nodes and $m=|\mathcal{E}|$ edges.\\
\textbf{Initialization} \\ Set $k=0$; each node $v_j \in \mathcal{V}$ does the following:
\vspace{-0.1cm}
\begin{enumerate}
\item It assigns a unique priority order to its outgoing edges as $P_{lj}$, for $v_l \in \mathcal{N}_j^+$ (such that $\{P_{lj} \ | \ v_l \in \mathcal{N}_j^+\} = \{ 0,1,...,\mathcal{D}_j^+ - 1 \}$). 
\item It sets its outgoing edge weights as
$$
f_{lj}[0] = \left\{ \begin{array}{ll}
         0, & \mbox{if $(v_l,v_j) \notin \mathcal{E}$,}\\
         1, & \mbox{if $(v_l,v_j) \in \mathcal{E}$.}\end{array} \right. 
$$
\item It sets its (perceived) incoming weights to be equal to unity, $\overline{f}_{ji}[0]=1$, $\forall v_i \in \mathcal{N}_j^-$.
\item It transmits the weights $f_{lj}[0]$ on each outgoing edge $(v_l,v_j)\in \mathcal{E}$ to each $v_l \in \mathcal{N}_j^+$. 
\end{enumerate}

\textbf{Iteration} \\ For $k=0,1,2,\dots$, each node $v_j \in \mathcal{V}$ does the following:
\vspace{-0.1cm}
\begin{enumerate}
\item It receives the weights on its incoming edges $\overline{f}_{ji}[k]$. More specifically, for each node $v_i \in \mathcal{N}_j^-$ let $\mathcal{F}_{ji} = \{f_{ji}[s+\tau_{ji}[s]]$ $|$ $s+\tau_{ji}[s] = k\}$ be the set of weights of link $(v_j,v_i) \in \mathcal{E}$ that arrive at node $v_j$ at time step $k$. We have that
$$
\overline{f}_{ji}[k+1] = \left\{ \begin{array}{ll}
         \overline{f}_{ji}[k], & \mbox{if $\mathcal{F}_{ji} = \emptyset$,}\\
         \max\{\overline{f}_{ji}[k] , \max_{\overline{f}_{ji} \in \mathcal{F}_{ji}}\{\overline{f}_{ji}\}\}, & \mbox{if $\mathcal{F}_{ji} \neq \emptyset$.}\end{array} \right. 
$$

\item It computes its weight imbalance according to the latest received weight values from its in-neighbors 
$$
\overline{x}_j[k + 1] = \overline{\mathcal{S}}_j^-[k + 1] - \mathcal{S}_j^+[k + 1]. 
$$
\item If  $\overline{x}_j[k + 1] = br_j^+ > 0$, it sets 
the values of the weights on its outgoing edges to $f_{lj}[k+1] = \left \lfloor \frac{\overline{S}_j^-[k + 1]}{\mathcal{D}_j^+} \right \rfloor$, $ \forall v_l \in \mathcal{N}_j^+$. 
Then, it chooses the set of the first (according to the priority order) $\overline{S}_j^-[k + 1] - \mathcal{D}_j^+ \left \lfloor \dfrac{\overline{S}_j^-[k + 1]}{\mathcal{D}_j^+} \right \rfloor$ outgoing edges, and increases their weight by 1 so that $\vert f_{lj}[k+1] - f_{hj}[k+1] \vert \leq 1, \forall v_l,v_h \in \mathcal{N}_j^+ $ and $\mathcal{S}_j^+[k+1] = \overline{\mathcal{S}}_j^-[k]$.
\item It transmits the new weights $f_{lj}[k+1]$ on each outgoing edge $(v_l,v_j)\in \mathcal{E}$ to the corresponding out-neighbor $v_l \in \mathcal{N}_j^+$. 
\item It repeats (increases $k$ to $k+1$ and goes back to Step~1).
\end{enumerate}
\label{algorithm:3}
\end{varalgorithm}

\noindent
The following lemma is useful in our analysis later on.

\begin{lemma}\label{weight_function}
Suppose that at iteration $k$ node $v_j$ has in-weights $\{\overline{f}_{ji}[k] \ | \ v_i \in \mathcal{N}_j^- \}$. Given a unique order $P_{lj}$, where $v_l \in \mathcal{N}_j^+$, on its outgoing links (such that $\{P_{lj} \ | \ v_l \in \mathcal{N}_j^+ \} = \{ 0,1,...,\mathcal{D}_j^+ - 1 \}$), we have that 
$$
f_{lj}[k+1] = \mathcal{F}_{lj}(\sum_{v_i \in \mathcal{N}_j^-} \overline{f}_{ji}[k]). 
$$
Moreover, $\mathcal{F}_{lj}$ is monotonically non-decreasing in its argument, i.e., $\mathcal{F}_{lj}(x) \geq \mathcal{F}_{lj}(y)$ if $x \geq y$. 
\end{lemma}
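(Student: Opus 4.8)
The plan is to prove the two assertions separately: first, that the updated out-weight $f_{lj}[k+1]$ is determined entirely by the scalar total in-weight $\overline{\mathcal{S}}_j^-[k] = \sum_{v_i \in \mathcal{N}_j^-} \overline{f}_{ji}[k]$ together with the fixed priority order $P_{lj}$ (in particular it depends neither on how that total is split among the individual incoming edges nor on the past history of the node); and second, that the resulting map is monotone. I would define the candidate function explicitly: writing $D = \mathcal{D}_j^+$, $T = \max(D, \sigma)$, $q = \lfloor T/D \rfloor$ and $r = T - qD$, set
$$
\mathcal{F}_{lj}(\sigma) = q + 1 \text{ if } P_{lj} < r, \text{ and } \mathcal{F}_{lj}(\sigma) = q \text{ otherwise}.
$$
This is exactly the ``as-even-as-possible'' assignment of a total of $T$ units among the $D$ outgoing edges, with the first $r$ edges in the priority order receiving the extra unit.

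The core of the first assertion is an invariant, proved by induction on $k$: at the end of every iteration the out-weight profile equals the even assignment $\mathcal{F}_{lj}(\overline{\mathcal{S}}_j^-[\cdot])$ associated with the current total out-weight, and that total satisfies $\mathcal{S}_j^+[\cdot] = \max(D, \overline{\mathcal{S}}_j^-[\cdot])$. The base case holds because the all-ones initialization is precisely the even assignment for the total $D$. For the inductive step I would use three facts: (a) the perceived in-weights are non-decreasing in $k$ (enforced by the $\max$ update in Step~1 of Algorithm~\ref{algorithm:3}), so $\overline{\mathcal{S}}_j^-[k]$ equals the running maximum of all in-weight sums seen so far; (b) the node acts only when its imbalance is strictly positive, in which case Step~3 sets the total out-weight exactly equal to $\overline{\mathcal{S}}_j^-$, so the profile becomes the even assignment for that value; and (c) when the imbalance is non-positive the node leaves its out-weights unchanged, and here the monotonicity of the in-weight sum forces $\max(D,\overline{\mathcal{S}}_j^-)$ to be unchanged as well, so the stored even assignment is still the correct one. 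Splitting the step into the cases $\overline{\mathcal{S}}_j^-[k] > \mathcal{S}_j^+$ and $\overline{\mathcal{S}}_j^-[k] \leq \mathcal{S}_j^+$ closes the induction and yields $f_{lj}[k+1] = \mathcal{F}_{lj}(\overline{\mathcal{S}}_j^-[k])$.

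For the monotonicity assertion it suffices to show $\mathcal{F}_{lj}(\sigma+1) \geq \mathcal{F}_{lj}(\sigma)$ for every integer $\sigma$; since $\sigma \mapsto \max(D,\sigma)$ is non-decreasing, I only need that the even-assignment map $g_l(T)$ is non-decreasing in the integer total $T$. Incrementing $T$ by one and writing $T = qD + r$, there are two cases: if $r < D-1$ then $q$ is unchanged and the remainder rises to $r+1$, so the single edge with $P_{lj} = r$ gains one unit and all others are unchanged; if $r = D-1$ then the quotient rises to $q+1$ and the remainder resets to $0$, so the single edge with $P_{lj} = D-1$ rises from $q$ to $q+1$ while the rest stay at $q+1$. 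In both cases no coordinate decreases, giving $g_l(T+1) \geq g_l(T)$, and composition with the non-decreasing cap yields $\mathcal{F}_{lj}(x) \geq \mathcal{F}_{lj}(y)$ whenever $x \geq y$.

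The step I expect to be the main obstacle is the first assertion, specifically the iterations in which the node has non-positive imbalance and therefore does nothing: naively $f_{lj}[k+1] = f_{lj}[k]$ looks history-dependent rather than a function of the current in-weight sum. The invariant is what removes this dependence, and its proof leans essentially on the non-decreasing property of the perceived in-weights; without that monotonicity the running-maximum identification $\mathcal{S}_j^+ = \max(D, \overline{\mathcal{S}}_j^-)$ would fail and the clean closed form for $\mathcal{F}_{lj}$ would not exist.
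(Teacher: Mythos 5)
Your proposal is correct, and it is in fact more careful than the proof in the text. The paper's argument simply reads the closed form off Step~3 of Algorithm~\ref{algorithm:3}, defining $\mathcal{F}_{lj}(x) = \lfloor x/\mathcal{D}_j^+ \rfloor + ind_l(x)$ with $ind_l(x)$ the indicator of $P_{lj} < x - \lfloor x/\mathcal{D}_j^+\rfloor \mathcal{D}_j^+$, and declares monotonicity obvious; it never addresses the two points you flag. First, at iterations where $\overline{x}_j \leq 0$ the node leaves its out-weights untouched, so the identity $f_{lj}[k+1] = \mathcal{F}_{lj}(\overline{\mathcal{S}}_j^-[k])$ is not an immediate consequence of the update rule but requires exactly the invariant you prove by induction, and that invariant genuinely needs the monotonicity of the perceived in-weights from Step~1. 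Second, the paper's formula without your $\max(\mathcal{D}_j^+,\sigma)$ cap is literally wrong when $\overline{\mathcal{S}}_j^-[k] < \mathcal{D}_j^+$ (e.g., a node with in-degree smaller than out-degree before its in-weights grow): it would assign weight $0$ to the low-priority edges, whereas the algorithm keeps all weights at the initialization value $1$ until the node first acts. Your cap repairs this while preserving monotonicity, and since the lemma is only invoked downstream (in the proof of~(\ref{convergence_equation})) through the existence of \emph{some} monotone $\mathcal{F}_{lj}$ common to the delayed and undelayed executions, the corrected function serves that purpose without change. The one small adjustment I would make is to the base case of your invariant: a node with $\mathcal{D}_j^- > \mathcal{D}_j^+$ has $\mathcal{S}_j^+[0] = \mathcal{D}_j^+ < \overline{\mathcal{S}}_j^-[0]$ at initialization, so the identity $\mathcal{S}_j^+ = \max(\mathcal{D}_j^+, \overline{\mathcal{S}}_j^-)$ should be asserted for the post-update quantities $f_{lj}[k+1]$, $k \geq 0$, rather than for the raw initialization; this is harmless because such a node has positive imbalance at $k=0$ and the positive-imbalance branch of your inductive step establishes the invariant immediately.
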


\begin{proof}
From the algorithm description we have that for integer $x$, we have that
$$
\mathcal{F}_{lj}(x) = \left \lfloor \frac{x}{\mathcal{D}_j^+} \right \rfloor + ind_l(x),
$$
where
$$
ind_l = \left\{ \begin{array}{ll}
         1, & \mbox{if $P_{lj} < x-\left \lfloor \frac{x}{\mathcal{D}_j^+} \right \rfloor \mathcal{D}_j^+$,}\\
         0, & \mbox{otherwise.}\end{array} \right.
$$
$\mathcal{F}_{lj}(x)$ is clearly monotonic in its argument. 
\end{proof}

We now illustrate the distributed algorithm via an example and then explain why it asymptotically results in a weight balanced digraph after a finite number of iterations. We also obtain bounds on its execution time.

\subsection{Illustrative Example of Distributed Algorithm}
\label{examplealgDel}

Consider the digraph $\mathcal{G}_d=(\mathcal{V},\mathcal{E})$ in Figure.~\ref{example-initial-distributed}, where $\mathcal{V}=(v_1,v_2,\dots,v_6)$, $\mathcal{E}=(m_{31},\dots,m_{46})$, $\mathcal{E} \subseteq \mathcal{V} \times \mathcal{V} - \{(v_i,v_i) \ | \ v_i \in \mathcal{V} \}$. The weight on each edge is initialized to $f_{lj}[0]=1$ for $(v_l,v_j) \in \mathcal{E}$ and each node assigns a unique priority order to each of its outgoing edges. For the purposes of this example, let us assume that the priority orders are as follows:
\begin{itemize}
\item $v_1 : P_{31}=1$,
\item $v_2 : P_{32}=1$,
\item $v_3 : P_{13}=1, P_{23}=2, P_{43}=3$,
\item $v_4 : P_{54}=1, P_{64}=2$,
\item $v_5 : P_{15}=1, P_{35}=2, P_{45}=3$,
\item $v_6 : P_{26}=1, P_{46}=2$.
\end{itemize}
\noindent (For example, node $v_4$ will first increase $f_{54}$ and then $f_{64}$). As a first step, each node computes its weight imbalance $\overline{x}_j[0] = \overline{\mathcal{S}}_j^-[0] - \mathcal{S}_j^+[0]$, $\forall v_j \in \mathcal{V}$ (these values are shown in Figure.~\ref{example-initial-distributed}).

\begin{figure} [ht]
\centering
\includegraphics[width=0.47\textwidth]{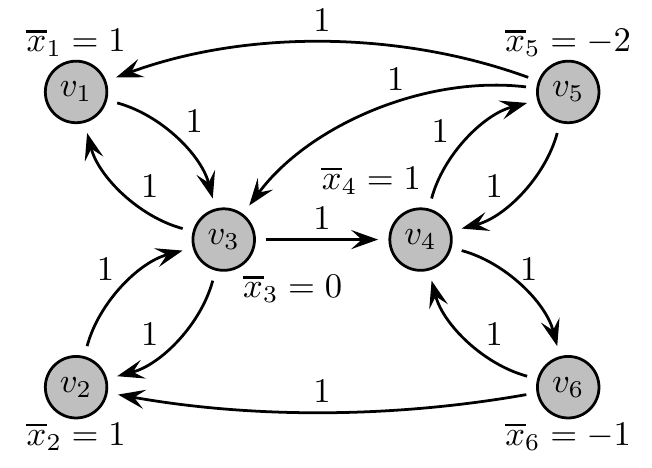}
\caption{Weighted digraph with initial weights and initial imbalances for each node.}
\label{example-initial-distributed}
\end{figure}

Once each node computes its imbalance, the distributed algorithm requires each node with positive imbalance to increase the value of the weights on its outgoing edges by equal integer amounts (or with maximum difference between them equal to one) according to the predetermined priority order that each node assigned to its outgoing edges, so that the total increase makes the node balanced. In this case, the nodes that have positive imbalance are nodes $v_1, v_2$ and $v_4$ (equal to $\overline{x}_1[0]=1, \overline{x}_2[0]=1$ and $\overline{x}_4[0]=1$) respectively), and they increase their outgoing edges as shown in Figure~\ref{example-distr-distributed}.

\begin{figure} [ht]
\centering
\includegraphics[width=0.45\textwidth]{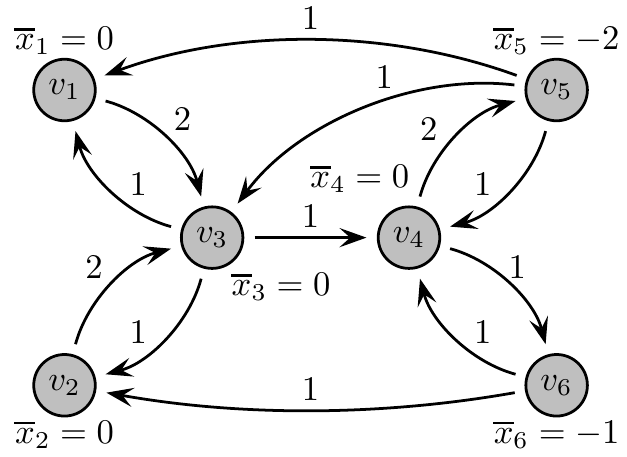}
\caption{Distribution of imbalance from positively imbalanced nodes.}
\label{example-distr-distributed}
\end{figure}

In Figure~\ref{example-distr-distributed} we can see that the edges $m_{31}, m_{32}, m_{54}$ have now values equal to $2$. Note here that the nodes $v_1, v_2$ and $v_4$ increased the edge-weights $f_{31}, f_{32}$ and $f_{54}$ respectively, since the corresponding nodes had the highest order (as chosen by the nodes during the initialization step). Nodes $v_3$ and $v_5$ will receive the new weights of their incoming edges after a number of iterations equal to the corresponding time-delay for each edge i.e., $v_3$ and $v_5$ will receive them after $\tau_{31}[0]$, $\tau_{32}[0]$ and $\tau_{54}[0]$, respectively. For example, let us consider that the time delays are equal to $\tau_{31}[0]=6$, $\tau_{32}[0]=3$ and $\tau_{54}[0]=7$. This means that node $v_3$ will receive the new weight of $m_{31}$ at $k=6$ and the new weight of $m_{32}$ at $k=3$, while $v_5$ will receive the new weight of $m_{54}$ at $k=7$. In Figure~\ref{example-distr-delay-distributed}, we can see the digraph at time step $k=5$. Here, node $v_3$ has received the new weight of edge $m_{32}$ and has increased its outgoing edge $m_{13}$ by $1$ (because it has the highest priority order) while it maintains the value of its outgoing edge $m_{23}$ (which has the second priority order) the same (and equal to $1$) because the new weight of the edge $m_{31}$ has not yet arrived.

\begin{figure} [ht]
\centering
\includegraphics[width=0.45\textwidth]{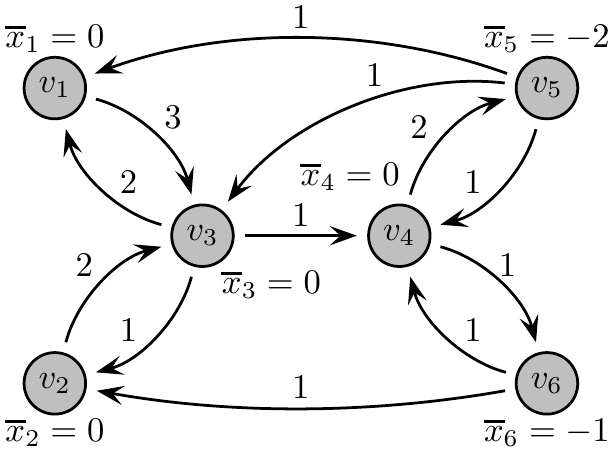}
\caption{Distribution of delayed imbalance from positively imbalanced nodes.}
\label{example-distr-delay-distributed}
\end{figure}

Note here that all the nodes in the digraph continue to send the same values on their outgoing edges at every iteration until they receive updated weights on their incoming edges. This means that the time delays $\tau_{31}[0]$, $\tau_{32}[0]$ and $\tau_{54}[0]$ are not necessarily the time-steps after which the nodes $v_3$ and $v_5$ will be informed for the new weights of their incoming edges. For example, if $\tau_{32}[0]=3$ then $v_3$ will receive the new weight $f_{32}[1]=2$ at iteration $k=3$; however, at iteration $k=1$ node $v_2$ re-sends its outgoing weights to its out-neighbors; thus, if $\tau_{32}[1]=1$ then $v_3$ will receive the new weight $f_{32}[1]=2$ at iteration $k=2$ (it will also receive it at $k=3$) and it will act accordingly (it will essentially ignore it).

After the integer weight update on the outgoing edges of each node with positive imbalance at $k=0$, the nodes check for updated incoming edge weights $f_{ji}[1]$, $\forall (v_j,v_i) \in \mathcal{E}$ (assuming that $\overline{f}_{ji}[1] = \overline{f}_{ji}[0] = f_{ji}[0]$ if no weight is received). Then they recalculate their imbalances $\overline{x}_j[1]$, $\forall v_j \in \mathcal{V}$, and the process is repeated. After a finite number of iterations, which we explicitly bound in the next sections, we reach the weighted digraph with integer weights shown in Figure~\ref{example-final-distributed}. As we will argue later on the thesis, this weighted digraph is the same irrespective of how time-delays manifest themselves.

\begin{figure} [ht]
\centering
\includegraphics[width=0.45\textwidth]{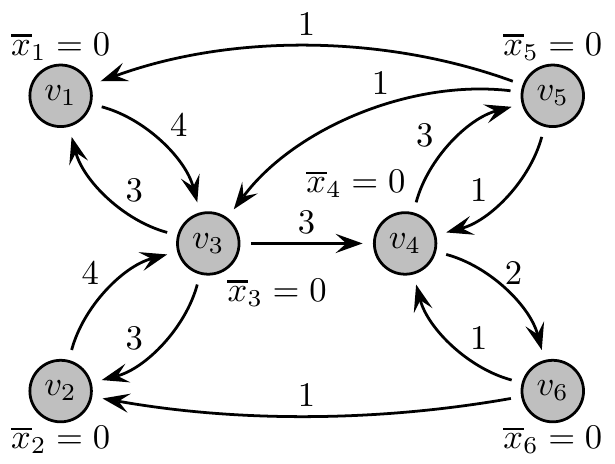}
\caption{Final weight balanced digraph.}
\label{example-final-distributed}
\end{figure}

\subsection{Execution Time Analysis of Distributed Algorithm}
\label{analysisalgDel}

In this section we analyze the functionality of the distributed algorithm and we prove that it solves the weight balancing problem in the presence of arbitrary (time-varying, inhomogeneous) but bounded time delays that may appear during the information exchange between agents in the system. 
Specifically, we prove that the proposed distributed algorithm results in a set of weights that form a weight balanced matrix after $O(n^6\overline{\tau})$ iterations, where $n$ is the number of nodes of the given digraph and $\overline{\tau}$ is the maximum delay in the digraph; also we show that the resulting weight balanced digraph is unique (irrespective of how delays manifest themselves) and identical to the one we obtain when transmissions between nodes happen instantaneously (no delays). 
We begin the analysis with the following theorem.

\textit{Setup:} Consider an arbitrary strongly connected digraph $\mathcal{G}_d = (\mathcal{V}, \mathcal{E})$, where $\mathcal{V}=(v_1,v_2,\dots,v_n)$ is the set of nodes, and $\mathcal{E} \subseteq \mathcal{V} \times \mathcal{V} - \{ (v_j,v_j)$ $|$ $v_j \in \mathcal{V} \}$ is the set of edges. Consider an execution of the proposed distributed algorithm where there are no delays $(\overline{\tau}=0)$ and denote the resulting weights on the edges as 
$$
f_{lj}^*[0]=1, f_{lj}^*[1],..., f_{lj}^*[k], ... \ \forall (v_l,v_j) \in \mathcal{E}.
$$
Consider another execution of the proposed distributed balancing algorithm where there are arbitrary but bounded delays $(0 < \overline{\tau} < \infty)$ and denote the resulting set of weights as
$$
Transmitted: f_{lj}[0]=1, f_{lj}[1],..., f_{lj}[k], ... \ \forall (v_l,v_j) \in \mathcal{E},
$$
$$
Received: \overline{f}_{lj}[0]=1, \overline{f}_{lj}[1],..., \overline{f}_{lj}[k], ... \ \forall (v_l,v_j) \in \mathcal{E}.
$$

\begin{theorem}\label{monotonicity_theorem}
Under the above \textit{setup}, we have for all $(v_l,v_j) \in \mathcal{E}$ and all $k \geq 0$
\begin{enumerate}
\item $f_{lj}^*[k+1] \geq f_{lj}^*[k]$,
\item $f_{lj}[k+1] \geq f_{lj}[k]$,
\item $\overline{f}_{lj}[k+1] \geq \overline{f}_{lj}[k]$.
\end{enumerate}
\end{theorem}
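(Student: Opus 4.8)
The plan is to treat the three claims in increasing order of difficulty, observing that claim~1 is just claim~2 specialized to the delay-free execution ($\overline{\tau}=0$, where received and transmitted weights coincide), and that claim~3 is essentially immediate. For claim~3, I would note that Step~1 of the iteration in Algorithm~\ref{algorithm:3} updates each received weight as a pointwise maximum of its previous value with any newly arrived weights; hence $\overline{f}_{lj}[k+1]\geq\overline{f}_{lj}[k]$ follows directly from the update rule in both branches (no arrival / arrival), independently of the other two claims.

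The crux is claim~2, and I would attack it by first establishing a structural invariant: for every node $v_j$ and every $k$, the outgoing weight vector is in the \emph{canonical form} produced by $\mathcal{F}_{lj}$, namely $f_{lj}[k]=\mathcal{F}_{lj}(\mathcal{S}_j^+[k])$ for all $v_l\in\mathcal{N}_j^+$. This uses the explicit formula of Lemma~\ref{weight_function} together with the counting identity $\sum_{v_l\in\mathcal{N}_j^+}\mathcal{F}_{lj}(x)=x$ valid for every nonnegative integer $x$ (the number of out-neighbors receiving the ``$+1$'' is exactly the remainder $r=x-\mathcal{D}_j^+\lfloor x/\mathcal{D}_j^+\rfloor\in\{0,\dots,\mathcal{D}_j^+-1\}$, since the orders $P_{lj}$ range bijectively over $\{0,\dots,\mathcal{D}_j^+-1\}$), which pins the scalar parameter feeding $\mathcal{F}_{lj}$ to be precisely the out-weight sum $\mathcal{S}_j^+[k]$. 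The invariant is proved by induction on $k$: at $k=0$ all weights are unity and $\mathcal{F}_{lj}(\mathcal{D}_j^+)=1$; at each step the algorithm either leaves the outgoing weights untouched or, in Step~3, resets them to $\mathcal{F}_{lj}(\overline{\mathcal{S}}_j^-)$ and thereby makes $\mathcal{S}_j^+[k+1]=\overline{\mathcal{S}}_j^-$, so the canonical form is preserved in both cases.

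With this invariant in hand, claim~2 reduces to comparing scalars. If node $v_j$ does not act at iteration $k$ then $f_{lj}[k+1]=f_{lj}[k]$ trivially. If it does act, it has positive imbalance, i.e. $\overline{\mathcal{S}}_j^-[k]>\mathcal{S}_j^+[k]$; since $f_{lj}[k]=\mathcal{F}_{lj}(\mathcal{S}_j^+[k])$ and $f_{lj}[k+1]=\mathcal{F}_{lj}(\overline{\mathcal{S}}_j^-[k])$, the monotonicity of $\mathcal{F}_{lj}$ asserted in Lemma~\ref{weight_function} yields $f_{lj}[k+1]\geq f_{lj}[k]$ componentwise. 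This gives claim~2 for all edges and all $k$, and claim~1 follows by reading the very same argument in the execution with $\overline{\tau}=0$ (where $\overline{\mathcal{S}}_j^-=\mathcal{S}_j^-$).

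The main obstacle, and the step I would be most careful about, is the canonical-form invariant, since it is precisely what lets me replace a messy edge-by-edge comparison of redistributed integer weights by a single monotone scalar comparison through $\mathcal{F}_{lj}$. In particular I must verify that Step~3 always re-balances to exactly $\mathcal{F}_{lj}(\overline{\mathcal{S}}_j^-)$ (so that afterward $\mathcal{S}_j^+=\overline{\mathcal{S}}_j^-$ and the ``first $r$'' out-edges are well defined), and I should keep the timing of $\overline{\mathcal{S}}_j^-$ and $\mathcal{S}_j^+$ consistent across the receive/compute/act substeps so that the imbalance sign triggering the action matches the parameters being compared. A pleasant byproduct worth noting is that, once the invariant is set up, claim~2 does not actually require claim~3: each action only ever raises $\mathcal{S}_j^+$ and each non-action leaves it fixed, so the three statements can be proved independently with claim~1 subsumed by claim~2.
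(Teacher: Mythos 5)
Your proposal is correct, and for the central claim it is actually more careful than the paper's own argument. The paper disposes of claims~1 and~2 in one line: a node with positive imbalance ``increases the weights on its outgoing edges,'' and a node with non-positive imbalance leaves them alone. That sentence silently assumes that the redistribution step can never lower an individual outgoing weight, which is false for arbitrary weight vectors (e.g., out-weights $(10,1)$ redistributed to a new sum of $12$ would become $(6,6)$); it is true only because the weights are always in the canonical form $f_{lj}[k]=\mathcal{F}_{lj}(\mathcal{S}_j^+[k])$. Your canonical-form invariant, proved by induction using the counting identity $\sum_{v_l\in\mathcal{N}_j^+}\mathcal{F}_{lj}(x)=x$, makes this explicit and reduces the edge-by-edge comparison to the scalar monotonicity of $\mathcal{F}_{lj}$ from Lemma~\ref{weight_function} applied to $\mathcal{S}_j^+[k]<\overline{\mathcal{S}}_j^-$; this is exactly the missing justification, and it is consistent with the formal ``first $r$ edges by priority'' rule in Algorithm~\ref{algorithm:3} (note it would \emph{not} survive the round-robin variant described in an earlier footnote, so your caution about which version of the update is in force is warranted). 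For claim~3 you read monotonicity directly off the pointwise-maximum update in Step~1, whereas the paper derives it from claim~2 via the $k_{\mathrm{last}}$ characterization of $\overline{f}_{lj}$; your route is shorter and has the minor advantage of not depending on claim~2 at all, while the paper's route is the one that generalizes to the $\overline{f}_{lj}[k+\overline{\tau}]\geq f_{lj}[k]$ bound used later. The observation that claim~1 is claim~2 read in the delay-free execution matches the paper.
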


\begin{proof}
Consider the case when $\overline{\tau}=0$. At Step~$2$ of the proposed algorithm (at an arbitrary iteration $k$), if node $v_j$ has positive imbalance $x_j[k]>0$ then it increases the weights on its outgoing edges so that it becomes weight balanced (i.e., $f_{lj}^*[k+1] \geq f_{lj}^*[k]$, $\forall v_l \in \mathcal{N}_j^+$). If node $v_i$ has negative (or zero) imbalance $x_j[k] \leq 0$, it leaves the weights of its outgoing edges unchanged (i.e., $f_{lj}^*[k+1]=f_{lj}^*[k]$, $\forall v_l \in \mathcal{N}_j^+$). As a result, we have 
$$
f_{lj}^*[k+1] \geq f_{lj}^*[k], \forall (v_l,v_j) \in \mathcal{E}. 
$$

\noindent
Consider now the case when arbitrary but bounded time-delays ($\overline{\tau}>0$) affect link transmissions. Using a similar argument as above we easily establish that $f_{lj}[k+1] \geq f_{lj}[k], \forall (v_l,v_j) \in \mathcal{E}$. By the definition of $\overline{f}_{lj}[k+1]$ we have that $\overline{f}_{lj}[k+1] = f_{lj}[k_{last}]$ where $k_{last} = \max \{s \ | \ s+\tau_{lj}[s] \leq k+1 \} $. Similarly, $\overline{f}_{lj}[k] = f_{lj}[k'_{last}]$ where $k'_{last} = \max \{s \ | \ s+\tau_{lj}[s] \leq k \} $. Clearly, $k'_{last} \leq k_{last}$ and since $f_{lk}[k+1] \geq f_{lj}[k]$ and $k_{last} \geq k'_{last}$ we have that 
$$
\overline{f}_{lj}[k+1] \geq \overline{f}_{lj}[k], \forall v_l \in \mathcal{N}_j^+,
$$
which completes the proof.
\end{proof}

After establishing monotonicity for the weights of the outgoing edges for every node in the digraph, we continue with the following theorem.

\begin{theorem}
Under the above \textit{setup}, it holds that for every $k$, 
\begin{align} \label{convergence_equation}
\overline{f}_{lj}[k] \ \leq \ f_{lj}^*[k] \ \leq \ \overline{f}_{lj}[(k+1)(\overline{\tau} + 1)], \ \forall (v_l,v_j) \in \mathcal{E}. 
\end{align}

\end{theorem}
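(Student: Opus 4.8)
The plan is to prove the two inequalities in (\ref{convergence_equation}) separately, each by induction on $k$, using three ingredients already at our disposal: the monotonicity of the allocation map $\mathcal{F}_{lj}$ (Lemma~\ref{weight_function}), the catch-up bound $\overline{f}_{ji}[k+\overline{\tau}]\ge f_{ji}[k]$ (Proposition~\ref{monotonic_prop}), and the monotonicity in time of $f^*$, $f$ and $\overline{f}$ (Theorem~\ref{monotonicity_theorem}). Throughout I would use two easy consequences. First, since each received weight is a past transmitted value and the transmitted weights are non-decreasing, $\overline{f}_{ji}[k]\le f_{ji}[k]$ for every edge and every $k$. Second, both the delay-free and the delayed updates obey Lemma~\ref{weight_function}, so that $f^*_{lj}[k+1]=\mathcal{F}_{lj}(\sum_{v_i\in\mathcal{N}_j^-}f^*_{ji}[k])$ and $f_{lj}[k+1]=\mathcal{F}_{lj}(\sum_{v_i\in\mathcal{N}_j^-}\overline{f}_{ji}[k])$.

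For the lower bound $\overline{f}_{lj}[k]\le f^*_{lj}[k]$, I would first establish the auxiliary claim $f_{lj}[k]\le f^*_{lj}[k]$ for the transmitted weights, by induction on $k$. The base case holds since all weights start at $1$. For the step, combine $\overline{f}_{ji}[k]\le f_{ji}[k]$ with the inductive hypothesis $f_{ji}[k]\le f^*_{ji}[k]$ to get $\sum_{v_i}\overline{f}_{ji}[k]\le \sum_{v_i}f^*_{ji}[k]$, and then apply the monotonicity of $\mathcal{F}_{lj}$ to the two instances of Lemma~\ref{weight_function} above; this yields $f_{lj}[k+1]\le f^*_{lj}[k+1]$. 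The lower bound then follows at once because $\overline{f}_{lj}[k]\le f_{lj}[k]\le f^*_{lj}[k]$.

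The upper bound is the delicate part. A direct received-to-received induction does not propagate, so I would route the argument through the transmitted weights and prove the sharper invariant
\[
f^*_{lj}[k]\ \le\ f_{lj}\big[k(\overline{\tau}+1)\big]\qquad\text{for all }(v_l,v_j)\in\mathcal{E},\ k\ge 0,
\]
again by induction, the base case being immediate. For the step, monotonicity of $\mathcal{F}_{lj}$ together with the induction hypothesis gives $f^*_{lj}[k+1]=\mathcal{F}_{lj}(\sum_{v_i}f^*_{ji}[k])\le \mathcal{F}_{lj}(\sum_{v_i}f_{ji}[k(\overline{\tau}+1)])$. Here Proposition~\ref{monotonic_prop} enters: $\overline{f}_{ji}[k(\overline{\tau}+1)+\overline{\tau}]\ge f_{ji}[k(\overline{\tau}+1)]$, so applying $\mathcal{F}_{lj}$ and Lemma~\ref{weight_function} once more turns the right-hand side into the transmitted weight $f_{lj}[k(\overline{\tau}+1)+\overline{\tau}+1]=f_{lj}[(k+1)(\overline{\tau}+1)]$, closing the induction. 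Finally I would convert this transmitted-weight invariant back to received weights by one more application of Proposition~\ref{monotonic_prop} and Theorem~\ref{monotonicity_theorem}: $f^*_{lj}[k]\le f_{lj}[k(\overline{\tau}+1)]\le \overline{f}_{lj}[k(\overline{\tau}+1)+\overline{\tau}]\le \overline{f}_{lj}[(k+1)(\overline{\tau}+1)]$, the last step using time-monotonicity of $\overline{f}$ and $k(\overline{\tau}+1)+\overline{\tau}\le (k+1)(\overline{\tau}+1)$.

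The main obstacle I anticipate is identifying the correct intermediate invariant for the upper bound and keeping the time indices straight: the natural guess comparing received weights directly does not carry through the induction, and one must instead pass through the transmitted weights at the rescaled time $k(\overline{\tau}+1)$, invoking the catch-up Proposition~\ref{monotonic_prop} twice --- once inside the induction to turn transmitted weights into received ones before re-applying $\mathcal{F}_{lj}$, and once at the very end. A secondary point worth verifying is that Lemma~\ref{weight_function} applies verbatim to both executions even when a node has non-positive imbalance (it does, since weights are non-decreasing and the perceived in-weight sum is the running maximum), so that $\mathcal{F}_{lj}$ of the current in-weight sum always equals the current transmitted out-weight.
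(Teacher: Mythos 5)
Your proof is correct and follows essentially the same route as the paper's: both arguments are inductions driven by the monotonicity of $\mathcal{F}_{lj}$ from Lemma~\ref{weight_function}, the catch-up bound of Proposition~\ref{monotonic_prop}, and the time-monotonicity of Theorem~\ref{monotonicity_theorem}. The only difference is organizational --- you run two separate inductions routed through the transmitted weights (with the intermediate invariant $f^*_{lj}[k]\le f_{lj}[k(\overline{\tau}+1)]$), whereas the paper runs a single simultaneous induction on the displayed double inequality and passes through the transmitted weights only inside each step; the two are equivalent up to one extra application of Proposition~\ref{monotonic_prop} at the end.
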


\begin{proof}
The proof is by induction. For $k=0$, we have at initialization $\overline{f}_{lj}[0] \ = \ f_{lj}^*[0] \ \leq \ \overline{f}_{lj}[\overline{\tau} + 1]$, where $\overline{f}_{lj}[0] \ = \ f_{lj}^*[0] \ = \ 1$, and (\ref{convergence_equation}) holds (since $f_{lj}[k]$ and $f_{lj}^*[k]$ are non decreasing and $\min\{f_{lj}[k]\} = 1$, for every $k$). 
Let us assume that for every $(v_l,v_j) \in \mathcal{E}$ we have 
$$
\overline{f}_{lj}[k] \ \leq \ f_{lj}^*[k] \ \leq \ \overline{f}_{lj}[(k+1)(\overline{\tau} + 1)],
$$
by the induction hypothesis.  
We would like to show that 
$$
\overline{f}_{lj}[k+1] \ \leq \ f_{lj}^*[k+1] \ \leq \ \overline{f}_{lj}[(k+2)(\overline{\tau} + 1)], \ \forall (v_l,v_j) \in \mathcal{E}.
$$
The fact that $\overline{f}_{lj}[k+1] \leq f_{lj}^*[k+1]$ is a consequence of Theorem~\ref{monotonicity_theorem}; we have that $\overline{f}_{lj}[k+1] = f_{lj}[k_{last}]$ where $k_{last}=\max\{s \ |\  s+\tau_{ji}[s]\leq k + 1\}$. Clearly, $k_{last} \leq k+1$ and (from Theorem~\ref{monotonicity_theorem}) $f_{lj}[k_{last}] \leq f_{lj}[k+1]$. As a result, we have that $\overline{f}_{lj}[k+1] \leq f_{lj}[k+1]$. 
To show that $\overline{f}_{lj}[k+1] \leq f_{lj}^*[k+1]$ we observe that 
\begin{enumerate}
\item[i)] $\overline{f}_{lj}[k] \leq f_{lj}^*[k]$ (by induction hypothesis),
\item[ii)] It follows from Lemma~\ref{weight_function} that $f_{lj}[k+1] = \mathcal{F}_{lj}(\sum_{v_i \in \mathcal{N}_j^-} \overline{f}_{ji}[k]) \leq  f_{lj}^*[k+1] = \mathcal{F}_{lj}(\sum_{v_i \in \mathcal{N}_j^-} f_{ji}^*[k])$, $\forall (v_l,v_j) \in \mathcal{E}$.
\end{enumerate}

\noindent
As a result, since $\overline{f}_{lj}[k+1] \leq f_{lj}[k+1]$ and $f_{lj}[k+1] \leq f_{lj}^*[k+1]$ we have that $\overline{f}_{lj}[k+1] \leq f_{lj}^*[k+1]$.

We are left with showing that $f_{lj}^*[k+1] \ \leq \ \overline{f}_{lj}[(k+2)(\overline{\tau} + 1)]$. 
From the induction hypothesis, we have that $f_{lj}^*[k] \ \leq \ \overline{f}_{lj}[(k+1)(\overline{\tau} + 1)]$. 
We observe that
\begin{enumerate}
\item[i)] $f_{lj}^*[k+1] = \mathcal{F}_{lj}(\sum_{v_i \in \mathcal{N}_j^-} f_{ji}^*[k]) \leq \mathcal{F}_{lj}(\sum_{v_i \in \mathcal{N}_j^-} \overline{f}_{ji}[(k+1)(\overline{\tau} + 1)]) = f_{lj}[(k+1)(\overline{\tau} + 1) + 1]$,
\item[ii)] $f_{lj}[(k+1)(\overline{\tau}+1) + 1] \leq \overline{f}_{lj}[(k+2)(\overline{\tau}+1)]$ (from Proposition~\ref{monotonic_prop}).
\end{enumerate}

\noindent
As a result, we have that $f_{lj}^*[k+1] \leq \overline{f}_{lj}[(k+2)(\overline{\tau} + 1)]$ and (\ref{convergence_equation}) holds. 
\end{proof}

Now, we can proceed with the final theorem where we establish that the proposed balancing algorithm converges to a set of weights that form a weight balanced digraph, which is unique and independent of the occurring delays. 

\begin{theorem}\label{theoremNoDrops}
Under the above \textit{setup}, the proposed balancing algorithm under no delays ($\overline{\tau}=0$) converges to a set of weights $f^*_{lj}$ that form a weight balanced digraph after a finite number of steps bounded by $O(n^6)$ while the proposed balancing algorithm in the presence of nonzero delays ($\overline{\tau}>0$) converges to a set of weights $f_{lj}=f_{lj}^*, \forall (v_l,v_j) \in \mathcal{E}$, after $O(n^6\overline{\tau})$ iterations.
\end{theorem}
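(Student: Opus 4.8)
The plan is to decouple the statement into the delay-free convergence and then a pinching of the delayed execution onto the delay-free fixed point. First I would dispose of the case $\overline{\tau}=0$. By construction Algorithm~\ref{algorithm:3} only ever increases weights, and with the unit initialization $f_{lj}[0]=1$ a node that currently carries negative imbalance necessarily still has all of its outgoing weights equal to unity (it could only have raised them while being positive, after which monotonicity keeps its imbalance nonnegative); hence the delay-free run coincides with the increasing-only scheme already analysed in Chapter~\ref{centrvsdistr}. I can therefore invoke Propositions~\ref{prop1} and~\ref{propbound}: the total imbalance $\varepsilon[k]$ is non-increasing and, while $\varepsilon[k]>0$, strictly decreases (by at least $2$) within every window of $m^2$ iterations.

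For the explicit bound, under unit initialization each node has at most $n-1$ incoming and $n-1$ outgoing unit edges, so $|x_j[0]|\leq n-1$ and $\varepsilon[0]\leq n(n-1)$. Combining this with Proposition~\ref{propbound} produces a termination index $K^\ast \leq m^2\,\varepsilon[0]/2 \leq [n(n-1)]^2\, n(n-1)/2 = O(n^6)$, after which $f^*_{lj}[k]=f^*_{lj}[K^\ast]=:f^*_{lj}$ for all $k\geq K^\ast$ and $\varepsilon[K^\ast]=0$, i.e.\ the limiting weights $f^*_{lj}$ balance the digraph. This settles the $\overline{\tau}=0$ claim.

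For the delayed execution I would argue by pinching, using the sandwich relation~(\ref{convergence_equation}), namely $\overline{f}_{lj}[k]\leq f^*_{lj}[k]\leq \overline{f}_{lj}[(k+1)(\overline{\tau}+1)]$ for every edge and every $k$. The left inequality together with the stabilization $f^*_{lj}[k]=f^*_{lj}$ for $k\geq K^\ast$ shows the received weights never exceed $f^*_{lj}$. The right inequality evaluated at $k=K^\ast$ gives $f^*_{lj}\leq \overline{f}_{lj}[(K^\ast+1)(\overline{\tau}+1)]$, while the left inequality at the same index (and $(K^\ast+1)(\overline{\tau}+1)\geq K^\ast$) gives $\overline{f}_{lj}[(K^\ast+1)(\overline{\tau}+1)]\leq f^*_{lj}$, so $\overline{f}_{lj}[(K^\ast+1)(\overline{\tau}+1)]=f^*_{lj}$. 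By the monotonicity of Theorem~\ref{monotonicity_theorem} the received weights are then pinned at $f^*_{lj}$ from that step on; Lemma~\ref{weight_function} forces the transmitted weights to equal $f^*_{lj}$ as well, so $f_{lj}=f^*_{lj}$ for all edges, and since the in-weights have caught up every delayed imbalance $\overline{x}_j$ vanishes. The convergence step is $(K^\ast+1)(\overline{\tau}+1)=O(n^6)(\overline{\tau}+1)=O(n^6\overline{\tau})$, which is the asserted bound.

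The main obstacle is the second part: certifying that delays alter neither the limiting weights nor, up to the factor $\overline{\tau}+1$, the convergence horizon. The genuinely delicate estimates, however, are front-loaded into the already-proved relation~(\ref{convergence_equation}), whose induction rests on the monotonicity of the update map $\mathcal{F}_{lj}$ (Lemma~\ref{weight_function}) and on the bounded-delay catch-up estimate of Proposition~\ref{monotonic_prop}. Granting that relation, what remains is the careful pinching at the finite index $K^\ast$ and the bookkeeping that propagates equality of received weights to transmitted weights and then to the imbalances; the only point needing verification is that $K^\ast$ is finite and $O(n^6)$, which is exactly what the delay-free analysis supplies.
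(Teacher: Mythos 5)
Your proposal is correct and follows essentially the same route as the paper: the delay-free $O(n^6)$ bound (which the paper obtains by citing \cite{2013:RikosHadj}, and which you instead derive from Propositions~\ref{prop1} and~\ref{propbound} together with $\varepsilon[0]\leq n(n-1)$ under unit initialization, exactly as the paper's own footnotes indicate) is combined with the sandwich relation~(\ref{convergence_equation}) evaluated at the stabilization index to pin $\overline{f}_{lj}[(k_0+1)(\overline{\tau}+1)]=f^*_{lj}$ and conclude convergence in $O(n^6\overline{\tau})$ steps. The pinching step at $k_0$ is identical to the paper's argument.
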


\begin{proof}
As shown in \cite{2013:RikosHadj}, for the case where $\overline{\tau}=0$, the proposed distributed algorithm reaches a set of weights that forms a weight balanced digraph $F^*$ after a finite number of steps bounded by $O(n^6)$, where $n$ is the number of nodes in the digraph. 
This means that for every $(v_l,v_j) \in \mathcal{E}$, $\exists k_0 \in \mathbb{N}_0$ for which $f^*_{lj}[k+1] = f^*_{lj}[k], \forall k \geq k_0$. 
From (\ref{convergence_equation}) we have $\overline{f}_{lj}[(k_0+1)(\overline{\tau} + 1)] \geq f^*_{lj}[k_0]$ and $\overline{f}_{lj}[(k_0+1)(\overline{\tau} + 1)] \leq f^*_{lj}[(k_0+1)(\overline{\tau} + 1)]$; however, since $f^*_{lj}[k_0] = f^*_{lj}[(k_0+1)(\overline{\tau} + 1)]$ we have $\overline{f}_{lj}[(k_0+1)(\overline{\tau} + 1)] = f^*_{lj}[k_0]$, which means that the proposed algorithm reaches a set of weights $f_{lj} = \overline{f}_{lj} = f^*_{lj}, \forall (v_l,v_j) \in \mathcal{E}$ after $(k_0+1)(\overline{\tau} + 1)$ time steps. 
As a result, since the $\overline{\tau}=0$ case completes its operation after $O(n^6)$ steps (from \cite{2013:RikosHadj}), then the proposed distributed algorithm completes its operation after $O(n^6\overline{\tau})$ steps where $n$ and $\overline{\tau}$ are the number of nodes and the maximum delay in the given digraph, respectively. 
Furthermore, since $\overline{f}_{lj}[(k_0+1)(\overline{\tau} + 1)] = f^*_{lj}[k_0]$, $\forall (v_l,v_j) \in \mathcal{E}$, then the resulting edge weights are equal to the resulting edge weights of the case where no delays affect link transmissions.
\end{proof}

\section{Extension to Event-Triggered Operation}
\label{triggeredalgDel}

Motivated by the need to reduce energy consumption, communication bandwidth, network congestion, and/or processor usage, many researchers have considered the use of event-triggered communication and control \cite{dimarogonas2012distributed, 2014:nowzari_cortes}. In this section, we discuss an event-triggered operation of the proposed distributed algorithm  where each agent autonomously decides when communication and control updates should occur so that the resulting network executions still result in a weight balanced digraph after a finite number of steps in the presence of arbitrary (time-varying, inhomogeneous) but bounded time delays that might affect link transmissions. More specifically, following the proposed event-triggered strategy, we can prove that (i) all nodes eventually stop transmitting, and (ii) the proposed distributed algorithm is able to obtain a set of weights that balance the corresponding digraph after a finite number of iterations.

\subsection{Formal Description of Distributed Algorithm}

A formal description of the algorithm's event-triggered version is presented in Algorithm~\ref{algorithm:4}.

\begin{varalgorithm}{4}
\caption{Event-triggered distributed balancing in the presence of time delays}
\textbf{Input} \\ A strongly connected digraph $\mathcal{G}_d=(\mathcal{V},\mathcal{E})$ with $n=|\mathcal{V}|$ nodes and $m=|\mathcal{E}|$ edges.\\
\textbf{Initialization} \\ Set $k=0$; each node $v_j \in \mathcal{V}$ does the following:
\vspace{-0.1cm}
\begin{enumerate}
\item It assigns a unique priority order to its outgoing edges as $P_{lj}$, for $v_l \in \mathcal{N}_j^+$ (such that $\{P_{lj} \ | \ v_l \in \mathcal{N}_j^+\} = \{1,2,...,\mathcal{D}_j^+\}$). 
\item It sets its outgoing edge weights as
$$
f_{lj}[0] = \left\{ \begin{array}{ll}
         0, & \mbox{if $(v_l,v_j) \notin \mathcal{E}$,}\\
         1, & \mbox{if $(v_l,v_j) \in \mathcal{E}$.}\end{array} \right. 
$$
\item It sets its (perceived) incoming weights to be equal to unity, $\overline{f}_{ji}[0]=1$, $\forall v_i \in \mathcal{N}_j^-$.
\item It transmits the new weights $f_{lj}[0]$ on each outgoing edge $(v_l,v_j)\in \mathcal{E}$ to each $v_l \in \mathcal{N}_j^+$. 
\end{enumerate}

\textbf{Iteration} \\ For $k=0,1,2,\dots$, each node $v_j \in \mathcal{V}$ does the following:
\vspace{-0.1cm}
\begin{enumerate}
\item It receives the weights on its incoming edges $\overline{f}_{ji}[k+1]$. More specifically, for each in-neighboring node $v_i \in \mathcal{N}_j^-$ let $\mathcal{F}_{ji} = \{f_{ji}[s+\tau_{ji}[s]]$ $|$ $s+\tau_{ji}[s] = k+1\}$ be the set of weights of link $(v_j,v_i) \in \mathcal{E}$ that arrive at node $v_j$ at time step $k+1$. We have that
$$
\overline{f}_{ji}[k+1] = \left\{ \begin{array}{ll}
         \overline{f}_{ji}[k], & \mbox{if $\mathcal{F}_{ji} = \emptyset$,}\\
         \max\{\overline{f}_{ji}[k] , \max_{\overline{f}_{ji} \in \mathcal{F}_{ji}}\{\overline{f}_{ji}\}\}, & \mbox{if $\mathcal{F}_{ji} \neq \emptyset$.}\end{array} \right. 
$$
\item If $\overline{f}_{ji}[k+1] = \overline{f}_{ji}[k]$ for each $v_i \in \mathcal{N}_j^-$ then skip Steps~3,~4, and~5. Point out that this is the major difference from the previous algorithm.
\item It computes its weight imbalance according to the latest received weight values from its in-neighbors 
$$
\overline{x}_j[k + 1] = \overline{\mathcal{S}}_j^-[k + 1] - \mathcal{S}_j^+[k + 1]. 
$$
\item If  $\overline{x}_j[k + 1] = br_j^+ > 0$, it sets 
the values of the weights on its outgoing edges to $f_{lj}[k+1] = \left \lfloor \frac{\overline{S}_j^-[k + 1]}{\mathcal{D}_j^+} \right \rfloor$, $ \forall v_l \in \mathcal{N}_j^+$. 
Then, it chooses the set of the first (according to the priority order) $\overline{S}_j^-[k + 1] - \mathcal{D}_j^+ \left \lfloor \dfrac{\overline{S}_j^-[k + 1]}{\mathcal{D}_j^+} \right \rfloor$ outgoing edges, and increases their weight by 1 so that $\vert f_{lj}[k+1] - f_{hj}[k+1] \vert \leq 1, \forall v_l,v_h \in \mathcal{N}_j^+ $ and $\mathcal{S}_j^+[k+1] = \overline{\mathcal{S}}_j^-[k + 1]$.
\item It transmits the new weights $f_{lj}[k+1]$ on each outgoing edge $(v_l,v_j)\in \mathcal{E}$ to the corresponding out-neighbor $v_l \in \mathcal{N}_j^+$. 
\item It repeats (increases $k$ to $k+1$ and goes back to Step~1).
\end{enumerate}
\label{algorithm:4}
\end{varalgorithm}

\subsection{Execution Time Analysis of Distributed Algorithm}
\label{analysisalgDel}

\begin{prop}\label{eventconverge}
Under the above \textit{setup}, the proposed event-triggered balancing algorithm converges, in the presence of bounded delays ($\overline{\tau}>0$), to a set of weights $f_{lj} = f_{lj}^*, \ \forall (v_l,v_j) \in \mathcal{E}$, after a finite number of steps bounded by $O(n^6\overline{\tau})$ iterations (where the set of weights $f^*_{lj}$ is the set of weights obtained by the nominal algorithm that runs with no even-triggering and no delays).
\end{prop}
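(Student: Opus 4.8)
The plan is to show that event-triggering merely suppresses the redundant re-transmissions that Algorithm~\ref{algorithm:3} performs at \emph{every} iteration, without altering the weight values that any node ever assigns or the guarantee that each new value reaches its out-neighbor within $\overline{\tau}$ steps. Once this is established, the entire chain of reasoning culminating in Theorem~\ref{theoremNoDrops} transfers verbatim, and convergence to the same weights $f^*_{lj}$ in $O(n^6\overline{\tau})$ steps follows.

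First I would argue that skipping Steps~3--5 is harmless. By Lemma~\ref{weight_function} the outgoing weight $f_{lj}$ is a function $\mathcal{F}_{lj}$ of the perceived incoming weights alone, and in Algorithm~\ref{algorithm:4} a node re-balances in Step~4 precisely in those iterations in which at least one incoming weight changes. Hence, whenever a change \emph{is} detected, the node rebalances in the same iteration, so that $\mathcal{S}_j^+[k+1]=\overline{\mathcal{S}}_j^-[k+1]$ and its imbalance is brought to a nonpositive value; in any subsequent iteration with $\overline{f}_{ji}[k+1]=\overline{f}_{ji}[k]$ for every $v_i\in\mathcal{N}_j^-$, both $\overline{\mathcal{S}}_j^-$ and $\mathcal{S}_j^+$ are unchanged, so the imbalance stays nonpositive and Step~4 would do nothing anyway. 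Thus no weight update is ever missed, and the three monotonicity statements of Theorem~\ref{monotonicity_theorem} continue to hold for the transmitted and received weights generated by Algorithm~\ref{algorithm:4}.

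The crux is to re-establish the delivery guarantee of Proposition~\ref{monotonic_prop}, namely $\overline{f}_{lj}[k+\overline{\tau}]\ge f_{lj}[k]$, which previously rested on a transmission being initiated at every single step. Under event-triggering a value is transmitted (Step~5) exactly when the incoming weights change, and this is also the only situation in which $f_{lj}$ itself can change; therefore every distinct value of $f_{lj}$ is transmitted at the instant it is set. Letting $k'\le k$ be the last iteration at which $f_{lj}$ changed, so that $f_{lj}[k]=f_{lj}[k']$, the single transmission initiated at $k'$ arrives by $k'+\tau_{lj}[k']\le k'+\overline{\tau}$, whence $\overline{f}_{lj}[k'+\overline{\tau}]\ge f_{lj}[k']$; monotonicity of $\overline{f}_{lj}$ together with $k'\le k$ then yields $\overline{f}_{lj}[k+\overline{\tau}]\ge f_{lj}[k]$. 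Consequently a single transmission per change suffices under bounded delays, the sandwich bound $\overline{f}_{lj}[k]\le f^*_{lj}[k]\le\overline{f}_{lj}[(k+1)(\overline{\tau}+1)]$ of~(\ref{convergence_equation}) follows by the same induction as before, and invoking the argument of Theorem~\ref{theoremNoDrops} (with the $O(n^6)$ no-delay bound from \cite{2013:RikosHadj}) gives $f_{lj}=f^*_{lj}$ after $O(n^6\overline{\tau})$ iterations.

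Finally I would address termination of transmissions: once all $f_{lj}$ and $\overline{f}_{lj}$ have reached $f^*_{lj}$, any still-in-flight packet carries a value $\le f^*_{lj}$ and so cannot increase any $\overline{f}_{ji}$ beyond its current value, meaning Step~2 suppresses every further update and no node transmits after time $O(n^6\overline{\tau})+\overline{\tau}$. The main obstacle I anticipate lies in the \emph{first} action: since all weights are initialized to unity, the initial imbalance produces no \emph{change} in perceived incoming weights, so the triggering condition of Step~2 must be understood to fire once at start-up (equivalently, each node corrects its initial imbalance before the skipping rule takes effect). I would make this convention explicit and then verify that, with it, the base case $k=0$ of the induction in~(\ref{convergence_equation}) holds, after which the inductive step goes through unchanged.
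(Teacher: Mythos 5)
Your proof is correct but follows a genuinely different route from the paper's. The paper disposes of the proposition by a coupling argument: it declares the event-triggered execution to be \emph{identical} to an execution of Algorithm~\ref{algorithm:3} in which every suppressed retransmission of an unchanged weight is modelled as a transmission suffering the maximum possible delay, and then invokes Theorem~\ref{theoremNoDrops} as a black box to get both convergence to $f^*_{lj}$ and the $O(n^6\overline{\tau})$ bound; eventual silence of all nodes then follows from the weights stabilizing. You instead reprove the supporting lemmas directly for Algorithm~\ref{algorithm:4}: you show the skip rule never suppresses a needed update (a node's imbalance is nonpositive whenever its perceived in-weights are unchanged), you re-derive the delivery guarantee $\overline{f}_{lj}[k+\overline{\tau}]\geq f_{lj}[k]$ from the fact that every \emph{distinct} value of $f_{lj}$ is transmitted at the instant it is set, and you then rerun the sandwich induction of (\ref{convergence_equation}). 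The paper's reduction buys brevity; your reconstruction buys rigor, since the claimed equivalence is actually delicate in two places that the paper glosses over: the suppressed packets are never sent at all (this is harmless only because the $\max$-based update in Step~1 makes duplicates inert, which your argument makes explicit), and, more importantly, the start-up behaviour differs --- in Algorithm~\ref{algorithm:3} a node acts on its initial imbalance unconditionally, whereas the literal skip condition of Algorithm~\ref{algorithm:4} fires at $k=0$ because the initial unit weights produce no perceived change, so without your convention that the trigger fires once at start-up no node would ever act. Identifying and repairing that base case is a genuine improvement over the paper's own argument, and with it your proof is complete.
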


\begin{proof}

As shown in \cite{2013:RikosHadj}, when $\overline{\tau}=0$ the distributed algorithm in Section~\ref{chapteralgDel} reaches, after a finite number of steps bounded by $O(n^6)$ (where $n$ is the number of nodes in the digraph), a set of weights $f^*_{lj}$, $\forall (v_l, v_j) \in \mathcal{E}$, that forms a weight balanced digraph. This means that there exists $k_0 \in \mathbb{N}_0$, such that, for every $(v_l,v_j) \in \mathcal{E}$, we have $f^*_{lj}[k+1] = f^*_{lj}[k] = f^*_{lj}, \forall \ k \geq k_0$.

Consider now the event-triggered operation of the proposed distributed balancing algorithm in the presence of bounded delays in the communication links. The operation of the event-triggered version is identical to the operation of the proposed distributed algorithm with delays introduced in Section~\ref{chapteralgDel} if we assume that in the latter algorithm all transmissions of identical weights (that occur in the original version of the algorithm but not in the event-triggered version) suffer the maximum possible delay. As a result, since the operation of both algorithms is identical\footnote{The operation is identical under different delays in each case.}, we have that the event-triggered operation of the distributed algorithm will converge to a set of weights that form a weight balanced digraph after a finite number of steps bounded by $O(n^6\overline{\tau})$ iterations. Also, since $\exists \ k_0 \in \mathbb{N}_0$ for which $f_{lj}[k+1] = f_{lj}[k] = f^*_{lj}, \forall \ k \geq k_0$, from Step~3 of the algorithm, we can see that all nodes eventually stop transmitting (and the weights are identical to the weights obtained by the algorithm in Section~\ref{chapteralgDel}). 
\end{proof}

\begin{remark}
It is interesting to note here that event-triggering comes at the cost of speed in the sense that retransmissions of identical weights could have potentially allowed the receiving node to learn the weight change earlier (particularly if the delay after a triggering is large, in which case it could be offset by a smaller delay in a subsequent transmission).
\end{remark}

\begin{remark} 
It is important to note here that the proposed distributed balancing algorithm (along with its event triggered operation) is able to converge (with probability one) to a set of weights that form a balanced graph after a finite number of iterations in the case where there are different (possibly unbounded) delay distributions (except the uniform one which was thoroughly analyzed) in the communication links during the information exchange between nodes in the network.
\end{remark}

\section{Distributed Algorithm for Weight Balancing \\ in the Presence of Packet Dropping Links}
\label{packetalgDel}

Apart from bounded delays, \textit{unreliable} communication links in practical settings could also result in possible packet drops (i.e., unbounded delays) in the corresponding communication network. In this section, we analyze the performance of the proposed distributed weight balancing algorithm in the presence of possible packet drops in the communication links. To model packet drops, we assume that, at each time step $k$, a packet that is sent from node $v_i$ to node $v_j$ on link $(v_j, v_i) \in \mathcal{E}$ is dropped with probability $q_{ji}$, where we have $q_{ji} <1$. For simplicity, we assume independence between packet drops at different time steps or different links. We establish that, despite the presence of packet drops, the proposed distributed algorithm converges, with probability one, to a weight balanced digraph after a finite number of iterations. This weight balanced digraph is identical to the one obtained under no packet drops.

\begin{prop}
\label{dropsconverge}
Consider the above \textit{setup}, where the proposed balancing algorithm, with no packet drops and no delays, converges to a set of weights $f^*_{lj}$ that form a weight balanced digraph after a finite number of steps bounded by $O(n^6)$. In the presence of packet drops occurring with probability $q_{lj}$, $q_{lj} < 1$, $\ \forall (v_l,v_j) \in \mathcal{E}$ (independently between different links and different time steps), the proposed balancing algorithm also converges, with probability one, to a set of weights $f_{lj} = f_{lj}^*, \ \forall (v_l,v_j) \in \mathcal{E}$, after a finite number of iterations. 
\end{prop}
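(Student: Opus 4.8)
The plan is to treat packet drops as an unbounded-delay instance of the model behind Theorem~\ref{theoremNoDrops}, and then to certify that, with probability one, the delays actually incurred before convergence are finite, so that the bounded-delay analysis applies almost surely.

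First I would isolate a probability-one event on which the drop execution behaves like a finite-delay execution. Fix an edge $(v_l,v_j)\in\mathcal{E}$: at each time step the source node $v_j$ retransmits its current outgoing weight $f_{lj}[\cdot]$, and the transmission is delivered to $v_l$ with probability $1-q_{lj}>0$, independently across time steps. Since $\sum_{k\geq 0}(1-q_{lj})=\infty$ and the per-step deliveries are independent, the second Borel--Cantelli lemma gives that successful deliveries occur infinitely often on edge $(v_l,v_j)$ with probability one; equivalently, link $(v_l,v_j)$ almost surely never suffers an infinite run of consecutive drops. Intersecting over the finitely many edges, with probability one every link delivers infinitely many packets, so the number of consecutive drops on each link is finite (though random and not uniformly bounded in $k$).

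Next I would reduce to finitely many weight updates. The monotonicity argument of Theorem~\ref{monotonicity_theorem} uses only that a positively-imbalanced node increases its outgoing weights and otherwise leaves them unchanged, so it carries over unchanged to the packet-drop setting: the transmitted weights $f_{lj}[k]$ and the perceived weights $\overline{f}_{lj}[k]$ are non-decreasing in $k$. Combined with the local-update and induction arguments underlying~(\ref{convergence_equation}) --- which rely only on Lemma~\ref{weight_function} and on eventual delivery, not on the specific value of $\overline{\tau}$ --- each perceived weight stays between its initial value $1$ and the final balanced value $f^*_{lj}$. Being non-decreasing, integer-valued, and bounded above by $f^*_{lj}$, each weight therefore changes only finitely many times and attains a final value; once node $v_j$ has set $f_{lj}$ to $f^*_{lj}$ it retransmits $f^*_{lj}$ at every subsequent iteration, and on the probability-one event above this value reaches $v_l$ after finitely many steps.

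Finally, on this probability-one event I would invoke Theorem~\ref{theoremNoDrops} with the uniform bound $\overline{\tau}$ replaced by the finite, realization-dependent maximum delay incurred up to the convergence horizon. Since the no-drop, no-delay execution balances in $O(n^6)$ steps (so there is a $k_0$ with $f^*_{lj}[k]=f^*_{lj}$ for all $k\geq k_0$ and only finitely many updates to deliver), all required updates are delivered after a finite (random) number of iterations, and the sandwich~(\ref{convergence_equation}) read with this finite effective delay forces $f_{lj}=\overline{f}_{lj}=f^*_{lj}$ for every edge. I expect the main obstacle to be exactly the non-uniformity of the delay: one cannot substitute a fixed $\overline{\tau}$ into Theorem~\ref{theoremNoDrops}, so the crux is to combine the Borel--Cantelli argument with the finiteness of the number of weight updates in order to certify that the effective delay over the finite convergence horizon is almost surely finite, thereby legitimately reducing the packet-drop case to the bounded-delay case.
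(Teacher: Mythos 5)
Your proposal is correct and rests on the same underlying reduction as the paper's proof: treat packet drops as unbounded delays and certify that, with probability one, the execution coincides with one whose delays are finite, so that the bounded-delay machinery (Theorem~\ref{theoremNoDrops}) applies. Where you differ is in the probabilistic bookkeeping. The paper fixes an arbitrarily small $\epsilon>0$, chooses $k_{lj}=\lceil \log\epsilon/\log q_{lj}\rceil$ so that each link delivers at least once in any window of $k_{lj}$ consecutive uses with probability $1-\epsilon$, sets $\overline{\tau}=\max_{(v_l,v_j)\in\mathcal{E}}k_{lj}$, and concludes that with probability $(1-\epsilon)^{n^6|\mathcal{E}|}$ the drop execution is indistinguishable, over the $O(n^6\overline{\tau})$-step horizon, from a delay execution with bound $\overline{\tau}$; letting $\epsilon\to 0$ then yields probability one. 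You instead use the second Borel--Cantelli lemma to exhibit a single probability-one event on which every link delivers infinitely often, and argue pathwise: monotonicity plus the left half of the sandwich (\ref{convergence_equation}) bounds every transmitted weight by $f^*_{lj}$, so only finitely many distinct values ever need to be delivered, and each is delivered after an almost-surely finite number of retransmissions. Your route buys a cleaner almost-sure statement without the $\epsilon$-limit, and it makes explicit why the non-uniformity of the realized delays is harmless --- a point the paper compresses into the phrase ``essentially identical to an execution with delays bounded by $\overline{\tau}$.'' The paper's route buys something yours does not: a quantitative statement, namely completion within $O(n^6\overline{\tau})$ steps with probability at least $(1-\epsilon)^{n^6|\mathcal{E}|}$ for the stated $\overline{\tau}(\epsilon)$. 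Both arguments are valid proofs of the proposition as stated.
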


\begin{proof}
As mentioned earlier and shown in \cite{2013:RikosHadj} for the case where $\overline{\tau}=0$, the proposed distributed algorithm reaches a set of weights $F^*$ that forms a weight balanced digraph after a finite number of steps bounded by $O(n^6)$, where $n$ is the number of nodes in the digraph. 
This means that for every $(v_l,v_j) \in \mathcal{E}$, $\exists \ k_0 \in \mathbb{N}_0$ for which $f^*_{lj}[k+1] = f^*_{lj}[k], \forall \ k \geq k_0$.

Consider now an execution of the proposed distributed balancing algorithm where packets containing information are dropped with probability $q_{lj} <1$ for each communication link $(v_l,v_j) \in \mathcal{E}$, and assume independence between packet drops at different time steps and different links.

During transmissions on link $(v_l, v_j)$, we have that at each transmission, a packet goes through with probability $1 - q_{lj} > 0$. 
Thus, if we consider $k_{lj}$ consecutive uses of link $(v_l, v_j)$, the probability that at least one packet will go through is $1 - q_{lj}^{k_{lj}}$, which will be arbitrarily close to $1$ for a sufficiently large $k_{lj}$.

\noindent
Specifically, for any (arbitrarily small) $\epsilon > 0$, we can choose 
$$
k_{lj} = \left \lceil \frac{\log \epsilon}{\log q_{lj}} \right \rceil, 
$$
to ensure that each transmission goes through at least once within $k_{lj}$ steps with probability $1 - \epsilon$. 

Let $\overline{\tau} = \max_{(v_l, v_j) \in \mathcal{E}}\{k_{lj}\}$; then since the proposed distributed  algorithm completes under no packet drops in $O(n^6)$ steps, we can conclude that it will complete by $O(n^6\overline{\tau})$ steps with probability $(1 - \epsilon)^{n^6 | \mathcal{E} |}$ in the presence of packet drops (note that $| \mathcal{E} |$ is the number of edges in the given digraph). By making $\epsilon$ arbitrarily small we can make this probability arbitrarily close to $1$. Moreover, since this particular execution of the algorithm (that occurs with probability $(1 - \epsilon)^{n^6 | \mathcal{E} |}$) is essentially identical to an execution of the algorithm in Section~\ref{chapteralgDel} with delays that are bounded by $\overline{\tau}$, the final weights are identical to the weights of that algorithm (i.e., for large enough $k$ we have $f_{lj}[k] = f^*_{lj}$ for all $(v_l, v_j) \in \mathcal{E}$).
%
%
%
\end{proof}

\begin{remark} 
Note here that the presence of packet drops can be dealt in a way identical to the presence of unbounded delays in the communication links during the information exchange between nodes in the network. In fact, when the delays in the communication links are unbounded, then the proposed distributed algorithm is still able to obtain a set of weights that balance the corresponding digraph after a finite number of iterations.
\end{remark}

\begin{remark} 
It is worth pointing out that the proposed distributed algorithm is able to converge (with probability one) to a set of weights that form a balanced graph after a finite number of iterations when there are both possible packet drops {\em and} arbitrary but bounded time delays in the communication links, while the resulting weight balanced digraph is again unique and independent of how packet drops and delays manifest themselves in link transmissions.
\end{remark}

\subsection{Simulation Study}
\label{resultsalgDel}

In this section, we present simulation results and comparisons for the proposed distributed algorithm. 
Specifically, we present detailed numerical results for a random graph of size $n = 20$ and for the average of $1000$ random digraphs of $20$ and $50$ nodes each.
We illustrate the behavior of the proposed distributed algorithm for the following three different scenarios: 
(i) the scenario where there are no packet drops in the communication links $(v_j,v_i) \in \mathcal{E}$ and each node $v_j$ transmits the weights $f_{lj}[k+1]$ of each outgoing edge $(v_l,v_j)\in \mathcal{E}$ to each $v_l \in \mathcal{N}_j^+$ at each iteration $k$,
(ii) the scenario where there are packet drops with equal probability $q$ (where $0 \leq q < 1$) for every communication link $(v_j,v_i) \in \mathcal{E}$ and each node $v_j$ transmits the weights $f_{lj}[k+1]$ of each outgoing edge $(v_l,v_j)\in \mathcal{E}$ to each $v_l \in \mathcal{N}_j^+$ at each iteration $k$,
(iii) the scenario where there are no packet drops at the communication links $(v_j,v_i) \in \mathcal{E}$ and each node $v_j$ transmits only once the updated weights $f_{lj}[k+1]$ of each outgoing edge $(v_l,v_j)\in \mathcal{E}$ to each $v_l \in \mathcal{N}_j^+$. 
Each scenario of the proposed distributed algorithm is analyzed in a) the absence of time-delays in the communication links (i.e., $\tau_{lj}[k]=0$, $\forall \ (v_l,v_j) \in \mathcal{E}$) and b) the presence of time-delays in the communication links (i.e., $0 \leq \tau_{lj}[k] \leq \tau_{lj}$, $\forall \ (v_l,v_j) \in \mathcal{E}$).

Note here that in the case where $\tau_{lj}[k]=0$, $\forall \ (v_l,v_j) \in \mathcal{E}$ we have that $\overline{f}_{lj}[k] = f_{lj}[k]$ and the proposed distributed algorithm is identical to the algorithm presented in \cite{2013:RikosHadj}, where we consider the transmission between nodes to happen instantaneously.

Figure~\ref{journalRand20} shows what happens in the case of a randomly created graph of $20$ nodes, in which the operation of the proposed distributed algorithm includes no packet drops at the communication links $(v_j,v_i) \in \mathcal{E}$ and each node $v_j$ transmits the weights $f_{lj}[k+1]$ of each outgoing edge $(v_l,v_j)\in \mathcal{E}$ to each $v_l \in \mathcal{N}_j^+$ at each iteration $k$. 
We plot the \textit{total delayed imbalance} as a function of the number of iterations $k$ for the cases where $\overline{\tau}=0$ (solid line), $\overline{\tau} = 10$ (dashed line) and $\tau_{lj} = \overline{\tau} = 10$, $\forall (v_l,v_j) \in \mathcal{E}$ (dashed-dotted line). The plot demonstrates that the proposed distributed algorithm is able to obtain a set of weights that balance the corresponding digraph after a finite number of iterations as argued in the previous section.

\begin{figure}[ht] 
\centering    
\includegraphics[width=0.60\textwidth]{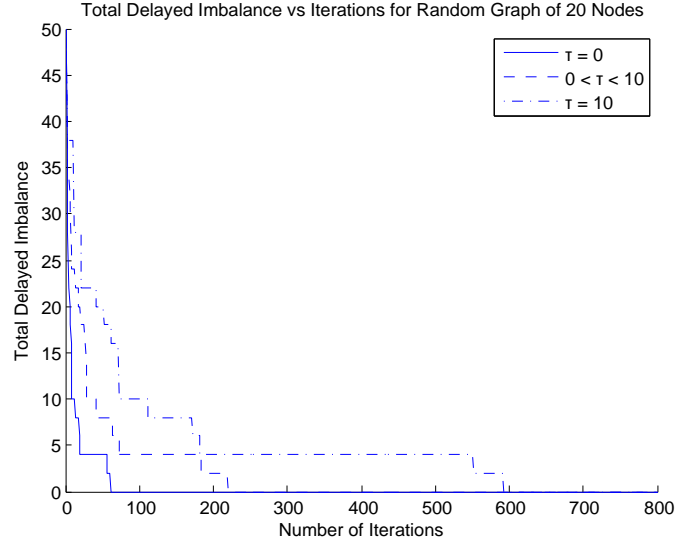}
\caption{Total delayed imbalance plotted against the number of iterations for a random digraph of $20$ nodes in the case where $\overline{\tau} = 0$ (solid line), $0 < \tau_{lj} < \overline{\tau}$ where $\overline{\tau} = 10$ (dashed line) and in the case where $\tau_{lj} = \overline{\tau} = 10$ (dashed-dotted line).}
\label{journalRand20}
\end{figure}

Figure~\ref{journalRand20Drop} shows the same case as Figure~\ref{journalRand20}, with the difference that 
the operation of the proposed distributed algorithm includes packet drops with equal probability $q$ (where $0 < q < 1$) for every communication link $(v_j,v_i) \in \mathcal{E}$ and each node $v_j$ transmits the weights $f_{lj}[k+1]$ of each outgoing edge $(v_l,v_j)\in \mathcal{E}$ to each $v_l \in \mathcal{N}_j^+$ at each iteration $k$. Here, the plot suggests, that the proposed distributed algorithm is able to obtain a set of weights that balance the corresponding digraph after a finite number of iterations.

\begin{figure}[ht] 
\centering    
\includegraphics[width=0.60\textwidth]{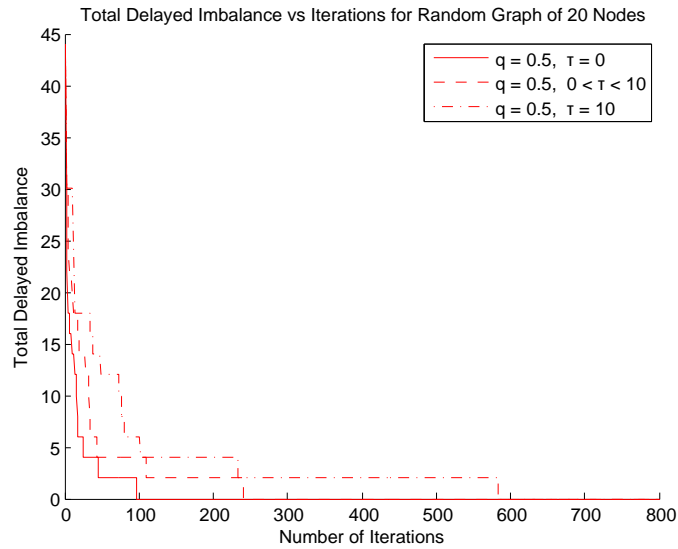}
\caption{Total delayed imbalance plotted against the number of iterations for a random digraph of $20$ nodes in the case where $\overline{\tau} = 0$ (solid line), $0 < \tau_{lj} < \overline{\tau}$ where $\overline{\tau} = 10$ (dashed line) and in the case where $\tau_{lj} = \overline{\tau} = 10$ (dashed-dotted line).}
\label{journalRand20Drop}
\end{figure}

Figure~\ref{journalRand20Once} shows the same case as Figures~\ref{journalRand20} and \ref{journalRand20Drop}, with the difference being that the operation of the proposed distributed algorithm includes no packet drops at the communication links $(v_j,v_i) \in \mathcal{E}$ but each node $v_j$ transmits only once the updated weights $f_{lj}[k+1]$ of each outgoing edge $(v_l,v_j)\in \mathcal{E}$ to each $v_l \in \mathcal{N}_j^+$. Here, the plot demonstrates that the proposed distributed algorithm is able to obtain a set of weights that balance the corresponding digraph after a finite number of iterations.

\begin{figure}[ht] 
\centering    
\includegraphics[width=0.60\textwidth]{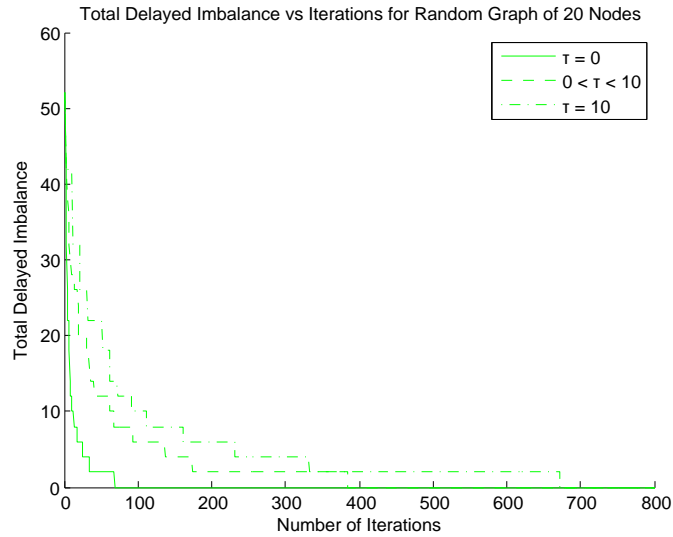}
\caption{Total delayed imbalance plotted against the number of iterations for a random digraph of $20$ nodes in the case where $\overline{\tau} = 0$ (solid lines), $0 < \tau_{lj} < \overline{\tau}$ where $\overline{\tau} = 10$ (dashed lines) and in the case where $\tau_{lj} = \overline{\tau} = 10$ (dashed-dotted lines).}
\label{journalRand20Once}
\end{figure}

Figure~\ref{journalAver20} shows what happens for the average of $1000$ random digraphs of $20$ nodes each for the three scenarios of the proposed distributed algorithm presented in Figures~\ref{journalRand20}, \ref{journalRand20Drop} and \ref{journalRand20Once} respectively.
Note that the plot colors of the three scenarios in Figures~\ref{journalRand20}, \ref{journalRand20Drop} and \ref{journalRand20Once}, remain the same in Figure~\ref{journalAver20} (i.e., the scenarios of Figures~\ref{journalRand20}, \ref{journalRand20Drop} and \ref{journalRand20Once} are shown with blue, red and green, respectively, in Figure~\ref{journalAver20}). 
 We plot the average total delayed imbalance as a function of the number of iterations $k$ in logarithmic scale for $\overline{\tau}=0$ (solid lines), $\overline{\tau} = 10$ (dashed lines) and $\tau_{lj} = \overline{\tau} = 10$, $\forall (v_l,v_j) \in \mathcal{E}$ (dashed-dotted lines). 
Here we can see that the first scenario of the proposed distributed algorithm (presented in Figure~\ref{journalRand20}) is identical to the third one (presented in Figure~\ref{journalRand20Once}) for the case where there are no time-delays in the communication links (i.e., $\tau_{lj}[k]=0$, $\forall \ (v_l,v_j) \in \mathcal{E}$). However, the first scenario (Figure~\ref{journalRand20}) generally outperforms the second and third scenarios (Figures~\ref{journalRand20Drop} and \ref{journalRand20Once}, respectively) for the case where there are time-delays in the communication links (i.e., $0 \leq \tau_{lj}[k] \leq \tau_{lj}$, $\forall \ (v_l,v_j) \in \mathcal{E}$).

\begin{figure}[ht] 
\centering    
\includegraphics[width=0.60\textwidth]{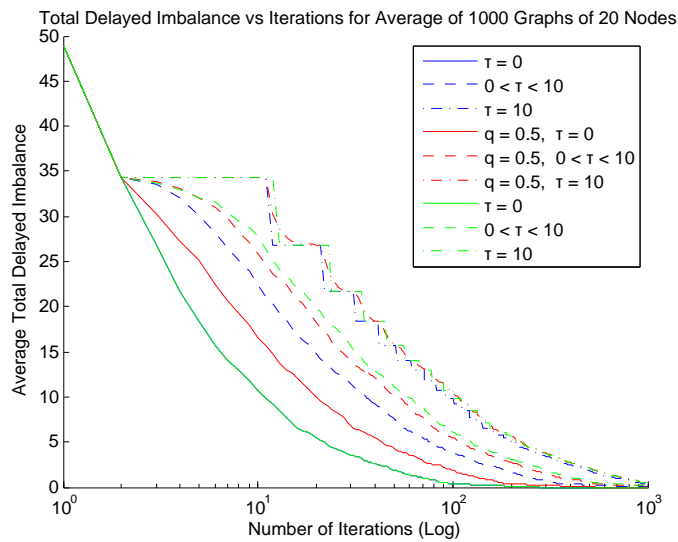}
\caption{Average total delayed imbalance plotted against the number of iterations in logarithmic scale for $1000$ random digraphs of $20$ nodes each in the case where $\overline{\tau} = 0$ (solid lines), $0 < \tau_{lj} < \overline{\tau}$ where $\overline{\tau} = 10$ (dashed lines) and in the case where $\tau_{lj} = \overline{\tau} = 10$ (dashed-dotted lines).}
\label{journalAver20}
\end{figure}

Figure~\ref{journalAver50} shows the same cases as Figure~\ref{journalAver20}, with the difference being that the network consists of $50$ nodes. The performance of the proposed distributed algorithm does not change due to the network size and the conclusions are the same as in Figure~\ref{journalAver20}.

\begin{figure}[ht] 
\centering    
\includegraphics[width=0.60\textwidth]{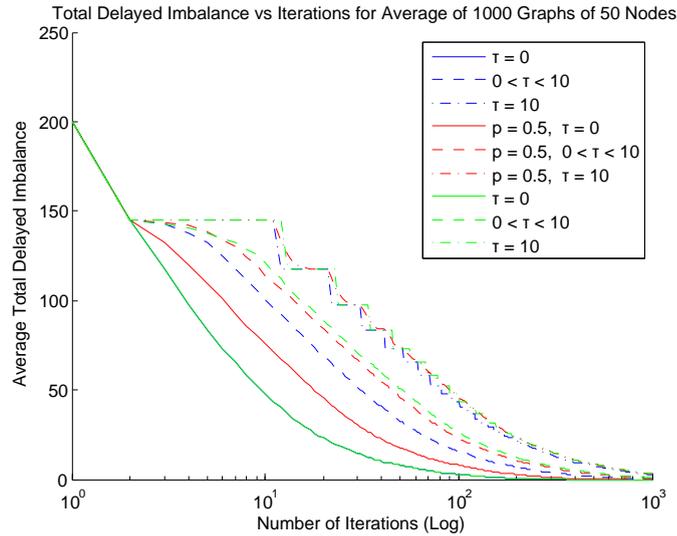}
\caption{Average total delayed imbalance plotted against the number of iterations in logarithmic scale for $1000$ random digraphs of $50$ nodes each in the case where $\overline{\tau} = 0$ (solid lines), $0 < \tau_{lj} < \overline{\tau}$ where $\overline{\tau} = 10$ (dashed lines) and in the case where $\tau_{lj} = \overline{\tau} = 10$ (dashed-dotted lines).}
\label{journalAver50}
\end{figure}

\section{Chapter Summary}
\label{summaryalgDel}

In this chapter, we introduced a novel distributed iterative algorithm and established that it converges to a weight balanced digraph after a finite number of steps. 
We have also bounded the execution time of the proposed algorithm as $O(n^6\overline{\tau})$, where $n$ is the number of nodes and $\overline{\tau}$ is the maximum delay in the digraph, and argued that the resulting weight balanced digraph is unique and independent of how the delays that affect link transmissions materialize.  
We also added extensions to handle the cases of packet drops over the communication links and event-triggered operation. In both scenarios, the proposed algorithm converges (with probability one) to a set of weights that form a balanced graph after a finite number of iterations while the resulting weight balanced digraph is unique and independent of how packet drops affect link transmissions.
We also demonstrated the operation, performance, and advantages of the proposed algorithm via various simulations.

\clearpage

\lhead{\emph{Weight Balancing under Link Capacity Constraints}}

\chapter{Weight Balancing under \\ Link Capacity Constraints}
\label{constraintsbalancing}

In this chapter, we present a novel distributed algorithm which deals with the problem of balancing a weighted digraph in the presence of link capacity constraints over the communication links. 
The algorithm presented in this chapter has appeared in \cite{2016:RikHadj, 2018:RikosHadj}.

This chapter is organized as follows. In Section~\ref{ProbStatementUpperLower} we present additional notation and the problem formulation.
In Section~\ref{CircConditions} we present the conditions for the existence of a set of integer weights (within the allowable intervals) that balance a weighted digraph. 
In Section~\ref{upperloweralgorithm} we introduce a novel distributed algorithm which achieves integer weight balancing in a multi-component system, in the presence of specified lower and upper constraints on the edge weights. 
We present a formal description of the proposed distributed algorithm and demonstrate its performance via an illustrative example. 
Then we show that, as long as the conditions presented in Section~\ref{CircConditions} hold, the proposed distributed algorithm converges to a weight-balanced digraph after a finite number of iterations and we conclude by we presenting simulation results and comparisons. 

\section{Graph-Theoretic Notions and Problem Statement}
\label{ProbStatementUpperLower}

In this chapter, we assume that a pair of nodes $v_j$ and $v_i$ that are connected by an edge in the digraph $\mathcal{G}_d$ (i.e., $(v_j, v_i) \in \mathcal{E}$ and/or $(v_i, v_j) \in \mathcal{E}$) can exchange information among themselves (in both directions). In other words, the {\em communication topology} is captured by the undirected graph $\mathcal{G}_u = (\mathcal{V}, \mathcal{E}_u)$ that corresponds to a given directed graph $\mathcal{G}_d = (\mathcal{V}, \mathcal{E})$, where
$$
\displaystyle
\mathcal{E}_u = \cup_{(v_j, v_i) \in \mathcal{E}} \{ (v_j, v_i), (v_i, v_j) \} = \mathcal{E} \cup \mathcal{E}_r \; ,
$$
with $\mathcal{E}_r = \{ (v_i, v_j) \; | \; (v_j, v_i) \in \mathcal{E} \}$. [Recall that a graph is undirected if and only if $(v_j, v_i) \in \mathcal{E}$ implies $(v_i, v_j) \in \mathcal{E}$.]

Also, we assume that node $v_j$ assigns a unique order in the set $\{0,1,..., \mathcal{D}_j-1\}$ to each of its outgoing and incoming edges. 
The order of link $(v_l,v_j)$ (or $(v_j,v_i)$) is denoted by $P^{(j)}_{lj}$ (or $P^{(j)}_{ji}$) (such that $\{P^{(j)}_{lj} \; | \; v_l \in \mathcal{N}^+_j\} \cup \{P^{(j)}_{ji} \; | \; v_i \in \mathcal{N}^-_j\} = \{0,1,..., \mathcal{D}_j-1\}$) and will be used later on as a way of allowing node $v_j$ to make changes to its outgoing and incoming edge weights in a unique predetermined order. 
This unique order is used during the execution of the proposed distributed algorithm as a way of allowing node $v_j$ to transmit messages to its out- and in-neighbors in a \textit{round-robin}\footnote{Each node $v_j$ transmits to its out- and in-neighbors by following a unique predetermined order. The next time it needs to transmit to an out- or in-neighbor, it will continue from the outgoing or incoming edge it stopped the previous time and cycle through the edges in a round-robin fashion according to the unique predetermined ordering.} fashion.

Given a digraph $\mathcal{G}_d = (\mathcal{V}, \mathcal{E})$ we can associate nonnegative integer weights $f_{ji} \in \mathbb{N}_0$ on each edge $(v_j, v_i) \in \mathcal{E}$. 
In this thesis, these weights will be restricted to have positive integer values and lie in an interval $[l_{ji}, u_{ji}]$, i.e., $0 < l_{ji} \leq f_{ji} \leq u_{ji}$ and $f_{ji} \in \mathbb{N}$, for every $(v_j, v_i) \in \mathcal{E}$. 
We will also use matrix notation to denote (respectively) the integer weight, lower limit, and upper limit matrices by the $n \times n$ matrices $F = [ f_{ji} ]$, $L=[l_{ji}]$, and $U=[u_{ji}]$, where $F(j,i)=f_{ji}$, $L(j,i)=l_{ji}$, $U(j,i)=u_{ji}$, and $f_{ji} \in \mathbb{N}$, for every $(v_j, v_i) \in \mathcal{E}$ (obviously $f_{ji} = l_{ji} = u_{ji} = 0$ when $(v_j,v_i) \notin \mathcal{E}$).

\subsection{Problem Statement}

We are given a strongly connected digraph $\mathcal{G}_d = (\mathcal{V}, \mathcal{E})$, as well as lower and upper bounds $l_{ji}$ and $u_{ji}$ ($l_{ji} \leq u_{ji}$, $l_{ji}, u_{ji} \in \mathbb{R}$) on each each edge $(v_j, v_i) \in \mathcal{E}$. We want to develop a distributed algorithm that allows the nodes to iteratively adjust the weights on their edges so that they eventually obtain a set of integer weights $\{ f_{ji} \; | \; (v_j, v_i) \in \mathcal{E} \}$ that satisfy the following:
\begin{enumerate}
\item $ f_{ji} \in \mathbb{N}_0$ for each edge $(v_j,v_i) \in \mathcal{E}$.
\item $l_{ji} \leq f_{ji} \leq u_{ji}$ for each edge $(v_j,v_i) \in \mathcal{E}$;
\item $\mathcal{S}_j^+ = \mathcal{S}_j^-$ for each $v_j \in \mathcal{V}$;
\end{enumerate}
The distributed algorithm needs to respect the communication constraints imposed by the undirected graph $\mathcal{G}_u$ that corresponds to the given directed graph~$\mathcal{G}_d$.

\begin{remark}
One of the main differences of this chapter with chapters~\ref{centrvsdistr} and \ref{distrdelpacket} is that the algorithms presented in this chapter require a bidirectional communication topology, whereas the aforementioned algorithms assume a communication topology that matches the weight (physical) topology. 
We should point out that direct application of these earlier algorithms to the problem that is of interest in this chapter will generally fail (because weights are restricted to lie within lower and upper limits). 
Also, note that there are many applications where the physical topology is directed but the communication topology is bidirectional (e.g., traffic weight in an one way street is directional, but communication between traffic lights at the end points of the street will, in fact, be bidirectional). 
In such cases, the algorithms proposed here are directly applicable. 
More generally, in many applications, the communication topology does not necessarily match the physical one.
\end{remark}

\section{Integer Circulation Conditions}\label{CircConditions}

Given a strongly connected digraph $\mathcal{G}_d=(\mathcal{V}, \mathcal{E})$, with lower and upper bounds $l_{ji}$ and $u_{ji}$ ($0 < l_{ji} \leq u_{ji}$) on each edge $(v_j, v_i) \in \mathcal{E}$, the necessary and sufficient conditions for the existence of a set of \textit{integer} weights $\{ f_{ji} \; | \; (v_j, v_i) \in \mathcal{E} \}$ that satisfy the {\em capacity constraints} (i.e., $l_{ji} \leq f_{ji} \leq u_{ji}$ for each edge $(v_j,v_i) \in \mathcal{E}$), and {\em balance constraints} (i.e., $\mathcal{S}_j^+ = \mathcal{S}_j^-$ for every $v_j \in \mathcal{V}$), can be stated (by adopting Theorem~$3.1$ in \cite{2010:Fulkerson}) as follows:
\begin{enumerate}
\item[(i)] for every $(v_j,v_i) \in \mathcal{E}$, we have
\begin{equation}\label{ConstCondition1}
\lceil l_{ji} \rceil \leq \lfloor u_{ji} \rfloor,
\end{equation}
and
\item[(ii)] for each subset of nodes $\mathcal{S}$, $\mathcal{S} \subset \mathcal{V}$, we have
\begin{equation}
\label{EQnsconditions}
\sum_{(v_j, v_i) \in \mathcal{E}^-_\mathcal{S}} \lceil l_{ji} \rceil \leq \sum_{(v_l, v_j) \in \mathcal{E}^+_\mathcal{S}} \lfloor u_{lj} \rfloor \; ,
\end{equation}
\end{enumerate}
where 
\begin{eqnarray}
\mathcal{E}^-_\mathcal{S} & = & \{ (v_j, v_i) \in \mathcal{E} \; | \; v_j \in \mathcal{S}, \; v_i \in \mathcal{V}-\mathcal{S} \} \; , \label{REALEQinS}\\
\mathcal{E}^+_\mathcal{S} & = & \{ (v_l, v_j) \in \mathcal{E} \; | \; v_j \in \mathcal{S}, \; v_l \in \mathcal{V}-\mathcal{S} \} \label{REALEQoutS}\; .
\end{eqnarray}

\section{Distributed Algorithm for Weight Balancing \\ under Link Capacity Constraints}
\label{upperloweralgorithm}

In this section we provide an overview of the operation of a distributed balancing algorithm (Algorithm~\ref{constralg1}) and discuss a possible enhancement.
The algorithm is iterative and operates by having, at each iteration, nodes with \textit{positive} weight imbalance attempt to change the integer weights on both their incoming and outgoing edges so that they become weight balanced. 
At each iteration $k$, each node $v_j$ compares the total \textit{in-weight} from the weights of its incoming edges against the total \textit{out-weight} from the weights of its outgoing edges. 
If its weight imbalance is positive ($x_j[k]>0$) then it increases (decreases) the weights of its outgoing (incoming) edges according to the order chosen at initialization. 
Finally, it transmits the amount of change it desires for each incoming/outgoing edge to the corresponding in/out neighbor; the node also receives the amount of change its neighbors desire for the corresponding edges, based on which it calculates the new edge weights; the above procedure is repeated at each iteration.
We establish that, if the necessary and sufficient \textit{integer circulation conditions} for the existence of a set of \textit{integer} weights that balance the given digraph are satisfied, the algorithm completes after a finite number of iterations. We next describe the iterative algorithm in more detail.

\textbf{Initialization.} At initialization, each node is aware of the feasible weight interval on each of its incoming and outgoing edges, i.e., node $v_j$ is aware of $l_{ji}, u_{ji}$ for each $v_i \in \mathcal{N}^-_j$ and $l_{lj}, u_{lj}$ for each $v_l \in \mathcal{N}^+_j$. 
Furthermore, the weights are initialized at the ceiling of the lower bound of the feasible interval, i.e., $f_{ji}[0] = \lceil l_{ji} \rceil$.
This initialization is always feasible but not critical and could be any integer value in the feasible weight interval $[l_{ji}, u_{ji}]$ (according to condition (\ref{ConstCondition1}) an integer exists in the interval $[l_{ji}, u_{ji}]$). 
Also each node $v_j$ chooses a unique order $P_{lj}^{(j)}$ and $P_{ji}^{(j)}$ for its outgoing links $(v_l,v_j)$ and incoming links $(v_j,v_i)$ respectively, such that $\{P_{lj}^{(j)} \; | \; v_l \in \mathcal{N}^+_j\} \cup \{P_{ji}^{(j)} \; | \; v_i \in \mathcal{N}^-_j\} = \{0,1,..., \mathcal{D}_j-1\}$.

\textbf{Iteration.} At each iteration $k \geq 0$, node $v_j$ is aware of the integer weights on its incoming edges $\{ f_{ji}[k] \; | \: v_i \in \mathcal{N}^-_j \}$ and outgoing edges $\{ f_{lj}[k] \; | \: v_l \in \mathcal{N}^+_j \}$, and calculates its weight imbalance $x_j[k]$ according to Definition~\ref{DEFnodebalance}.

\noindent
{\em A. Selecting desirable weights:} Each node $v_j$ with positive weight imbalance (i.e., with $x_j[k] > 0$) attempts to change the integer weights in both its incoming edges and its outgoing edges. 
No attempt to change weights is made if node $v_j$ has negative or zero weight imbalance. 
When $x_j[k] > 0$, node $v_j$ attempts to change the weights at its incoming edges $\{ f_{ji}[k+1] \; | \; v_i \in \mathcal{N}^-_j \}$, and outgoing edges $\{ f_{lj}[k+1] \; | \; v_l \in \mathcal{N}^+_j \}$ in a way that drives its weight imbalance $x_j[k+1]$ to zero (at least if no other changes are inflicted on the weights).  
More specifically, it goes through the links (incoming and outgoing) according to their ordering and changes their weights by a unit value, by $+1$ or $-1$, depending whether they are outgoing or incoming edges, respectively. 
If an outgoing (incoming) edge has reached its max (min) value then its weight does not change and node $v_j$ proceeds in changing the next one according to the predetermined order. 
According to the \textit{integer circulation conditions}, each node $v_j \in \mathcal{V}$ with positive weight imbalance at iteration $k$ ($x_j[k] > 0$) will always be able to calculate a weight assignment for its incoming and outgoing edge weights so that its weight imbalance becomes zero (at least if no other changes are inflicted on the weights of its incoming or outgoing links). This means that the selection of desirable weights in the above algorithm is always feasible.
The resulting additive {\em change} desired by node $v_j$ on $f_{ji}[k]$ of edge $(v_j,v_i) \in \mathcal{E}$ at iteration $k$ will be denoted by $c_{ji}^{(j)}[k]$.

\textit{Note:} Next time node $v_j$ needs to change the weights of its incoming/outgoing edges, it will continue from the edge it stopped the previous time and cycle through the edge weights in a round-robin fashion according to the ordering chosen at initialization.
If an edge reaches its maximum or minimum value, then its weight does not increase of decrease and node $v_j$ proceeds in changing the next one (according to the chosen ordering).
The total {\em change amount} desired by node $v_j$ on the weight $f_{ji}[k]$ of edge $(v_j,v_i) \in \mathcal{E}$ (or the weight $f_{lj}[k]$ of edge $(v_l,v_j) \in \mathcal{E}$) at iteration $k$ will be denoted by $c_{ji}^{(j)}[k]$ (or $c_{lj}^{(j)}[k]$).

\noindent
{\em B. Integer weight adjustment.} Since the integer weight $f_{ji}$ on each edge $(v_j, v_i) \in \mathcal{E}$ affects positively the weight imbalance $x_j[k]$ of node $v_j$ and negatively the weight imbalance $x_i[k]$ of node $v_i$, we need to take into account the possibility that both nodes are attempting to inflict changes on the integer weights simultaneously. 
Suppose that after one iteration of the algorithm (say at iteration $k$), node $v_i$ increases the weight $f_{ji}[k]$ of its outgoing edge $(v_j, v_i)$ by an integer value $c_{ji}^{(i)}[k]$ while node $v_j$ decreases the weight $f_{ji}[k]$ of its incoming edge $(v_j,v_i)$ by an integer value $c_{ji}^{(j)}[k]$. The new weight on edge $(v_j,v_i) \in \mathcal{E}$ is taken to be $f_{ji}[k+1] = f_{ji}[k] + c_{ji}^{(i)}[k] + c_{ji}^{(j)}[k]$. 
(Since we have $l_{ji} \leq f_{ji}[k] + c_{ji}^{(i)}[k] \leq u_{ji}$ and $l_{ji} \leq f_{ji}[k] + c_{ji}^{(j)}[k] \leq u_{ji}$, where $c_{ji}^{(i)} \geq 0$ and $c_{ji}^{(j)} \leq 0$, we have that $l_{ji} \leq f_{ji}[k+1] = f_{ji}[k] + c_{ji}^{(i)}[k] + c_{ji}^{(j)}[k] \leq u_{ji}$.]

\begin{remark}
In the enhanced version of Algorithm~\ref{constralg1}, each node $v_j$ with negative weight imbalance $x_j[k] < -1$ attempts to add $1$ to the weights of its incoming edges $\{ f_{ji}[k+1] \; | \; v_i \in \mathcal{N}^-_j \}$, and subtract $1$ from the weights of its outgoing edges $\{ f_{lj}[k+1] \; | \; v_l \in \mathcal{N}^+_j \}$ one at a time, following the predetermined order in a round-robin fashion, until its weight imbalance $x_j[k+1]$ becomes equal to $-1$ (at least if no other changes are inflicted on the weights).
\end{remark}

\begin{remark}
The above weight adjustment signifies that after iteration $k$ of the proposed distributed algorithm, once nodes $v_i$ and $v_j$ determine whether to increase (decrease) by $c_{ji}^{(i)}[k]$ ($c_{ji}^{(j)}[k]$) the weight of edge $(v_j,v_i)$ (where $c_{ji}^{(i)}[k], -c_{ji}^{(j)}[k] \in \mathbb{N}_0$), the new weight of edge $(v_j,v_i)$ will be $f_{ji}[k+1] = f_{ji}[k] + c_{ji}^{(i)}[k] + c_{ji}^{(j)}[k]$. According to the weight adjustment we have that $1 \leq l_{ji} \leq f_{ji}[k] + c_{ji}^{(i)}[k] \leq u_{ji}$ and $1 \leq l_{ji} \leq f_{ji}[k] + c_{ji}^{(j)}[k] \leq u_{ji}$, where $c_{ji}^{(i)}[k] \geq 0$, and $c_{ji}^{(j)}[k] \leq 0$ respectively. As a result we have that $1 \leq l_{ji} \leq f_{ji}[k] + c_{ji}^{(i)}[k] + c_{ji}^{(j)}[k] \leq u_{ji}$ $\Rightarrow$  $1 \leq l_{ji} \leq f_{ji}[k+1] \leq u_{ji}$ and $f_{ji}[k+1] \in \mathbb{N}_0$.  
\end{remark}

\subsection{Formal Description of Distributed Algorithm}
\label{formalupperloweralgorithm}

A formal description of the proposed distributed algorithm is presented in Algorithm~\ref{constralg1}.

\begin{varalgorithm}{5}
\caption{Distributed Balancing Under Link Capacity Constraints}
\textbf{Input} \\ 1) A strongly connected digraph $\mathcal{G}_d = (\mathcal{V}, \mathcal{E})$ with $n=|\mathcal{V}|$ nodes, $m=|\mathcal{E}|$ edges.\\ 2) $l_{ji},u_{ji}$ for every $(v_j, v_i) \in \mathcal{E}$. \\
\textbf{Initialization} \\ Set $k=0$; each node $v_j \in \mathcal{V}$ does:
\\1) It sets the weights on its incoming/outgoing edge weights as 
$$
f_{ji}[0] = \lceil l_{ji} \rceil, \ \forall  v_i \in \mathcal{N}_j^-,
$$
$$
f_{lj}[0] = \lceil l_{lj} \rceil, \ \forall  v_l \in \mathcal{N}_j^+.
$$
2) It assigns a unique order to its outgoing (or incoming) edges $(v_l,v_j)$ (or $(v_j,v_i)$) as $P^{(j)}_{lj}$ (or $P^{(j)}_{ji}$), for $v_l \in \mathcal{N}_j^+$ (or $v_i \in \mathcal{N}_j^-$) (such that $\{P^{(j)}_{lj} \; | \; v_l \in \mathcal{N}^+_j\} \cup \{P^{(j)}_{ji} \; | \; v_i \in \mathcal{N}^-_j\} = \{0,1,..., \mathcal{D}_j-1\}$).
\\
\textbf{Iteration} \\ For $k=0,1,2,\dots$, each node $v_j \in \mathcal{V}$ does the following:
\\ 1) It computes its \textit{weight imbalance} as
$$
x_j[k] = \sum_{v_i \in \mathcal{N}_j^-} f_{ji}[k] - \sum_{v_l \in \mathcal{N}_j^+} f_{lj}[k]. 
$$
2) If  $x_j[k] > 0$, it increases (decreases) by $1$ the integer weights $f_{lj}[k]$ ($f_{ji}[k]$) of its outgoing (incoming) edges $v_l \in \mathcal{N}_j^+$ ($v_i \in \mathcal{N}_j^-$) one at a time, following the predetermined order $P^{(j)}_{lj}$ ($P^{(j)}_{ji}$) until its weight imbalance becomes zero (if an edge has reached its maximum (minimum) value and it cannot be increased (decreased) further, its weight does not change and node $v_j$ proceeds in changing the next one according to the predetermined order.
\\ (\textit{Enhanced} version only) If  $x_j[k] < -1$, it decreases (increases) by $1$ the integer weights $f_{lj}[k]$ ($f_{ji}[k]$) of its outgoing (incoming) edges $v_l \in \mathcal{N}_j^+$ ($v_i \in \mathcal{N}_j^-$) one at a time, following the predetermined order $P^{(j)}_{lj}$ ($P^{(j)}_{ji}$) until its weight imbalance becomes $-1$ (if an edge has reached its minimum (maximum) value and it cannot be decreased (increased) further, its weight does not change and node $v_j$ proceeds in changing the next one according to the predetermined order).
\\ 3) It transmits the amount of change $c_{lj}^{(j)}[k]$ (or $c_{ji}^{(j)}[k]$) on each outgoing (or incoming) edge.
\\ 4) It receives the amount of change $c_{lj}^{(l)}[k]$ (or $c_{ji}^{(i)}[k]$) from each outgoing (or incoming) edge. Then, it sets its incoming and outgoing weights to be 
$$
f_{ji}[k+1] = \max(\min(f_{ji}[k] + c_{ji}^{(i)}[k] + c_{ji}^{(j)}[k], u_{ji}), l_{ji})
$$ 
for its incoming weights and 
$$
f_{lj}[k+1] = \max(\min(f_{lj}[k] + c_{lj}^{(l)}[k] + c_{lj}^{(j)}[k], u_{lj}) , l_{lj})
$$
for its outgoing weights.
\\ 5) It repeats (increases $k$ to $k+1$ and goes back to Step~1).
\label{constralg1}
\end{varalgorithm}

\begin{remark}
\label{REMfeasibility}
In Steps~3 and 4 of the algorithm, after node $v_j$ calculates the new weight assignment for both its incoming and outgoing edges, it transmits the amount of change on each outgoing and incoming edge. 
The weight assignment on edge $(v_j, v_i)$ is determined by the two incident nodes ($v_j$ and $v_i$) and the new weight becomes $f_{ji}[k+1] = f_{ji}[k] + c_{ji}^{(i)}[k] + c_{ji}^{(j)}[k]$ where $c_{ji}^{(i)}[k]$ (or $c_{ji}^{(j)}[k]$) is the change desired by node $v_i$ (or $v_j$). 
According to the Integer Circulation Conditions in Section~\ref{CircConditions}, each node $v_j \in \mathcal{V}$ with nonzero weight imbalance will always be able to calculate desired changes for its incoming and outgoing edge weights, so that its weight imbalance becomes zero (or equal to $-1$). 
This means that the selection of desirable weights in the above algorithm is always feasible.
\end{remark}

\begin{remark}
Note that, in the enhanced version of Algorithm~\ref{constralg1}, no attempt to change weights is made if node $v_j$ has weight imbalance equal to $-1$ or zero.
If nodes with negative weight imbalance were as aggressive as nodes with positive weight imbalance (and tried to make weight changes that would make their balance zero), then one could run into periodicity problems. As an example, consider the case of a ring digraph $\mathcal{G}_d$ with nodes $\mathcal{V} = \{ v_1, v_2, v_3, v_4 \}$ and edges $\mathcal{E} = \{ e_{21}, e_{32}, e_{43}, e_{14} \}$ where $e_{21} = (v_2,v_1)$, $e_{32} = (v_3,v_2)$, $e_{43} = (v_4,v_3)$ and $e_{14} = (v_1,v_4)$. 
Suppose the edge orders are as follows: for $v_1$, $P^{(1)}_{21} = \{0\}, P^{(1)}_{14} = \{1\}$; for  $v_2$, $P^{(2)}_{32} = \{0\}, P^{(2)}_{21} = \{1\}$; for $v_3$, $P^{(3)}_{43} = \{0\}, P^{(3)}_{32} = \{1\}$; and for $v_4$, $P^{(4)}_{14} = \{0\}, P^{(4)}_{43} = \{1\}$ (i.e., each node will first change the weight of its outgoing link and then the weight of its incoming link). If all nodes (with positive or negative weight imbalance) tried to make weight changes to balance themselves, the edge weights for time steps $k=0, 1, 2, 3, 4$, would be
\begin{eqnarray}
    k =0 & : & w_{21}[0] = 1,  w_{32}[0] = 1, w_{43}[0] = 2, w_{14}[0] = 2,  \nonumber \\
    k =1 & : & w_{21}[1] = 2,  w_{32}[1] = 1, w_{43}[1] = 1, w_{14}[1] = 2,  \nonumber \\
    k =2 & : & w_{21}[2] = 2,  w_{32}[2] = 2, w_{43}[2] = 1, w_{14}[2] = 1,  \nonumber \\ 
    k =3 & : & w_{21}[3] = 2,  w_{32}[3] = 1, w_{43}[3] = 1, w_{14}[3] = 2,  \nonumber \\
    k =4 & : & w_{21}[4] = 1,  w_{32}[4] = 1, w_{43}[4] = 2, w_{14}[4] = 2.  \nonumber
\end{eqnarray}
We see that at iteration $k=4$, weights (thus, node balances) are exactly the same as at $k=0$; moreover, the ordering in which changes will be made at the edges of each node is also exactly the same as in iteration $k=0$. We conclude that we have encountered periodic behavior and the digraph $\mathcal{G}_d$ will never become balanced (i.e., $w_{lj}[k] = w_{lj}[k+4]$, $\forall (v_l,v_j) \in \mathcal{E}$ and $x_j[k] = x_j[k+4]$, $\forall v_j \in \mathcal{V}$).
\end{remark}

\begin{remark}
The operation of Algorithm~\ref{constralg1} and its enhanced version, can be extended also for the case where each node $v_j$ does not necessarily want a weight imbalance equal to zero, but rather demands a weight imbalance equal to $x_j^{(s)} \neq 0$. 
When $\sum_{j=1}^{n} x_j^{(s)} = 0$, and for each $\mathcal{S}$ (where $\mathcal{S} \subset \mathcal{V}$) we have
$\sum_{(v_j, v_i) \in \mathcal{E}^-_\mathcal{S}} \lceil l_{ji} \rceil \leq \sum_{(v_l, v_j) \in \mathcal{E}^+_\mathcal{S}} \lfloor u_{lj} \rfloor + \sum_{v_j \in \mathcal{S}} x_j^{(s)}$ 
(where $\mathcal{E}^-_\mathcal{S}$ and $\mathcal{E}^+_\mathcal{S}$ were defined in (\ref{REALEQinS}) and (\ref{REALEQoutS})), then every node can obtain a nonzero weight imbalance equal to $x_j^{(s)}$ as long as the nodes with positive weight imbalance operate according to Step~2 until their weight imbalance becomes equal to $x_j^{(s)}$, and the nodes with negative weight imbalance operate according to Step~3 until their weight imbalance becomes equal to $x_j^{(s)} - 1$.
\end{remark}

We now illustrate the distributed algorithm via an example and then explain why it results in a weight balanced digraph after a finite number of iterations. We also obtain bounds on its execution time.

\subsection{Illustrative Example of Distributed Algorithm}
\label{exampleupperloweralgorithm}

Consider the digraph $\mathcal{G}_d=(\mathcal{V},\mathcal{E})$ in Figure~\ref{example-initial-upperlower}, where $\mathcal{V}=(v_1,v_2,\dots,v_5)$, $\mathcal{E}=(m_{21},\dots,m_{45})$, $\mathcal{E} \subseteq \mathcal{V} \times \mathcal{V} - \{ (v_j,v_j)$ $|$ $v_j \in \mathcal{V} \}$, $L=[l_{ji}]$ and $U=[u_{ji}]$ for every $(v_j, v_i) \in \mathcal{E}$. The weight on each edge is initialized to $f_{ji}[0] = \lceil l_{ji} \rceil$ for $(v_l,v_j) \in \mathcal{E}$ and each node assigns a unique order to each of its outgoing and incoming edges. For the purposes of this example, let us assume that this order is as follows:
\begin{itemize}
\item $v_1 : P^{(1)}_{21}=1, P^{(1)}_{31}=2, P^{(1)}_{41}=3$,
\item $v_2 : P^{(2)}_{32}=1, P^{(2)}_{52}=2, P^{(2)}_{21}=3$,
\item $v_3 : P^{(3)}_{43}=1, P^{(3)}_{53}=2, P^{(3)}_{32}=3, P^{(3)}_{31}=4$,
\item $v_4 : P^{(4)}_{14}=1, P^{(4)}_{43}=2, P^{(4)}_{45}=3$,
\item $v_5 : P^{(5)}_{45}=1, P^{(5)}_{52}=2, P^{(5)}_{53}=3$.
\end{itemize}
\noindent (For example, node $v_2$ will first increase $f_{32}$, then $f_{52}$ and then it will decrease $f_{21}$.) As a first step, each node computes its weight imbalance $x_j[0] = \sum_{v_i \in \mathcal{N}_j^-} f_{ji}[0] - \sum_{v_l \in \mathcal{N}_j^+} f_{lj}[0]$ (these values are shown in Figure~\ref{example-initial-upperlower}).

\begin{figure} [ht]
\centering
\includegraphics[width=0.50\textwidth]{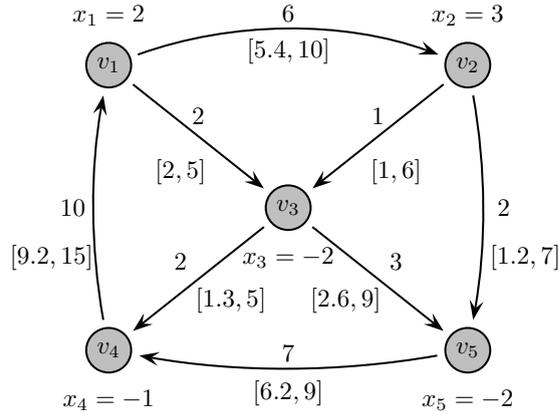}
\caption{Weighted digraph with initial weights and initial imbalances for each node.}
\label{example-initial-upperlower}
\end{figure}

Once each node computes its imbalance, the distributed algorithm requires each node with positive weight imbalance to increase (decrease) by $1$ the integer weights $f_{lj}[k]$ ($f_{ji}[k]$) of its outgoing (incoming) edges $v_l \in \mathcal{N}_j^+$ ($v_i \in \mathcal{N}_j^-$) one at a time, following the predetermined order $P_{lj}$ until its weight imbalance becomes zero. 
In this case, the nodes that have positive weight imbalance are nodes $v_1$ and $v_2$ (equal to $x_1[0]=2$ and $x_2[0]=3$) respectively), and they increase their outgoing edges as shown in Figure~\ref{example-initial-upperlower2}.

\begin{figure} [ht]
\centering
\includegraphics[width=0.50\textwidth]{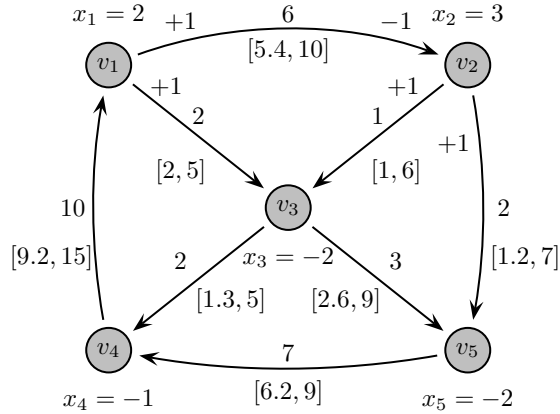}
\caption{Distribution of imbalance from positively imbalanced nodes.}
\label{example-initial-upperlower2}
\end{figure}

In the next step of the distributed algorithm each node transmits the amount of change $c_{lj}^{(j)}[k]$ (or $c_{ji}^{(j)}[k]$) on each outgoing (or incoming) edge and receives the amount of change $c_{lj}^{(l)}[k]$ (or $c_{ji}^{(i)}[k]$) from each outgoing (or incoming) edge. Then, it sets its incoming (outgoing) weights to be $f_{ji}[k+1] = f_{ji}[k] + c_{ji}^{(i)}[k] + c_{ji}^{(j)}[k]$ ($f_{lj}[k+1] = f_{lj}[k] + c_{lj}^{(l)}[k] + c_{lj}^{(j)}[k]$). This can be seen in Figure~\ref{example-initial-upperlower3}.

\begin{figure} [ht]
\centering
\includegraphics[width=0.50\textwidth]{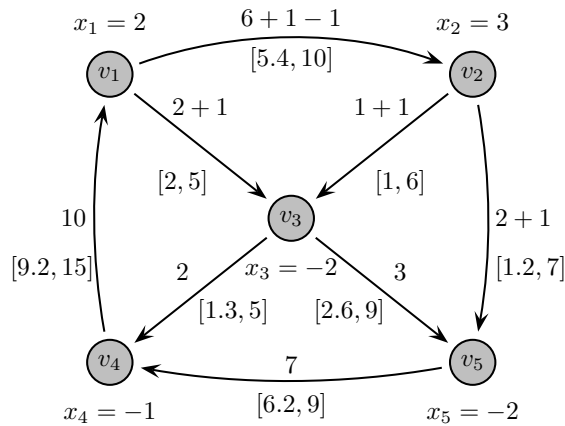}
\caption{Calculation of edge weights from positively imbalanced nodes.}
\label{example-initial-upperlower3}
\end{figure}

Each node, after the integer weight update on its outgoing and incoming edges, recalculates its imbalances $x_j[1]$, $\forall v_j \in \mathcal{V}$, and the process is repeated.
After a finite number of iterations, shown in the next section, we reach the weighted digraph with integer weights shown in Figure.~\ref{example-initial-upperlower4}.

\begin{figure} [ht]
\centering
\includegraphics[width=0.50\textwidth]{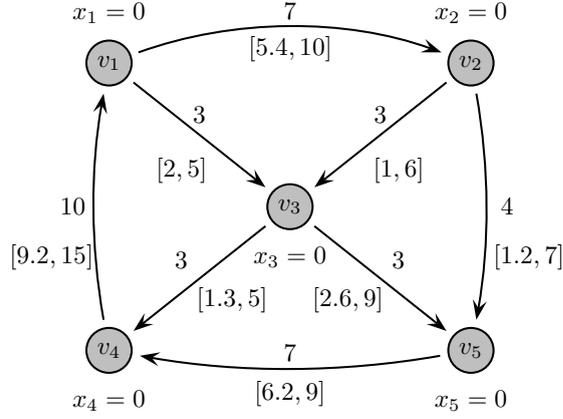}
\caption{Final weight balanced digraph.}
\label{example-initial-upperlower4}
\end{figure}

\subsection{Proof of Algorithm Completion}\label{constr_convergence}

We show that, as long as the Integer Circulation Conditions in Section~\ref{CircConditions} hold, then the total imbalance $\varepsilon[k]$ in Definition~\ref{defn:totalim} goes to zero after a finite number of iterations of Algorithm~\ref{constralg1}. 
This implies that the weight imbalance $x_j[k]$ for each node $v_j \in \mathcal{V}$ goes to zero after a finite number of iterations, and thus (from the updates in Algorithm~\ref{constralg1}) the integer weight $f_{ji}[k]$ on each edge $(v_j, v_i) \in \mathcal{E}$ stabilizes to an integer value $f_{ji}^*$ (where $f_{ji}^* \in \mathbb{N}_0$) within the given lower and upper limits, i.e., $1 \leq l_{ji} \leq f_{ji}^* \leq u_{ji}$ for all $(v_j, v_i) \in \mathcal{E}$.

\noindent
We begin by establishing some preliminary results.

\begin{prop}
\label{constrPROP1}
Consider the problem formulation described in Section~\ref{ProbStatementUpperLower}. At each iteration $k$ during the execution of Algorithm~\ref{constralg1}, it holds that
\begin{enumerate}
\item For any subset of nodes $\mathcal{S} \subset \mathcal{V}$, let $\mathcal{E}^-_\mathcal{S}$ and $\mathcal{E}^+_\mathcal{S}$ be defined by (\ref{REALEQinS}) and (\ref{REALEQoutS}) respectively. Then,
$$
\sum_{v_j \in \mathcal{S}} x_j[k] = \sum_{(v_j, v_i) \in \mathcal{E}^-_\mathcal{S}} f_{ji}[k] - \sum_{(v_l, v_j) \in \mathcal{E}^+_\mathcal{S}} f_{lj}[k] \; ;
$$
\item $\sum_{j=1}^n x_j[k] = 0$;
\item $\varepsilon[k] = 2 \sum_{v_j \in \mathcal{V}^-[k]} \vert x_j[k] \vert$ where $\mathcal{V}^-[k] = \{ v_j \in \mathcal{V} \; | \; x_j[k] < 0 \}$.
\end{enumerate}
\end{prop}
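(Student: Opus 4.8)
The plan is to reduce all three claims to a single bookkeeping fact: in the collection $\{x_j[k]\}_{j=1}^n$, each edge weight $f_{ji}[k]$ appears exactly twice, once with a $+$ sign (as an incoming weight of $v_j$) and once with a $-$ sign (as an outgoing weight of $v_i$). I would prove part~1 first by a direct expansion and cancellation argument, then obtain part~2 as an immediate consequence of the same cancellation applied to the whole vertex set, and finally derive part~3 from part~2 by splitting the imbalances according to their sign.

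For part~1, I would start from the definition $x_j[k] = \sum_{v_i \in \mathcal{N}_j^-} f_{ji}[k] - \sum_{v_l \in \mathcal{N}_j^+} f_{lj}[k]$ and sum over $v_j \in \mathcal{S}$. The edges contributing to this sum are exactly the edges of $\mathcal{E}$ that touch $\mathcal{S}$, and I would partition them into three classes: (a) internal edges with both endpoints in $\mathcal{S}$; (b) edges entering $\mathcal{S}$ from outside, which are precisely the members of $\mathcal{E}^-_\mathcal{S}$; and (c) edges leaving $\mathcal{S}$, which are precisely the members of $\mathcal{E}^+_\mathcal{S}$. An internal edge with endpoints $v_a, v_b \in \mathcal{S}$ appears once as an incoming edge of one node (contributing $+f$) and once as an outgoing edge of the other (contributing $-f$); since both endpoints lie in $\mathcal{S}$, both contributions are present and cancel. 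Each edge of $\mathcal{E}^-_\mathcal{S}$ contributes only its $+f_{ji}[k]$ term (its tail lies outside $\mathcal{S}$, so it is not an outgoing edge of any node in $\mathcal{S}$), and each edge of $\mathcal{E}^+_\mathcal{S}$ contributes only its $-f_{lj}[k]$ term, which yields the claimed identity.

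Part~2 follows by the same cancellation with $\mathcal{S}$ replaced by all of $\mathcal{V}$: now every edge is internal, so every weight appears once with each sign and the total telescopes to $0$ (equivalently, one double-counts each $f_{ji}[k]$ across the whole sum). For part~3, I would write $\varepsilon[k] = \sum_{v_j \,:\, x_j[k] > 0} x_j[k] + \sum_{v_j \in \mathcal{V}^-[k]} |x_j[k]|$, noting that the zero-imbalance nodes contribute nothing. Part~2 gives $\sum_{v_j \,:\, x_j[k] > 0} x_j[k] = -\sum_{v_j \in \mathcal{V}^-[k]} x_j[k] = \sum_{v_j \in \mathcal{V}^-[k]} |x_j[k]|$, so the two pieces are equal and $\varepsilon[k] = 2\sum_{v_j \in \mathcal{V}^-[k]} |x_j[k]|$.

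The argument is essentially elementary accounting, so there is no deep obstacle; the only point requiring care is the edge-orientation convention. Because the paper writes $(v_j, v_i) \in \mathcal{E}$ for an edge carrying information from $v_i$ to $v_j$, the weight $f_{ji}[k]$ is an \emph{incoming} weight of $v_j$ and an \emph{outgoing} weight of $v_i$; I would make sure the partition into $\mathcal{E}^-_\mathcal{S}$ and $\mathcal{E}^+_\mathcal{S}$ respects this, so that the crossing edges are assigned the correct sign and only the genuinely internal edges cancel.
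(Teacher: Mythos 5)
Your proposal is correct and follows essentially the same route as the paper: part 1 by expanding the node imbalances over $\mathcal{S}$ and cancelling the internal-edge contributions, part 2 by the same double-counting cancellation, and part 3 by splitting $\varepsilon[k]$ according to the sign of $x_j[k]$ and invoking part 2. The only cosmetic difference is that the paper derives part 2 by applying part 1 to $\mathcal{S}$ and $\mathcal{V}-\mathcal{S}$ and using $\mathcal{E}^+_\mathcal{S} = \mathcal{E}^-_{\mathcal{V}-\mathcal{S}}$, whereas you cancel directly over all of $\mathcal{V}$; both are the same bookkeeping fact, and your handling of the edge-orientation convention is accurate.
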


\begin{proof}
To prove the first statement, let 
$$
\mathcal{E}_\mathcal{S} = \{ (v_j, v_i) \in \mathcal{E} \; | \; v_j \in \mathcal{S}, \; v_i \in \mathcal{S} \}
$$
be the set of edges that are internal to the set $\mathcal{S}$. From the definition of the weight imbalance for node $v_j$, we have (after re-arranging the summations)
\begin{eqnarray*}
\sum_{v_j \in \mathcal{S}} x_j[k] & = & \sum_{v_j \in \mathcal{S}} \left ( \sum_{v_i \in \mathcal{N}_j^-} f_{ji}[k] - \sum_{v_l \in \mathcal{N}_j^+} f_{lj}[k] \right ) \\
 & = & \sum_{(v_j, v_i) \in \mathcal{E}^-_\mathcal{S}} f_{ji}[k] - \sum_{(v_l, v_j) \in \mathcal{E}^+_\mathcal{S}} f_{lj}[k] + \\
 &    & + \sum_{(v_j, v_i) \in \mathcal{E}_\mathcal{S}} f_{ji}[k] - \sum_{(v_l, v_j) \in \mathcal{E}_\mathcal{S}} f_{lj}[k] \\ 
 & = & \sum_{(v_j, v_i) \in \mathcal{E}^-_\mathcal{S}} f_{ji}[k] - \sum_{(v_l, v_j) \in \mathcal{E}^+_\mathcal{S}} f_{lj}[k] \; .
\end{eqnarray*}

For the second statement, we can take any $\mathcal{S} \subset \mathcal{V}$ and argue that
\begin{eqnarray*}
\sum_{v_j \in \mathcal{V}} x_j[k] & = & \sum_{v_j \in \mathcal{S}} x_j[k] + \sum_{v_{j} \in \mathcal{V}-\mathcal{S}} x_{j}[k] \\
 & = & \sum_{(v_j, v_i) \in \mathcal{E}^-_\mathcal{S}} f_{ji}[k] - \sum_{(v_l, v_j) \in \mathcal{E}^+_\mathcal{S}} f_{lj}[k] + \\
 &    & + \sum_{(v_{j}, v_i) \in \mathcal{E}^-_{\mathcal{V}-\mathcal{S}}} f_{ji}[k] - \sum_{(v_l, v_{j}) \in \mathcal{E}^+_{\mathcal{V}-\mathcal{S}}} f_{lj}[k] \\
 & = & 0 \; ,
\end{eqnarray*}
where the last line follows from the fact that $\mathcal{E}^+_\mathcal{S} = \mathcal{E}^-_{\mathcal{V}-\mathcal{S}}$ and $\mathcal{E}^-_\mathcal{S} = \mathcal{E}^+_{\mathcal{V}-\mathcal{S}}$.

For the third statement, notice that, from the definition of $\varepsilon[k]$ in Definition~\ref{defn:totalim}, we have
\begin{eqnarray*}
\varepsilon[k] & = & \sum_{v_j \in \mathcal{V}} | x_j[k] | \\
 & = & \sum_{v_j \in \mathcal{V}^-[k]} | x_j[k] | + \sum_{v_j \in \mathcal{V}-\mathcal{V}^-[k]} |x_j[k]| \\
 & = & \sum_{v_j \in \mathcal{V}^-[k]} | x_j[k] | + \sum_{v_j \in \mathcal{V}-\mathcal{V}^-[k]}  x_j[k]  \\
 & = & 2 \sum_{v_j \in \mathcal{V}^-[k]} | x_j[k] | \; ,
\end{eqnarray*}
where the third line follows from the definition of $\mathcal{V}^-[k]$ (all nodes have nonnegative balance) and the last line follows from the second statement of this proposition.
\end{proof}

\begin{prop}
\label{constrPROP2}
Consider the problem formulation described in Section~\ref{ProbStatementUpperLower}. Let $\mathcal{V}^-[k] \subset \mathcal{V}$ be the set of nodes with negative weight imbalance at iteration $k$, i.e., $\mathcal{V}^-[k] = \{ v_j \in \mathcal{V} \; | \; x_j[k] < 0 \}$.  During the execution of Algorithm~\ref{constralg1}, we have that
$$
\mathcal{V}^-[k+1] \subseteq \mathcal{V}^-[k].
$$
\end{prop}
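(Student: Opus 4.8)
The plan is to prove the equivalent contrapositive statement: every node that is \emph{non-negative} at iteration $k$ stays non-negative at iteration $k+1$. That is, I will show that $x_j[k] \geq 0$ implies $x_j[k+1] \geq 0$, which says exactly that $\mathcal{V}\setminus\mathcal{V}^-[k] \subseteq \mathcal{V}\setminus\mathcal{V}^-[k+1]$, equivalently $\mathcal{V}^-[k+1] \subseteq \mathcal{V}^-[k]$. I would fix a node $v_j$ with $x_j[k]\geq 0$ and track how its imbalance changes, separating the weight changes on edges incident to $v_j$ into two groups: the ones $v_j$ initiates itself (the $c^{(j)}$ terms) and the ones its neighbours initiate on the shared edges (the $c^{(i)}$ and $c^{(l)}$ terms).

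First I would substitute the update rule of Step~4 into the definition of $x_j[k+1]$ to obtain the decomposition
\[
x_j[k+1] \;=\; x_j[k] \;+\; O_j \;+\; N_j,
\]
where $O_j = \sum_{v_i \in \mathcal{N}_j^-} c_{ji}^{(j)}[k] - \sum_{v_l \in \mathcal{N}_j^+} c_{lj}^{(j)}[k]$ collects $v_j$'s own contribution and $N_j = \sum_{v_i \in \mathcal{N}_j^-} c_{ji}^{(i)}[k] - \sum_{v_l \in \mathcal{N}_j^+} c_{lj}^{(l)}[k]$ collects the neighbours' contribution. The crux is then a careful sign analysis dictated by Step~2. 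A node acts only when its imbalance is strictly positive, and in that case it increases its outgoing-edge weights ($c_{lj}^{(j)}[k]\geq 0$) and decreases its incoming-edge weights ($c_{ji}^{(j)}[k]\leq 0$); hence $O_j \leq 0$. Moreover $v_j$ stops as soon as its imbalance reaches zero (or earlier, if some edge hits a capacity bound), so it never overshoots, which gives $x_j[k]+O_j \geq 0$. For the neighbours' term, any in-neighbour $v_i$ that acts is positive and increases the shared weight $f_{ji}$ (its own outgoing edge), so $c_{ji}^{(i)}[k]\geq 0$; any out-neighbour $v_l$ that acts is positive and decreases the shared weight $f_{lj}$ (its own incoming edge), so $c_{lj}^{(l)}[k]\leq 0$. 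Both enter $N_j$ with the right sign, yielding $N_j \geq 0$. Combining the two bounds gives $x_j[k+1] = (x_j[k]+O_j) + N_j \geq 0$, as required. I would also note that the $\max/\min$ clamping in Step~4 is inactive here, since each node's desired change already keeps every incident weight within $[l_{ji},u_{ji}]$ (as observed in the remark following Algorithm~\ref{constralg1}), so no extra care is needed for the bounds.

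I expect the main obstacle to be the sign bookkeeping rather than any deep argument: a single weight $f_{ji}$ enters $x_j$ with a $+$ and $x_i$ with a $-$, and the two incident nodes may attempt to move it in opposite directions, so the orientation of each $c$-term must be pinned down precisely before the inequalities $O_j\leq 0$, $x_j[k]+O_j\geq 0$, and $N_j\geq 0$ can be asserted. The whole argument hinges on the fact that, in this version of the algorithm, only positively imbalanced nodes initiate changes, so every weight change seen by a non-negative node can only raise its imbalance; this is exactly what makes the set of negative nodes monotonically shrinking and is the reason the aggressive ``enhanced'' rule (in which negatively imbalanced nodes also act) would require a separate treatment.
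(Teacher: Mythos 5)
Your proposal is correct and follows essentially the same route as the paper's proof: decompose $x_j[k+1]$ into the node's own changes (which, by Step~2, bring a nonnegative node's imbalance down to zero but never below) and the neighbours' changes on shared edges (which, since only positively imbalanced nodes act and they always increase their outgoing and decrease their incoming weights, can only push $x_j$ upward). Your version is marginally more careful in asserting only $x_j[k]+O_j\geq 0$ rather than exact zeroing, and in explicitly noting that the clamping in Step~4 is inactive, but the underlying argument is the same.
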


\begin{proof}
We will argue that nodes with nonnegative weight imbalance at iteration $k$ can never reach negative weight imbalance at iteration $k+1$, thus establishing the proof of the proposition.

Consider a node $v_j$ with a nonnegative weight imbalance $x_j[k] \geq 0$. Node $v_j$ may attempt to make weight changes on its edges: $c_{ji}^{(j)}[k] \leq 0$ for all $v_i \in \mathcal{N}_j^-$ and $c_{lj}^{(j)}[k] \geq 0$ for all $v_l \in \mathcal{N}_j^+$. If no in-neighbor or out-neighbor of node $v_j$ attempts to inflict changes on the weights of these edges, then it is not hard to see that the weight imbalance of node $v_j$ at iteration $k+1$ will be
\begin{eqnarray*}
x_j[k+1] & = & \sum_{v_i \in \mathcal{N}_j^-} f_{ji}[k+1] - \sum_{v_l \in \mathcal{N}_j^+} f_{lj}[k+1] \\
 & \kern-2em = & \kern-2em \sum_{v_i \in \mathcal{N}_j^-} (f_{ji}[k] + c_{ji}^{(j)}[k]) - \sum_{v_l \in \mathcal{N}_j^+} (f_{lj}[k] + c_{lj}^{(j)}[k]) \\
 & \kern-2em = & \kern-1em x_j[k] - x_j[k] = 0
\end{eqnarray*}
(because, by design, the changes in the weights are chosen so that the balance becomes zero).

If one (or more) of the in-neighbors or out-neighbors of node $v_j$ have nonegative balance, then they will also attempt to make changes on the weights. In particular,
$$
f_{ji}[k+1] = f_{ji}[k] + c_{ji}^{(j)}[k] + c_{ji}^{(i)}[k] \; ,
$$
where $c_{ji}^{(i)}[k] \geq 0$, and 
$$
f_{lj}[k+1] = f_{lj}[k] + c_{lj}^{(j)}[k] + c_{lj}^{(l)}[k] \; ,
$$
where $c_{lj}^{(l)}[k] \leq 0$. Putting these together we have 
\begin{eqnarray*}
x_j[k+1] & = & \sum_{v_i \in \mathcal{N}_j^-} f_{ji}[k+1] - \sum_{v_l \in \mathcal{N}_j^+} f_{lj}[k+1] \\
 & = & \sum_{v_i \in \mathcal{N}_j^-} ( f_{ji}[k] + c_{ji}^{(j)}[k] + c_{ji}^{(i)}[k] ) - \\
 & & \; \; - \sum_{v_l \in \mathcal{N}_j^+} ( f_{lj}[k] + c_{lj}^{(j)}[k] + c_{lj}^{(l)}[k] ) \\
 & = & 0 + \sum_{v_i \in \mathcal{N}_j^-} c_{ji}^{(i)}[k] - \sum_{v_l \in \mathcal{N}_j^+} c_{lj}^{(l)}[k] \\
 & \geq & 0 \; .
\end{eqnarray*}

For the first case where all neighbors of node $v_j$ do not belong in $\mathcal{V}^-[k]$ then $x_i[k], x_l[k] \geq 0$, for every $v_i \in  \mathcal{N}_j^-$ and $v_l \in  \mathcal{N}_j^+$. This means that during iteration $k$ of Algorithm~\ref{constralg1}, the edge weights of $v_j$ will change by its in/out neighbors (i.e., the weights of the incoming edges $(v_j,v_i)$ will be increased by $c_{ji}^{(i)}[k]$ where $v_i \in \mathcal{N}_j^-$ while the weights of the outgoing edges will be decreased by $c_{lj}^{(l)}[k]$ where $v_l \in \mathcal{N}_j^+$). In that case the weight imbalance of node $v_j$ will increase ($x_j[k+1] > x_j[k]$). For the new weight imbalance of $v_j$ (i.e., $x_j[k+1]$) we have that it will either remain negative (i.e., $x_j[k+1] < 0$), or it will become nonegative $v_j \in \mathcal{V}^-[k+1]$.  
For the second case, if all the neighbors of node $v_j$ belong in $\mathcal{V}^-[k]$ they will not make any weight changes on the edges that connect them with node $v_j$ and so we have that $v_j \in \mathcal{V}^-[k+1]$. But if some of the neighbors $v_i$ and $v_l$ (where $v_i \in  \mathcal{N}_j^-$ and $v_l \in  \mathcal{N}_j^+$) do not belong in $\mathcal{V}^-[k]$ (i.e., $x_i[k], x_l[k] \geq 0$) then they will increase the incoming and decrease the outgoing edge weights by $c_{ji}^{(i)}[k]$ and $c_{lj}^{(l)}[k]$ respectively and so the weight imbalance of node $v_j$ will be improved (i.e., $x_j[k+1] > x_j[k]$). For the new weight imbalance of $v_j$ (i.e., $x_j[k+1]$) we have that it will either remain negative (i.e., $x_j[k+1] < 0$) or it will become nonegative and remain so for the rest of the iterations (i.e., $x_j[k+1] \geq 0$).
As a result we have that during the execution of Algorithm~\ref{constralg1}, $\mathcal{V}^-[k+1] \subseteq \mathcal{V}^-[k], \  \forall k \geq 0$. 
\end{proof}

\begin{prop}
\label{constrPROP3}
Consider the problem formulation described in Section~\ref{ProbStatementUpperLower}. During the execution of Algorithm~\ref{constralg1}, it holds that
$$
0 \leq \varepsilon[k+1] \leq \varepsilon[k] \; , \; \; \forall k \geq 0 \; ,
$$
where $\varepsilon[k] \geq 0$ is the total imbalance of the network at iteration~$k$ (as defined in Definition~\ref{defn:totalim}).


\end{prop}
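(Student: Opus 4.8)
The plan is to reduce the claim to a statement about the total \emph{negative} imbalance and then exploit the two structural facts already in hand, namely part~3 of Proposition~\ref{constrPROP1} and Proposition~\ref{constrPROP2}. First I would use part~3 of Proposition~\ref{constrPROP1} to write $\varepsilon[k] = 2\sum_{v_j \in \mathcal{V}^-[k]} |x_j[k]|$ and $\varepsilon[k+1] = 2\sum_{v_j \in \mathcal{V}^-[k+1]} |x_j[k+1]|$, so that proving $\varepsilon[k+1]\le\varepsilon[k]$ is equivalent to proving $\sum_{v_j\in\mathcal{V}^-[k+1]}|x_j[k+1]|\le\sum_{v_j\in\mathcal{V}^-[k]}|x_j[k]|$. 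Since $|x_j|=-x_j$ on the negative set, this is in turn equivalent to the inequality $\sum_{v_j\in\mathcal{V}^-[k+1]} x_j[k+1] \ge \sum_{v_j\in\mathcal{V}^-[k]} x_j[k]$ between the (negative) aggregate imbalances of the two sets.

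Second, Proposition~\ref{constrPROP2} guarantees $\mathcal{V}^-[k+1]\subseteq\mathcal{V}^-[k]$, so no new negative node is created and the sum at $k+1$ ranges over a subset of the nodes already negative at $k$. Setting $\mathcal{S}=\mathcal{V}^-[k+1]$, I would apply part~1 of Proposition~\ref{constrPROP1} at both iterations to express $\sum_{v_j\in\mathcal{S}} x_j[\cdot]$ purely through the weights on the edges crossing the boundary of $\mathcal{S}$, i.e. through $\mathcal{E}^-_\mathcal{S}$ and $\mathcal{E}^+_\mathcal{S}$. Subtracting the two expressions, the crucial point is that all edges internal to $\mathcal{S}$ drop out, leaving only the boundary changes $\Delta f_{ji}=f_{ji}[k+1]-f_{ji}[k]$ in $\sum_{\mathcal{E}^-_\mathcal{S}}\Delta f_{ji}-\sum_{\mathcal{E}^+_\mathcal{S}}\Delta f_{lj}$. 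It then remains to compare $\sum_{\mathcal{S}}|x_j[k]|$ with $\varepsilon[k]/2$, which is harmless since $\mathcal{S}\subseteq\mathcal{V}^-[k]$ and the dropped terms are nonnegative.

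Third, I would check the sign of each surviving boundary term. For the basic (non-enhanced) version this is immediate: a node $v_j\in\mathcal{S}$ takes no self-action, and in this version its negative neighbours are inert as well, so the only changes on boundary edges come from neighbours with positive imbalance — these increase the weight on edges in $\mathcal{E}^-_\mathcal{S}$ and decrease the weight on edges in $\mathcal{E}^+_\mathcal{S}$, making $\sum_{\mathcal{E}^-_\mathcal{S}}\Delta f_{ji}-\sum_{\mathcal{E}^+_\mathcal{S}}\Delta f_{lj}\ge 0$ and yielding the desired inequality.

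The hard part will be the enhanced version, in which negative nodes also act, so that per-node monotonicity of $|x_j|$ can fail: an edge shared by two negative nodes is pushed in opposite directions, with $c_{ji}^{(i)}\le 0$ from the tail and $c_{ji}^{(j)}\ge 0$ from the head. The resolution I would pursue is that such a conflicting net change $\delta$ on an internal negative–negative edge raises one endpoint's imbalance by $\delta$ and lowers the other's by exactly $\delta$, hence leaves $\sum_{\mathcal{S}}|x_j|$ invariant — which is exactly why reducing to the boundary edges via Proposition~\ref{constrPROP1} is the right move. What remains delicate is the treatment of boundary edges incident to a node that is negative at $k$ but nonnegative at $k+1$, i.e. in $\mathcal{V}^-[k]\setminus\mathcal{S}$. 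To keep this bookkeeping honest I would fall back on the direct triangle-inequality accounting used in the proof of Proposition~\ref{prop1}, writing $\varepsilon[k+1]-\varepsilon[k]$ as a sum of $|\Delta f_{ji}|$ over edges partitioned by the signs of the endpoints' imbalances, and re-arranging so that the incoming and outgoing contributions of each node cancel. I expect this edge-accounting, rather than any single inequality, to be the crux of the argument.
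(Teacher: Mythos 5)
Your proposal is correct and rests on the same two pillars as the paper's proof --- the identity $\varepsilon[k] = 2\sum_{v_j\in\mathcal{V}^-[k]}|x_j[k]|$ (part~3 of Proposition~\ref{constrPROP1}) and the containment $\mathcal{V}^-[k+1]\subseteq\mathcal{V}^-[k]$ (Proposition~\ref{constrPROP2}) --- but you organize the last step differently. The paper argues per node: a node $v_j$ with $x_j[k]<0$ takes no action itself, and any action by a neighbour can only increase one of its incoming weights or decrease one of its outgoing weights, so $x_j[k+1]\ge x_j[k]$; hence $|x_j[k+1]|\le|x_j[k]|$ whenever $v_j$ remains negative, and summing over $\mathcal{V}^-[k+1]\subseteq\mathcal{V}^-[k]$ finishes the proof. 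You instead aggregate first, setting $\mathcal{S}=\mathcal{V}^-[k+1]$ and invoking the cut identity (part~1 of Proposition~\ref{constrPROP1}) so that internal edges cancel and only the boundary increments $\sum_{\mathcal{E}^-_\mathcal{S}}\Delta f_{ji}-\sum_{\mathcal{E}^+_\mathcal{S}}\Delta f_{lj}$ survive; the sign check on those increments is then exactly the paper's observation about what positive neighbours do. The two arguments are equivalent for the basic algorithm, with the per-node version being a bit more elementary (no cut identity needed), while your aggregate version has the advantage that per-node monotonicity of $|x_j|$ is never required --- which is precisely what you need for the enhanced variant. On that variant your caution is warranted: a negative node does act there, and an edge joining two negative nodes is pushed in opposite directions; but be aware that the paper's proof (like its proof of Proposition~\ref{constrPROP2}) silently restricts to the basic version by asserting that negative nodes make no weight changes. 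So the cancellation of internal negative--negative edges and the bookkeeping for boundary edges touching nodes that exit $\mathcal{V}^-$ is extra work the paper does not carry out --- a genuine loose end you have correctly identified rather than a gap in your reduction.
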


\begin{proof}
From the third  statement of Proposition~\ref{constrPROP1}, we have $\varepsilon[k+1] = 2 \sum_{v_j \in \mathcal{V}^-[k+1]} \vert x_j[k+1] \vert$ and $\varepsilon[k] = 2 \sum_{v_j \in \mathcal{V}^-[k]} \vert x_j[k] \vert$, whereas from Proposition~\ref{constrPROP2}, we have $\mathcal{V}^-[k+1] \subseteq \mathcal{V}^-[k]$.

Consider a node $v_j \in \mathcal{V}^-[k]$ with weight imbalance $x_j[k]<0$. We analyze below the following two cases:
\begin{enumerate}
\item All neighbors of node $v_j$ have negative or zero weight imbalance;
\item At least one neighbor of node $v_j$ has positive weight imbalance.
\end{enumerate}
In both cases, node $v_j$ above will not make any weight changes on its edges. In the first case, the weight imbalance of node $v_j$ will not change (i.e., $x_j[k+1] = x_j[k] < 0$). In the second case, we have $x_i[k]\geq 0$ or $x_l[k] \geq 0$, for some $v_i \in  \mathcal{N}_j^-$ or $v_l \in  \mathcal{N}_j^+$. This means that during iteration $k$ of Algorithm~\ref{constralg1}, the edge weights of $v_j$ might change by its in/out neighbors (i.e., the weight of an incoming edge $(v_j,v_i)$ might be increased by $c_{ji}^{(i)}[k] \geq 0$ (for some) $v_i \in \mathcal{N}_j^-$ or the weight of an outgoing edge might be decreased by $c_{lj}^{(l)}[k] \geq 0$ (for some) $v_l \in \mathcal{N}_j^+$). Thus, in the second case, the weight imbalance of node $v_j$ will satisfy $x_j[k+1] \geq x_j[k]$. In fact, we will have that either $x_j[k+1] \geq 0$ (i.e., $v_j \notin \mathcal{V}^-[k+1]$) or $x_j[k+1]<0$ and $|x_j[k+1]| \leq |x_j[k]|$. As a result, in both cases, we have $\varepsilon[k+1] \leq \varepsilon[k]$ (using the third statement in Proposition~\ref{constrPROP1}).
\if 0
 (unless all weights between . 
For the second case if all the neighbors of node $v_j$ do not have positive weight imbalances they will not make any weight changes on the edges that connect them with node $v_j$ and so we have that $\varepsilon[k+1] = \varepsilon[k]$. But if some of the neighbors $v_i$ and $v_l$ (where $v_i \in  \mathcal{N}_j^-$ and $v_l \in  \mathcal{N}_j^+$) have positive weight imbalance (i.e., $x_i[k], x_l[k] > 0$) then they will increase the incoming and decrease the outgoing edge weights by $c_{ji}^{(i)}[k]$ and $c_{lj}^{(l)}[k]$ respectively and so the weight imbalance of node $v_j$ will be improved (i.e., $x_j[k+1] > x_j[k]$). As a result we have that for the case where some (or all) neighbors of node $v_j$ do not have positive weight imbalances then $\varepsilon[k+1] \leq \varepsilon[k]$ and overall during the execution of Algorithm~\ref{constralg1}, it holds that $0 \leq \varepsilon[k+1] \leq \varepsilon[k], \  \forall k \geq 0$.
\fi
\end{proof}

According to the above proposition we have that the total imbalance $\varepsilon[k]$ of the network at iteration~$k$ (as defined in Definition~\ref{defn:totalim}) will be reduced after a finite number of iterations.

\begin{prop}
\label{constrPROP4}
Consider the problem formulation described in Section~\ref{ProbStatementUpperLower} where the integer circulation conditions are satisfied. Algorithm~\ref{constralg1} balances the weights in the graph in a finite number of steps (i.e., $\exists \ k_0$ s.t. $\forall k \geq k_0$, $f_{ji}[k_0] = f_{ji}[k]$, $\forall (v_j,v_i) \in \mathcal{E}$ and $x_j[k] = x_j[k_0] =0$, $\forall v_j \in \mathcal{V}$). 
\end{prop}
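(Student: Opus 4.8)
The plan is to combine the monotonicity established in Propositions~\ref{constrPROP1}--\ref{constrPROP3} with the integer circulation (cut) condition~(\ref{EQnsconditions}). By the third statement of Proposition~\ref{constrPROP1} and by Proposition~\ref{constrPROP3}, the total imbalance $\varepsilon[k]=2\sum_{v_j\in\mathcal{V}^-[k]}|x_j[k]|$ is a non-increasing sequence of nonnegative \emph{even} integers. It therefore suffices to prove a single \emph{strict-decrease} statement: whenever $\varepsilon[k]>0$ there is a bounded horizon $K$ with $\varepsilon[k+K]<\varepsilon[k]$. Since each drop is by at least $2$, this forces $\varepsilon[k_0]=0$ for some finite $k_0$; and once $\varepsilon[k_0]=0$ every node is balanced, so no node ever acts again and $f_{ji}[k]=f_{ji}[k_0]$, $x_j[k]=0$ for all $k\ge k_0$, which is exactly the assertion of the proposition.

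\textbf{Transfer mechanism.} When $\varepsilon[k]>0$ there is, by Proposition~\ref{constrPROP1}, at least one strictly positive node and a nonempty set $\mathcal{V}^-[k]$ of negative nodes. Following Step~2 of Algorithm~\ref{constralg1}, a positive node raises the weights of its outgoing edges and lowers those of its incoming edges in round-robin order, thereby \emph{transferring} its positive imbalance to its neighbours: a balanced neighbour becomes positive, whereas a neighbour in $\mathcal{V}^-[k]$ becomes strictly less negative. By the third statement of Proposition~\ref{constrPROP1} the latter event makes $\varepsilon$ strictly decrease. This is the same ``transfer of imbalance'' phenomenon seen in Example~\ref{EXAMPLE2} and Proposition~\ref{propbound}, now run over the bidirectional communication graph. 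Thus a strict decrease is guaranteed as soon as the circulating positive charge reaches any node of $\mathcal{V}^-[k]$.

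\textbf{The cut obstruction forces a crossing.} It remains to show the positive charge must reach $\mathcal{V}^-[k]$ within bounded time. I would introduce the push-reachable set $\mathcal{T}$: starting from the currently positive nodes, add $v_l$ whenever an out-edge $(v_l,v_j)$ of some $v_j\in\mathcal{T}$ is unsaturated ($f_{lj}<u_{lj}$), and add $v_i$ whenever an in-edge $(v_j,v_i)$ of some $v_j\in\mathcal{T}$ is above its floor ($f_{ji}>l_{ji}$); these are precisely the edges along which positivity can be moved. Suppose, for contradiction, that $\mathcal{T}$ never meets $\mathcal{V}^-[k]$. Then $\mathcal{T}$ carries all the positive charge and none of the negative charge, so $\sum_{v_j\in\mathcal{T}}x_j=\varepsilon[k]/2>0$; moreover every edge of $\mathcal{E}^-_\mathcal{T}$ sits at $\lceil l_{ji}\rceil$ and every edge of $\mathcal{E}^+_\mathcal{T}$ at $\lfloor u_{lj}\rfloor$. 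Substituting into the first statement of Proposition~\ref{constrPROP1},
$$
\frac{\varepsilon[k]}{2}=\sum_{v_j\in\mathcal{T}}x_j=\sum_{(v_j,v_i)\in\mathcal{E}^-_\mathcal{T}}\lceil l_{ji}\rceil-\sum_{(v_l,v_j)\in\mathcal{E}^+_\mathcal{T}}\lfloor u_{lj}\rfloor>0,
$$
which violates the circulation condition~(\ref{EQnsconditions}) for the proper subset $\mathcal{S}=\mathcal{T}\subsetneq\mathcal{V}$. Hence a negative-node-free, boundary-saturated region cannot exist: at every configuration either some boundary edge is unsaturated, so the charge can be pushed one step closer to $\mathcal{V}^-[k]$, or $\mathcal{T}$ already contains a negative node.

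\textbf{Main obstacle.} The delicate step is turning ``a crossing always remains possible'' into ``a crossing happens within a bounded number of iterations.'' Because each node acts only on its own imbalance and cycles through its edges in a fixed round-robin order, one must show the charge cannot circulate forever inside a negative-node-free region: each edge can be pushed in a given direction only finitely many times before it saturates, so after finitely many transfers the boundary of the occupied region saturates---at which point the cut computation above applies and the only remaining possibility is a crossing into $\mathcal{V}^-[k]$, i.e.\ a strict decrease. Bounding this number of transfers is where I expect the real work to lie; it would parallel the cycle-counting estimate of Proposition~\ref{propbound} (which gave $m^2$ per unit of decrease) and yield an explicit polynomial $K$, hence an explicit bound $K\,\varepsilon[0]/2$ on the total running time. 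A secondary complication is the \emph{enhanced} version, in which nodes with $x_j[k]<-1$ also act and transfer \emph{negative} charge; there one tracks, symmetrically, a negative push-reachable set and applies the same cut condition~(\ref{EQnsconditions}) to its complement to exclude a permanently frozen imbalanced configuration.
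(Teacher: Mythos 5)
Your central device --- applying the cut identity of Proposition~\ref{constrPROP1} (first statement) to a set that contains every positive node and no negative node, whose boundary edges are pinned at $\lceil l_{ji}\rceil$ and $\lfloor u_{lj}\rfloor$, and reading off a violation of the integer circulation condition~(\ref{EQnsconditions}) --- is exactly the punchline of the paper's own proof, and your static ``push-reachable set'' lemma is a clean way to package it. The genuine gap is the step you yourself flag: converting ``a crossing is always possible'' into ``a crossing occurs within a bounded number of iterations.'' The mechanism you propose for closing it --- ``each edge can be pushed in a given direction only finitely many times before it saturates'' --- does not hold for Algorithm~\ref{constralg1}: the weight of an edge $(v_j,v_i)$ is increased by $v_i$ whenever $v_i$ has positive imbalance and decreased by $v_j$ whenever $v_j$ has positive imbalance, so a single edge can be pushed upward, relaxed back down, and pushed upward again arbitrarily often; nothing forces an individual weight to evolve monotonically toward its bound (the ring digraph in the remark following Algorithm~\ref{constralg1} illustrates how this kind of cycling can persist indefinitely). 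Consequently the quantitative strict-decrease-within-$K$-steps lemma, in the spirit of Proposition~\ref{propbound} for the unconstrained algorithm, is not established by your argument, and the proposition is left unproved.

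The paper sidesteps the need for any such bound by arguing asymptotically rather than quantitatively. Suppose $\varepsilon[k]>0$ for all $k$; since $\varepsilon[k]$ is a non-increasing integer-valued sequence (Proposition~\ref{constrPROP3}) it is eventually constant, and one forms the limit classes $\mathcal{V}^{+}$ (nodes positive infinitely often), $\mathcal{V}^{-}=\lim_{k}\mathcal{V}^{-}[k]$ (well defined by Proposition~\ref{constrPROP2}), and the eventually-balanced remainder. Every boundary edge of $\mathcal{V}^{+}$ is attempted infinitely often by its endpoint in $\mathcal{V}^{+}$ (because of the round-robin ordering); if such an attempt ever succeeded after $\varepsilon$ has stabilized, it would either strictly decrease $\varepsilon$ or push a node out of its limit class, both impossible, so every such edge must be permanently saturated --- and only then is the cut computation invoked, with $\mathcal{S}=\mathcal{V}^{+}$. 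If you want to retain your bounded-horizon framing you would need a genuinely new ingredient (e.g., a potential function compatible with the non-monotone edge dynamics); otherwise the finite-time step should be replaced by this limit-set contradiction argument.
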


\begin{proof}
By contradiction, suppose Algorithm~\ref{constralg1} runs for an infinite number of iterations and the total imbalance remains positive (i.e., $\varepsilon[k]>0$ for all $k$). 
Then, there is always (at each $k$) at least one node with positive weight imbalance. 
Let $\mathcal{V}^{+}[k] = \{ v_j \in \mathcal{V} \; | \; x_j[k] \geq 0 \}$ be the set of nodes that have positive weight imbalance at time step $k$. 
Let $\mathcal{V}^{+}$ denote the set of nodes that have positive weight imbalance {\em infinitely often}. [Since nodes with positive weight imbalance can become balanced (but not negatively balanced), this means that nodes in the set ${\mathcal V}^{+}$ could become balanced at some iteration, as long as they become positively imbalanced at later iterations.] 
Also, we can define the set of nodes ${\mathcal V}^{-}$ as $\mathcal{V}^{-} = \lim_{k \rightarrow \infty} \mathcal{V}^{-}[k]$, where $\mathcal{V}^{-}[k] = \{ v_j \in \mathcal{V} \; | \; x_j[k] < 0 \}$. 
This set is well defined (due to the fact that positively imbalanced nodes cannot become negatively balanced) and contains at least one node with negative weight imbalance (otherwise the graph is balanced). 
The above discussion implies that as $k$ goes to infinity, the set of nodes $\mathcal{V}$ can be partitioned into three sets: $\mathcal{V}^-$, $\mathcal{V}^+$, and $\mathcal{V} - (\mathcal{V}^+ \cup \mathcal{V}^-)$ (the latter is the set of nodes that remain balanced after a finite number of steps --and never obtain positive imbalance again). This is shown in Fig.~\ref{FlowConv}.

\vspace{-0.3cm}

\begin{figure}[h]
\begin{center}
\includegraphics[width=0.40\columnwidth]{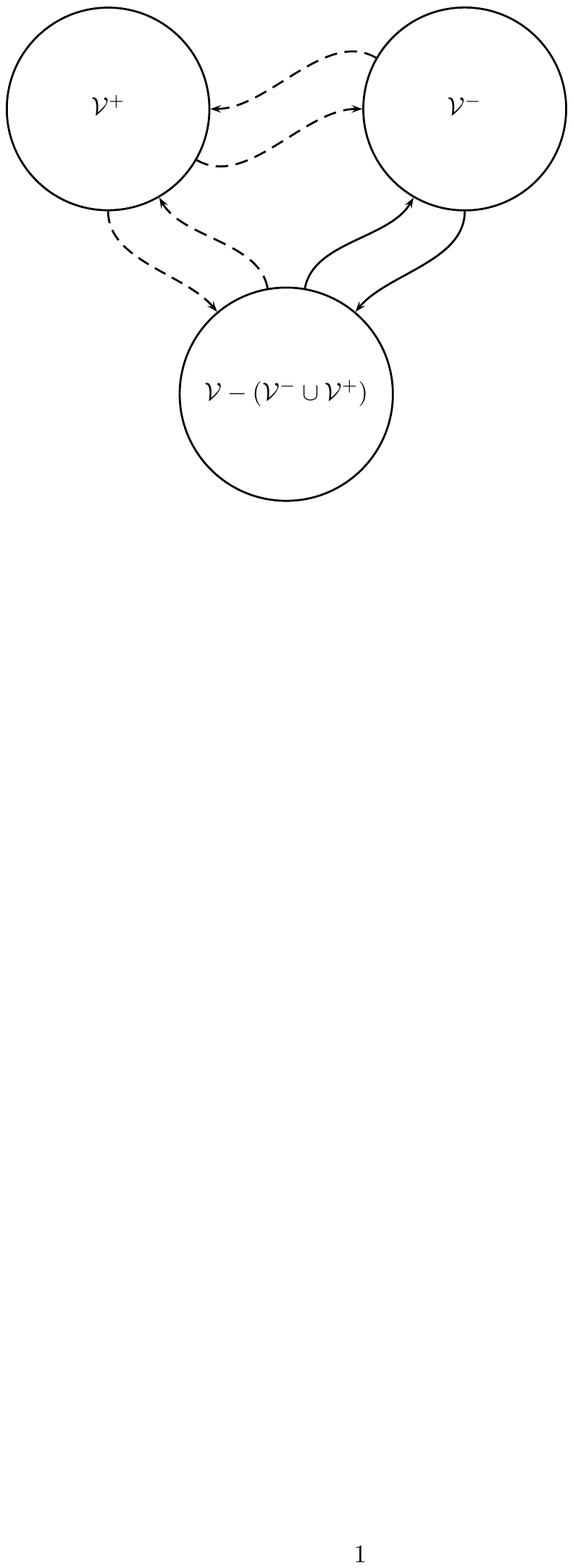}
\caption{Example of digraph where the Integer Circulation Conditions in Section~\ref{CircConditions} do not hold for the dashed edges.}
\label{FlowConv}
\end{center}
\end{figure}

\vspace{-0.4cm}

Since the graph is strongly connected, nodes in the set $\mathcal{V}^+$ need to be connected to/from nodes in the other two sets. This is shown via the dashed edges in Fig.~\ref{FlowConv} (note that the presence of all four types of edges is not necessary, but there has to be at least one edge from a node in $\mathcal{V}^+$ to a node in one of the two other sets, and at least one edge from a node in one of the two other sets to a node in $\mathcal{V}^+$).

Take $\mathcal{S} \subset \mathcal{V}$ to be $\mathcal{V}^{+}$ and note that $\mathcal{S}$ has at least one node (since the number of nodes is finite). Consider a node $v_j \in \mathcal{V}^{-}$ that has at least one in-neighbor $v_i$ in $\mathcal{S}$ and/or at least one out-neighbor $v_l$ in $\mathcal{S}$. 
Since $v_i$ (and/or $v_l$) has a positive weight imbalance infinitely often, it will eventually attempt to change the weight $f_{ji}[k]$ (or $f_{lj}[k]$) by $c_{ji}^{(i)}[k] \geq 0$ (or $c_{lj}^{(l)}[k] \leq 0$). 
If this change happens, then $x_j[k+1] > x_j[k]$ so that $v_j$ either arrives at a nonegative weight imbalance (i.e., $x_j[k+1] \geq 0$, which is a contradiction) or $0 > x_j[k+1] > x_j[k]$ (implying\footnote{From the third  statement of Proposition~\ref{constrPROP1}, we have $\varepsilon[k+1] = 2 \sum_{v_j \in \mathcal{V}^{-}[k+1]} \vert x_j[k+1] \vert$ and $\varepsilon[k] = 2 \sum_{v_j \in \mathcal{V}^{-}[k]} \vert x_j[k] \vert$.} 
that $\varepsilon[k+1] < \varepsilon[k]$, which is also a contradiction because, if the integer valued $\varepsilon[k]$ decreases infinitely often, it will become zero, thus $x_j[k]=0$ for all $v_j \in \mathcal{V}$).

Consider now a node $v_{j'}$ that has \textit{zero} weight imbalance and has at least one in-neighbor $v_{i'}$ in $\mathcal{S}$ and/or at least one out-neighbor $v_{l'}$ in $\mathcal{S}$.
Since $v_{i'}$ (or $v_{l'}$) has a positive weight imbalance infinitely often, it will eventually attempt to change the weight $f_{j'i'}[k]$ (or $f_{l'j'}[k]$) by $c_{j'i'}^{(i')}[k] \geq 0$ (or $c_{l'j'}^{(l')}[k] \leq 0$).
If this change happens, then node $v_{j'}$ would eventually reach positive weight imbalance at some iteration, and this would happen infinitely often which is a contradiction because $v_{j'}$ belongs in the set of nodes with zero weight imbalance (at least after a large enough number of steps).

Thus, for Algorithm~\ref{constralg1}, the only possibility left is that the weights of edges outgoing from nodes in $\mathcal{S}$ cannot increase and the weights of edges incoming to nodes in $\mathcal{S}$ cannot decrease. 
In other words, for $k \geq k_0$ for some large enough $k_0$ we have
\begin{eqnarray*}
f_{ji}[k] = \lceil l_{ji} \rceil & & \forall (v_j, v_i) \in \mathcal{E}^-_\mathcal{S} \; , \\
f_{lj}[k] = \lfloor u_{lj} \rfloor & & \forall (v_l, v_j) \in \mathcal{E}^+_\mathcal{S} \; ,
\end{eqnarray*}
where $\mathcal{E}^-_\mathcal{S}$ and $\mathcal{E}^+_\mathcal{S}$ are defined by \eqref{REALEQinS} and \eqref{REALEQoutS} respectively.

From the first statement of Proposition~\ref{constrPROP1}, for the set $\mathcal{S}$, we have that $\sum_{v_j \in \mathcal{S}} x_j[k] = \sum_{(v_j, v_i) \in \mathcal{E}^-_\mathcal{S}} f_{ji}[k] - \sum_{(v_l, v_j) \in \mathcal{E}^+_\mathcal{S}} f_{lj}[k]$. 
Thus, we have
$$
\sum_{(v_j, v_i) \in \mathcal{E}^-_\mathcal{S}} l_{ji} - \sum_{(v_l, v_j) \in \mathcal{E}^+_\mathcal{S}} u_{lj} = \sum_{v_j \in \mathcal{S}} x_j[k] > 0 \; , 
$$
which means that the Integer Circulation Conditions in Section~\ref{CircConditions} do not hold (i.e., we reach a contradiction).

As a result we have that if the Integer Circulation Conditions in Section~\ref{CircConditions} hold, the total imbalance $\varepsilon[k]$ decreases after a finite number of iterations, and Algorithm~\ref{constralg1} results in a weight-balanced digraph after a finite number of iterations. 
\end{proof}

As a result we have that if the \textit{integer circulation conditions} hold, the total imbalance $\varepsilon[k]$ decreases after a finite number of iterations, and the algorithm results in a weight balanced digraph after a finite number of iterations.

\subsection{Simulation Study}
\label{resultsupperloweralgorithm}

In this section, we present simulation results for the proposed distributed algorithm. 
Specifically, we first present numerical results for a random graph of size $n=20$ illustrating the behavior of Algorithm~\ref{constralg1} for two different cases: (i) the case when the \textit{integer circulation conditions} do {\em not} hold, thus, a set of integer weights that balance the digraph cannot be obtained; (ii) the case when the \textit{integer circulation conditions} hold and a set of integer weights that balance the graph can be obtained. 
The weights are initialized at the ceiling of the lower bound of the feasible interval, i.e., $f_{ji}[0] = \lceil l_{ji} \rceil$.

Figure~\ref{violation20} shows what happens in the case of a randomly created graph of $20$ nodes, in which the \textit{integer circulation conditions} do not hold. 
In the first case, we plot the \textit{absolute imbalance} $\varepsilon = \sum_{j=1}^{n} \vert x_j \vert$, $\forall v_j \in \mathcal{V}$ (as defined in Definition~\ref{defn:totalim}) and in the second case the \textit{nodes weight imbalances} $x_j[k]$ (as defined in Definition~\ref{DEFnodebalance}) as a function of the number of iterations $k$ for the distributed algorithm. 
The plots suggest that the proposed distributed algorithm is unable to obtain a set of weights that balance the corresponding digraph.

\begin{figure}[ht] 
\centering    
\includegraphics[width=0.60\textwidth]{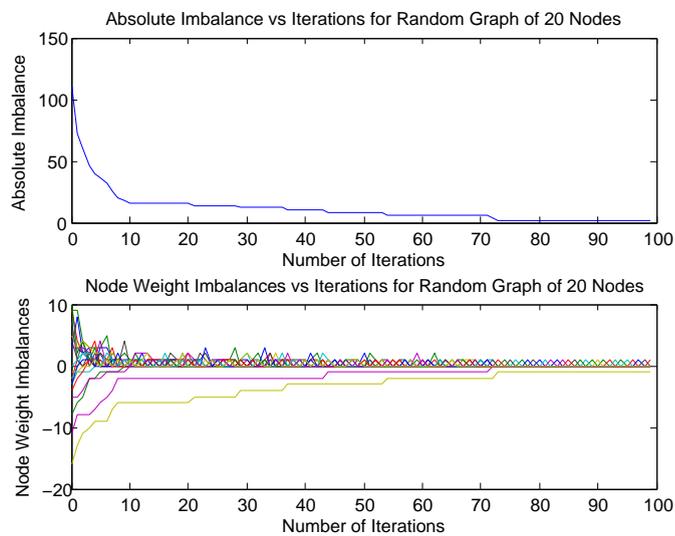}
\caption{Execution of Algorithm~\ref{constralg1} for the case when the integer circulation conditions do not hold for a random graph of $20$ nodes. \emph{Top figure:} Absolute imbalance $\varepsilon[k]$ plotted against number of iterations. \emph{Bottom figure:} Node weight imbalances $x_j[k]$ plotted against number of iterations.}
\label{violation20}
\end{figure}

Figure~\ref{working20} shows the same case as Figure~\ref{violation20}, with the difference that the \textit{integer circulation conditions} hold. 
Here, the plots suggest that the proposed distributed algorithm is able to obtain a set of integer weights that balance the corresponding digraph after a finite number of iterations.

\begin{figure}[ht] 
\centering    
\includegraphics[width=0.60\textwidth]{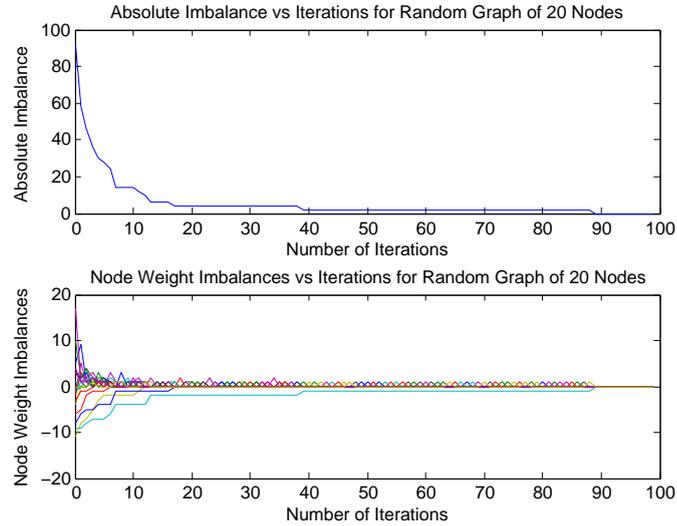}
\caption{Execution of Algorithm~\ref{constralg1} for the case when the integer circulation conditions hold for a random graph of $20$ nodes. \emph{Top figure:} Absolute imbalance $\varepsilon[k]$ plotted against number of iterations. \emph{Bottom figure:} Node weight imbalances $x_j[k]$ plotted against number of iterations.}
\label{working20}
\end{figure}

\begin{remark}
Note that both Figures.~\ref{violation20} and~\ref{working20} illustrate some of the properties established in the analysis in the previous section: for example: once nodes have positive weight imbalance, they retain a positive or zero weight imbalance; while nodes have negative weight imbalance, their imbalance increases monotonically; the absolute imbalance is monotonically non-increasing, and so forth.
\end{remark}

Figure~\ref{AVworking20} shows what happens in the case of $100$ averaged graphs of $20$ and $50$ nodes each when the \textit{integer circulation conditions} hold. We plot the average total (absolute) imbalance $\varepsilon[k] = \sum_{j=1}^n | x_j[k] |$ (as defined in Definition~\ref{defn:totalim}) as a function of the number of iterations $k$ for the distributed algorithm. The plot suggests that the proposed distributed algorithm is able to obtain a set of integer weights that balance the corresponding graph after a finite number of iterations.

\begin{figure}[ht] 
\centering    
\includegraphics[width=0.60\textwidth]{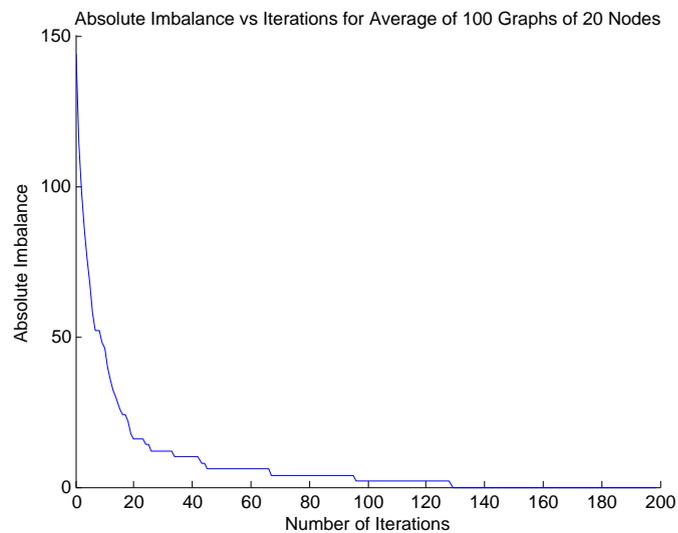}
\caption{Total (absolute) imbalance $\varepsilon[k]$ plotted against the number of iterations for the distributed algorithm (averaged over $100$ graphs of $20$ nodes each) in the case where the integer circulation conditions hold.}
\label{AVworking20}
\end{figure}


\begin{figure}[ht] 
\centering    
\includegraphics[width=0.60\textwidth]{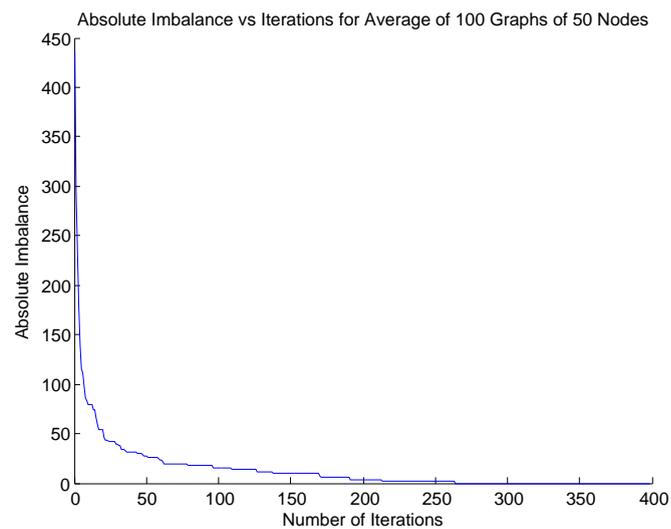}
\caption{Total (absolute) imbalance $\varepsilon[k]$ plotted against the number of iterations for the distributed algorithm (averaged over $100$ graphs of $50$ nodes each) in the case where the integer circulation conditions hold.}
\label{AVworking50}
\end{figure}


\section{Chapter Summary}
\label{summaryupperloweralgorithm}

In this chapter, we introduced and analyzed a novel distributed algorithm which achieves integer weight balancing in a multi-component system in the presence of lower and upper constraints on the edge weights.  
We analyzed its functionality, established its correctness and showed that it achieves integer weight balancing after a finite number of steps. 
We also demonstrated the operation, performance, and advantages of the proposed algorithm via various simulations.

\clearpage

\lhead{\emph{Weight Balancing under Link Constraints over Unreliable Communication}}

\chapter{Weight Balancing under \\ Link Capacity Constraints \\ over Unreliable Communication}
\label{constraintsbalancing_delays}

In this chapter, we present a novel distributed algorithm which deals with the problem of balancing a weighted digraph under link capacity constraints in the presence of time delays and packet drops over the communication links.
The algorithms presented in this chapter have appeared in \cite{2017:RikosHadj_CDC}.

This chapter is organized as follows. 
In Section~\ref{UpperLower_ModelDelaysPacket} we present the additional notation needed in this chapter and we recall the modeling of time delays and packet drops and the way they manifest themselves. 
In Section~\ref{ProbStatementUpperLower_delays} we present the problem formulation.
In Section~\ref{upperloweralgorithm_delays} we introduce a novel distributed algorithm which achieves integer weight balancing under link capacity constraints in the presence of time delays over the communication links. 
We present a formal description of the proposed distributed algorithm and show that as long as the conditions presented in Section~\ref{CircConditions} hold, then the proposed distributed algorithm converges to a weight balanced digraph after a finite number of iterations in the presence of bounded time delays over the communication links. 
In Section~\ref{triggeredalgDel_delays} we discuss an event-triggered operation of the proposed distributed algorithm and show that it results in a weight balanced digraph after a finite number of iterations in the presence of arbitrary (time-varying, inhomogeneous) but bounded time delays over the communication links.
In Section~\ref{packetalgDel_delays} we show that the proposed distributed algorithm is also able to converge (with probability one) to a weight balanced digraph in the presence of unbounded delays (packet drops).  
In Section~\ref{resultsupperloweralgorithm_delays} we present simulation results and comparisons and the chapter is concluded in Section~\ref{SummaryUpperLower_delays}.

%
%

\section{Modeling Time Delays and Packet Drops}
\label{UpperLower_ModelDelaysPacket}

In this chapter, we assume that a transmission from node $v_j$ to node $v_l$ at time step $k$ undergoes an \textit{a priori unknown} delay $\tau^{(j)}_{lj}[k]$ which is an integer that satisfies $0 \leq \tau^{(j)}_{lj}[k] \leq \overline{\tau}_{lj} \leq \infty$ (i.e., delays are bounded). 
The maximum delay is denoted by $\overline{\tau} = \max_{(v_l,v_j) \in \mathcal{E}}\overline{\tau}_{lj}$. 
In the weight balancing setting we consider, node $v_j$ is in charge of assigning the {\em actual} weight $f_{lj}[k]$ to each link $(v_l,v_j)$, and then transmits to node $v_l$ the amount of change $c^{(j)}_{lj}[k]$ it desires at time step $k$. 
Under the above delay model (which assumes bidirectional communication), node $v_l$ ($v_j$) receives the change amount $c^{(j)}_{lj}[k]$ ($c^{(l)}_{lj}[k]$), proposed by node $v_j$ ($v_l$) over the actual (perceived) weight $f_{lj}[k]$ ($f^{(p)}_{lj}[k]$), at time step $k + \tau^{(j)}_{lj}[k]$ ($k + \tau_{lj}^{(l)}[k]$).

\noindent
From the perspective of node $v_j$, the delayed weight change for link $(v_l,v_j)$, $\forall v_l \in N_j^+$, at time step $k$ is given by 
\begin{align} \label{delay_equation}
\overline{c}^{(l)}_{lj}[k]= \sum_{k_0 = k - \overline{\tau}, \ k_0 + \tau^{(l)}_{lj}[k_0] = k}^{k} c^{(l)}_{lj}[k_0],  
\end{align}
i.e., $\overline{c}^{(l)}_{lj}[k]$ is the sum of weight changes $c^{(l)}_{lj}$ sent from $v_l$ and seen from node $v_j$ at time step $k$.

Apart from bounded delays, \textit{unreliable} communication links in practical settings could also result in possible packet drops (i.e., unbounded delays) in the corresponding communication network. 
To model packet drops, we assume that a transmission on each link $(v_j, v_i)$ from node $v_i$ to node $v_j$ is { \em unreliable} which means that each particular edge may drop packets with some (non-total) probability. 
We assume independence between packet drops at different time steps or different links (or even different directions of the same link), so that, we can model a packet drop via a Bernoulli random variable:
\begin{equation}\label{dropsmodel}
Pr\{ x_k(j,i)=m \} = \left\{ \begin{array}{ll}
         q_{ji}, & \mbox{if $m = 0$,}\\
         1 - q_{ji}, & \mbox{if $m = 1$,}\end{array} \right.
\end{equation}
where $x_k(j,i)=1$ if the transmission from node $v_i$ to node $v_j$ at time step $k$ is successful.

We also define the matrix $Q = [q_{ji}]$ where $q_{ji}$ is the entry at the $j^{th}$ row and $i^{th}$ column; we take $q_{ji} = 1$ for $(v_j, v_i) \notin \mathcal{E}_u$ and can set $q_{ji} = 0$ if the link is reliable.
We establish that, despite the presence of packet drops, the proposed distributed algorithm converges, with probability one, to a weight balanced digraph after a finite number of iterations (as long as a feasible solution exists and $q_{ji} <1$ for all links $(v_j, v_i) \in \mathcal{E}_u$).

Note that, in this chapter, the integer weight $f_{ji}$ on edge $(v_j, v_i) \in \mathcal{E}$ is assigned by node $v_i$. 
More specifically, $f_{ji}$ is assigned by node $v_i$; due to possible (bounded or unbounded) time delays the perceived weight $f^{(p)}_{ji}$ on this link by node $v_j$ might be different. 
This means that each node will know exactly the weights on its outgoing edges $f_{lj}$ but only have access to { \em perceived} weights $f^{(p)}_{ji}$ on its incoming edges, which will be equal to $f_{lj}$ if node $v_j$ is able to successfully communicate with node $v_l$.
\begin{defn}\label{DEFpercnodebalance}
Given a weighted digraph $\mathcal{G}_d=(\mathcal{V},\mathcal{E},\mathcal{F})$, along with a perceived weight assignment $F_p=[f^{(p)}_{ji}]$, the total {\em perceived} in-weight of node $v_j$ is defined as $\mathcal{S}_j^{-(p)}$ and is defined as $\mathcal{S}_j^{-(p)} = \sum_{v_i \in \mathcal{N}_j^-} f^{(p)}_{ji}$.
\end{defn}
\begin{defn}\label{DEFnodebalance_del}
Given a weighted digraph $\mathcal{G}_d=(\mathcal{V},\mathcal{E},\mathcal{F})$ of order $n$, the {\em perceived weight imbalance} $x^{(p)}_j$ of node $v_j$ is $x^{(p)}_j = \mathcal{S}_j^{-(p)} - \mathcal{S}_j^+$ while the perceived total imbalance of digraph $\mathcal{G}_d$ is defined as $\varepsilon^{(p)} = \sum_{j=1}^{n} \vert x^{(p)}_j \vert$.
\end{defn}

\section{Problem Statement}
\label{ProbStatementUpperLower_delays}

We are given a strongly connected digraph $\mathcal{G}_d = (\mathcal{V}, \mathcal{E})$, as well as lower and upper limits $l_{ji}$ and $u_{ji}$ ($0 < l_{ji} \leq u_{ji}$, where $l_{ji}, u_{ji} \in \mathbb{R}$) on each each edge $(v_j, v_i) \in \mathcal{E}$. 
Considering that link transmissions undergo arbitrary, bounded or unbounded delays (i.e., packet drops), we want to develop a distributed algorithm that allows the nodes to iteratively adjust the integer weights on their edges so that they eventually obtain a set of integer weights $\{ f_{ji} \; | \; (v_j, v_i) \in \mathcal{E} \}$ that satisfy the following:
\begin{enumerate}
\item $ f_{ji} \in \mathbb{N}$ for each edge $(v_j,v_i) \in \mathcal{E}$;
\item $l_{ji} \leq f_{ji} \leq u_{ji}$ for each edge $(v_j,v_i) \in \mathcal{E}$;
\item $\mathcal{S}_j^+ = \mathcal{S}_j^- = \mathcal{S}_j^{-(p)}$ for each $v_j \in \mathcal{V}$.
\end{enumerate}
The distributed algorithm needs to respect the communication constraints imposed by the undirected graph $\mathcal{G}_u$ that corresponds to the given directed graph~$\mathcal{G}_d$.

We introduce and analyze a distributed algorithm that allows each node to assign integer weights to its outgoing links, so that the resulting weight assignment is balanced. 

%
%

\section{Distributed Algorithm for Weight Balancing \\ in the Presence of Time Delays}
\label{upperloweralgorithm_delays}

In this section we provide an overview of the distributed weight balancing algorithm operation; the formal description of the algorithm is provided in Algorithm~\ref{constralg1_delays}. 
The algorithm is iterative and operates by having, at each iteration, nodes with \textit{positive} perceived weight imbalance attempt to change the integer weights on both their incoming and/or outgoing edges so that they become weight balanced. 
We assume that each node is in charge of assigning the weights on its outgoing edges. 
More specifically, $f_{ji}$ is assigned by node $v_i$; due to possible time delays the perceived weight $f^{(p)}_{ji}$ on this link by node $v_j$ might be different. 
This means that each node will know exactly the weights on its outgoing edges but only have access to perceived weights on its incoming edges. 
We use $k$ to denote the iteration and index variables. 
For example, the {\em change amount} from node $v_j$ on the weight $f_{ji}[k]$ of edge $(v_j,v_i) \in \mathcal{E}$ at iteration $k$ will be denoted by $c_{ji}^{(j)}[k]$.

\textit{Note:} Each node $v_j$ can only calculate its {\em perceived weight imbalance} $x^{(p)}_j$ (as defined in Definition~\ref{DEFnodebalance_del}) at each iteration $k$. 
This means that it has no access to the total (or perceived total) imbalance of the digraph $\mathcal{G}_d$.

\begin{remark}
Note here that the integer weight $f_{lj}$ on edge $(v_l, v_j) \in \mathcal{E}$ is assigned by node $v_j$. 
Thus, node $v_j$ has access to the true weight $f_{lj}$ of edge $(v_l, v_j)$ while node $v_l$ has access to a {\em perceived} weight $f^{(p)}_{lj}$, which will be equal to $f_{lj}$ if node $v_j$ is able to successfully communicate with node $v_l$.
\end{remark}

We describe the operation of the iterative algorithm and establish that, if the necessary and sufficient Integer Circulation Conditions in Section~\ref{CircConditions} are satisfied, the algorithm completes after a finite number of iterations.

\textbf{Initialization.} At initialization, each node is aware of the feasible weight interval on each of its incoming and outgoing edges, i.e., node $v_j$ is aware of $l_{ji}, u_{ji}$ for each $v_i \in \mathcal{N}^-_j$ and $l_{lj}, u_{lj}$ for each $v_l \in \mathcal{N}^+_j$. 
Furthermore, the weights are initialized at the ceiling of the lower bound of the feasible interval, i.e., $f_{ji}[0] = \lceil l_{ji} \rceil$.
This initialization is always feasible but not critical and could be any integer value in the feasible weight interval $[l_{ji}, u_{ji}]$ (according to Section~\ref{CircConditions}) an integer always exists in the interval $[l_{ji}, u_{ji}]$).
Also each node $v_j$ chooses a unique order $P_{lj}^{(j)}$ and $P_{ji}^{(j)}$ for its outgoing links $(v_l,v_j)$ and incoming links $(v_j,v_i)$ respectively, such that $\{P_{lj}^{(j)} \; | \; v_l \in \mathcal{N}^+_j\} \cup \{P_{ji}^{(j)} \; | \; v_i \in \mathcal{N}^-_j\} = \{0,1,..., \mathcal{D}_j-1\}$.

\textbf{Iteration.} At each iteration $k \geq 0$, node $v_j$ is aware of the {\em perceived} integer weights on its incoming edges $\{ f^{(p)}_{ji}[k] \; | \: v_i \in \mathcal{N}^-_j \}$ and the (actual) weights on its outgoing edges $\{ f_{lj}[k] \; | \: v_l \in \mathcal{N}^+_j \}$, which allows it to calculate its {\em perceived} weight imbalance $x^{(p)}_j[k]$ according to Definition~\ref{DEFnodebalance_del}.

\noindent
{\em A. Selecting Desirable Weights.} Each node $v_j$ with positive {\em perceived} weight imbalance (i.e., $x^{(p)}_j[k] > 0$) attempts to change the weights on its incoming edges $\{ f_{ji}[k] \; | \; v_i \in \mathcal{N}^-_j \}$ and/or outgoing edges $\{ f_{lj}[k] \; | \; v_l \in \mathcal{N}^+_j \}$ in a way that drives its perceived weight imbalance $x^{(p)}_j[k+1]$ to zero (at least if no other changes are inflicted on the weights).  
No attempt to change weights is made if node $v_j$ has negative or zero perceived weight imbalance. 
Specifically, node $v_j$ attempts to add $+1$ (or subtract $-1$) to its outgoing (or incoming) integer weights one at a time, according to a predetermined (cyclic) order until its perceived weight imbalance becomes zero. 
If an outgoing (incoming) edge has reached its max (min) value (according to the feasible interval on that particular edge), then its weight does not change and node $v_j$ proceeds to change the next one according to the predetermined order, in a round-robin fashion. 
The desired weight change by node $v_j$ on edge $(v_j,v_i) \in \mathcal{E}$ at iteration $k$ will be denoted by $c_{ji}^{(j)}[k]$; similarly, the desired weight by node $v_j$ on edge $(v_l,v_j) \in \mathcal{E}$ at iteration $k$ will be denoted by $c_{lj}^{(j)}[k]$.

\textit{Note:} Next time node $v_j$ has positive {\em perceived} weight imbalance it continues increasing (decreasing) its outgoing (incoming) edges by $1$, one at a time, following the (cyclic) predetermined order starting from the edge it stopped the previous time it had positive weight imbalance.

\noindent
{\em B. Exchanging Desirable weights.} 
Once the nodes with positive {\em perceived} weight imbalance calculates the desirable weight change for each incoming $\{ c^{(j)}_{ji}[k] \; | \; v_i \in \mathcal{N}^-_j \}$ and outgoing $\{ c^{(j)}_{lj}[k] \; | \; v_l \in \mathcal{N}^+_j \}$ weight, they take the following steps in sequence:

\noindent
1) Node $v_j$ transmits the desirable weight change $c^{(j)}_{ji}[k]$ ($c^{(j)}_{lj}[k]$) to each in- (out-) neighbor $v_i$ ($v_l$).

\noindent
2) Node $v_j$ receives the delayed desired weight changes $\overline{c}^{(i)}_{ji}[k]$ ($\overline{c}^{(l)}_{lj}[k]$) from each in- (out-) neighbor $v_i$ ($v_l$).
If no weight change is received due to time delays, then node $v_j$ assumes that $\overline{c}^{(i)}_{ji}[k] = 0$ ($\overline{c}^{(l)}_{lj}[k] = 0$) for the corresponding incoming (outgoing) edge $(v_j, v_i)$ ($(v_l, v_j)$).

\noindent
3) It calculates its new outgoing ({\em perceived} incoming) weights $ f_{lj}[k+1] = f_{lj}[k] + c_{lj}^{(j)}[k] + \overline{c}_{lj}^{(l)}[k] $ ($f_{ji}^{(p)}[k+1] = f_{ji}^{(p)}[k] + c_{ji}^{(j)}[k] + \overline{c}_{ji}^{(i)}[k]$).
Then, the new outgoing ({\em perceived} incoming) weights are adjusted so that the new weight is projected onto the feasible interval $[l_{lj}, u_{lj}]$ ($[l_{ji}, u_{ji}]$) of the corresponding edge.
This (along with all the parameters involved) can be seen in Figure~\ref{nodes_exch_delays}.

\begin{remark}
Since the weight $f_{ji}$ on each edge $(v_j, v_i) \in \mathcal{E}$ affects positively the weight imbalance $x_j[k]$ of node $v_j$ and negatively the weight imbalance $x_i[k]$ of node $v_i$, we need to take into account the possibility that both nodes desire a change on the weight simultaneously. 
Thus, the proposed algorithm attempts to coordinate the weight change. 
The challenge however, is the fact that time delays may occur during transmissions (in either direction) while the nodes are trying to agree on a weight value.
\end{remark}

\begin{figure}[h]
\begin{center}
\includegraphics[width=0.7\columnwidth]{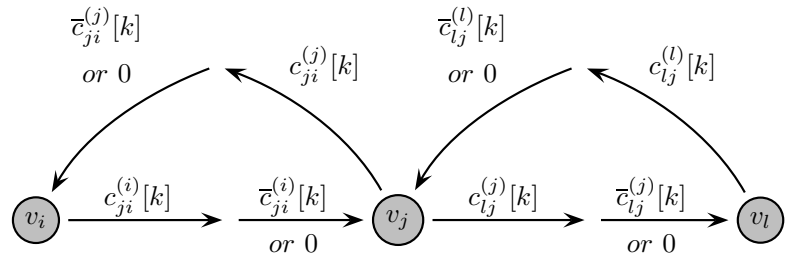}
\caption{Digraph where nodes exchange their desirable weights in the presence of time delays.}
\label{nodes_exch_delays}
\end{center}
\end{figure}

\subsection{Formal Description of Distributed Algorithm}
\label{formalupperloweralgorithm_delays}

A formal description of the proposed distributed algorithm is presented in Algorithm~\ref{constralg1_delays}.

\begin{varalgorithm}{6}
\caption{Distributed weight Balancing Algorithm in the Presence of Time Delays}
\textbf{Input} \\ 1) A strongly connected digraph $\mathcal{G}_d = (\mathcal{V}, \mathcal{E})$ with $n=|\mathcal{V}|$ nodes and $m=|\mathcal{E}|$ edges.\\ 2) $l_{ji},u_{ji}$ for every $(v_j, v_i) \in \mathcal{E}$. \\
\textbf{Initialization} \\ Set $k=0$; each node $v_j \in \mathcal{V}$ does:
\\1) It sets the weights on its {\em perceived} incoming and outgoing edge weights as 
$$
f^{(p)}_{ji}[0] = \lceil l_{ji} \rceil, \ \forall  v_i \in \mathcal{N}_j^-,
$$
$$
f_{lj}[0] = \lceil l_{lj} \rceil, \ \forall  v_l \in \mathcal{N}_j^+.
$$ 
2) It assigns a unique order to its outgoing and incoming edges as $P^{(j)}_{lj}$, for $v_l \in \mathcal{N}_j^+$ or $P^{(j)}_{ji}$, for $v_i \in \mathcal{N}_j^-$ (such that $\{P^{(j)}_{lj} \ | \ v_l \in \mathcal{N}_j^+\} \cup \{P^{(j)}_{ji} \ | \ v_i \in \mathcal{N}_j^-\} = \{0,1,...,\mathcal{D}_j -1 \}$).
\\
\textbf{Iteration} \\ For $k=0,1,2,\dots$, each node $v_j \in \mathcal{V}$ does the following:
\\ 1) It computes its \textit{perceived weight imbalance} as in Definition~\ref{DEFnodebalance_del}
$$
x^{(p)}_j[k] = \sum_{v_i \in \mathcal{N}_j^-} f_{ji}^{(p)}[k] - \sum_{v_l \in \mathcal{N}_j^+} f_{lj}[k]. 
$$
 2) If  $x^{(p)}_j[k] > 0$, it calculates the desired amount of change for the weights on its outgoing and incoming edges. Specifically, it increases (decreases) by $1$ the integer weights $f_{lj}[k]$ ($f_{ji}^{(p)}[k]$) of its outgoing (incoming) edges $v_l \in \mathcal{N}_j^+$ ($v_i \in \mathcal{N}_j^-$) one at a time, following the predetermined order $P^{(j)}_{lj}$ ($P^{(j)}_{ji}$) until its weight imbalance becomes zero (if an edge has reached its maximum value, its weight does not change and node $v_j$ proceeds in changing the next one according to the predetermined order). 
Then, it stores the desired change amount for each outgoing edge as $c_{lj}^{(j)}[k]$ and each incoming edge as $c_{ji}^{(j)}[k]$.
\\ 3) If  $x^{(p)}_j[k] > 0$, it transmits the desired weight change $c^{(j)}_{lj}[k]$ ($c^{(j)}_{ji}[k]$)  on each outgoing (incoming) edge.
\\ 4) It receives the (possibly delayed) desired weight change $\overline{c}^{(l)}_{lj}[k]$ ($\overline{c}^{(i)}_{ji}[k]$) from each outgoing (incoming) edge. 
[If no weight change is received due to time delays it assumes $\overline{c}^{(l)}_{lj}[k] = 0$ ($\overline{c}^{(i)}_{ji}[k] = 0$) for the corresponding outgoing (incoming) edge.]
\\ 5) It sets its new outgoing weights to be 
$$
f_{lj}[k+1] = f_{lj}[k] + c_{lj}^{(j)}[k] + \overline{c}_{lj}^{(l)}[k] ,
$$
and its new {\em perceived} incoming weights to be 
$$
f_{ji}^{(p)}[k+1] = f_{ji}^{(p)}[k] + c_{ji}^{(j)}[k] + \overline{c}_{ji}^{(i)}[k] .
$$
6) It adjusts the new outgoing weights according to the corresponding upper and lower weight constraints as
$$
f_{lj}[k+1] =\max \{ l_{lj}, \min\{u_{lj}, f_{lj}[k+1]\} \} ,
$$ 
and its new {\em perceived} incoming weights according to the corresponding upper and lower weight constraints as
$$
f_{ji}^{(p)}[k+1] =\max \{ l_{ji}, \min\{u_{ji}, f_{ji}^{(p)}[k+1]\} \} .
$$
7) It repeats (increases $k$ to $k+1$ and goes back to Step~1).
\label{constralg1_delays}
\end{varalgorithm}

\begin{remark}
According to the Integer Circulation Conditions in Section~\ref{CircConditions}, each node $v_j \in \mathcal{V}$ with positive {\em perceived} weight imbalance at iteration $k$ ($x^{(p)}_j[k] > 0$) will always be able to calculate a weight assignment for its incoming and outgoing edge weights so that its {\em perceived} weight imbalance becomes zero (at least if no other changes are inflicted on the weights of its incoming or outgoing links). 
This means that the selection of desirable weights in Algorithm~\ref{constralg1_delays} is always {\em feasible}.
\end{remark}

\begin{remark}\label{perceivedSmaller_del}
It is important to note here that the total {\em perceived} in-weight $\mathcal{S}_j^{-(p)}$ of node $v_j$ might be affected from possible time delays at Step~4 of Algorithm~\ref{constralg1_delays}. 
Specifically, if transmissions are affected from possible time delays then $v_j$ sets $f_{ji}^{(p)}[k+1] = f_{ji}^{(p)}[k] + c_{ji}^{(j)}[k]$ where $c_{ji}^{(j)}[k] < 0$ (since nodes only attempt to make changes on the weights if their perceived balance is positive, node $v_i$ will only attempt to increases the weight $f_{ji}[k]$ of edge $(v_j, v_i)$).
This means that during the execution of Algorithm~\ref{constralg1_delays} we have $f_{ji}^{(p)}[k] \leq f_{ji}[k]$ for each edge $(v_j,v_i) \in \mathcal{E}$, at each time step $k$.
\end{remark}

\begin{remark}\label{PROPupperlower}
The weight adjustment in Algorithm~\ref{constralg1_delays} signifies that after iteration $k$ of the proposed distributed algorithm, once node $v_j$ calculates the desired weight changes for its incoming (outgoing) edges $c_{ji}^{(j)}[k]$ ($c_{lj}^{(j)}[k]$), $\forall \ v_i \in \mathcal{N}_j^-$ ($\forall \ v_l \in \mathcal{N}_j^+$), it receives the delayed desired weight changes from its in- (out-) neighbors $\overline{c}_{ji}^{(i)}[k]$ ($\overline{c}_{lj}^{(l)}[k]$). 
Then, the new weight of edge $(v_j,v_i)$ ($(v_l,v_j)$) will be $f^{(p)}_{ji}[k+1] = f^{(p)}_{ji}[k] + \overline{c}_{ji}^{(i)}[k] + c_{ji}^{(j)}[k]$ ($f_{lj}[k+1] = f_{lj}[k] + \overline{c}_{lj}^{(l)}[k] + c_{lj}^{(j)}[k]$). 
According to Step~6 of the proposed algorithm we always have that $0 < l_{ji} \leq f^{(p)}_{ji}[k+1] \leq u_{ji}$ and $0 < l_{lj} \leq f_{lj}[k+1] \leq u_{lj}$. 
\end{remark}

\subsection{Proof of Algorithm Completion}
\label{analysisalgDel_delays}

In this section we analyze the functionality of the distributed algorithm and we prove that it solves the weight balancing problem in the presence of arbitrary (time-varying, inhomogeneous) but bounded time delays that may appear during the information exchange between agents in the system.

\begin{prop}
\label{PROP2_del}
Consider the problem formulation described in Section~\ref{ProbStatementUpperLower_delays}. 
Let $\mathcal{V}^-[k] \subset \mathcal{V}$ be the set of nodes with negative weight imbalance at iteration $k$, i.e., $\mathcal{V}^-[k] = \{ v_j \in \mathcal{V} \; | \; x_j[k] < 0 \}$.  
During the execution of Algorithm~\ref{constralg1_delays}, we have that
$$
\mathcal{V}^-[k+1] \subseteq \mathcal{V}^-[k].
$$
\end{prop}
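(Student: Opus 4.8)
The plan is to prove the equivalent per-node statement that $x_j[k] \geq 0$ implies $x_j[k+1] \geq 0$, since this is exactly the assertion that no node leaves the nonnegative set, i.e. $\mathcal{V}^-[k+1] \subseteq \mathcal{V}^-[k]$. The argument cannot be a verbatim copy of the proof of Proposition~\ref{constrPROP2}, because in Algorithm~\ref{constralg1_delays} a node acts on its \emph{perceived} imbalance $x^{(p)}_j[k]$ rather than on its actual imbalance $x_j[k]$, and its desired decreases on incoming edges reach the controlling in-neighbour only after an a priori unknown (but bounded) delay. The key auxiliary object is therefore the perceived imbalance, and the bridge between perceived and actual is Remark~\ref{perceivedSmaller_del}, which gives $f^{(p)}_{ji}[k] \leq f_{ji}[k]$ and hence $x^{(p)}_j[k] \leq x_j[k]$ for every node and every $k$.

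First I would establish the perceived analogue of the proposition: if $x^{(p)}_j[k] \geq 0$ then $x^{(p)}_j[k+1] \geq 0$. This mirrors the reasoning of Proposition~\ref{constrPROP2}, but carried out on perceived weights. Node $v_j$ attempts weight changes only when $x^{(p)}_j[k] > 0$, and by design (Step~2, together with the Integer Circulation Conditions of Section~\ref{CircConditions}) its own desired changes $c^{(j)}_{ji}[k] \leq 0$ and $c^{(j)}_{lj}[k] \geq 0$ drive its perceived balance to exactly zero, or leave it positive if an edge saturates at a bound in Step~6. Every change inflicted by its neighbours that $v_j$ actually perceives---an in-neighbour increasing an incoming weight, $\overline{c}^{(i)}_{ji}[k] \geq 0$, or an out-neighbour decreasing an outgoing weight, $\overline{c}^{(l)}_{lj}[k] \leq 0$---can only raise $x^{(p)}_j[k+1]$. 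Hence $x^{(p)}_j[k+1] \geq 0$ whenever $x^{(p)}_j[k] \geq 0$; once the perceived imbalance is nonnegative it stays nonnegative for all subsequent iterations.

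With this invariant in hand, I would fix a node $v_j$ with $x_j[k] \geq 0$ and split into two cases. If $x^{(p)}_j[k] \geq 0$, the perceived invariant gives $x^{(p)}_j[k+1] \geq 0$, and then $x_j[k+1] \geq x^{(p)}_j[k+1] \geq 0$ by Remark~\ref{perceivedSmaller_del}. If instead $x^{(p)}_j[k] < 0$, I claim $x_j$ cannot decrease across this step: since $x^{(p)}_j[k] < 0$, node $v_j$ performs no updates at iteration $k$, so it neither increases an outgoing weight nor issues a fresh decrease request on an incoming weight. The only remaining way $x_j$ could drop would be the arrival, at this step, of a previously issued decrease request of $v_j$ on some incoming edge; but any such in-flight request was issued at an earlier iteration $k_0 < k$ at which $v_j$ acted, i.e. $x^{(p)}_j[k_0] > 0$, and the perceived invariant then forces $x^{(p)}_j[k] \geq 0$, contradicting $x^{(p)}_j[k] < 0$. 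Thus no decrease request is outstanding, every weight change affecting $v_j$ at this step is imbalance-increasing, and $x_j[k+1] \geq x_j[k] \geq 0$. Combining the two cases over all nodes yields $\mathcal{V}^-[k+1] \subseteq \mathcal{V}^-[k]$.

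The step I expect to be the main obstacle is the bookkeeping that couples the delays with the perceived/actual gap, namely justifying rigorously that an outstanding decrease request and a currently negative perceived imbalance cannot coexist. This rests on making precise, through the delay model for $\overline{c}^{(i)}_{ji}[k]$, that $v_j$ decrements its own perceived incoming weight immediately upon issuing a request (Step~5) while the actual weight is decremented by the in-neighbour only upon receipt, so that all of $v_j$'s commitments are already reflected in $x^{(p)}_j$. In addition, the monotone, sign-preserving behaviour of the projection onto $[l_{ji}, u_{ji}]$ in Step~6 must be checked, so that saturation of a weight at a bound can never turn a nonnegative perceived imbalance negative; this is the delicate point that keeps the perceived invariant intact and thereby makes the two-case argument go through.
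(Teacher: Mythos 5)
Your proposal is correct and follows the same overall strategy as the paper's proof: first establish that a nonnegative \emph{perceived} imbalance can never become negative (a node acts only when $x^{(p)}_j[k]>0$ and by construction drives its perceived imbalance to zero, while every neighbour-inflicted change it actually perceives --- an in-neighbour's increase $\overline{c}^{(i)}_{ji}[k]\geq 0$ or an out-neighbour's decrease $\overline{c}^{(l)}_{lj}[k]\leq 0$ --- can only raise it), and then transfer to the actual imbalance via $x^{(p)}_j[k]\leq x_j[k]$ from Remark~\ref{perceivedSmaller_del}. Where you go beyond the paper is the transfer step. The paper's proof starts from $x^{(p)}_j[k]\geq 0$, deduces $x_j[k]\geq 0$, and then asserts the conclusion for actual imbalances; as written it does not address a node with $x_j[k]\geq 0$ but $x^{(p)}_j[k]<0$, a combination that can occur when an in-neighbour's increase is still in flight. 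Your second case closes exactly this gap: such a node takes no action at step $k$, and the only remaining imbalance-decreasing event --- the arrival at an in-neighbour of a previously issued decrease request from $v_j$ --- is ruled out because issuing one would require $x^{(p)}_j[k_0]>0$ at some $k_0<k$, whence the perceived invariant forces $x^{(p)}_j[k]\geq 0$, a contradiction. This makes your argument strictly more complete than the one in the text. Your closing caveat about Step~6 is also well placed: one does need that each node's own desired changes already respect the weight bounds (Step~2 skips saturated edges), so that the projection never undoes the ``perceived imbalance becomes zero'' property on which the invariant rests.
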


\begin{proof}
We will first argue that nodes with nonnegative {\em perceived} weight imbalance at iteration $k$ can never reach negative perceived weight imbalance at iteration $k+1$. 
Combining this with the fact that the perceived weight imbalance of a node is always below its actual weight imbalance, we establish the proof of the proposition.

Consider a node $v_j$ with a nonnegative perceived weight imbalance $x^{(p)}_j[k] \geq 0$ (from Remark~\ref{perceivedSmaller_del}, since $x^{(p)}_j[k] \leq x_j[k]$, $\forall \ k \geq 0$, we have that also $x_j[k] \geq 0$). \\
\noindent
We analyze below the following two cases:
\begin{enumerate}
\item at least one neighbor of node $v_j$ has positive perceived weight imbalance,
\item all neighbors of node $v_j$ have negative or zero perceived weight imbalance.
\end{enumerate}
In both cases, since $x^{(p)}_j[k] \geq 0$, node $v_j$ will attempt to change the weights of (some of) its incoming and outgoing edges. 
Specifically, node $v_j$ will calculate the desirable weight change $c^{(j)}_{ji}[k]$ ($c^{(j)}_{lj}[k]$) for its incoming (outgoing) edges $(v_j, v_i)$ ($(v_l, v_j)$) where $v_i \in  \mathcal{N}_j^-$ ($v_l \in  \mathcal{N}_j^+$).
Then, 
it transmits the desired weight change $c^{(j)}_{ji}[k]$ ($c^{(j)}_{lj}[k]$) to its incoming (outgoing) edges $(v_j, v_i)$ ($(v_l, v_j)$) where $v_i \in  \mathcal{N}_j^-$ ($v_l \in  \mathcal{N}_j^+$).
In the first case, we have (i) $x^{(p)}_i[k] > 0$ for some $v_i \in  \mathcal{N}_j^-$, or (ii) $x^{(p)}_l[k] > 0$ for some $v_l \in  \mathcal{N}_j^+$.

For (i) we have that during iteration $k$ of Algorithm~\ref{constralg1_delays}, the incoming edge weights of $v_j$ might change by its in-neighbors (i.e., the weight of an incoming edge $(v_j,v_i)$ might be increased to be equal to $f_{ji}[k+1] = f_{ji}[k] + c_{ji}^{(i)}[k]$ for some $v_i \in \mathcal{N}_j^-$).
In this case, since the transmission of $c_{ji}^{(i)}[k]$ from $v_i$ to $v_j$ might suffer time delay, we have that $v_j$ sets its outgoing weights to be $ f_{lj}[k+1] = f_{lj}[k] + c_{lj}^{(j)}[k] $, and its {\em perceived} incoming weights to be $ f_{ji}^{(p)}[k+1] = f_{ji}^{(p)}[k] + c_{ji}^{(j)}[k] $. Thus, we have that $x^{(p)}_j[k+1] = 0$. 
[Note that, after $\tau^{(i)}_{ji}[k]$ time steps (during the iteration $k + \tau^{(i)}_{ji}[k]$) node $v_j$ will receive the desired weight change $c_{ji}^{(i)}[k]$ which was sent from node $v_i$ at time step $k$. 
Then it will update its its {\em perceived} incoming weights to be $ f_{ji}^{(p)}[k+\tau^{(i)}_{ji}[k]+1] = f_{ji}^{(p)}[k+\tau^{(i)}_{ji}[k]] + \overline{c}_{ji}^{(i)}[k+\tau^{(i)}_{ji}[k]] $, which means that $ x^{(p)}_j[k+\tau^{(i)}_{ji}[k]+1] > 0 $.]
As a result, for (i) we have that nonnegative {\em perceived} weight imbalance of node $v_j$ at iteration $k$ remains nonnegative at iteration $k+1$.

For (ii) we have that the outgoing edge weights of $v_j$ might change by its out-neighbors $v_l \in  \mathcal{N}_j^+$ and it can be argued in a similar manner.

In the second case, we have $x^{(p)}_i[k] \leq 0$ for every $v_i \in  \mathcal{N}_j^-$, and $x^{(p)}_l[k] \leq 0$ for every $v_l \in  \mathcal{N}_j^+$. 
This means that the neighbors of $v_j$ will not attempt to change the weights of its incoming and outgoing edges. 
As a result, since $v_j$ will transmit its desired weight changes and then set its outgoing weights to be $ f_{lj}[k+1] = f_{lj}[k] + c_{lj}^{(j)}[k] $ and its {\em perceived} incoming weights to be $ f_{ji}^{(p)}[k+1] = f_{ji}^{(p)}[k] + c_{ji}^{(j)}[k] $, we have that $x^{(p)}_j[k+1] = 0$. 

As a result we have that during iteration $k$ of Algorithm~\ref{constralg1_delays}, nodes with nonnegative {\em perceived} weight imbalance can never reach negative {\em perceived} weight imbalance at iteration $k+1$.
From Remark~\ref{perceivedSmaller_del}, since $x^{(p)}_j[k] \leq x_j[k]$, $\forall \ k \geq 0$, we have that also nodes with nonnegative weight imbalance can never reach negative weight imbalance, thus establishing the proof of the proposition.
\end{proof}

\begin{prop}
\label{PROP3_del}
Consider the problem formulation described in Section~\ref{ProbStatementUpperLower_delays}. 
During the execution of Algorithm~\ref{constralg1_delays}, it holds that
$$
0 \leq \varepsilon[k+1] \leq \varepsilon[k] \; , \; \; \forall k \geq 0 \; ,
$$
where $\varepsilon[k] \geq 0$ is the total imbalance of the network at iteration~$k$ (see Definition~\ref{defn:totalim}).
\end{prop}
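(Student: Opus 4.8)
The plan is to replicate the structure of the proof of Proposition~\ref{constrPROP3}, but working consistently with the \emph{actual} weight imbalances $x_j[k]=\mathcal{S}_j^-[k]-\mathcal{S}_j^+[k]$ rather than the perceived ones, and to discharge the extra delay bookkeeping using the monotonicity established in Proposition~\ref{PROP2_del}. First I would record that the identity $\sum_{j=1}^n x_j[k]=0$ from the second statement of Proposition~\ref{constrPROP1} continues to hold here, since it only uses that each actual weight $f_{ji}[k]$ enters $\mathcal{S}_j^-[k]$ once with a $+$ sign and $\mathcal{S}_i^+[k]$ once with a $-$ sign, a purely structural fact that is unaffected by delays. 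Consequently the third statement of Proposition~\ref{constrPROP1} also carries over, giving
\[
\varepsilon[k]=2\sum_{v_j\in\mathcal{V}^-[k]}\lvert x_j[k]\rvert,
\]
where $\mathcal{V}^-[k]=\{v_j\in\mathcal{V}\mid x_j[k]<0\}$. Together with $\mathcal{V}^-[k+1]\subseteq\mathcal{V}^-[k]$ (Proposition~\ref{PROP2_del}), this reduces the claim to showing that every $v_j\in\mathcal{V}^-[k]$ satisfies $\lvert x_j[k+1]\rvert\le\lvert x_j[k]\rvert$; the nodes in $\mathcal{V}^-[k]\setminus\mathcal{V}^-[k+1]$ simply drop out of the left-hand sum and can only help.

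The key step is to control how the weights on the edges incident to a fixed $v_j\in\mathcal{V}^-[k]$ evolve. Because the sets $\mathcal{V}^-[\cdot]$ are nested decreasingly, $v_j\in\mathcal{V}^-[k]$ forces $x_j[k']<0$ for every $k'\le k$, and by Remark~\ref{perceivedSmaller_del} ($x_j^{(p)}[k']\le x_j[k']$) node $v_j$ has had \emph{negative perceived imbalance at every iteration up to $k$}. Hence $v_j$ never triggered Step~2, so all of its own desired changes $c_{ji}^{(j)}[\cdot]$ and $c_{lj}^{(j)}[\cdot]$ have been zero; in particular every delayed self-contribution $\overline{c}_{ji}^{(j)}[k]$, being a finite sum of such terms, vanishes. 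Therefore the actual incoming weights of $v_j$ update (from the source side $v_i$) as $f_{ji}[k+1]=f_{ji}[k]+c_{ji}^{(i)}[k]$ with $c_{ji}^{(i)}[k]\ge 0$ (an in-neighbor acting in Step~2 only \emph{increases} its outgoing edges), while its actual outgoing weights update as $f_{lj}[k+1]=f_{lj}[k]+\overline{c}_{lj}^{(l)}[k]$ with $\overline{c}_{lj}^{(l)}[k]\le 0$ (an out-neighbor acting in Step~2 only \emph{decreases} its incoming edges). The projection in Step~6 preserves these inequalities, since clipping onto the feasible interval a value that has moved down (resp. up) cannot push it back above (resp. below) the old feasible value.

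It then follows that $\mathcal{S}_j^-[k+1]\ge\mathcal{S}_j^-[k]$ and $\mathcal{S}_j^+[k+1]\le\mathcal{S}_j^+[k]$, so $x_j[k+1]\ge x_j[k]$. Since $x_j[k]<0$, either $x_j[k+1]\ge 0$ (and $v_j\notin\mathcal{V}^-[k+1]$) or $x_j[k]\le x_j[k+1]<0$, and in both cases $\lvert x_j[k+1]\rvert\le\lvert x_j[k]\rvert$. Summing over $v_j\in\mathcal{V}^-[k+1]\subseteq\mathcal{V}^-[k]$ and using the displayed decomposition of $\varepsilon$ yields $\varepsilon[k+1]\le\varepsilon[k]$, while $\varepsilon[k+1]\ge 0$ is immediate from Definition~\ref{defn:totalim}. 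I expect the main obstacle to be precisely the delay accounting of the middle paragraph: one must carefully distinguish the actual weights $f_{ji}$ from the perceived weights $f_{ji}^{(p)}$ and justify that a negatively imbalanced node has no stale in-flight messages that could perturb its neighbors' weights in the wrong direction. The nestedness of $\mathcal{V}^-[\cdot]$ furnished by Proposition~\ref{PROP2_del} is what makes this clean, so the proof hinges on invoking that result at the outset.
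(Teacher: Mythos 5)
Your proof is correct and follows essentially the same route as the paper's: both reduce the claim to the decomposition $\varepsilon[k]=2\sum_{v_j\in\mathcal{V}^-[k]}\vert x_j[k]\vert$ from Proposition~\ref{constrPROP1}, invoke the nestedness $\mathcal{V}^-[k+1]\subseteq\mathcal{V}^-[k]$ of Proposition~\ref{PROP2_del}, and show that each negatively imbalanced node's actual imbalance can only move toward zero. Your middle paragraph --- using nestedness plus Remark~\ref{perceivedSmaller_del} to conclude that such a node has never issued a nonzero desired change, so no stale in-flight messages from it can perturb the weights in the wrong direction --- is in fact a more careful discharge of the delay bookkeeping than the paper's case analysis, which simply asserts that the node ``will not make any changes on its edges.''
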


\begin{proof} 
From the third  statement of Proposition~\ref{constrPROP1}, we have $\varepsilon[k+1] = 2 \sum_{v_j \in \mathcal{V}^-[k+1]} \vert x_j[k+1] \vert$ and $\varepsilon[k] = 2 \sum_{v_j \in \mathcal{V}^-[k]} \vert x_j[k] \vert$, whereas from Proposition~\ref{PROP2_del}, we have $\mathcal{V}^-[k+1] \subseteq \mathcal{V}^-[k]$.

Consider a node $v_j \in \mathcal{V}^-[k]$ with weight imbalance $x_j[k]<0$ (obviously we have that also $x^{(p)}_j[k]<0$ since $x^{(p)}_j[k] \leq x_j[k]$, $\forall \ k \geq 0$ from Remark~\ref{perceivedSmaller_del}). \\
\noindent
We analyze below the following two cases:
\begin{enumerate}
\item all neighbors of node $v_j$ have negative or zero perceived weight imbalance,
\item at least one neighbor of node $v_j$ has positive perceived weight imbalance.
\end{enumerate}
In both cases, node $v_j$ will not make any changes on its edges. 
In the first case, the weight imbalance of node $v_j$ will not change (i.e., $x_j[k+1] = x_j[k] < 0$). 
In the second case, we have (i) $x^{(p)}_i[k] \geq 0$ for some $v_i \in  \mathcal{N}_j^-$, or (ii) $x^{(p)}_l[k] \geq 0$ for some $v_l \in  \mathcal{N}_j^+$. 

For (i) we have that during the iteration $k$ of Algorithm~\ref{constralg1_delays}, the incoming edge weights of $v_j$ might change by its in-neighbors (i.e., the weight of an incoming edge $(v_j,v_i)$ might be increased to be equal to $f_{ji}[k+1] = f_{ji}[k] + c_{ji}^{(i)}[k]$ for some $v_i \in \mathcal{N}_j^-$).
In this case (regardless of whether we have a delay during the transmission of $c_{ji}^{(i)}[k]$ from $v_i$ to $v_j$) we have $\varepsilon[k+1] \leq \varepsilon[k]$ (using the third statement in Proposition~\ref{constrPROP1}).
For (ii) we have that during iteration $k$ of Algorithm~\ref{constralg1_delays}, the out-neighbor of $v_j$ might transmit the desired change amount of the outgoing edge weights to node $v_j$.
In this case, if the transmission of $c_{lj}^{(l)}[k]$ is delayed, then the weight imbalance of $v_j$ will not change (i.e., $x_j[k+1] = x_j[k] < 0$), but when $v_j$ receives $\overline{c}_{lj}^{(l)}[k+\tau^{(l)}_{lj}[k]]$ then the weight imbalance of node $v_j$ will satisfy $x_j[k+1] \geq x_j[k]$.
As a result, for both cases, we have $\varepsilon[k+1] \leq \varepsilon[k]$ (using the third statement in Proposition~\ref{constrPROP1}).
\end{proof}

\begin{prop}
\label{PROP4_del}
Consider the problem formulation described in Section~\ref{ProbStatementUpperLower_delays} where the Integer Circulation Conditions in Section~\ref{CircConditions} are satisfied. 
Algorithm~\ref{constralg1_delays} balances the weights in the graph in a finite number of steps (i.e., $\exists \ k_0$ so that $\forall k \geq k_0$, $f_{ji}[k_0] = f_{ji}[k]$, $\forall (v_j,v_i) \in \mathcal{E}$ and $x_j[k] = x_j[k_0] =0$, $\forall \ v_j \in \mathcal{V}$). 
\end{prop}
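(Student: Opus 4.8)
The plan is to prove this by contradiction, re-running the argument of Proposition~\ref{constrPROP4} (the delay-free completion proof) in the delayed setting, with its two structural ingredients replaced by their delayed analogues: Proposition~\ref{PROP2_del} (the set of negatively imbalanced nodes can only shrink) and Proposition~\ref{PROP3_del} (the total imbalance $\varepsilon[k]$ is nonincreasing). The first step is to note that $\varepsilon[k]$ is a nonincreasing sequence of nonnegative integers, hence eventually constant: there is $k_1$ with $\varepsilon[k]=\varepsilon^*$ for all $k\geq k_1$, and assuming the graph is never balanced amounts to assuming $\varepsilon^*>0$. By Proposition~\ref{PROP2_del} the nested sets $\mathcal{V}^-[k]$ stabilize to a fixed set $\mathcal{V}^-$, which is nonempty because the third statement of Proposition~\ref{constrPROP1} gives $\varepsilon^* = 2\sum_{v_j\in\mathcal{V}^-}|x_j[k]|>0$; moreover, since $\varepsilon[k]$ is constant in this regime, each negative node keeps its imbalance fixed, so no node in $\mathcal{V}^-$ ever receives an imbalance-improving update once $k\geq k_1$.

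Next I would partition $\mathcal{V}$ exactly as in Proposition~\ref{constrPROP4}: into $\mathcal{V}^-$, the set $\mathcal{V}^+$ of nodes that are positively (actually) imbalanced infinitely often, and the remaining nodes that are eventually balanced for good. Because $\sum_j x_j[k]=0$ at every $k$ and $\mathcal{V}^-\neq\emptyset$, some node is positive at each iteration, so by finiteness $\mathcal{V}^+\neq\emptyset$. Taking $\mathcal{S}=\mathcal{V}^+$ and using that the eventually-balanced nodes contribute zero while $\mathcal{V}^-$ contributes $-\varepsilon^*/2$, I obtain $\sum_{v_j\in\mathcal{S}}x_j[k]=\varepsilon^*/2>0$ for all large $k$. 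The first statement of Proposition~\ref{constrPROP1} then expresses this as $\sum_{(v_j,v_i)\in\mathcal{E}^-_\mathcal{S}}f_{ji}[k]-\sum_{(v_l,v_j)\in\mathcal{E}^+_\mathcal{S}}f_{lj}[k]=\varepsilon^*/2>0$, and strong connectivity guarantees the boundary edge sets $\mathcal{E}^-_\mathcal{S}$ and $\mathcal{E}^+_\mathcal{S}$ are nonempty.

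The decisive step, and the only place delays enter, is to show that in the stable regime every edge entering $\mathcal{S}$ sits at its lower limit $\lceil l_{ji}\rceil$ and every edge leaving $\mathcal{S}$ sits at its upper limit $\lfloor u_{lj}\rfloor$. Here I would use that delays are bounded by $\overline{\tau}$: whenever a node of $\mathcal{S}$ is positively imbalanced (which happens infinitely often) and some boundary edge is not yet saturated, that node transmits a desired unit change which, delayed by at most $\overline{\tau}$, is received and applied within $\overline{\tau}$ steps via Steps~4--6 of Algorithm~\ref{constralg1_delays}. Such an application raises an incoming weight or lowers an outgoing weight at a node outside $\mathcal{S}$, strictly improving that node's imbalance and hence, by the third statement of Proposition~\ref{constrPROP1}, forcing $\varepsilon[k+\overline{\tau}]<\varepsilon[k]$, contradicting the stabilization of $\varepsilon$. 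Therefore the boundary edges are saturated, and the identity above becomes $\sum_{(v_j,v_i)\in\mathcal{E}^-_\mathcal{S}}\lceil l_{ji}\rceil-\sum_{(v_l,v_j)\in\mathcal{E}^+_\mathcal{S}}\lfloor u_{lj}\rfloor=\varepsilon^*/2>0$, which violates the integer circulation condition~\eqref{EQnsconditions} of Section~\ref{CircConditions} and closes the contradiction.

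The main obstacle I expect is reconciling the fact that nodes act on their \emph{perceived} imbalance $x^{(p)}_j[k]$, whereas $\varepsilon[k]$ and the partition are defined through the \emph{actual} imbalance $x_j[k]$. The reconciliation rests on Remark~\ref{perceivedSmaller_del}, i.e.\ $f^{(p)}_{ji}[k]\leq f_{ji}[k]$ and hence $x^{(p)}_j[k]\leq x_j[k]$, so that perceived positivity implies actual positivity; the converse direction, which is what I actually need to guarantee that a node of $\mathcal{V}^+$ keeps transmitting, would be established by observing that a node which stops acting has nondecreasing perceived incoming weights and nonincreasing perceived outgoing weights, so its perceived imbalance can only rise and would eventually turn positive again (up to the bounded-delay catch-up), contradicting the assumption that it ceased acting while being actually positive infinitely often. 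Making this perceived-versus-actual bookkeeping rigorous, together with verifying that a change once in transit is never cancelled before delivery, is where the proof will demand the most care, rather than in the high-level contradiction itself.
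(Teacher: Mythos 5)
Your proposal is correct and follows essentially the same route as the paper: a contradiction argument that reduces to the delay-free completion proof (Proposition~\ref{constrPROP4}) via the partition $\mathcal{V}^-$, $\mathcal{V}^+$, eventually-balanced nodes, saturation of the boundary edges of $\mathcal{S}=\mathcal{V}^+$, and a violation of the integer circulation conditions, with bounded delays handled by noting that every transmitted change is delivered within $\overline{\tau}$ steps. In fact your write-up supplies more detail than the paper's own proof, which simply asserts that once all in-transit packets arrive the argument "becomes identical" to that of Proposition~\ref{constrPROP4}; your explicit treatment of the perceived-versus-actual imbalance bookkeeping fills in exactly the step the paper leaves implicit.
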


\begin{proof} 
By contradiction, suppose Algorithm~\ref{constralg1_delays} runs for an infinite number of iterations and its total imbalance remains positive (i.e., $\varepsilon[k]>0$ for all $k$). 
During the execution of the proposed distributed balancing algorithm, transmissions on each communication link $(v_l,v_j) \in \mathcal{E}$ are affected by arbitrary (time-varying and inhomogeneous) bounded time delays.
We have that the delays that affect transmissions on each link $(v_l,v_j) \in \mathcal{E}$ are bounded (i.e., $0 \leq \tau^{(j)}_{lj}[k] \leq \overline{\tau}_{lj} \leq \infty$).
Thus, the packets transmitted on each link $(v_l,v_j) \in \mathcal{E}$ will eventually reach the corresponding node after a finite number of steps.

Suppose now that Algorithm~\ref{constralg1_delays} runs for an infinite number of iterations and by contradiction its total imbalance remains positive (i.e., $\varepsilon[k]>0$ for all $k$).
This means that always (at each $k$) there will exist at least one node with positive weight imbalance and thus the proof of this Proposition becomes identical to the proof of Proposition~\ref{constrPROP4} (because, as argued above, its perceived imbalance will eventually become positive, once all transmitted packets are received). 

As a result, we have that if the Integer Circulation Conditions in Section~\ref{CircConditions} hold, the total imbalance $\varepsilon[k]$ decreases after a finite number of iterations, and Algorithm~\ref{constralg1_delays} results in a weight-balanced digraph after a finite number of iterations. 
\end{proof}

\section{Extension to Event-Triggered Operation}
\label{triggeredalgDel_delays}

Motivated by the need to reduce energy consumption, communication bandwidth, network congestion, and/or processor usage, many researchers have considered the use of event-triggered communication and control \cite{dimarogonas2012distributed, 2014:nowzari_cortes}. In this section, we discuss an event-triggered operation of the proposed distributed algorithm  where each agent autonomously decides when communication and control updates should occur so that the resulting network executions still result in a weight-balanced digraph after a finite number of steps in the presence of arbitrary (time-varying, inhomogeneous) but bounded time delays that might affect link transmissions. More specifically, following the proposed event-triggered strategy, we can prove that (i) all nodes eventually stop transmitting, and (ii) the proposed distributed algorithm is able to obtain a set of weights that balance the corresponding digraph after a finite number of iterations.

\subsection{Formal Description of Distributed Algorithm}
\label{formalupperloweralgorithm_event}

A formal description of the algorithm's event-triggered operation is presented in Algorithm~\ref{constralg1_event}.

\begin{varalgorithm}{7}
\caption{Event-Triggered Distributed Balancing in the Presence of Time Delays}
\textbf{Input} \\ (Inputs are the same as Algorithm~\ref{constralg1_delays}). \\
\textbf{Initialization} \\ For $k=0$, each node $v_j \in \mathcal{V}$ does the following: \\ (Steps~1, 2 are the same as Algorithm~\ref{constralg1_delays}).
\\ 3) (Same as Iteration-Step~1 in Algorithm~\ref{constralg1_delays}).
\\ 4) (Same as Iteration-Step~2 in Algorithm~\ref{constralg1_delays}).
\\ 5) (Same as Iteration-Step~3 in Algorithm~\ref{constralg1_delays}).
\\ 6) If $x^{(p)}_j[0] > 0$, it sets its outgoing weights to be 
$$
f_{lj}[1] = f_{lj}[0] + c_{lj}^{(j)}[0] ,
$$
and its new {\em perceived} incoming weights to be 
$$
f_{ji}^{(p)}[1] = f_{ji}^{(p)}[0] + c_{ji}^{(j)}[0] .
$$
\textbf{Iteration} \\ For $k=1,2,3,\dots$, each node $v_j \in \mathcal{V}$ does the following:
\\ 1) \textit{Event triggered condition:} If no weight change is received due to time delays then node~$v_j$ skips Steps~2, 3, 4, 5, 6, and~7 below; otherwise (event triggered condition) it receives the delayed desired weight change $\overline{c}^{(l)}_{lj}[k]$ ($\overline{c}^{(i)}_{ji}[k]$) from each outgoing (incoming) edge and performs the steps below.
\\ 2) It sets its outgoing weights to be 
$$
f_{lj}[k+1] = f_{lj}[k+1] + \overline{c}_{lj}^{(l)}[k] ,
$$
and its new {\em perceived} incoming weights to be 
$$
f_{ji}^{(p)}[k+1] = f_{ji}^{(p)}[k+1] + \overline{c}_{ji}^{(i)}[k] .
$$
3) (Same as Step~1 in Algorithm~\ref{constralg1_delays}).
\\ 4) (Same as Step~2 in Algorithm~\ref{constralg1_delays}).
\\ 5) (Same as Step~3 in Algorithm~\ref{constralg1_delays}).
\\ 6) If $x^{(p)}_j[k] > 0$, it sets its outgoing weights to be 
$$
f_{lj}[k+1] = f_{lj}[k] + c_{lj}^{(j)}[k] ,
$$
and its new {\em perceived} incoming weights to be 
$$
f_{ji}^{(p)}[k+1] = f_{ji}^{(p)}[k] + c_{ji}^{(j)}[k] .
$$
7) (Same as Step~6 in Algorithm~\ref{constralg1_delays}).
\\ 8) It repeats (increases $k$ to $k+1$ and goes back to Step~1).
\label{constralg1_event}
\end{varalgorithm}

\subsection{Proof of Algorithm Completion}
\label{analysisalgDel}

\begin{prop}\label{eventconverge}
Consider the problem formulation described in Section~\ref{ProbStatementUpperLower_delays} where the integer circulation conditions in Section~\ref{CircConditions} are satisfied. 
Algorithm~\ref{constralg1_event} balances, the weights in the graph in a finite number of steps [even in the presence of bounded delays] (i.e., $\exists \ k_0$ so that $\forall k \geq k_0$, $f_{ji}[k_0] = f_{ji}[k]$, $\forall (v_j,v_i) \in \mathcal{E}$ and $x_j[k] = x_j[k_0] =0$, $\forall \ v_j \in \mathcal{V}$). 
\end{prop}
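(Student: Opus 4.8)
The plan is to reduce the analysis of the event-triggered Algorithm~\ref{constralg1_event} to that of the nominal delayed Algorithm~\ref{constralg1_delays}, for which Proposition~\ref{PROP4_del} already guarantees finite-time balancing under the Integer Circulation Conditions of Section~\ref{CircConditions}. First I would record the observation that drives the whole argument: during any execution, the perceived weight imbalance $x^{(p)}_j[k]$ of a node $v_j$ can always be driven to zero by $v_j$'s own action (this is exactly the feasibility guaranteed in Remark~\ref{REMfeasibility} together with the Integer Circulation Conditions), whereas it can only be pushed to a strictly positive value by the arrival of a weight-change message from a neighbor --- an in-neighbor increasing an incoming edge ($\overline{c}^{(i)}_{ji}[k]\geq 0$) or an out-neighbor decreasing an outgoing edge ($\overline{c}^{(l)}_{lj}[k]\leq 0$). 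Consequently, a node has positive perceived imbalance at some iteration $k$ only immediately after it has received such a change, which is precisely the event that triggers it in Algorithm~\ref{constralg1_event}.

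Using this observation, the second step is to argue that every iteration skipped by the event-triggering rule is one in which the nominal Algorithm~\ref{constralg1_delays} performs no substantive update: if node $v_j$ receives no weight change at iteration $k$, then (by the observation above, and the fact that it became balanced the last time it acted) it has $x^{(p)}_j[k]\leq 0$, so in Algorithm~\ref{constralg1_delays} it transmits nothing and leaves all of its weights unchanged. Hence the sequence of weight states $\{f_{lj}[k],\, f^{(p)}_{ji}[k]\}$ and transmissions generated by Algorithm~\ref{constralg1_event} coincides with a sequence generated by Algorithm~\ref{constralg1_delays} under a delay pattern that is still bounded by $\overline{\tau}$ (the skipped ``idle'' iterations can be absorbed into the delays of the relevant messages, exactly as in the equivalence used for the event-triggered analysis of Chapter~\ref{distrdelpacket}). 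Because the two executions are identical, Proposition~\ref{PROP4_del} applies verbatim and yields a finite $k_0$ with $f_{ji}[k]=f_{ji}[k_0]$ and $x_j[k]=0$ for all $k\geq k_0$ and all $(v_j,v_i)\in\mathcal{E}$.

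Finally, I would close the loop on the event-triggered behaviour itself: once the digraph is balanced at $k_0$, no node has positive perceived imbalance, so the (guarded) transmission step is never executed; with no weight changes in transit, no node is ever triggered again, and therefore all nodes remain idle and the weights stay frozen at $f_{ji}^*$. This simultaneously confirms termination and the companion claim that all nodes eventually stop transmitting.

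I expect the main obstacle to be making the equivalence of the two executions rigorous rather than merely plausible. The delicate point is the interplay between the perceived imbalance $x^{(p)}_j$ and the actual imbalance $x_j$ under delays: I must verify, leaning on Remark~\ref{perceivedSmaller_del} (i.e.\ $x^{(p)}_j[k]\leq x_j[k]$) together with Propositions~\ref{PROP2_del} and~\ref{PROP3_del}, that skipping an iteration with $x^{(p)}_j[k]\leq 0$ can never cause the event-triggered algorithm to miss a required correction --- equivalently, that no node ever needs to act ``spontaneously'' in the absence of an incoming message. Once this safety of skipping is nailed down, the reduction to Algorithm~\ref{constralg1_delays} and the appeal to Proposition~\ref{PROP4_del} are routine.
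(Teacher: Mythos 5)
Your proposal is correct and follows essentially the same route as the paper: the paper's proof also establishes that the event-triggered execution is identical to an execution of Algorithm~\ref{constralg1_delays} under a (bounded) reinterpretation of the delays, invokes the finite-time convergence already proved for that algorithm (Proposition~\ref{PROP4_del}), and then notes that once balance is reached no node is triggered again. Your version merely spells out more explicitly why skipped iterations are harmless (a node's perceived imbalance can only become positive upon receipt of a neighbor's change), which is the same observation the paper leaves implicit.
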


\begin{proof} 
The event-triggered operation of Algorithm~\ref{constralg1_event} is identical to the operation of Algorithm~\ref{constralg1_event} with delays if we assume that in the latter algorithm all transmissions of desired weight changes suffer the maximum possible delay. 
As a result, since the operation of both algorithms is identical\footnote{The operation is identical under different delays in each case.}, we have that Algorithm~\ref{constralg1_event} will converge to a set of weights that form a weight-balanced digraph after a finite number of steps. 
Also, since $\exists \ k_0$ so that $\forall k \geq k_0$, $f_{ji}[k_0] = f_{ji}[k]$, $\forall (v_j,v_i) \in \mathcal{E}$ and $x_j[k] = x_j[k_0] =0$, $\forall \ v_j \in \mathcal{V}$, from Step~1 of Algorithm~\ref{constralg1_event}, we can see that all nodes eventually stop transmitting. 
\end{proof}


\section{Distributed Algorithm for Weight Balancing \\ in the Presence of Packet Dropping Links}
\label{packetalgDel_delays}

In this section we provide an overview of the distributed weight balancing algorithm operation; the formal description of the algorithm is provided in Algorithm~\ref{constralg1_packet}. 
The algorithm is iterative and operates by having, at each iteration, nodes with \textit{positive} perceived weight imbalance attempt to change the integer weights on both their incoming and/or outgoing edges so that they become weight balanced. 
Again, we assume that each node is in charge of assigning the weights on its outgoing edges (i.e., $f_{ji}$ is assigned by node $v_i$; due to possible packet drops the perceived weight $f^{(p)}_{ji}$ on this link by node $v_j$ might be different) which means that each node will know exactly the weights on its outgoing edges but only have access to perceived weights on its incoming edges. 

Note that the operation of Algorithm~\ref{constralg1_packet} is similar to Algorithm~\ref{constralg1_delays} with the main difference being that each node is required to calculate and transmit the desirable \textit{weights} (and not the desired change amounts) for its incoming and outgoing edges.

We describe the operation of iterative algorithm and establish that, if the necessary and sufficient integer circulation conditions for the existence of a set of \textit{integer} weights that balance the given digraph are satisfied, the algorithm completes, almost surely, after a finite number of iterations.

\textbf{Initialization.} Same as Algorithm~\ref{constralg1_delays}.

\textbf{Iteration.} At each iteration $k \geq 0$, node $v_j$ is aware of the {\em perceived} integer weights on its incoming edges $\{ f^{(p)}_{ji}[k] \; | \: v_i \in \mathcal{N}^-_j \}$ and the (actual) weights on its outgoing edges $\{ f_{lj}[k] \; | \: v_l \in \mathcal{N}^+_j \}$, which allow it to calculate its {\em perceived} weight imbalance $x^{(p)}_j[k]$ according to Definition~\ref{DEFnodebalance_del}.

\noindent
{\em A. Selecting Desirable weights:} Each node $v_j$ with positive {\em perceived} weight imbalance (i.e., $x^{(p)}_j[k] > 0$) attempts to change the weights on its incoming edges $\{ f_{ji}[k] \; | \; v_i \in \mathcal{N}^-_j \}$ and/or outgoing edges $\{ f_{lj}[k] \; | \; v_l \in \mathcal{N}^+_j \}$ in a way that drives its perceived weight imbalance $x^{(p)}_j[k+1]$ to zero (at least if no other changes are inflicted on the weights).  
No attempt to change weights is made if node $v_j$ has negative or zero perceived weight imbalance. 
Specifically, node $v_j$ attempts to add $+1$ (or subtract $-1$) to its outgoing (or incoming) integer weights one at a time, according to a predetermined (cyclic) order until its perceived weight imbalance becomes zero. 
If an outgoing (incoming) edge has reached its max (min) value (according to the feasible interval on that particular edge), then its weight does not change and node $v_j$ proceeds in changing the next one according to the predetermined order. 
The desired weight by node $v_j$ on edge $(v_j,v_i) \in \mathcal{E}$ at iteration $k$ will be denoted by $f_{ji}^{(j)}[k]$; similarly, the desired weight by node $v_j$ on edge $(v_l,v_j) \in \mathcal{E}$ at iteration $k$ will be denoted by $f_{lj}^{(j)}[k]$.

\textit{Note:} Next time node $v_j$ has positive {\em perceived} weight imbalance it continues increasing (decreasing) its outgoing (incoming) edges by $1$, one at a time following the (cyclic) predetermined order starting from the edge it stopped the previous time it had positive balance.

\noindent
{\em B. Exchanging Desirable weights:} 
Once the nodes with positive {\em perceived} weight imbalance calculate the desirable incoming $\{ f^{(j)}_{ji}[k] \; | \; v_i \in \mathcal{N}^-_j \}$ and outgoing $\{ f^{(j)}_{lj}[k] \; | \; v_l \in \mathcal{N}^+_j \}$ weights, they take the following steps in sequence: 

\noindent
1) Node $v_j$ transmits (receives) the calculated desirable weights $f^{(j)}_{ji}[k]$ ($f^{(l)}_{lj}[k]$) to (from) their in- (out-) neighbor $v_i$ ($v_l$).
[Nodes with non-positive perceived weight imbalance simply transmit the values $f^{(p)}_{ji}[k]$.]

\noindent
2) If no weight is received from out-neighbor $v_l$ (due to a packet drop), then node $v_j$ assumes that $f^{(l)}_{lj}[k] = f_{lj}[k]$ for the corresponding outgoing edge $(v_l, v_j)$ which suffered a packet drop on the transmission on the reverse link from node $v_l$ to node $v_j$.
Then it calculates its new outgoing weights $ f_{lj}[k+1] = f_{lj}^{(l)}[k] + f_{lj}^{(j)}[k] - f_{lj}[k] $  (projected onto the feasible interval $[l_{lj}, u_{lj}]$) and it transmits them to each corresponding out-neighbor $v_l \in \mathcal{N}^+_j$. 

\noindent
3) It receives the new incoming weights $\{ f^{(p)}_{ji}[k+1] \; | \; v_i \in \mathcal{N}^-_j \}$ from each corresponding in-neighbor. 
If no weight is received then node $v_j$ assumes that $f^{(p)}_{ji}[k+1] = f^{(j)}_{ji}[k]$ for the corresponding incoming edge $(v_j, v_i)$ which suffered a packet drop.
This (along with all the parameters involved) can be seen in Figure~\ref{nodes_exchV2}.


\begin{figure}[h]
\begin{center}
\includegraphics[width=0.7\columnwidth]{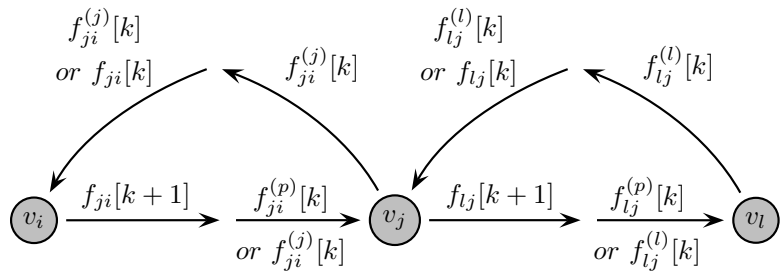}
\caption{Digraph where nodes exchange their desirable weights.}
\label{nodes_exchV2}
\end{center}
\end{figure}

\begin{remark}
The different weights that the nodes are exchanging (and what happens in the case of a packet drop) are shown in Fig.~\ref{nodes_exchV2}.
Specifically, at each iteration $k$, each node $v_j$ calculates its {\em perceived} weight imbalance $x^{(p)}_j[k]$ and if its positive then it calculates the desired weights for its incoming and outgoing edges ($f_{ji}^{(j)}[k]$ and $f_{lj}^{(j)}[k]$ respectively).
Then, it sends the desired incoming weight $f_{ji}^{(j)}[k]$ to its in-neighbors $v_i \in \mathcal{N}_j^-$ while it receives (if there is no packet drop) the desired incoming weight $f_{lj}^{(l)}[k]$ from its out-neighbors $v_l \in \mathcal{N}_j^+$.
If there is a packet drop, it assumes that $f_{lj}^{(l)}[k] = f_{lj}[k]$.
It then calculates the new weight for its outgoing edges $f_{lj}[k+1]$, $\forall \ v_l \in \mathcal{N}_j^+$, and transmits them to $v_l \in \mathcal{N}_j^+$, while it receives the new weights from its in-neighbors (if no incoming weight is received then it assumes $f_{ji}^{(p)}[k+1] = f_{ji}^{(j)}[k]$, otherwise  $f_{ji}^{(p)}[k+1] = f_{ji}[k+1]$).
\end{remark}

Depending on the possible packet drops that might occur during the exchange of the desirable weights, we have the following four cases:
\begin{enumerate}
\item $f^{(j)}_{ji}[k]$ is dropped,
\item both $f^{(j)}_{ji}[k]$ and $f_{ji}[k+1]$ are dropped,
\item $f_{ji}[k+1]$ is dropped,
\item no packet is dropped.
\end{enumerate}
For the first two cases, the new weight on edge $(v_j,v_i) \in \mathcal{E}$ is taken to be $f_{ji}[k+1] = f^{(i)}_{ji}[k]$ where $l_{ji} \leq f^{(i)}_{ji}[k] \leq u_{ji}$ (the difference in the two cases is that in the second case the perceived value of the weight at node $v_j$ is $f_{ji}^{(p)}[k+1] = f_{ji}^{(j)}[k]$. 

\noindent
For the third and fourth cases, the new weight on edge $(v_j,v_i) \in \mathcal{E}$ is taken to be $f_{ji}[k+1] = [f^{(i)}_{ji}[k+1] + f^{(j)}_{ji}[k+1] - f_{ji}[k]]_{\lceil l_{ji} \rceil}^{\lfloor u_{ji} \rfloor}$ (where $[x]_{\lceil l_{ji} \rceil}^{\lfloor u_{ji} \rfloor}$ denotes the projection onto the interval).
The difference in the two cases is that in the third case $f_{ji}^{(p)}[k+1] = f_{ji}^{(j)}[k]$, while in the fourth $f_{ji}^{(p)}[k+1] = f^{(i)}_{ji}[k+1] + f^{(j)}_{ji}[k+1] - f_{ji}[k]$.


\subsection{Formal Description of Distributed Algorithm}
\label{formalupperloweralgorithm_packet}

A formal description of the proposed distributed algorithm is presented in Algorithm~\ref{constralg1_packet}.

\begin{varalgorithm}{8}
\caption{Distributed weight Balancing Algorithm in the Presence of Packet Drops}
\textbf{Input} \\ (Inputs are the same as Algorithm~\ref{constralg1_delays}). \\
\textbf{Initialization} \\ (Steps~1, 2 are the same as Algorithm~\ref{constralg1_delays}).
\\
\textbf{Iteration} \\ For $k=0,1,2,\dots$, each node $v_j \in \mathcal{V}$ does the following:
\\ 1) It computes its \textit{perceived weight imbalance} as in Definition~\ref{DEFnodebalance_del}
$$
x^{(p)}_j[k] = \sum_{v_i \in \mathcal{N}_j^-} f_{ji}^{(p)}[k] - \sum_{v_l \in \mathcal{N}_j^+} f_{lj}[k]. 
$$
\\ 2) If  $x^{(p)}_j[k] > 0$, it increases (decreases) by $1$ the integer weights $f_{lj}[k]$ ($f_{ji}^{(p)}[k]$) of its outgoing (incoming) edges $v_l \in \mathcal{N}_j^+$ ($v_i \in \mathcal{N}_j^-$) one at a time, following the predetermined order $P^{(j)}_{lj}$ ($P^{(j)}_{ji}$) until its weight imbalance becomes zero (if an edge has reached its maximum value, its weight does not change and node $v_j$ proceeds in changing the next one according to the predetermined order). 
It stores the desirable weights on each incoming edge as $f_{ji}^{(j)}[k]$ and each outgoing edge as $f_{lj}^{(j)}[k]$.
\\ 3) If $x^{(p)}_j[k] \leq 0$, it sets $f_{lj}^{(j)}[k] = f_{lj}[k]$ (and $f_{ji}^{(j)}[k] = f_{ji}^{(p)}[k]$) for its outgoing (incoming) edges in $\mathcal{E}_j^+$ ($\mathcal{E}_j^-$).
\\ 4) It transmits the new weight $f_{ji}^{(j)}[k]$ on each incoming edge.
\\ 5) It receives the new weight $f_{lj}^{(l)}[k]$ from each outgoing edge (if no weight was received then it assumes that $f_{lj}^{(l)}[k] = f_{lj}[k]$).
\\ 6) It sets its outgoing weights to be $f_{lj}[k+1] = f_{lj}^{(l)}[k] + f_{lj}^{(j)}[k] - f_{lj}[k]$.
\\ 7) It transmits the new weight $f_{lj}[k+1]$ on each outgoing edge.
\\ 8) It receives new weight $f_{ji}^{(p)}[k+1]$ from each incoming edge (if no weight is received then it assumes that $f_{ji}^{(p)}[k+1] = f_{ji}^{(j)}[k]$). 
\\ 9) It repeats (increases $k$ to $k+1$ and goes back to Step~1).
\label{constralg1_packet}
\end{varalgorithm}

\vspace{0.1cm}

\begin{remark}
According to the integer circulation conditions, each node $v_j \in \mathcal{V}$ with positive {\em perceived} weight imbalance at iteration $k$ ($x^{(p)}_j[k] > 0$) will always be able to calculate a weight assignment for its incoming and outgoing edge weights so that its {\em perceived} weight imbalance becomes zero (at least if no other changes are inflicted on the weights of its incoming or outgoing links). 
This means that the selection of desirable weights in Algorithm~\ref{constralg1_packet} is always {\em feasible}.
\end{remark}

\begin{remark}\label{perceivedSmaller}
It is important to note here that the total {\em perceived} in-weight $\mathcal{S}_j^{-(p)}$ of node $v_j$ might be affected from possible packet drops at Step~7 of Algorithm~\ref{constralg1_packet}. 
Specifically, if a packet drop occurs; then $v_j$ assumes $f_{ji}^{(p)}[k+1] = f_{ji}^{(j)}[k]$ where $f_{ji}^{(j)}[k] \leq f_{ji}[k+1]$ (since nodes only attempt to make changes on the weights if their perceived weight imbalance is positive, node $v_i$ can only increase the weight of edge $(v_j, v_i)$).
This means that during the execution of Algorithm~\ref{constralg1_packet} we have $f_{ji}^{(p)}[k] \leq f_{ji}[k]$ for each edge $(v_j,v_i) \in \mathcal{E}$, at each time step $k$.
\end{remark}

\subsection{Proof of Algorithm Completion}
\label{analysisalgDel_packet}

In this section, we show that, as long as the Integer Circulation Conditions in Section~\ref{CircConditions} hold, then the total imbalance $\varepsilon[k]$ in Definition~\ref{defn:totalim} goes to zero after a finite number of iterations of Algorithm~\ref{constralg1_packet}. 
This implies that the weight imbalance $x_j[k]$ for each node $v_j \in \mathcal{V}$ goes to zero after a finite number of iterations, and thus (from the weight updates in Algorithm~\ref{constralg1_packet}) the integer weight $f_{ji}[k]$ on each edge $(v_j, v_i) \in \mathcal{E}$ stabilizes to an integer value $f_{ji}^*$ (where $f_{ji}^* \in \mathbb{N}_0$) within the given lower and upper limits, i.e., $1 \leq l_{ji} \leq f^*_{ji} \leq u_{ji}$ for all $(v_j, v_i) \in \mathcal{E}$.

\begin{prop}
\label{PROP2_packDr}
Consider the problem formulation described in Section~\ref{ProbStatementUpperLower_delays}. 
Let $\mathcal{V}^-[k] \subset \mathcal{V}$ be the set of nodes with negative weight imbalance at iteration $k$, i.e., $\mathcal{V}^-[k] = \{ v_j \in \mathcal{V} \; | \; x_j[k] < 0 \}$.  During the execution of Algorithm~\ref{constralg1_packet}, we have that
$$
\mathcal{V}^-[k+1] \subseteq \mathcal{V}^-[k].
$$
\end{prop}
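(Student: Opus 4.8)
The plan is to mirror the proof of Proposition~\ref{PROP2_del} (and its delay-free counterpart Proposition~\ref{constrPROP2}), exploiting the fact that under Algorithm~\ref{constralg1_packet} a node changes the weights on its edges if and only if its \emph{perceived} weight imbalance is strictly positive. Since Remark~\ref{perceivedSmaller} gives $f^{(p)}_{ji}[k]\leq f_{ji}[k]$ on every edge, and hence $x^{(p)}_j[k]\leq x_j[k]$ for all $k$, a node with $x_j[k]\geq 0$ falls into one of two regimes: either $x^{(p)}_j[k]>0$, in which case it actively re-weights its edges, or $x^{(p)}_j[k]\leq 0$, in which case it stays put. I would treat these two regimes separately and show that in each case $x_j[k+1]\geq 0$; equivalently, every node with $x_j[k]\geq 0$ also has $x_j[k+1]\geq 0$, which is exactly $\mathcal{V}^-[k+1]\subseteq\mathcal{V}^-[k]$.

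For the active regime ($x^{(p)}_j[k]>0$), I would first record that, by the feasibility guaranteed by the Integer Circulation Conditions of Section~\ref{CircConditions}, node $v_j$ computes desirable weights $f^{(j)}_{ji}[k]$ on its incoming edges (decreased) and $f^{(j)}_{lj}[k]$ on its outgoing edges (increased) that by construction satisfy $\sum_{v_i\in\mathcal{N}^-_j} f^{(j)}_{ji}[k]-\sum_{v_l\in\mathcal{N}^+_j} f^{(j)}_{lj}[k]=0$. The crux is then two one-sided bounds holding uniformly over all packet-drop outcomes: (i) $f^{(p)}_{ji}[k+1]\geq f^{(j)}_{ji}[k]$ on each incoming edge, and (ii) $f_{lj}[k+1]\leq f^{(j)}_{lj}[k]$ on each outgoing edge. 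Bound (i) follows because, whether or not a drop occurs, the perceived incoming weight equals either the value $v_j$ itself requested (on a drop, per Remark~\ref{perceivedSmaller}) or the fully updated weight, and the owner $v_i$ of that edge can only ever \emph{increase} it, since $v_i$ acts only with positive perceived imbalance and the edge is outgoing for $v_i$. Bound (ii) follows symmetrically: the value $f^{(l)}_{lj}[k]$ supplied by the out-neighbour $v_l$ satisfies $f^{(l)}_{lj}[k]\leq f_{lj}[k]$ (either because $v_l$ requests a decrease of its incoming weight, or because a drop forces $v_j$ to reuse $f_{lj}[k]$), so the update $f_{lj}[k+1]=f^{(l)}_{lj}[k]+f^{(j)}_{lj}[k]-f_{lj}[k]\leq f^{(j)}_{lj}[k]$. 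Summing and combining with the balance identity gives $x^{(p)}_j[k+1]\geq 0$, whence $x_j[k+1]\geq x^{(p)}_j[k+1]\geq 0$.

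For the passive regime ($x_j[k]\geq 0$ but $x^{(p)}_j[k]\leq 0$) node $v_j$ sets $f^{(j)}_{lj}[k]=f_{lj}[k]$ and $f^{(j)}_{ji}[k]=f^{(p)}_{ji}[k]$ and performs no balancing. Here I would argue directly on the \emph{actual} weights: each outgoing weight becomes $f_{lj}[k+1]=f^{(l)}_{lj}[k]\leq f_{lj}[k]$, while each actual incoming weight can only grow, $f_{ji}[k+1]\geq f_{ji}[k]$, because its owner $v_i$ can only increase it. Hence $x_j[k+1]=\sum_{v_i\in\mathcal{N}^-_j} f_{ji}[k+1]-\sum_{v_l\in\mathcal{N}^+_j} f_{lj}[k+1]\geq x_j[k]\geq 0$. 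Together with the active regime this establishes the claim.

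The main obstacle I anticipate is the bookkeeping of the two one-sided bounds across the four packet-drop cases listed before Algorithm~\ref{constralg1_packet} — in particular the doubly-dropped case and the case where only the ``desired weight'' message is lost but the final-weight message survives. The care required is to verify that in every one of these outcomes the perceived incoming weight never dips below the requested $f^{(j)}_{ji}[k]$ and the realized outgoing weight never exceeds the requested $f^{(j)}_{lj}[k]$, so that the balance-preserving inequality survives regardless of which transmissions succeed. The monotone (increase-only on incoming, decrease-only on outgoing) nature of the requested changes, guaranteed because only positively imbalanced nodes act, is precisely what makes these bounds uniform over all drop patterns.
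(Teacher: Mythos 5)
Your active-regime argument ($x^{(p)}_j[k]>0$) is sound and in fact tightens the paper's own treatment: the paper verifies the no-drop scenario and one single-drop scenario explicitly and dismisses the remaining drop patterns as ``similar'', whereas your two uniform one-sided bounds $f^{(p)}_{ji}[k+1]\geq f^{(j)}_{ji}[k]$ and $f_{lj}[k+1]\leq f^{(j)}_{lj}[k]$, combined with $\sum_{v_i\in\mathcal{N}_j^-}f^{(j)}_{ji}[k]=\sum_{v_l\in\mathcal{N}_j^+}f^{(j)}_{lj}[k]$, dispose of every drop pattern at once. This is the same underlying computation as the paper's expansion of $x^{(p)}_j[k+1]$, just organized more cleanly.

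The gap is in your passive regime. You claim that when $x_j[k]\geq 0$ but $x^{(p)}_j[k]\leq 0$, ``each actual incoming weight can only grow, $f_{ji}[k+1]\geq f_{ji}[k]$, because its owner $v_i$ can only increase it.'' This does not follow from the update rule. By Step~3 of Algorithm~\ref{constralg1_packet}, a passive $v_j$ transmits $f^{(j)}_{ji}[k]=f^{(p)}_{ji}[k]$, and the owner $v_i$ sets $f_{ji}[k+1]=f^{(j)}_{ji}[k]+f^{(i)}_{ji}[k]-f_{ji}[k]$. If the perceived weight lags the actual one ($f^{(p)}_{ji}[k]<f_{ji}[k]$, which Remark~\ref{perceivedSmaller} explicitly permits after a drop at Step~7) and $v_i$ is itself passive so that $f^{(i)}_{ji}[k]=f_{ji}[k]$, then $f_{ji}[k+1]=f^{(p)}_{ji}[k]<f_{ji}[k]$: the actual incoming weight \emph{decreases}. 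The bound that does hold uniformly is only $f_{ji}[k+1]\geq f^{(j)}_{ji}[k]=f^{(p)}_{ji}[k]$, which gives $x_j[k+1]\geq x^{(p)}_j[k]$ --- not $x_j[k+1]\geq x_j[k]$, and not $x_j[k+1]\geq 0$, since $x^{(p)}_j[k]\leq 0$ in this regime. So the passive case is not closed. For what it is worth, the paper's proof never confronts this case either: it starts from nodes with $x^{(p)}_j[k]\geq 0$ and then invokes $x^{(p)}_j\leq x_j$, which says nothing about a node with $x_j[k]\geq 0> x^{(p)}_j[k]$. You correctly isolated the missing case, but the monotonicity claim you use to dispatch it is false as stated; closing it would require either a sharper invariant relating the lag $f_{ji}[k]-f^{(p)}_{ji}[k]$ to the imbalances, or recasting the statement in terms of perceived rather than actual imbalance.
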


\begin{proof}
We will first argue that nodes with nonnegative {\em perceived} weight imbalance at iteration $k$ can never reach negative perceived weight imbalance at iteration $k+1$. 
Combining this with the fact that the perceived weight imbalance of a node is always below its actual weight imbalance, we establish the proof of the proposition.

Consider a node $v_j$ with a nonnegative perceived weight imbalance $x^{(p)}_j[k] \geq 0$ (from Remark~\ref{perceivedSmaller}, since $x^{(p)}_j[k] \leq x_j[k]$, $\forall \ k \geq 0$ we have that also $x_j[k] \geq 0$). \\
\noindent
We analyze below the following two cases:
\begin{enumerate}
\item at least one neighbor of node $v_j$ has positive perceived weight imbalance,
\item all neighbors of node $v_j$ have negative or zero perceived weight imbalance.
\end{enumerate}
In both cases, since $x^{(p)}_j[k] \geq 0$, node $v_j$ will attempt to change the weights of (some of) its incoming and outgoing edges. 
Specifically, node $v_j$ calculates the desirable weight $f^{(j)}_{ji}[k]$ ($f^{(j)}_{lj}[k]$) for its incoming (outgoing) edges $(v_j, v_i)$ ($(v_l, v_j)$) where $v_i \in  \mathcal{N}_j^-$ ($v_l \in  \mathcal{N}_j^+$).

In the first case, both in- and out-neighbors ($v_i$ and $v_l$ respectively) of $v_j$ will calculate the desirable weights for their incoming and outgoing edges.
Depending on the possible packet drops that might occur during the transmissions from node $v_i$ to node $v_j$, we consider the following two scenarios:
\begin{enumerate}
\item[a)] No packet is dropped,
\item[b)] At least one packet is dropped.
\end{enumerate}

\noindent
Recall that from the perceptive of node $v_j$ the following transmissions take place: first, node $v_j$ sends $f^{(j)}_{ji}[k]$ to each in-neighbor $v_i \in  \mathcal{N}_j^-$. 
Then it receives $f^{(l)}_{lj}[k]$ from every out-neighbor $v_l \in  \mathcal{N}_j^+$ and finally, once it calculates the new weights $f_{lj}[k+1]$ for its outgoing edges $(v_l, v_j)$ (where $v_l \in  \mathcal{N}_j^+$), it transmits them to every out-neighbor $v_l \in  \mathcal{N}_j^+$.

\noindent
For the first scenario (a), we have 
\begin{eqnarray}
x^{(p)}_j[k+1] & = & \sum_{v_i \in \mathcal{N}_j^-} f^{{(p)}}_{ji}[k+1] - \sum_{v_l \in \mathcal{N}_j^+} f_{lj}[k+1]\label{perc_imb_1} \\
 & = & \sum_{v_i \in \mathcal{N}_j^-} ( f^{(i)}_{ji}[k] + f^{(j)}_{ji}[k] - f_{ji}[k] ) - \nonumber \\
 & & \; \; - \sum_{v_l \in \mathcal{N}_j^+} ( f^{(j)}_{lj}[k] + f^{(l)}_{lj}[k] - f_{lj}[k] ) \; . \nonumber
\end{eqnarray}

\noindent
Since
\begin{equation}\label{perc_imb_2}
\sum_{v_i \in \mathcal{N}_j^-} f^{(j)}_{ji}[k] = \sum_{v_l \in \mathcal{N}_j^+} f^{(j)}_{lj}[k] ,
\end{equation}
(\ref{perc_imb_1}) becomes
\begin{eqnarray}
x^{(p)}_j[k+1] & = & \sum_{v_i \in \mathcal{N}_j^-} ( f^{(i)}_{ji}[k] - f_{ji}[k] ) - \nonumber \\
 & & \; \; - \sum_{v_l \in \mathcal{N}_j^+} ( f^{(l)}_{lj}[k] - f_{lj}[k] )\label{perc_imb_3} \; . 
\end{eqnarray}

\noindent
Also, since $f^{(i)}_{ji}[k] \geq f_{ji}[k]$ and $f^{(l)}_{lj}[k] \leq f_{lj}[k]$, $\forall \ (v_j, v_i), \ (v_l, v_j) \in \mathcal{E}$, we conclude $x^{(p)}_j[k+1] \geq 0, \; \forall v_j \in \mathcal{V}$.
\if 0

\begin{equation}\label{perc_imb_4}
\sum_{v_i \in \mathcal{N}_j^-} ( f^{(i)}_{ji}[k] - f_{ji}[k] ) \geq 0 , 
\end{equation}
and 
\begin{equation}\label{perc_imb_5}
\sum_{v_l \in \mathcal{N}_j^+} ( f^{(l)}_{lj}[k] - f_{lj}[k] ) \leq 0 . 
\end{equation}

\noindent
Combining (\ref{perc_imb_3}), (\ref{perc_imb_4}) and (\ref{perc_imb_5}), we have that (\ref{perc_imb_3}) becomes 
\begin{eqnarray}
x^{(p)}_j[k+1] & = & \sum_{v_i \in \mathcal{N}_j^-} ( f^{(i)}_{ji}[k] - f_{ji}[k] ) - \nonumber \\
 & & \; \; - \sum_{v_l \in \mathcal{N}_j^+} ( f^{(l)}_{lj}[k] - f_{lj}[k] ) \nonumber \\
 & \geq & 0 \; . \nonumber
\end{eqnarray}

\fi
As a result we conclude that, for scenario (a), the nonnegative {\em perceived} weight imbalance of node $v_j$ at iteration $k$ remains nonnegative at iteration $k+1$.

For scenario (b), let us assume (without loss of generality) that $f_{ji}[k+1]$, sent from node $v_i$ to node $v_j$ at Step~7 of the proposed algorithm, suffered a packet drop while all the other transmissions were successful. We have that 
{\small \begin{eqnarray}
x^{(p)}_j[k+1] & = & \sum_{v_{i'} \in \mathcal{N}_j^-} f^{(p)}_{ji'}[k+1] - \nonumber \\ 
 & & \; \; - \sum_{v_l \in \mathcal{N}_j^+} f_{lj}[k+1] \label{perc_imb_6} \\
 & = & f_{ji}^{(j)}[k] + \sum_{v_{i'} \in \mathcal{N}_j^- - \{ v_i \} } ( f_{ji'}[k+1] ) + f^{(j)}_{ji}[k] - \nonumber \\
 & & \; \; - \sum_{v_l \in \mathcal{N}_j^+} ( f^{(j)}_{lj}[k] + f^{(l)}_{lj}[k] - f_{lj}[k] ) \; , \nonumber
\end{eqnarray}
}

\noindent
which, in a similar manner, leads to the conclusion that $x^{(p)}_j[k+1] \geq 0, \; \forall v_j \in \mathcal{V}$.
\if 0

\noindent
Taking into account the fact that From (\ref{perc_imb_2}), (\ref{perc_imb_4}) and (\ref{perc_imb_5}) we have that (\ref{perc_imb_6}) becomes
\begin{eqnarray}
x^{(p)}_j[k+1] & = & \sum_{v_{i'} \in \mathcal{N}_j^-} f^{(p)}_{ji'}[k+1] - \sum_{v_l \in \mathcal{N}_j^+} f_{lj}[k+1] \nonumber \\
 & \geq & 0 \; . \nonumber
\end{eqnarray}

\fi
Thus, for scenario (b), we conclude that if only the transmission from node $v_i$ to node $v_j$ suffered a packet drop, the nonnegative {\em perceived} weight imbalance of node $v_j$ at iteration $k$ remains nonnegative at iteration $k+1$.

The remaining scenarios, where multiple transmissions suffer packet drops during the same iteration $k$, as well as the remaining cases, where all neighbors of node $v_j$ have negative or zero perceived weight imbalance, can be argued in a similar manner.

As a result we have that during an iteration $k$ of Algorithm~\ref{constralg1_packet}, nodes with nonnegative {\em perceived} weight imbalance can never reach negative perceived weight imbalance at iteration $k+1$.
From Remark~\ref{perceivedSmaller}, since $x^{(p)}_j[k] \leq x_j[k]$, $\forall \ k \geq 0$, we have that also nodes with nonnegative weight imbalance can never reach negative weight imbalance, thus establishing the proof of the proposition.
\end{proof}

\begin{prop}
\label{PROP3_packDr}
Consider the problem formulation described in Section~\ref{ProbStatementUpperLower_delays}. During the execution of Algorithm~\ref{constralg1_packet}, it holds that
$$
0 \leq \varepsilon[k+1] \leq \varepsilon[k] \; , \; \; \forall k \geq 0 \; ,
$$
where $\varepsilon[k] \geq 0$ is the total imbalance of the network at iteration~$k$ (see Definition~\ref{defn:totalim}).

\if 0 

Moreover, we have 
$$
0 \leq \varepsilon[k+1] < \varepsilon[k] \; , 
$$
unless 
\begin{enumerate}
\item $\varepsilon[k]=0$ (in which case $\varepsilon[k+1] = \varepsilon[k] = 0$) or 
\item all nodes $v_j \in \mathcal{V}^-[k]$ satisfy the following: (i) for each of their in-neighbors $v_i \in \mathcal{N}_j^-$ we have $x_i[k] \leq 0$ or $f_{ji}[k] = u_{ji}$ and for each of their out-neighbors $v_l \in \mathcal{N}_j^+$ we have $x_l[k] \leq 0$ or $f_{ji}[k] = l_{ji}$.
\end{enumerate}

\fi

\end{prop}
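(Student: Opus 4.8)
The plan is to follow the same template used for Proposition~\ref{constrPROP3} and its delayed counterpart Proposition~\ref{PROP3_del}, reducing the claim to a statement about individual negatively imbalanced nodes. From the third statement of Proposition~\ref{constrPROP1} we may write $\varepsilon[k+1] = 2 \sum_{v_j \in \mathcal{V}^-[k+1]} |x_j[k+1]|$ and $\varepsilon[k] = 2 \sum_{v_j \in \mathcal{V}^-[k]} |x_j[k]|$, while Proposition~\ref{PROP2_packDr} supplies the nesting $\mathcal{V}^-[k+1] \subseteq \mathcal{V}^-[k]$. Hence it suffices to show that every node $v_j \in \mathcal{V}^-[k]$ either leaves $\mathcal{V}^-$ at the next step or stays negative with $|x_j[k+1]| \le |x_j[k]|$; equivalently, that its actual imbalance never moves further from zero, i.e.\ $x_j[k+1] \ge x_j[k]$.

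First I would isolate the behaviour of a node $v_j$ with $x_j[k] < 0$. By Remark~\ref{perceivedSmaller} we have $x^{(p)}_j[k] \le x_j[k] < 0$, so by Step~3 of Algorithm~\ref{constralg1_packet} node $v_j$ issues no change requests: it sets $f^{(j)}_{lj}[k] = f_{lj}[k]$ on its outgoing edges and $f^{(j)}_{ji}[k] = f^{(p)}_{ji}[k]$ on its incoming edges. Consequently any modification to the edges incident to $v_j$ is driven entirely by neighbours with positive perceived imbalance, which by construction only attempt to increase the incoming weights $f_{ji}$ (handled by the source $v_i$) and to decrease the outgoing weights $f_{lj}$ (reflected in $v_j$'s update $f_{lj}[k+1] = f^{(l)}_{lj}[k]$, since $f^{(j)}_{lj}[k] = f_{lj}[k]$). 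I would organise the argument into the two cases used earlier (all neighbours of $v_j$ have non-positive perceived imbalance, so no weight incident to $v_j$ changes and $x_j[k+1] = x_j[k]$; versus at least one neighbour positive), and within the latter case run through the four packet-drop scenarios listed just before Algorithm~\ref{constralg1_packet}, showing that in each scenario an incoming weight of $v_j$ can only rise and an outgoing weight can only fall. This yields $x_j[k+1] \ge x_j[k]$ in all cases.

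With that per-node inequality in hand, the assembly is routine: for $v_j \in \mathcal{V}^-[k+1]$ we have $|x_j[k+1]| \le |x_j[k]|$ (the node stayed negative and moved toward zero), while nodes in $\mathcal{V}^-[k] \setminus \mathcal{V}^-[k+1]$ simply drop out of the first sum, contributing nothing. Therefore
\[
\varepsilon[k+1] = 2 \sum_{v_j \in \mathcal{V}^-[k+1]} |x_j[k+1]| \le 2 \sum_{v_j \in \mathcal{V}^-[k]} |x_j[k]| = \varepsilon[k],
\]
and nonnegativity of $\varepsilon[k+1]$ is immediate from Definition~\ref{defn:totalim}.

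The hard part will be the packet-drop bookkeeping in the second paragraph, and in particular reconciling the perceived-versus-actual weight gap with the joint update rule $f_{ji}[k+1] = f^{(j)}_{ji}[k] + f^{(i)}_{ji}[k] - f_{ji}[k]$. Because a negative $v_j$ transmits its (possibly stale) perceived value $f^{(p)}_{ji}[k] \le f_{ji}[k]$ as its desired weight, one must verify that a lagging perception --- arising from an earlier dropped delivery in Steps~7--8 --- cannot be propagated through this formula into an actual decrease of an incoming weight of $v_j$. Dispatching this requires the same careful scenario-by-scenario analysis already carried out in the proof of Proposition~\ref{PROP2_packDr}, separating drops on the desired-weight exchange (Steps~4--5) from drops on the final-weight exchange (Steps~7--8) and invoking the projection onto $[l_{ji}, u_{ji}]$; the remaining scenarios with multiple simultaneous drops then follow by identical reasoning.
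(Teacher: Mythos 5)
Your proposal follows essentially the same route as the paper's own proof: it invokes the third statement of Proposition~\ref{constrPROP1} together with Proposition~\ref{PROP2_packDr}, then establishes $x_j[k+1] \ge x_j[k]$ for each $v_j \in \mathcal{V}^-[k]$ by the same two-case split on neighbours' perceived imbalances and a packet-drop case analysis, before reassembling the sums. The only difference is that you explicitly flag the lagging-perception subtlety in the joint update formula, which the paper's proof passes over more quickly; otherwise the arguments coincide.
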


\begin{proof} 
From the third statement of Proposition~\ref{constrPROP1}, we have $\varepsilon[k+1] = 2 \sum_{v_j \in \mathcal{V}^-[k+1]} \vert x_j[k+1] \vert$ and $\varepsilon[k] = 2 \sum_{v_j \in \mathcal{V}^-[k]} \vert x_j[k] \vert$, whereas from Proposition~\ref{PROP2_packDr}, we have $\mathcal{V}^-[k+1] \subseteq \mathcal{V}^-[k]$.

Consider a node $v_j \in \mathcal{V}^-[k]$ with weight imbalance $x_j[k]<0$ (obviously we have that also $x^{(p)}_j[k]<0$ since $x^{(p)}_j[k] \leq x_j[k]$, $\forall \ k \geq 0$ from Remark~\ref{perceivedSmaller}). \\
\noindent
We analyze below the following two cases:
\begin{enumerate}
\item all neighbors of node $v_j$ have negative or zero perceived weight imbalance,
\item at least one neighbor of node $v_j$ has positive perceived weight imbalance.
\end{enumerate}
In both cases, node $v_j$ will not make any weight changes on its edges. 
In the first case, the weight imbalance of node $v_j$ will not change (i.e., $x_j[k+1] = x_j[k] < 0$). 
In the second case, we have (i) $x^{(p)}_i[k] \geq 0$ for some $v_i \in  \mathcal{N}_j^-$, or (ii) $x^{(p)}_l[k] \geq 0$ for some $v_l \in  \mathcal{N}_j^+$.

For (i) we have that during the iteration $k$ of Algorithm~\ref{constralg1_packet}, the incoming edge weights of $v_j$ might change by its in-neighbors (i.e., the weight of an incoming edge $(v_j,v_i)$ might be increased to be equal to $f_{ji}[k+1] = f_{ji}^{(i)}[k]$ for some $v_i \in \mathcal{N}_j^-$).
In this case (regardless if we have a packet drop during the transmission of $f_{ji}[k+1]$ from $v_i$ to $v_j$) we have $\varepsilon[k+1] \leq \varepsilon[k]$ (using the third statement in Proposition~\ref{constrPROP1}).
For (ii) we have that during iteration $k$ of Algorithm~\ref{constralg1_packet}, the out-neighbor of $v_j$ might transmit the new outgoing edge weights to node $v_j$ (i.e., $v_j$ might receive the new $f_{lj}^{(l)}[k]$ from some $v_l \in \mathcal{N}_j^+$).
In this case, if $f_{lj}^{(l)}[k]$ suffers a packet drop, the weight imbalance of $v_j$ will not change (i.e., $x_j[k+1] = x_j[k] < 0$). 
If $f_{lj}^{(l)}[k]$ is transmitted successfully then the weight imbalance of node $v_j$ will satisfy $x_j[k+1] \geq x_j[k]$.
As a result, for both cases, we have $\varepsilon[k+1] \leq \varepsilon[k]$ (using the third statement in Proposition~\ref{constrPROP1}).
\end{proof}

\begin{prop}
\label{PROP4_packDr}
Consider the problem formulation described in Section~\ref{ProbStatementUpperLower_delays} where the Integer Circulation Conditions in Section~\ref{CircConditions} are satisfied. 
Algorithm~\ref{constralg1_packet} balances the weights in the graph in a finite number of steps, with probability one (i.e., $\exists \ k_0$ so that almost surely $\forall k \geq k_0$, $f_{ji}[k_0] = f_{ji}[k]$, $\forall (v_j,v_i) \in \mathcal{E}$ and $x_j[k] = x_j[k_0] =0$, $\forall v_j \in \mathcal{V}$). 
\end{prop}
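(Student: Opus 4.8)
The plan is to combine the two deterministic monotonicity facts already established for Algorithm~\ref{constralg1_packet} with a short probabilistic argument that, on a probability-one event, turns packet drops into effectively bounded delays. From Proposition~\ref{PROP3_packDr} we know that $\varepsilon[k]$ is non-increasing, and from the third statement of Proposition~\ref{constrPROP1} it is a non-negative (even) integer; from Proposition~\ref{PROP2_packDr} we know the set $\mathcal{V}^-[k]$ of negatively imbalanced nodes can only shrink. Consequently, it suffices to show that, with probability one, $\varepsilon[k]$ cannot stay strictly positive forever: an integer quantity that is non-increasing and that strictly decreases whenever it is positive must reach $0$ in finitely many steps.

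First I would treat the stochastic ingredient exactly as in Proposition~\ref{dropsconverge}. Since $q_{ji} < 1$ for every link $(v_j,v_i) \in \mathcal{E}_u$ and drops are independent across time and links, the probability that a given link (in a given direction) fails on $K$ consecutive uses is $q_{ji}^{K}$, which tends to $0$. Hence, for any $\epsilon > 0$, taking $k_{ji} = \lceil \log\epsilon / \log q_{ji} \rceil$ ensures each transmission succeeds within $k_{ji}$ steps with probability at least $1-\epsilon$; setting $\overline{\tau} = \max_{(v_l,v_j)} k_{lj}$, on the corresponding high-probability event every successful exchange of Algorithm~\ref{constralg1_packet} arrives within a uniformly bounded number of steps, i.e.\ the execution behaves as one subject only to delays bounded by $\overline{\tau}$.

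Next I would argue that on that event the run of Algorithm~\ref{constralg1_packet} coincides with a run of the bounded-delay Algorithm~\ref{constralg1_delays}, so that Proposition~\ref{PROP4_del} yields balance in finitely many steps. Equivalently, and in a more self-contained way, I would repeat the contradiction argument of Proposition~\ref{constrPROP4}: assuming $\varepsilon[k] > 0$ for all $k$, partition $\mathcal{V}$ into $\mathcal{V}^+$ (nodes positively imbalanced infinitely often), $\mathcal{V}^- = \lim_{k} \mathcal{V}^-[k]$, and the eventually-balanced remainder. Because every link delivers infinitely often with probability one, a node in $\mathcal{V}^+$ adjacent to $\mathcal{V}^-$ or to a balanced node will eventually succeed in pushing a unit weight change across that edge, forcing either a strict decrease of the integer $\varepsilon[k]$ (impossible infinitely often) or saturating every edge of the cut $\mathcal{S}=\mathcal{V}^+$ at $\lceil l_{ji}\rceil$ / $\lfloor u_{lj}\rfloor$, which by the first statement of Proposition~\ref{constrPROP1} contradicts the Integer Circulation Conditions of Section~\ref{CircConditions}. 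Letting $\epsilon \to 0$ then drives the completion probability to one, exactly as in Proposition~\ref{dropsconverge}.

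The hard part will be the two-round, weight-exchange structure of Algorithm~\ref{constralg1_packet}: unlike the change-amount protocol of Algorithm~\ref{constralg1_delays}, a packet may be dropped in either direction and in either transmission round, giving the four cases enumerated just before the algorithm, and each drop perturbs the perceived in-weight $\mathcal{S}_j^{-(p)}$. The delicate bookkeeping is to check that, once the probability-one event is in force and both rounds of a given exchange eventually succeed, the perceived weights $f^{(p)}_{ji}$ catch up to the actual weights $f_{ji}$ without ever overshooting (Remark~\ref{perceivedSmaller}), so that the effective delay faced by the comparison execution of Algorithm~\ref{constralg1_delays} is genuinely finite. This invariant is precisely what Propositions~\ref{PROP2_packDr} and~\ref{PROP3_packDr} were built to guarantee, so the remaining effort is to assemble these pieces rather than to prove anything fundamentally new.
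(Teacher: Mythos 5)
Your proposal follows essentially the same route as the paper: it converts packet drops into effectively bounded delays via the $k_{lj} = \lceil \log\epsilon/\log q_{lj}\rceil$ high-probability argument (exactly as in Proposition~\ref{dropsconverge}), and then, on the probability-one event where every link delivers infinitely often, reduces to the contradiction argument of Proposition~\ref{constrPROP4} with the cut $\mathcal{S}=\mathcal{V}^+$ and the Integer Circulation Conditions, supported by the monotonicity facts in Propositions~\ref{PROP2_packDr} and~\ref{PROP3_packDr}. This is the paper's proof in all essentials, so no further comparison is needed.
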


\begin{proof} 
By contradiction, suppose Algorithm~\ref{constralg1_packet} runs for an infinite number of iterations and its total imbalance remains positive (i.e., $\varepsilon[k]>0$ for all $k$). 
During the execution of the proposed distributed balancing algorithm, packets containing information are dropped with probability $q_{lj} <1$ for each communication link $(v_l,v_j) \in \mathcal{E}$ (we assume independence between packet drops at different time steps and different links and link directions).
During transmissions on link $(v_l, v_j)$, we have that at each transmission, a packet goes through with probability $1 - q_{lj} > 0$.
Thus, if we consider $k_{lj}$ consecutive uses of link $(v_l, v_j)$, the probability that at least one packet will go through is $1 - q_{lj}^{k_{lj}}$, which will be arbitrarily close to $1$ for a sufficiently large $k_{lj}$.
\noindent
Specifically, for any (arbitrarily small) $\epsilon > 0$, we can choose 
$$
k_{lj} = \left \lceil \frac{\log \epsilon}{\log q_{lj}} \right \rceil, 
$$
to ensure that each transmission goes through by $k_{lj}$ steps with probability $1 - \epsilon$. 

Suppose now that Algorithm~\ref{constralg1_packet} runs for an infinite number of iterations (where infinite successful packet transmissions occurred on each link $(v_l, v_j)$, for a sufficiently large $k_{lj}$) and, by contradiction, its total imbalance remains positive (i.e., $\varepsilon[k]>0$ for all $k$). 
This means that always (at each $k$) there will exist at least one node with positive weight imbalance and thus the proof of this Proposition becomes identical to the proof of Proposition~\ref{constrPROP4}. 

As a result we have that if the Integer Circulation Conditions in Section~\ref{CircConditions} hold, the total imbalance $\varepsilon[k]$ decreases after a finite number of iterations, and Algorithm~\ref{constralg1_packet} results in a weight-balanced digraph after a finite number of iterations. 
\end{proof}

\section{Simulation Study}
\label{resultsupperloweralgorithm_delays}

In this section, we present simulation results and comparisons for the proposed distributed algorithms. 
Specifically, we present detailed numerical results for a random graph of size $n = 20$ and for the average of $1000$ random digraphs of $20$ and $50$ nodes each.
We illustrate the behavior of the proposed distributed algorithm for the following three different scenarios: 
(i) the scenario where Algorithm~\ref{constralg1_delays} operates in a randomly created graph of $20$ nodes where for every communication link $(v_j,v_i) \in \mathcal{E}$ there are bounded transmission delays $0 < \tau_{lj} < \overline{\tau}$ where $\overline{\tau} = 10$ (independently between different links and link directions) and each node $v_j$ transmits the desired weight change $c^{(j)}_{lj}[k]$ ($c^{(j)}_{ji}[k]$) on each outgoing (incoming) edge $(v_l,v_j)\in \mathcal{E}$ ($(v_j,v_i)\in \mathcal{E}$) to each $v_l \in \mathcal{N}_j^+$ ($v_i \in \mathcal{N}_j^-$), at each iteration $k$,
(ii) the scenario where Algorithm~\ref{constralg1_event} operates in a randomly created graph of $20$ nodes  where for every communication link $(v_j,v_i) \in \mathcal{E}$ there are bounded transmission delays $0 < \tau_{lj} < \overline{\tau}$ where $\overline{\tau} = 10$ (independently between different links and link directions) and each node $v_j$ transmits only once the desired weight change $c^{(j)}_{lj}[k]$ ($c^{(j)}_{ji}[k]$) on each outgoing (incoming) edge $(v_l,v_j)\in \mathcal{E}$ ($(v_j,v_i)\in \mathcal{E}$) to each $v_l \in \mathcal{N}_j^+$ ($v_i \in \mathcal{N}_j^-$), 
(iii) the scenario where Algorithm~\ref{constralg1_packet} operates in a randomly created graph of $20$ nodes where for every communication link $(v_j,v_i) \in \mathcal{E}$ there are packet drops with equal probability $q$ (where $0 \leq q < 1$) (independently between different links and link directions) and each node $v_j$ transmits the new weight $f_{lj}[k+1]$ ($f_{ji}^{(j)}[k]$) on each outgoing (incoming) edge $(v_l,v_j)\in \mathcal{E}$ ($(v_j,v_i)\in \mathcal{E}$) to each $v_l \in \mathcal{N}_j^+$ ($v_i \in \mathcal{N}_j^-$), at each iteration $k$.
Note that the the integer circulation conditions (presented in Section~\ref{CircConditions}) hold of all three different scenarios.

In Fig.~\ref{working20_del} we show the operation of Algorithm~\ref{constralg1_delays} in a randomly created graph of $20$ nodes where for every communication link $(v_j,v_i) \in \mathcal{E}$ there are bounded transmission delays $0 < \tau_{lj} < \overline{\tau}$ where $\overline{\tau} = 10$ (independently between different links and link directions) and each node $v_j$ transmits the desired weight change $c^{(j)}_{lj}[k]$ ($c^{(j)}_{ji}[k]$) on each outgoing (incoming) edge $(v_l,v_j)\in \mathcal{E}$ ($(v_j,v_i)\in \mathcal{E}$) to each $v_l \in \mathcal{N}_j^+$ ($v_i \in \mathcal{N}_j^-$), at each iteration $k$.
In the first case, we plot the \textit{absolute imbalance} $\varepsilon = \sum_{j=1}^{n} \vert x_j \vert$, $\forall v_j \in \mathcal{V}$ (blue line) and the perceived total imbalance $\varepsilon^{(p)} = \sum_{j=1}^{n} \vert x^{(p)}_j \vert$ (red line) against the number of iterations $k$.
In the second case the \textit{nodes balances} $x_j[k]$ (as defined in Definition~\ref{DEFnodebalance}) as a function of the number of iterations $k$ for the distributed algorithm. 
Here, the plot suggests that the absolute imbalance $\varepsilon$ becomes equal to zero after a finite number of iterations, which means that Algorithm~\ref{constralg1_delays} is able to obtain a set of integer weights that balance the corresponding digraph after a finite number of iterations in the presence of bounded transmission delays $0 < \tau_{lj} < \overline{\tau}$, where $\overline{\tau} = 10$, on each link $(v_l,v_j)\in \mathcal{E}$.

\begin{figure}[h]
\begin{center}
\includegraphics[width=0.70\columnwidth]{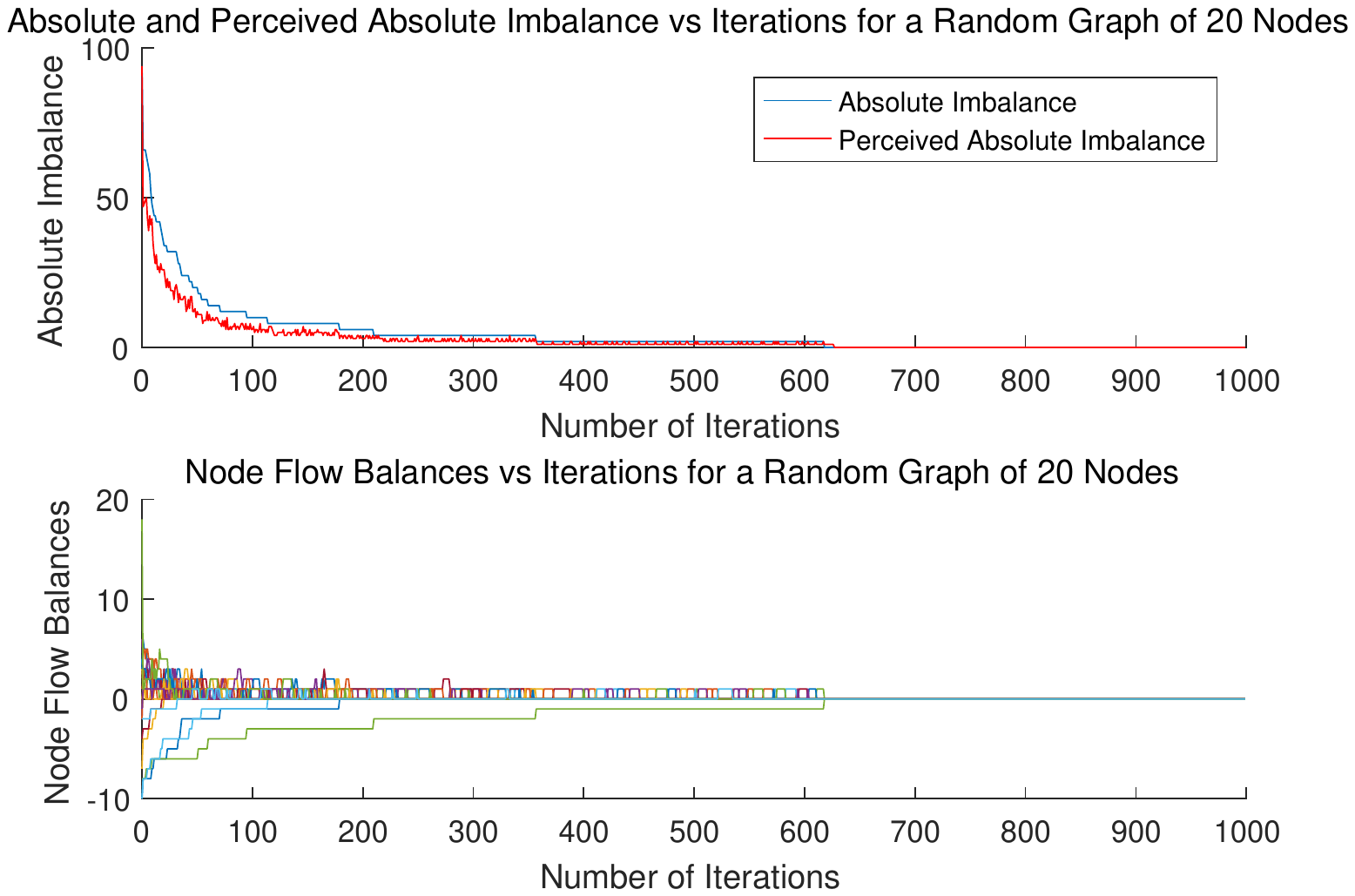}
\caption{Execution of Algorithm~\ref{constralg1_delays} for the case when the integer circulation conditions hold for a random graph of $20$ nodes with transmission delays $0 < \tau_{lj} < \overline{\tau}$ where $\overline{\tau} = 10$. \emph{Top figure:} Total (absolute) imbalance $\varepsilon[k]$ (blue line) and perceived total imbalance $\varepsilon^{(p)}[k]$ (red line) plotted against number of iterations. \emph{Bottom figure:} Node weight imbalances $x_j[k]$ plotted against number of iterations.}
\label{working20_del}
\end{center}
\end{figure}

In Fig.~\ref{working20_eventTr} we show the operation of Algorithm~\ref{constralg1_event} for the same case as Fig.~\ref{working20_del}.
Here the plot suggests that Algorithm~\ref{constralg1_event} is able to obtain a set of integer weights that balance the corresponding digraph after a finite number of iterations in the presence of bounded transmission delays $0 < \tau_{lj} < \overline{\tau}$, where $\overline{\tau} = 10$, on each link $(v_l,v_j)\in \mathcal{E}$, for the case where each node $v_j$ transmits only once the desired weight change.

\begin{figure}[h]
\begin{center}
\includegraphics[width=0.70\columnwidth]{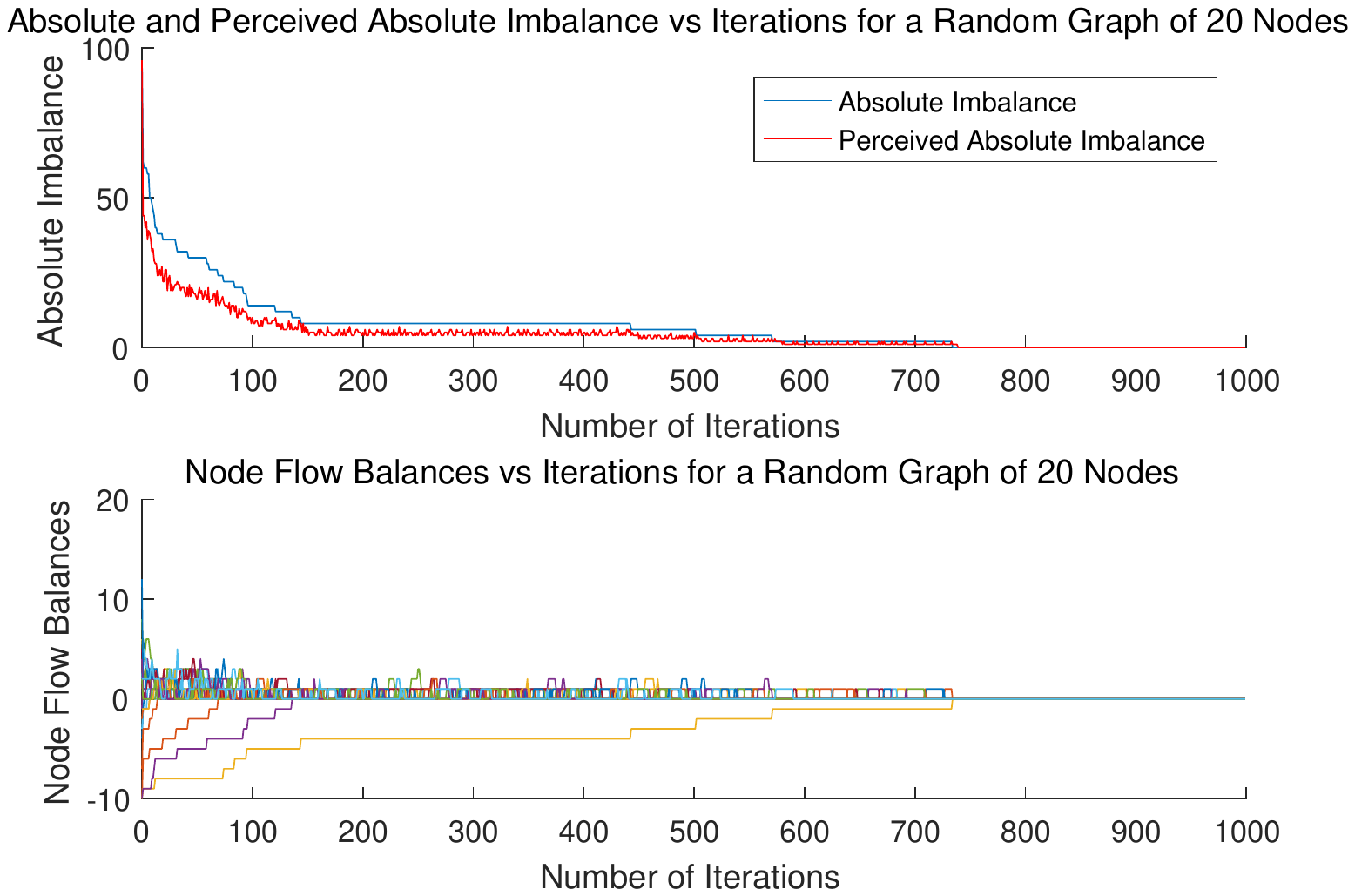}
\caption{Execution of Algorithm~\ref{constralg1_event} for the case when the integer circulation conditions hold for a random graph of $20$ nodes with transmission delays $0 < \tau_{lj} < \overline{\tau}$ where $\overline{\tau} = 10$. \emph{Top figure:} Total (absolute) imbalance $\varepsilon[k]$ (blue line) and perceived total imbalance $\varepsilon^{(p)}[k]$ (red line) plotted against number of iterations. \emph{Bottom figure:} Node weight imbalances $x_j[k]$ plotted against number of iterations.}
\label{working20_eventTr}
\end{center}
\end{figure}

In Fig.~\ref{working20_packetDr} we show the operation of Algorithm~\ref{constralg1_packet} for the same cases as Figs.~\ref{working20_del} and \ref{working20_eventTr}. 
The plot suggests that Algorithm~\ref{constralg1_packet} is able to obtain a set of integer weights that balance the corresponding digraph after a finite number of iterations in the presence of packet dropping links with probability $q = 0.8$, on each link $(v_l,v_j)\in \mathcal{E}$.

\begin{figure}[h]
\begin{center}
\includegraphics[width=0.70\columnwidth]{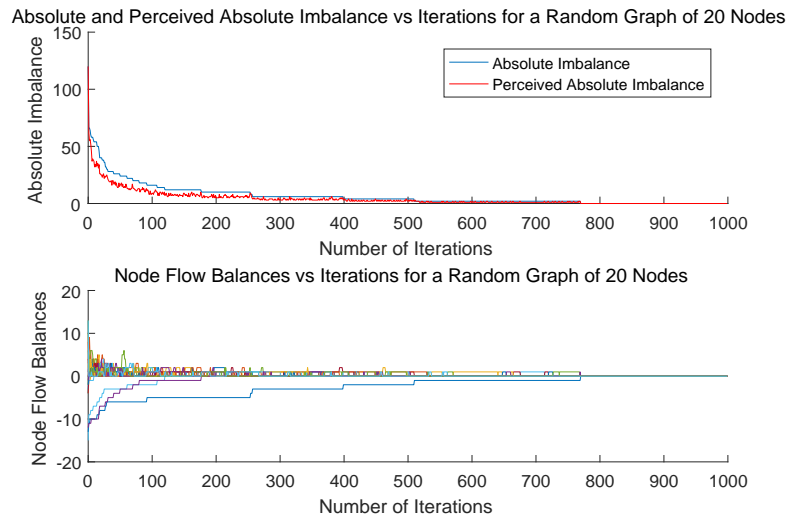}
\caption{Execution of Algorithm~\ref{constralg1_packet} for the case when the integer circulation conditions hold for a random graph of $20$ nodes with packet drop probability $q_{ji} = 0.8$. \emph{Top figure:} Total (absolute) imbalance $\varepsilon[k]$ (blue line) and Perceived Total Imbalance $\varepsilon^{(p)}[k]$ (red line) plotted against number of iterations. \emph{Bottom figure:} Node weight imbalances $x_j[k]$ plotted against number of iterations.}
\label{working20_packetDr}
\end{center}
\end{figure}


\section{Chapter Summary}
\label{SummaryUpperLower_delays}

In this chapter, we introduced and analyzed a novel distributed algorithm which achieves integer weight balancing in a multi-component system under lower and upper constraints on the edge weights in the presence of time delays over the communication links.  
We analyzed its functionality, established its correctness and showed that it achieves integer weight balancing after a finite number of steps. 
We also added extensions to handle the cases of packet drops over the communication links and event-triggered operation where we showed that in both scenarios, the proposed algorithm converges (with probability one) to a set of weights that form a balanced graph after a finite number of iterations
We also demonstrated the operation, performance, and advantages of the proposed algorithm via various simulations.

\clearpage

\lhead{\emph{Conclusions and Future Directions}}

\chapter{Conclusions \\ and Future Directions}
\label{conclusions}

\section{Conclusions}

In this thesis, we presented distributed algorithms for weight balancing over a static directed graph.
In Chapter~\ref{centrvsdistr} we presented a distributed algorithm which deals with the problem of balancing a weighted digraph. 
The proposed distributed algorithm operates by having each node compute its weight imbalance and then increase the integer weights of its outgoing edges so that it becomes weight balanced. Specifically, the outgoing edges are assigned, if possible, equal integer weights; otherwise, if this is not possible, they are assigned integer weights such that the maximum difference among them is equal to unity. We showed that our distributed algorithm results in a weight balanced digraph after a finite number of iterations (bounded in the worst case by $O(n^7)$) and we also carried out numerical simulations to illustrate the operation and potential advantages of the proposed distributed algorithm.

In Chapter~\ref{distrdelpacket}, we presented a novel distributed algorithm which deals with the problem of balancing a weighted digraph in the presence of time delays (bounded by a maximum value $\overline{\tau}$) and packet drops over the communication links. This algorithm operates by having each node compute its delayed weight imbalance according to the latest received weight values from its in-neighbors. Then, if it has positive (delayed) imbalance, it increases by $1$ the integer weights of its outgoing edges one at a time, following a fixed priority order (in a round robin fashion) until it becomes weight balanced. This means that the outgoing edges are assigned, if possible, equal integer weights; otherwise, if this is not possible, they are assigned integer weights such that the maximum difference among them is equal to one. We showed that our distributed algorithm converges to a set of weights $f_{lj}=f_{lj}^*, \forall (v_l,v_j) \in \mathcal{E}$, after $O(n^6\overline{\tau})$ iterations (where $f^*_{lj}$ is a set of weights that form a weight balanced digraph) after a finite number of steps bounded by $O(n^6)$ under no delays ($\overline{\tau}=0$). Then, we extended this result for the case where communication links could also result in possible packet drops (i.e., unbounded delays) in the corresponding communication network where we showed that the proposed distributed algorithm converges, with probability one, to a set of weights $f_{lj} = f_{lj}^*, \ \forall (v_l,v_j) \in \mathcal{E}$, after a finite number of iterations despite the presence of packet drops occurring with probability $q_{lj}$, where $f^*_{lj}$ is the set of weights that form a weight balanced digraph and are obtained after a finite number of steps bounded by $O(n^6)$ under no packet drops ($q_{lj}=0$). Following these developments, we presented an event-triggered version of the proposed distributed algorithm where each agent autonomously decides when communication and control updates should occur so that the resulting network executions still result to a set of weights $f_{lj} = f_{lj}^*, \ \forall (v_l,v_j) \in \mathcal{E}$, after a finite number of steps bounded by $O(n^6\overline{\tau})$ iterations (where the set of weights $f^*_{lj}$ is the set of weights obtained by the nominal algorithm that runs with no even-triggering and no delays). We also carried out numerical simulations to show the operation and potential advantages of the proposed distributed algorithm.

In Chapter~\ref{constraintsbalancing}, we presented a novel distributed algorithm which deals with the problem of balancing a weighted digraph in the presence of upper and lower weight constraints over the communication links. Our distributed algorithm operates by having each node compute its weight imbalance according to the weight values from its out- and in-neighbors. Then, if it has positive imbalance it attempts to add $+1$ (or subtract $-1$) to its outgoing (or incoming) integer weights one at a time, according to a predetermined (cyclic) order, in a round robin fashion, until its weight imbalance becomes zero. Each node transmits the amount of change it calculated on each outgoing (or incoming) edge while it receives the amount of change calculated by its out- and in-neighbors; it then assigns integer weights on its incoming and outgoing edges with respect to the corresponding upper and lower weight constraints. We showed that our distributed algorithm results in a weight balanced digraph after a finite number of iterations and carried out numerical simulations to illustrate the operation and potential advantages of the proposed distributed algorithm.

Finally, in Chapter~\ref{constraintsbalancing_delays}, we presented a novel distributed algorithm which deals with the problem of balancing a weighted digraph within the allowable edge weight intervals in the presence of time delays and packet drops over the communication links. 
Our distributed algorithm operates by having each node compute its perceived weight imbalance according to the latest received weight values from its in-neighbors. 
When the communication links are subject to time delays, a node has positive perceived imbalance, it calculates the desired change amount for each incoming and outgoing links by adding $+1$ (or subtracting $1$) to its outgoing (or incoming) integer weights one at a time, according to a predetermined (cyclic) order until its weight imbalance becomes zero.  
The node subsequently transmits the amount of change it calculated on each outgoing (or incoming) edge while it receives the amount of change calculated by its out- and in-neighbors, and assigns integer weights on its incoming and outgoing edges with respect to the corresponding upper and lower weight constraints. 
We showed that our distributed algorithm results in a weight balanced digraph after a finite number of iterations. 
The operation of the proposed algorithm was extended for the case where we have event-driven actuators,
enabling a more efficient use of the available resources. 
Specifically, we presented an event-triggered operation of the proposed distributed algorithm where each agent autonomously decides when communication and control updates should occur so that the resulting executions still result to a weight balanced digraph after a finite number of iterations. 
Finally, the operation of the proposed algorithm was extended for the case where communication links could also result in possible packet drops (i.e., unbounded delays) in the corresponding communication network. 
In this case, we showed that the proposed distributed algorithm converges, with probability one, to a weight balanced digraph after a finite number of iterations. 
We also carried out numerical simulations to illustrate the operation and potential advantages of the proposed distributed algorithm.

\section{Future Directions}

In this thesis we have that weight-balanced graphs/matrices play an important role in the analysis and convergence of distributed coordination algorithms. 
The algorithms introduced in this thesis can also be used for other applications. 
A distributed algorithm which deals with the problem of balancing a weighted digraph, introduced and analyzed in Chapter~\ref{centrvsdistr}, can be used for the case where each agent in the network wants to calculate a common quantized value equal to the exact average of the initial values (i.e., the nodes need to reach quantized consensus). 
Specifically, by assuming that each node has two initial values (the quantized measurement along with the value $1$) we can implement a ``mass summation'' algorithm in which every node sums the incoming values and then directly transmits them to an out-neighbor, chosen according to the predetermined priority order. 
This iteration will allow each agent to obtain two integer values, the ratio of which is equal to the average of the initial values of the nodes.
The extension towards quantized average consensus has applications in capacity and memory constrained sensor networks, load balancing in processor networks, and others.

In Chapter~\ref{constraintsbalancing} we presented a novel distributed algorithm which deals with the problem of balancing a weighted digraph in the presence of upper and lower weight constraints over the communication links and we extended its operation, in Chapter~\ref{constraintsbalancing_delays}, for the cases where we have time delays and packet drops over the communication links. 
The presence of constraints over the communication links means that the differences of the resulting link weights (that form a weight balanced digraph) are also constrained and depend on the upper and lower weight intervals and the graph structure. 
Given a weight balanced digraph, one can use a finite-time algorithm, based on max-consensus, to obtain a doubly stochastic matrix which find applications in distributed averaging. 
In this matrix the link weight differences are also constrained (since they depend on the weight balanced digraph) and affect the asymptotic convergence rate towards average consensus. 
Thus, the design of a distributed algorithm which calculates a set of weights that form a weight balanced digraph in the presence of upper and lower weight constraints and minimizes the link weight differences is an important open problem, which will allow us to perform distributed averaging under the maximum possible rate of convergence (i.e., the convergence rate under which every node reaches, asymptotically, average consensus will be the maximum possible).

The operation of the proposed distributed algorithm, presented in Chapters~\ref{constraintsbalancing} and \ref{constraintsbalancing_delays}, can also be extended to the case where nodes with negative weight imbalance also attempt to change the integer weights in both its incoming edges and its outgoing edges with respect to the corresponding upper and lower weight constraints. 
Depending on the graph structure, this could improve the speed under which we are able to obtain a set of weights that form a weight balanced digraph, which means that once the nodes have reached a weight balanced digraph, they can move on to some other distributed computation (e.g., average consensus).
Furthermore, the proposed algorithm operation can be extended to handle more realistic scenarios in which the edges are able to take values in multiple spaces. 
This extension is highly important since, firstly, it can handle the case where each edge may suffer damages and will be unable to obtain a certain range of values, and secondly, a possible solution of this problem will lead to the definition of a new circulation theorem, thus improving and extending the one presented in Section~\ref{CircConditions}, possibly leading to improved versions of algorithms which deal with the standard and maximum flow problem, auction problem, and energy minimization problem.
Finally, the proposed algorithm operation relies on bi-directional communication (i.e., the communication topology is captured by the undirected graph that corresponds to the network digraph). 
This assumption may not be valid for applications which require directed communication since transmitting and receiving information requires energy, which is typically a sparse commodity in many networked applications, such as sensor networks and mobile ad hoc communication networks.
Thus, the extension of the operation of the proposed algorithm for the case when the communication topology matches exactly the physical topology (i.e., the communication topology is captured by a directed graph) is an open problem which will extend the range of applications towards more realistic scenarios.

\clearpage



\addtocontents{toc}{\vspace{2em}} 

%



\addtocontents{toc}{\vspace{2em}}  
\backmatter

\label{Bibliography}
\lhead{\emph{Bibliography}}  
\bibliographystyle{IEEEtran}  
\bibliography{Bibliography}  

\end{document}